\newtheorem{assumption}{Assumption}
\theoremstyle{plain} { \theorembodyfont{\rmfamily}
\newtheorem{definition}{Definition}

}
\newtheorem{proposition}{Proposition}
\newtheorem{lemma}{Lemma}
\newtheorem{corollary}{Corollary}
\newtheorem{theorem}{Theorem}
\newcommand{\bbR}{{\mathbb R}}
\newcommand{\bbX}{{\mathbb X}}
\newcommand{\prob}{\mathbb{P}}
\newcommand{\expect}{\mathbb{E}}
\newcommand{\tran}{^\top}
\def\halmos{\mbox{\quad$\square$}}
\def\Halmos{\mbox{\quad$\square$}}
\newenvironment{proof}{\vspace{1ex}\noindent{\em Proof}\hspace{0.5em}}
    {\vspace{1ex}}
\newenvironment{pfof}[1]{\vspace{1ex}\noindent{\em Proof of #1}\hspace{0.5em}}
    {\vspace{1ex}}
 \numberwithin{equation}{section}
 \numberwithin{equation}{section}
\begin{document}
\title{Portfolio Selection under Median and Quantile Maximization
\thanks{The authors are grateful to Hanqing Jin for his comments. He and Jiang acknowledge financial support from the General Research Fund of the Research Grants Council of Hong Kong SAR (Project No. 14200917).}
}
\author{Xue Dong He\thanks{ Room 505, William M.W. Mong Engineering Building, Department of Systems Engineering and Engineering Management, The Chinese University of Hong Kong, Shatin, N.T., Hong Kong, Telphone: +852-39438336, Email: xdhe@se.cuhk.edu.hk.} and Zhaoli Jiang\thanks{Risk Management Institute,    	National University of Singapore, Singapore 119076,   Email: rmijz@nus.edu.sg.} and Steven Kou\thanks{Corresponding Author. Boston University, Questrom School of Business, Rafik B. Hariri Building, 595 Commonwealth Avenue Boston, MA 02215, Email: kou@bu.edu.}}
\maketitle

\begin{abstract}
Although maximizing median and quantiles is intuitively appealing and has an axiomatic foundation, it is difficult to study the optimal portfolio strategy due to the discontinuity and time inconsistency in the objective function.  We use the intra-personal equilibrium approach to study the problem. Interestingly, we find that the only viable outcome is from the median maximization, because for other quantiles either the equilibrium does not exist or there is no investment in the risky assets. The median maximization strategy gives a simple explanation to why wealthier people invest more percentage of their wealth in risky assets.

\medskip

\noindent{\bf Key words:} quantiles; median; portfolio selection; intra-persional equilibrium; portfolio insurance


\end{abstract}

\section{Introduction}
Maximization of the mean of investment return leads to excessive risk taking, so in the modern portfolio selection theory, investors are assumed to be concerned about not only the mean but also the variance of investment return \citep{MarkowitzH:52ps}. On the other hand, median is an alternative to mean to summarize a distribution. Median is also a special case of quantile: The $\alpha$-quantile of a distribution is defined to be the threshold such that the probability of observing a value beyond this threshold is equal to $\alpha$, and median is the $1/2$-quantile.

In contrast to mean-variance analysis and expected utility maximization, the study of median maximization and, more generally, quantile maximization is very limited in the literature. \citet{Ethier2004:KellySystem} show that Kelly portfolio, which was proposed by \citet{Kelly1956:ANewInterpretationInformation} and maximizes the long-run growth rate of wealth, also maximizes the median of wealth in some special market settings among all portfolios that are constant over time. \citet{Manski1988:Ordinal} considers a preference model in which an individual maximizes the $\alpha$-quantile of her utility for some $\alpha\in(0,1)$. Moreover, the author shows $\alpha$ measures the individual's risk attitude with the riskiness of a distribution defined in terms of quantile-preserving spreads. \citet{Chambers2009:Axiomatization} proves that a representation of an individual's preferences for distributions is invariant to ordinal transformation and weakly monotonic with respect to first-order stochastic dominance if and only if it is the $\alpha$-quantile of the individual's utility for some $\alpha\in(0,1)$. \citet{rostek2010quantile} axiomatizes quantile maximization in a Savage setting. Using the framework proposed by \citet{AnscombeAumann1963:DefinitionSubjectiveProb}, \citet{deCastro2019:StaticDynamicQuantile} prove that an individual's preferences are represented by quantile maximization if and only if her tastes and beliefs over consequences are completely separable. \citet{deCastro2019:StaticDynamicQuantile} further provide axiomatization for a dynamic, discrete-time quantile model in which an individual evaluates consumption streams by recursive preferences but using quantile, instead of the standard expected utility, as the representation of risk attitude in the recursive preferences. \citet{deCastro2019:PortfolioSelection} consider a single-period portfolio selection problem in which an agent maximizes the quantile of her portfolio of two risky stocks or of a risk-free asset and a risky stock. \citet{Giovannetti2013:AssetPricing} considers a single-period asset pricing model with a representative agent who maximizes the quantile of her consumption utility. \citet{deCastro2019:DynamicQuantileModels} apply the dynamic quantile model proposed by \citet{deCastro2019:StaticDynamicQuantile} to portfolio selection and asset pricing in a multi-period setting.

In the present paper, we study dynamic portfolio selection under quantile maximization. More precisely, we consider an agent who trades a risk-free asset and multiple   risky assets continuously in time with an objective of maximizing the $\alpha$-quantile of her wealth at certain terminal time. The risk-free rate and the mean return rates and volatility of the risky assets are assumed to be deterministic, i.e. the Black-Scholes model. In addition, the agent faces some cone constraints on her portfolio, an example being the constraint of no short sales. This portfolio problem is time inconsistent in that a dynamic portfolio that, at the current time, maximizes the quantile of the terminal wealth does not necessarily maximize the quantile of the terminal wealth at future time. This is in contrast to the dynamic portfolio selection problem considered by \citet{deCastro2019:DynamicQuantileModels}, where the authors employ the dynamic quantile model proposed by \citet{deCastro2019:StaticDynamicQuantile}. Indeed, in that dynamic quantile model, quantiles are used at the beginning of each period to evaluate certain risk at the end of the period. In other words, quantile maximization in that model is only applied in every single period, so there is no time inconsistency caused by quantiles. In our model, however, the agent evaluates the quantile of her wealth at the terminal time, and there is a positive, continuous time period between the time of the evaluation of the quantile and the terminal time, so time inconsistency arises.

Because of time inconsistency, without self-control or the help of commitment devices, the agent cannot commit her future selves to following the dynamic portfolio that maximizes the quantile of her terminal wealth today. Following the literature on time inconsistency, we consider so-called intra-personal equilibrium portfolio strategies; see for instance \citet{Strotz1955:MyopiaInconsistency}, \citet{EkelandLazrak2006:BeingSeriousAboutNonCommitment}, \citet{Bjork2017:TimeInconsistent}, \citet{HeJiang2019:OnEquilibriumStrategies}, and the references therein. More precisely, we assume that the agent has no self control, so we regard her selves at different time to be different players in a game and seek an equilibrium in this game. As a result, an intra-personal equilibrium is a time-consistent portfolio strategy because at any time the agent is not willing to deviate from it and thus is able to implement it throughout the investment horizon.

 We focus on time-varying affine strategies under which the dollar amount invested in the risky assets are affine functions of the agent's wealth and the intercepts and linear coefficients are time varying. This family of strategies include many commonly used strategies, such as investing a certain proportion of wealth in the risky assets, investing a wealth-independent dollar amount in the risky assets, and the mixture of the above two strategies taken in different time periods. We prove that for $\alpha$-quantile maximization with $\alpha>1/2$, there does not exist an intra-personal equilibrium strategy that is time varying and affine. For $\alpha$-quantile maximization with $\alpha<1/2$, a time-varying affine strategy is an intra-personal equilibrium strategy if and only if it leads to zero investment in the risky assets at all time. For $1/2$-quantile maximization, namely, median maximization, a time-varying affine strategy is an intra-personal equilibrium strategy if and only if it is a portfolio insurance strategy under which the agent sets up a portfolio insurance level $\xi$ and invests any capital in excess of $\xi$ into Kelly portfolio and the remaining into the risk-free asset. In particular, we derive multiple intra-personal equilibrium strategies because different portfolio insurance levels lead to different investment strategies.

Our results show that in continuous-time portfolio selection, the risk attitude of the agent does not vary smoothly as $\alpha$ changes. When $\alpha>1/2$, the agent is too risk seeking to take limited risk. When $\alpha<1/2$, the agent is so risk averse that she takes no risk at all. When $\alpha=1/2$, the agent takes nonzero, limited risk. In this case, we show that with a smaller portfolio insurance level $\xi$, the median of the terminal wealth becomes strictly larger at any time and any wealth level but the agent can end up with lower wealth levels. Therefore, $\xi$ becomes a parameter that trades off the growth of the portfolio, which is measured by the median of the terminal wealth, and the risk of the portfolio, which is measured by the lowest level that the agent's wealth in the future may touch. Thus, we can consider $\xi$ to be the second parameter to represent the agent's risk attitude.

We then compare the intra-personal equilibrium strategy to fractional Kelly strategies and find that neither of them dominates the other in terms of median of the terminal wealth. The intra-personal equilibrium strategy, however, entails less risk than fractional Kelly strategies because it implies a higher level of minimum wealth. We also compare the intra-personal equilibrium strategy to the pre-committed strategy and the naive strategy under median maximization. The pre-committed strategy is one that maximizes the quantile of the terminal wealth at the initial time, so it is optimal for the agent's self at the initial time. The naive strategy is the actual strategy implemented by the agent if she were not aware of the time-inconsistency and thus, at each time, is only able to implement in an infinitesimally period of time the strategy that maximizes the quantile of the terminal wealth at that time. We find that the pre-committed strategy can lead to arbitrarily large holding of the risky assets and leads to capped wealth, so it is less preferable to the intra-personal equilibrium strategy. Under the naive strategy, the agent would take an infinite amount of risk around the terminal time and the median of the terminal wealth is always lower than that of the intra-personal equilibrium strategy. Thus, the naive strategy is not preferable either.

As an application, we find that the intra-personal equilibrium strategy can explain an empirical finding by \citet{WachterYogo2010:Household}: Households with a higher level of wealth tend to have a larger portfolio shares in risky assets (i.e., to invest more percentage of their wealth in risky assets). This empirical finding cannot be explained by expected utility maximization with a power utility function even when the asset prices have stock volatility, stochastic return rates, or jumps, because the resulting optimal percentage of wealth invested in risky assets is independent of the agent's wealth. The mean-variance log return model proposed by \citet{dai2020dynamic} cannot explain the finding either because in their model the percentage of wealth invested in risky assets is independent of the agent's wealth as well;    see Theorem 3 in \citet{dai2020dynamic}. Under the intra-personal equilibrium strategy in our model, which is derived in the simple Black-Scholes model, households with a higher level of wealth indeed have a larger portfolio shares in risky assets. The intuition is as follows: When households become older, their wealth, on average, also becomes larger and thus farther away from the portfolio insurance level, so they invest more in the risky assets.

  The intra-personal equilibrium strategy in our model differs from the portfolio strategies in \citet{deCastro2019:DynamicQuantileModels}, \citet{deCastro2019:PortfolioSelection}, and \citet{Giovannetti2013:AssetPricing}, where the authors apply quantile maximization to evaluate random payoffs in each single period. Indeed, in the latter, the portfolio strategies are to invest a percentage of wealth in risky assets, and the percentage is independent of the wealth, so these strategies cannot explain why more wealthy people hold a larger wealth share of risky assets. By contrast, we derive multiple intra-personal equilibrium strategies, and they do not imply proportional investment in risky assets unless the portfolio insurance level is zero. %

Because of the dynamic setting and time inconsistency in our model, the methodologies in \citet{deCastro2019:DynamicQuantileModels}, \citet{deCastro2019:PortfolioSelection}, and \citet{Giovannetti2013:AssetPricing} cannot apply in our model. Moreover, the general framework of solving intra-personal equilibrium that is proposed by \citet{Bjork2017:TimeInconsistent} and followed by nearly all studies on time-inconsistent problems cannot apply either, because in this framework the objective function is set to be the expectation of certain reward function or a nonlinear transformation of it. To overcome the difficulties, we perform infinitesimal analysis to derive the rate of increment in the quantile of the terminal wealth due to deviation from an intra-personal equilibrium in a small period of time and use this rate to characterize intra-personal equilibrium.

Quantiles can be regarded as a special case of preference representations that involve probability weighting, such as rank-dependent utility \citep{QuigginJ:82rd}, prospect theory \citep{KahnemanDTverskyA:79pt,TverskyKahneman1992:CPT}, and the dual theory of choice \citep{YaariM:87dtc}. Quantiles are also related to risk measures, such as value-at-risk and expected shortfall. Most studies of portfolio selection under preferences with probability weighting or under various risk measures in dynamic settings focus on pre-committed strategies; see for instance \citet{BasakShapiro2001:VaRRiskManagement}, \citet{HeJinZhou2014:PortfolioSelectionRiskMeasure}, \citet{JinHQZhouXY:08bps}, \citet{HeXDZhouXY:2011PortfolioChoiceviaQuantiles,HeXDZhouXY:10hfa}, and \citet{vanBilsenLaeven2020:DynamicConsumption}. On the other hand, \citet{EpsteinLZinS:90fora} and \citet{DeGiorgiLegg2011:NarrowFramingProbabilityWeighting} employ rank-dependent utility and prospect theory to evaluate risks in every single period in their models of recursive preferences, so similar to the dynamic quantile model proposed by \citet{deCastro2019:StaticDynamicQuantile}, their models do not lead to time inconsistency. \citet{MaEtal2019:TimeConsistent} and \citet{HeEtal2019:ForwardRDU} study how the probability weighting function in certain preference model evolves over time so as to make the pre-committed strategies consistent over time.

We are only aware of several works that study equilibrium strategies in models with preferences involving probability weighting. \citet{Barberis2012:Casino} study a casino gambling problem in which an agent with prospect theory preferences decides when to stop playing independent and identically distributed bets, and one of the strategies the author considers in his model is intra-personal equilibrium. His work is extended by \citet{EbertStrack2016:NeverEverGettingStarted} to a continuous-time setting. \citet{HuangEtal2017:StoppingBehaviors} study a similar problem but with a different approach. Note that all the above three works study optimal stopping problems in which the control variable is binary, while in the present paper we consider portfolio selection in which the control variable is continuous. \citet{HuEtal2020:ConsistentInvestmentRDU} study portfolio selection for an agent with rank-dependent utility preferences in a complete, continuous-time market where the mean return rates and volatility of the assets are deterministic. The authors consider intra-personal equilibrium for the agent due to the time-inconsistency caused by the agent's preferences. Their notion of intra-personal equilibrium, however, differs from ours. More precisely, they assume that the dollar amount invested in the risky assets by the future selves of the agent remains the same even if the wealth of the future selves changes. 
In our notion, the agent's future selves are assumed to take fixed investment strategies, e.g., to invest 10\% of wealth in the risky assets. As a result, when the agent's self today takes a different investment strategy, the wealth of the agent's future selves changes and, consequently, the dollar amount invested in the risky assets by the future selves also changes. Our notion of intra-personal equilibrium is consistent with the standard definition of equilibrium in the game theory and resembles those used by most studies on time-inconsistent problems in the literature.

The remainder of the paper is organized as follows: In section \ref{se:Model} we propose the model and define the notion of intra-personal equilibrium strategies. In Section \ref{se:MainResults}, we show the intra-personal equilibrium strategy for quantile maximization. In Section \ref{se:Discussion} we discuss the properties of the intra-personal equilibrium strategy for median maximization and compare it with
  fractional Kelly strategies. In Section \ref{se:HouseholdShares}, we apply our model to explain why more wealthy households have larger shares in risky assets.  Finally, Section \ref{se:Conclusions} concludes. In Appendix \ref{comp pre and naive}, we compare the intra-personal equilibrium strategies to the pre-committed and naive strategies under median maximization. In Appendix \ref{se:MultiTerminalDates}, we extend our model to the case in which the agent is concerned about the quantile of her wealth at multiple times.  All proofs are presented in Appendix \ref{se:Proofs}.



\section{Model}\label{se:Model}


\subsection{Market}
Consider an agent who can trade a risk-free asset and $m$  risky assets continuously in the time period $[0,T]$. The price of the risk-free asset, denoted as $S_{0}(t)$, and the price of risky asset $i$, denoted as $S_{i}(t)$, $i=1,\dots,m$, follow
\begin{align*}
  dS_0(t)&= S_0(t) r(t)dt,\quad t\ge 0,\\
  dS_i(t)&=S_i(t)\big[\big(b_i(t)+r(t)\big)dt+\sum_{j=1}^{d}\sigma_{ij}(t)dW_j(t)\big],\quad t\ge 0,\quad i=1,\dots, m,
\end{align*}
where $W(t):=\big(W_1(t),...,W_d(t)\big)\tran$, $t\ge 0$ is a standard, $d$-dimensional Brownian motion, $r(t)$ is the {\em risk-free rate}, $b(t):=\big(b_1(t),\dots, b_m(t)\big)\tran$ is the {\em excess mean return rate vector} of the risky assets, and $\sigma(t):=\big(\sigma_{i,j}(t)\big)$ is the volatility matrix of the risky assets.

Here and hereafter,
for any nonempty interval $I$, denote by $C(I)$ the set of continuous functions from $I$ to certain metric space $\mathbb{B}$, where $\mathbb{B}$ varies with and will be clear in different contexts. For any $c<d$, denote by $C_{\mathrm{pw}}([c,d))$ the set of functions $g$ from $[c,d)$ to $\mathbb{B}$ such that there exists $c=:t_0<t_1<\dots<t_N:=d$ and $g$ on each $[t_{i-1},t_i)$ can be continuously extended to $[t_{i-1},t_i]$, $i=1,\dots, N$.

We make the following assumption throughout the paper:
\begin{assumption}\label{as:Parameters}
  $r$, $b$, and $\sigma$ are deterministic and belong to $C_{\mathrm{pw}}([0,T))$. Moreover, the following two non-degeneracy conditions hold: (i) $b(t)\neq 0$ for all $t\in[0,T)$ and (ii) there exists $\delta>0$ such that $\sigma(t) \sigma(t)\tran-\delta I$ is positive semi-definite for all $t \in[0,T)$, where $I$ stands for the $m$-dimensional identity matrix.
\end{assumption}

Suppose at each time $t$ an agent invests $\pi_i(t)$ dollars in risky asset $i$, $i=1,\dots, m$ and the remaining of her wealth in the risk-free asset. Then, the dynamics of the agent's wealth, denoted as $X(t)$, follow
\begin{align}\label{eq:WealthSDEDollar}
dX(t)=\big(r(t)X(t)+ \pi(t)\tran b(t)\big) dt+\pi(t)\tran \sigma(t)dW(t), \quad t\in[0,T],
\end{align}
where $\pi(t):=\big(\pi_1(t),\dots,\pi_m(t)\big)\tran$ is referred to as the agent's {\em portfolio}. 

\subsection{Portfolio Selection Problem}\label{subse:QuantileMaxProblem}
Suppose that the agent is endowed with initial wealth $x_0$ at time $0$ and wants to maximize the median of her terminal wealth (i.e., wealth at the end time $T$). The agent faces some portfolio constraints, such as the no-short-selling constraint, represented by $Q\pi(t)\ge 0,t\in[0,T)$ for some $n$-by-$m$ matrix $Q$. Suppose that the agent is going to revisit the portfolio decision at each time $t\in[0,T)$, with the same objective of maximizing the median of the terminal wealth. In contrast to expected utility maximization, a portfolio $\pi(s),s\in[0,T)$ that maximizes the median of the terminal wealth at time 0 does not necessarily maximize  the median of the terminal wealth at time $t$, leading to time-inconsistent behavior. To obtain consistent investment behavior, we follow the literature to consider the so-called equilibrium strategies, in which the agent is assumed to have no control of her selves in the future and thus the selves at different time can be viewed as different players in a game.

Formally, we restrict ourselves to consider Markovian portfolio strategies $\bm{\pi}$ that are mappings from $[0,T)\times \bbR$ to $\bbR^m$: At time $t$ with wealth $x$ at that time, the agent invests $\bm{\pi}(t,x)$ dollars in the risky assets. A portfolio strategy $\bm{\pi}$ is {\em feasible} if (i) for any $t\in[0,T)$ and $x\in \bbR$, the following equation
\begin{align}\label{eq:WealthSDEGeneral}
\left\{
\begin{array}{l}
dX^{\bm{\pi}}_{t,x}(s)=\big(r(s)X^{\bm{\pi}}_{t,x}(s)+ \bm{\pi}(s,X^{\bm{\pi}}_{t,x}(s))\tran b(s)\big) ds+\bm{\pi}(s,X^{\bm{\pi}}_{t,x}(s))\tran \sigma(s)dW(s), \; s\in[t,T],\\
X^{\bm{\pi}}_{t,x}(t) =x,
\end{array}
\right.
\end{align}
which represents the agent's wealth process if she starts at time $t$ with wealth $x$ and follows $\bm{\pi}$ to invest, has a unique solution, and (ii) $Q\bm\pi(s,X^{\bm{\pi}}_{0,x_0}(s))\ge 0,s\in[0,T)$. Denote by $\Pi$ the set of feasible portfolio strategies $\bm{\pi}$.
%
%
Denote by
\begin{align}
  F^{\bm{\pi}}(t,x,y):=\prob(X^{\bm{\pi}}_{t,x}(T)\le y), \quad y\in \bbR
\end{align}
the cumulative distribution function of the terminal wealth given wealth level of $x$ at time $t$, and denote by
\begin{align*}
  G^{\bm{\pi}}(t,x,\alpha):=\sup\{y\in \bbR:F^{\bm{\pi}}(t,x,y)\le \alpha\},\quad \alpha\in(0,1)
\end{align*}
the (right-continuous) quantile function of the terminal wealth given wealth level of $x$ at time $t$. In particular, $G^{\bm{\pi}}(t,x,1/2)$ is the median of the terminal wealth given wealth level of $x$ at time $t$.

Suppose that the agent wants to maximize the $\alpha$-level quantile of her terminal wealth. In particular, when $\alpha=1/2$, the agent is a median maximizer. Because quantile maximization leads to time-inconsistency, we consider so-called intra-personal equilibrium. Formally, suppose that we are given a strategy $\hat {\bm{\pi}}\in \Pi$. Denote by $\bbX_t^{x_0,\hat{\bm{\pi}}}$ the set of {\em reachable} wealth levels at time $t$ from the initial wealth $x_0$ at time $0$ and following the strategy $\hat {\bm{\pi}}$; i.e., $\bbX_t^{x_0,\hat{\bm{\pi}}}$ is defined as follows:
\begin{align}\label{support def}
&\bbX_t^{x_0,\hat{\bm{\pi}}}=\mathrm{int}(\mathbb{S}_{X^{\hat{\bm{\pi}}}_{0,x_0}(t)}) \nonumber\\
& \cup \left\{x\in \partial \mathbb{S}_{X^{\hat{\bm{\pi}}}_{0,x_0}(t)} : \prob\big(X^{\hat{\bm{\pi}}}_{0,x_0}(t)\in B_\delta(x)\cap \partial \mathbb{S}_{X^{\hat{\bm{\pi}}}_{0,x_0}(t)}\big)>0\text{ for all }\delta>0\right\},
\end{align}
where $B_\delta(x)$ denotes the ball with radius $\delta$ and centered at $x$, $\mathbb{S}_{X^{\hat{\bm{\pi}}}_{0,x_0}(t)}$ is  the support of $X^{\hat{\bm{\pi}}}_{0,x_0}(t)$  and $\mathrm{int}(\mathbb{S}_{X^{\hat{\bm{\pi}}}_{0,x_0}(t)})$, $\partial \mathbb{S}_{X^{\hat{\bm{\pi}}}_{0,x_0}(t)}$ denote the interior and the boundary of $\mathbb{S}_{X^{\hat{\bm{\pi}}}_{0,x_0}(t)}$ respectively.
\begin{definition}\label{de:EquilibriumStrategy}
$\hat{\bm{\pi}}\in \Pi$ is an intra-personal equilibrium for $\alpha$-level quantile maximization if for any $t\in[0,T)$, $x\in \bbX_t^{x_0,\hat{\bm{\pi}}}$, and $\pi\neq \hat{\bm{\pi}}(t,x)$ with $Q\pi\ge 0$, there exists $\epsilon_0\in (0,T-t)$ such that
\begin{align}\label{eq:EquiDefDollar}
G^{\hat{\bm{\pi}}_{t,\epsilon,\pi}}(t,x,\alpha)-G^{\hat{\bm{\pi}}}(t,x,\alpha) \le 0,\quad \forall \epsilon\in(0,\epsilon_0],
\end{align}
where
\begin{align}\label{eq:PerturbatedPolicy}
  \hat{\bm{\pi}}_{t,\epsilon,\pi}(s,y) = \begin{cases}
\pi, &  s\in[t,t+\epsilon), y\in \mathbb{R}, \\
\hat{\bm{\pi}}(s,y),&s\notin [t,t+\epsilon), y\in \mathbb{R}. \\
\end{cases}
\end{align}
\end{definition}

Imagine that at time $t$, the agent is only able to control herself for a period of length $\epsilon$, so she can choose to invest any dollar amount $\pi$ in the risky assets in the period $[t,t+\epsilon)$ and after that period she is expected follow certain given strategy, e.g., $\hat{\bm{\pi}}$. As a result, the strategy that the agent will actually implement until the end date is $\hat{\bm{\pi}}_{t,\epsilon,\pi}$ as defined by \eqref{eq:PerturbatedPolicy}. Definition \ref{de:EquilibriumStrategy} then stipulates that $\hat{\bm{\pi}}$ is an intra-personal equilibrium if at any time $t$ with any wealth level $x$ that is reachable at that time under $\hat{\bm{\pi}}$, the objective function, namely, the quantile of the terminal wealth, becomes smaller if she chooses an {\em alternative} amount $\pi$ (satisfying the portfolio constraints) to invest in the risky assets, assuming that she is only able to control herself to invest $\pi$ dollars in the risky assets in an {\em infinitesimally small} period of time.

The above definition of equilibrium strategies is so-called regular equilibrium, which slightly differs from the notion of weak equilibrium that
 is used in most studies of continuous-time time-inconsistent problems in the literature.
%
 As explained in \citet{HeJiang2019:OnEquilibriumStrategies}, the notion of regular equilibrium is preferred to the notion of weak equilibrium because the agent can still be willing to deviate from a weak equilibrium strategy and take a very different alternative strategy.
For completeness, we also studied the intra-personal equilibrium strategy under quantile maximization using the notion of weak equilibrium and found that the results in the present paper remain the same. Because the proof is similar to the one for the case of regular equilibrium, we chose not to present it.

Because the discount factor for the period $[0,s]$, namely $e^{-\int_0^sr(u)du}$, is a deterministic function of $s\in[0,T]$, maximizing the quantile of the terminal wealth is equivalent to maximizing the quantile of the discounted terminal wealth. Thus, for notational simplicity, in the following presentation, we set $r\equiv 0$ without loss of generality.

\section{Main Results}\label{se:MainResults}


Let us first present Kelly's portfolio. Assume the following, which stipulates that at each time, one can find a portfolio with a positive instantaneous mean return:
\begin{assumption}\label{as:Feasibility2}
For any $t\in [0,T)$, the set $\{ v\in \mathbb{R}^m\mid b(t)\tran v>0,  Qv \geq 0 \} $ is nonempty and for any $t\in(0,T]$, the set $\{ v\in \mathbb{R}^m\mid b(t-)\tran v>0, Qv \geq 0 \} $ is nonempty, where $b(t-)$ denotes the left-limit of $b$ at $t$.
\end{assumption}

For each $t\in[0,T)$, denote by $v^*(t)$ the optimal solution to following problem
  \begin{align}\label{general small alpha 1}
\left\{
  \begin{array}{cl}
  \underset{v\in\mathbb{R}^m}{\min} &\frac{1}{2} v\tran \sigma(t)\sigma(t)\tran v-b(t)\tran v,\\
  \text{subject to} & Qv\ge 0.
  \end{array}
\right.
\end{align}
 By Lemma \ref{le:QuadProgSolution} in Appendix \ref{se:Proofs}, $v^*\in C_{\mathrm{pw}}([0,T))$.

It is well known that Kelly's portfolio, namely the one that maximizes the expected logarithmic utility of the terminal wealth, is
\begin{align}\label{eq:Kelly}
  \bm{\pi}_{\mathrm{Kelly}}(t,x) = v^*(t)x,\quad t\in [0,T),x\in\bbR.
\end{align}
See for instance \citet{KaratzasIShreveS:98momf}.

We are particularly interested in the following family of {\em affine strategies}:
\begin{align*}
  \mathbb{A} = \big\{&\bm{\pi}\mid \bm{\pi}(t,x) = \theta_0(t) + \theta_1(t) x,\;t\in[0,T),x\in \bbR
   \\ &\text{ for some } \theta_0,\theta_1\in C_{\mathrm{pw}}([0,T))\text{ taking values in }\bbR^m\big\}.
\end{align*}
Note that this family of strategies include many commonly used strategies, such as investing a certain proportion of wealth in the risky assets, investing a wealth-independent dollar amount in the risky assets, and the mixture of the above two strategies taken in different time periods. Note that Kelly's strategy is an affine one.

 \begin{theorem}\label{th:EquilibriumLowerLimit}
 Suppose Assumptions \ref{as:Parameters} and \ref{as:Feasibility2} hold.
 \begin{enumerate}
\item[(i)]
 Suppose $\alpha=1/2$. Then, $\hat{\bm{\pi}}\in \mathbb{A}$ is an intra-personal equilibrium if and only if
 \begin{align}\label{eq:EquiStrategyPropTarget}
\hat{\bm{\pi}}(t,x) =  v^*(t) (x-\xi),  \quad   t\in[0, T), x\in\mathbb{R}
\end{align}
for some constant $\xi<x_0$.
 \item[(ii)] Suppose $\alpha \in (0, 1/2)$. Then,    $\hat{\bm{\pi}}\in \mathbb{A}$ is an intra-personal equilibrium if and only if it implies zero investment in the risky assets at all time, i.e., if and only if
  \begin{align}\label{tail risk equilibrium Strategy}
\hat{\bm{\pi}}(t,x) = \theta(t) (x-x_0),  \quad \forall t\in[0, T], x\in\mathbb{R}
\end{align}
for some $\theta\in C_{\mathrm{pw}}([0,T))$.
 \item[(iii)] Suppose $\alpha \in (1/2, 1)$. Then,   there does not exist any intra-personal
equilibrium in $\mathbb{A}$.
 \end{enumerate}
\end{theorem}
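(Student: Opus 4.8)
The plan is to reduce Definition \ref{de:EquilibriumStrategy} to a pointwise, \emph{infinitesimal} condition at each reachable $(t,x)$, obtained by expanding the $\alpha$-quantile of the perturbed terminal wealth to leading order in $\epsilon$, and then to solve that condition explicitly for affine strategies, whose terminal wealth is an affine diffusion with an explicit (shifted log-normal) law. The decisive dichotomy is whether $X^{\hat{\bm\pi}}_{t,x}(T)$ is \emph{non-degenerate} (positive density near its $\alpha$-quantile) or \emph{degenerate} (deterministic, equal to $x$); the three cases $\alpha\lessgtr 1/2$ will enter only through the sign of $z_\alpha$, the $\alpha$-quantile of the standard normal distribution, which vanishes precisely at $\alpha=1/2$.

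For the non-degenerate case I would write $F=F^{\hat{\bm\pi}}$ and use the exact identity $F^{\hat{\bm\pi}_{t,\epsilon,\pi}}(t,x,y)=\expect[F(t+\epsilon,X(t+\epsilon),y)]$, where $X(t+\epsilon)$ is the wealth at $t+\epsilon$ under the constant control $\pi$. A second-order Taylor/Itô expansion, combined with the Kolmogorov backward equation $\partial_t F+(\hat{\bm\pi}(t,x)\tran b)\partial_x F+\tfrac12(\hat{\bm\pi}(t,x)\tran\sigma\sigma\tran\hat{\bm\pi}(t,x))\partial_{xx}F=0$, gives, at $y=G^{\hat{\bm\pi}}(t,x,\alpha)$,
\begin{align*}
F^{\hat{\bm\pi}_{t,\epsilon,\pi}}(t,x,y)-F(t,x,y)\approx \epsilon\Big[\big((\pi-\hat{\bm\pi}(t,x))\tran b\big)\partial_x F+\tfrac12\big(\pi\tran\sigma\sigma\tran\pi-\hat{\bm\pi}(t,x)\tran\sigma\sigma\tran\hat{\bm\pi}(t,x)\big)\partial_{xx}F\Big].
\end{align*}
Since $\partial_y F>0$, the quantile shift has the opposite sign to the CDF shift, so the equilibrium condition \eqref{eq:EquiDefDollar} becomes: $\hat{\bm\pi}(t,x)$ \emph{minimizes} $\pi\mapsto(\partial_x F)b\tran\pi+\tfrac12(\partial_{xx}F)\pi\tran\sigma\sigma\tran\pi$ over $\{Q\pi\ge0\}$. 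This is bounded below only if $\partial_{xx}F\ge0$, and when $\partial_{xx}F>0$ a scaling argument on the cone together with the definition of $v^*$ in \eqref{general small alpha 1} identifies the constrained minimizer as $\hat{\bm\pi}(t,x)=-\frac{\partial_x F}{\partial_{xx}F}\,v^*(t)$, all derivatives taken at $(t,x,G^{\hat{\bm\pi}}(t,x,\alpha))$. In the degenerate case the perturbed terminal wealth is Gaussian to leading order with $\alpha$-quantile $x+\pi\tran b\,\epsilon+z_\alpha\sqrt{\pi\tran\sigma\sigma\tran\pi}\,\sqrt\epsilon$; as the $\sqrt\epsilon$-term dominates, the condition reduces to $z_\alpha\sqrt{\pi\tran\sigma\sigma\tran\pi}\le0$ for all admissible $\pi$, and to $\pi\tran b\le 0$ when $z_\alpha=0$.

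I would then specialize to $\hat{\bm\pi}\in\mathbb{A}$. On any non-degenerate reachable interval the relation $\hat{\bm\pi}(t,\cdot)\parallel v^*(t)$ forces $\hat{\bm\pi}(t,x)=v^*(t)(\lambda_0(t)+\lambda_1(t)x)$; computing $\partial_x F$ and $\partial_{xx}F$ from the shifted log-normal law gives $-\partial_x F/\partial_{xx}F=(x-\zeta)/(1-z_\alpha/s(t))$, where $s(t)^2=\int_t^T\lambda_1(u)^2|v^*(u)\tran\sigma(u)|^2\,du$ is the variance-to-maturity of the log wealth, $s(T)=0$, and $\zeta$ is the shift. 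Matching with $\hat{\bm\pi}=cv^*$ yields the scalar self-consistency $\lambda_1(t)\big(1-z_\alpha/s(t)\big)=1$. When $\alpha=1/2$, $z_\alpha=0$, so $\partial_{xx}F>0$ automatically and this collapses to $\lambda_1\equiv1$ with no terminal constraint; matching the $(t,x)$-dependence of $c(t,x)$ then forces the intercept to be constant, giving $\hat{\bm\pi}(t,x)=v^*(t)(x-\xi)$, and reachability (the wealth stays above $\xi$) together with the failure of the degenerate case $\xi=x_0$ (since $z_\alpha=0$ requires $\pi\tran b\le0$, contradicting Assumption \ref{as:Feasibility2}) forces $\xi<x_0$. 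When $\alpha<1/2$ we have $z_\alpha<0$: the degenerate strategy satisfies the condition, while any non-degenerate piece yields, near $T$, the linear equation $s'\approx-\frac{|v^*\tran\sigma|^2}{2z_\alpha^2}\,s$ with $s(T)=0$, whose unique solution is $s\equiv0$, a contradiction; hence only zero investment survives. When $\alpha>1/2$ we have $z_\alpha>0$: the degenerate strategy fails, and any non-degenerate piece has $s(t)\to0$ as $t\to T$, so $1-z_\alpha/s(t)<0$, i.e.\ $\partial_{xx}F<0$, the quadratic is unbounded below, no minimizer exists, and no equilibrium can exist.

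The main obstacle is the first step: rigorously justifying the leading-order expansion of the \emph{quantile} (not an expectation) of the perturbed terminal wealth. This requires the regularity of $F$ (a positive $C^{1,2}$ density near the $\alpha$-quantile), control of the non-Gaussian corrections and of the interchange of limits, and—most delicately—a clean treatment of the crossover between the $O(\sqrt\epsilon)$ behavior in the degenerate regime and the $O(\epsilon)$ behavior in the non-degenerate regime, along with a careful description of the reachable set $\bbX_t^{x_0,\hat{\bm\pi}}$ near its boundary so that the pointwise condition is imposed exactly where Definition \ref{de:EquilibriumStrategy} demands.
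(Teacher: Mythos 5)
Your proposal reproduces the paper's first two moves correctly (the infinitesimal quantile expansion reducing \eqref{eq:EquiDefDollar} to a cone-constrained quadratic minimization, whose solution forces $\hat{\bm{\pi}}(t,x)=-\frac{F_x}{F_{xx}}v^*(t)$, and the $\sqrt{\epsilon}$-dominance argument in the degenerate case), but the engine of your necessity argument rests on a false premise: the terminal wealth of a \emph{general} affine strategy is not shifted log-normal. Under $\hat{\bm{\pi}}(t,x)=\theta_0(t)+\theta_1(t)x$ the linear SDE gives $X^{\hat{\bm{\pi}}}_{t,x}(T)=x\tilde Z_1(T;t)+\tilde Z_2(T;t)$, where $\tilde Z_1$ is log-normal but $\tilde Z_2$ is a path-dependent integral functional; the law is shifted log-normal only when $\theta_0\equiv-\xi\theta_1$ for a \emph{constant} $\xi$ --- i.e., exactly for the strategies the theorem asserts are equilibria. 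Hence your formula $-F_x/F_{xx}=(x-\zeta)/(1-z_\alpha/s(t))$ and the self-consistency $\lambda_1(t)(1-z_\alpha/s(t))=1$ assume the conclusion. The paper avoids this circularity: the pointwise condition yields only $\hat{\bm{\pi}}=(a_0(t)+a_1(t)x)v^*(t)$ together with a linear transport PDE for the quantile (Lemmas \ref{le:NeccOpt} and \ref{le:NeccPDE}); solving that PDE shows the quantile is affine in $x$ on the reachable set (Lemmas \ref{le:TransportEq}, \ref{le:EquilibriumQuantile}), and the slope is pinned down by the large-$x$ asymptotics $X^{\hat{\bm{\pi}}}_{t,x}(T)/x\to\tilde Z_1(T;t)$, which \emph{is} log-normal, producing the integral equation \eqref{eq:a1Eq1}. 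Your near-terminal Gronwall contradiction for $\alpha\neq 1/2$ is then the analogue of the paper's $t\uparrow t^*$ limit in Proposition \ref{prop:NeccAlphaNotHalf} and is reparable (though note your $s$-ODE only kills $\lambda_1$; the pure-intercept Gaussian case $\lambda_1\equiv 0$, $\lambda_0\neq 0$ for $\alpha<1/2$ needs its own, analogous, elimination).

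The deeper gap is at $\alpha=1/2$: the phrase ``matching the $(t,x)$-dependence of $c(t,x)$ then forces the intercept to be constant'' has no argument behind it, and this is precisely the hardest step of the paper's proof (Proposition \ref{prop:NeccAlphaHalf}). Once $a_1\equiv 1$ is established, one must still exclude time-varying increasing intercepts $a_0(t)$; for such strategies the terminal wealth contains the term $\int_t^T\frac{\tilde Z_1(T;t)}{\tilde Z_1(s;t)}\,da_0(s)$, the law is not shifted log-normal, and no pointwise computation is available. The paper's exclusion is genuinely probabilistic: it evaluates the wealth started at the boundary point $x=-a_0(t)$ of the reachable set $\bbX^{x_0,\hat{\bm{\pi}}}_t$, uses the strict inequality $e^y-1>y$ for $y\neq 0$ together with a Brownian support theorem to show
\begin{align*}
\prob\big(X^{\hat{\bm{\pi}}}_{t,-a_0(t)}(T)\le -a_0(t)\beta_1(t,1/2)+\beta_0(t,1/2)\big)<1/2,
\end{align*}
i.e., the median at the boundary strictly exceeds the affine extrapolation, and then contradicts the continuity of the quantile together with the affine formula \eqref{eq:EquilibriumQuantile} valid for $x>-a_0(t)$. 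Without an argument of this type, your necessity proof for median maximization establishes only that equilibria have the form $(a_0(t)+x)v^*(t)$, not the claimed portfolio-insurance form \eqref{eq:EquiStrategyPropTarget} with constant $\xi$.
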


Theorem \ref{th:EquilibriumLowerLimit}-(i) characterizes all affine strategies that are intra-personal equilibria for median maximization. The wealth process $X_{0,x_0}^{\hat{\bm{\pi}}}$ under intra-personal equilibrium \eqref{eq:EquiStrategyPropTarget} satisfies
\begin{align*}
\left\{
\begin{array}{l}
  d(X^{\hat{\bm{\pi}}}_{0,x_0}(t)-\xi) = (X^{\hat{\bm{\pi}}}_{0,x_0}(t)-\xi)\left[v^*(t)\tran b(t)dt + v^*(t)\tran\sigma(t)dW(t) \right],\; t\in[0,T],\\
  X^{\hat{\bm{\pi}}}_{0,x_0}(0)=x_0>\xi.
  \end{array}
  \right.
\end{align*}
Therefore, $X^{\hat{\bm{\pi}}}_{0,x_0}(t)>\xi$ for all $t\in[0,T]$, and the set of reachable wealth levels at time $t$ is $(\xi,+\infty)$ for $t\in(0,T]$. Thus, $\xi$ stands for the guaranteed wealth level, or a {\em portfolio insurance level}.

By definition, revising the value of $\hat{\bm{\pi}}(t,x)$ for $x\notin \bbX_t^{x_0,\hat{\bm{\pi}}}$ changes neither the wealth process $X^{\hat{\bm{\pi}}}_{0,x_0}$ nor whether or not $\hat{\bm{\pi}}$ is an intra-personal equilibrium. Therefore, any $\tilde{\bm{\pi}}$ such that $\tilde{\bm{\pi}}$ agrees with $\hat{\bm{\pi}}$ as given by \eqref{eq:EquiStrategyPropTarget} for $x\in \bbX_t^{x_0,\hat{\bm{\pi}}},t\in[0,T)$, e.g., $\tilde{\bm{\pi}}(t,x) = v^*(t) (x-\xi)^+,t\in[0,T),x\in \bbR$, is also an intra-personal equilibrium.

Theorem \ref{th:EquilibriumLowerLimit}-(i) also shows that there exist multiple intra-personal equilibria for median maximization, parameterized by the portfolio insurance level $\xi$, and the multiplicity here is generic in that the dollar amount invested in the risky assets, $\hat{\bm{\pi}}(t,X^{\hat{\bm{\pi}}}_{0,x_0}(t))$, differs with respect to different values of $\xi$. Multiplicity of intra-personal equilibria for time-inconsistent problems has been noted in the literature both in discrete settings (see e.g., \citealt{VieilleWeibull2009:MultipleSolutions} and \citealt{CaoWerning2018:SavingDissaving}) and in continuous-time settings (see e.g., \citealt{EkelandLazrak2010:GoldenRule} and \citealt{CaoWerning2016:DynamicSavingsDisagreements}).

  Theorem \ref{th:EquilibriumLowerLimit}-(ii) shows that the only intra-personal equilibrium strategy under $\alpha$-quantile maximization with $\alpha<1/2$ is not to invest in the risky assets. On the other hand, Theorem \ref{th:EquilibriumLowerLimit}-(iii) shows that the intra-personal equilibrium strategy does not exist under $\alpha$-quantile maximization with $\alpha>1/2$. To derive some insight of the above results, let us consider constant $b$ and $\sigma$ and restrict to strategies of investing a constant proportion $v$ of the wealth to the risky assets. Then, given wealth $x$ at time $t$, the $\alpha$-quantile of this strategy is
\begin{align*}
 \left(b\tran v -\frac{1}{2}v\tran \sigma \sigma\tran v\right)(T-t) + \sqrt{v\tran \sigma \sigma\tran v}\sqrt{T-t}\Phi^{-1}(\alpha).
\end{align*}
When $T-t$ is sufficiently small, the second term in the above dominates the first term. As a result, when $\alpha<1/2$, we have $\Phi^{-1}(\alpha)<0$, so in order to maximize the $\alpha$-quantile, the agent would minimize $\sqrt{v\tran \sigma \sigma\tran v}$, implying zero investment in the risky assets. When $\alpha>1/2$, we have $\Phi^{-1}(\alpha)>0$, so in order to maximize the $\alpha$-quantile, the agent would maximize $\sqrt{v\tran \sigma \sigma\tran v}$, implying infinite amount of risk taking. Thus, when the agent is concerned about the lower quantile of her wealth (i.e., $\alpha<1/2$), she is extremely conservative so she decides not to invest in the risky assets. When she is concerned about the upper quantile of her wealth (i.e., $\alpha>1/2$), she is aggressive and thus takes infinite amount of risk, implying nonexistence of intra-personal equilibrium. 



 In Definition \ref{de:EquilibriumStrategy}, a portfolio strategy is an intra-personal equilibrium if at any time and any reachable wealth level, the agent is not willing to deviate from it in the sense of condition \eqref{eq:EquiDefDollar}. It is reasonable to exclude wealth levels that are not reachable in the test of whether a strategy is an intra-personal equilibrium because the actions of the agent's future selves at those wealth levels are irrelevant from the perspective of the agent's self today. In the literature on time inconsistency problems, however, such unreachable states are not excluded in the definition of intra-personal equilibrium.\footnote{The only exception is \citet{HeJiang2019:OnEquilibriumStrategies}; see detailed discussions and the relevant references therein.} The following theorem shows that to derive intra-personal equilibrium in our model, it is necessary to exclude unreachable states.

\begin{theorem}\label{prop:BoundaryPoint}
  Suppose Assumptions \ref{as:Parameters} and \ref{as:Feasibility2} hold. Consider any strategy $\tilde{\bm{\pi}}$ such that $\tilde{\bm{\pi}}$ agrees with $\hat{\bm{\pi}}$ as given by \eqref{eq:EquiStrategyPropTarget} for $x\in \bbX_t^{x_0,\hat{\bm{\pi}}},t\in[0,T)$ and that $\tilde{\bm{\pi}}(t,x)$ is continuous in $x$. Then, for any $t\in (0,T)$, there exists $\epsilon_0\in (0,T-t)$ such that $G^{\tilde{\bm{\pi}}_{t,\epsilon,v^*(t)}}(t,\xi,1/2)>\xi = G^{\hat{\bm{\pi}}}(t,\xi,1/2)$ for any $\epsilon\in(0,\epsilon_0)$, where $\tilde{\bm{\pi}}_{t,\epsilon,\pi}$ is defined similarly as in \eqref{eq:PerturbatedPolicy} with $\hat{\bm{\pi}}$ therein replaced by $\tilde{\bm{\pi}}$.
\end{theorem}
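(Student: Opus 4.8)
The plan is to show directly that the baseline median equals $\xi$ while the perturbed strategy puts strictly more than half of the terminal wealth's mass above $\xi$. First I would record the baseline. Starting from $(t,\xi)$ and following $\hat{\bm{\pi}}$ in \eqref{eq:EquiStrategyPropTarget}, the process $X-\xi$ solves a linear SDE with zero initial value, so it is identically $0$ and $X^{\hat{\bm{\pi}}}_{t,\xi}(T)=\xi$ almost surely. Its law is thus a point mass at $\xi$, and $G^{\hat{\bm{\pi}}}(t,\xi,1/2)=\xi$, which gives the right-hand side of the claimed inequality.

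Next I would extract the two sign facts I need from the quadratic program \eqref{general small alpha 1}. Since $v=0$ is feasible with objective value $0$ and Assumption \ref{as:Feasibility2} supplies a feasible direction $\bar v$ with $b(t)\tran\bar v>0$, scaling $\bar v$ by a small positive factor yields a strictly negative objective value; hence the minimizer satisfies $\tfrac12 v^*(t)\tran\sigma(t)\sigma(t)\tran v^*(t)-b(t)\tran v^*(t)<0$, which forces both $b(t)\tran v^*(t)>0$ and $v^*(t)\neq0$. The nondegeneracy in Assumption \ref{as:Parameters}(ii) then gives $v^*(t)\tran\sigma(t)\sigma(t)\tran v^*(t)\ge\delta|v^*(t)|^2>0$. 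Moreover $Qv^*(t)\ge0$, so $\pi=v^*(t)$ is an admissible perturbation, and it differs from $\tilde{\bm{\pi}}(t,\xi)=0$ (the latter value is forced by continuity of $\tilde{\bm{\pi}}$ together with agreement with $\hat{\bm{\pi}}$ on $(\xi,+\infty)$).

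I would then analyze the perturbed dynamics. On $[t,t+\epsilon)$ the strategy invests the constant dollar amount $v^*(t)$, so $X(t+\epsilon)$ is Gaussian with mean $\xi+\int_t^{t+\epsilon}v^*(t)\tran b(s)\,ds$ and variance $\int_t^{t+\epsilon}v^*(t)\tran\sigma(s)\sigma(s)\tran v^*(t)\,ds$. Because functions in $C_{\mathrm{pw}}([0,T))$ are right-continuous, for $\epsilon$ small the integrand $v^*(t)\tran b(s)$ stays positive near $s=t$, so the mean exceeds $\xi$ while the variance is strictly positive; hence $\prob\big(X(t+\epsilon)>\xi\big)=\Phi\big(\text{(positive)}/\text{(positive)}\big)>1/2$. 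The crucial structural step is an absorption-from-above property: conditioned on $X(t+\epsilon)=w$ with $w>\xi$, the post-perturbation strategy coincides with $\hat{\bm{\pi}}$ on the reachable set $(\xi,+\infty)$ (this is where the hypothesis that $\tilde{\bm{\pi}}$ agrees with $\hat{\bm{\pi}}$ on $\bbX_t^{x_0,\hat{\bm{\pi}}}$ enters), so $X-\xi$ again evolves as a geometric Brownian motion started at $w-\xi>0$ and therefore stays strictly positive up to $T$. By the flow property this gives $X(T)>\xi$ almost surely on $\{X(t+\epsilon)>\xi\}$, whence $\prob\big(X(T)>\xi\big)\ge\prob\big(X(t+\epsilon)>\xi\big)>1/2$.

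Finally I would convert the mass statement into a median statement: since $y\mapsto F^{\tilde{\bm{\pi}}_{t,\epsilon,v^*(t)}}(t,\xi,y)$ is a right-continuous distribution function equal to $\prob(X(T)\le\xi)<1/2$ at $y=\xi$, right-continuity yields $\delta>0$ with $F^{\tilde{\bm{\pi}}_{t,\epsilon,v^*(t)}}(t,\xi,\xi+\delta)<1/2$, so $G^{\tilde{\bm{\pi}}_{t,\epsilon,v^*(t)}}(t,\xi,1/2)=\sup\{y:F^{\tilde{\bm{\pi}}_{t,\epsilon,v^*(t)}}(t,\xi,y)\le 1/2\}\ge\xi+\delta>\xi$. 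The main obstacle I anticipate is the bookkeeping inside the perturbation window—securing the small-$\epsilon$ positivity of the drift integral in the presence of jumps of $b$, and justifying the absorption-from-above together with the flow decomposition—while noting that the possibly ill-behaved dynamics below $\xi$ are irrelevant because only the event $\{X(t+\epsilon)>\xi\}$ is used.
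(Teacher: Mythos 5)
Your proposal is correct and follows essentially the same route as the paper's proof: establish that the baseline median at $(t,\xi)$ equals $\xi$, use $b(t)\tran v^*(t)>0$ together with right-continuity of $b$ to get a Gaussian wealth at $t+\epsilon$ with mean above $\xi$ and positive variance (so mass above $\xi$ exceeds $1/2$), propagate this via the flow property and the fact that $\tilde{\bm{\pi}}$ agrees with $\hat{\bm{\pi}}$ on $(\xi,+\infty)$ so that wealth stays above $\xi$ thereafter, and conclude from the definition of the right-continuous quantile. The only cosmetic differences are that you verify the sign facts about $v^*(t)$ directly from the quadratic program rather than citing Lemma \ref{le:QuadProgSolution}, and you spell out the final quantile step in more detail than the paper does.
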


  Theorem \ref{prop:BoundaryPoint}  shows that if we mechanically force the condition \eqref{eq:EquiDefDollar} to hold for all wealth levels, even for those that are not reachable, then for the median maximization problem the portfolio strategies we derive in Theorem \ref{th:EquilibriumLowerLimit} are no longer intra-personal equilibrium. Theorem \ref{prop:BoundaryPoint}  also shows why we define the set of reachable wealth levels at each time $t$ to be \eqref{support def} rather than to be the support of $X^{\hat{\bm{\pi}}}_{0,x_0}(t)$: for the strategy \eqref{eq:EquiStrategyPropTarget}, $\xi$ is in the support of $X^{\hat{\bm{\pi}}}_{0,x_0}(t)$, but it does not satisfy the condition \eqref{eq:EquiDefDollar} and is actually not visited by the wealth process.

Finally, due to different life objectives, such as education, kids, and retirement, some investors may concern their wealth at multiple time points that respectively correspond to those objectives; see for instance \citet{Sironi2016:Fintech}. To account for the above multiple objectives, we can extend our model to maximize the weighted average of the quantile of wealth at multiple time points. In this extended model, we derive the same intra-personal equilibrium strategies as those in Theorem \ref{th:EquilibriumLowerLimit}. The details of the extended model are presented in Appendix \ref{se:MultiTerminalDates}.\footnote{
Let us comment that we do not impose any lower bound on the discounted wealth level in our formulation of the portfolio selection problem. Our results, however, still hold if one imposes such a lower bound, e.g., certain $I<x_0$: In this case, Theorem \ref{th:EquilibriumLowerLimit}-(ii) and -(iii) and Theorem \ref{th:QuantileMaximizationMultiple}-(ii) and -(iii) still hold, and Theorem \ref{th:EquilibriumLowerLimit}-(i), Theorem \ref{th:QuantileMaximizationMultiple}-(i), and Theorem \ref{prop:BoundaryPoint} hold by restricting $\xi\in [I,x_0)$.
}

\section{Discussion of Consistent Portfolio Strategies for Median Maximization}\label{se:Discussion}
In this section, we further discuss the intra-personal equilibria for median maximization as given by \eqref{eq:EquiStrategyPropTarget}. To highlight the dependence of this intra-personal equilibria on $\xi$, we denote it as $\hat{\bm{\pi}}_\xi$ in the following.

\subsection{Portfolio Insurance}
\begin{proposition}\label{prop:PortfolioInsurance}
For any $\xi<x_0$,
 \begin{align}\label{eq:MedianEquilibrium}
   G^{\hat{\bm{\pi}}_\xi}(t,x,1/2) = \xi + (x-\xi)e^{\frac{1}{2}\int_t^T\|\sigma(s)\tran v^*(s)\|^2ds},\quad x\in \bbX_t^{x_0,\hat{\bm{\pi}}_\xi},\; t\in [0,T).
 \end{align}
 Moreover, for any $\xi_1<\xi_2<x_0$, $G^{\hat{\bm{\pi}}_{\xi_1}}(t,x,1/2)>G^{\hat{\bm{\pi}}_{\xi_2}}(t,x,1/2)$ for any $t\in[0,T)$ and $x\in \bbX_t^{x_0,\hat{\bm{\pi}}_{\xi_2}}\supseteq \bbX_t^{x_0,\hat{\bm{\pi}}_{\xi_1}}$.
\end{proposition}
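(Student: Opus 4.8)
The plan is to treat the two assertions separately: the median formula \eqref{eq:MedianEquilibrium} by solving the equilibrium wealth dynamics in closed form, and the comparison by a short algebraic manipulation of that formula. For the first, I would fix $t\in[0,T)$ and $x\in\bbX_t^{x_0,\hat{\bm{\pi}}_\xi}$ and set $Y(s):=X^{\hat{\bm{\pi}}_\xi}_{t,x}(s)-\xi$. Under \eqref{eq:EquiStrategyPropTarget}, $Y$ solves $dY(s)=Y(s)\big[v^*(s)\tran b(s)\,ds+v^*(s)\tran\sigma(s)\,dW(s)\big]$ with $Y(t)=x-\xi>0$ (the inequality because $x\in\bbX_t^{x_0,\hat{\bm{\pi}}_\xi}=(\xi,\infty)$), exactly as recorded after Theorem \ref{th:EquilibriumLowerLimit}. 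This is a geometric Brownian motion with deterministic, piecewise-continuous coefficients, so
\[
Y(T)=(x-\xi)\exp\Big\{\int_t^T\big(v^*(s)\tran b(s)-\tfrac12\|\sigma(s)\tran v^*(s)\|^2\big)ds+\int_t^T v^*(s)\tran\sigma(s)\,dW(s)\Big\},
\]
and hence $Y(T)$ is lognormal. Since the median of a lognormal variable is the exponential of the mean of its logarithm, and since $X^{\hat{\bm{\pi}}_\xi}_{t,x}(T)=Y(T)+\xi$ merely shifts the median by the constant $\xi$, the median of terminal wealth equals $\xi+(x-\xi)\exp\{\int_t^T(v^*(s)\tran b(s)-\tfrac12\|\sigma(s)\tran v^*(s)\|^2)ds\}$.

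The crux, and the only genuinely non-routine step, is to collapse the drift integrand to $\tfrac12\|\sigma(s)\tran v^*(s)\|^2$, i.e.\ to establish $v^*(s)\tran b(s)=v^*(s)\tran\sigma(s)\sigma(s)\tran v^*(s)$. I would read this off the Karush--Kuhn--Tucker conditions of the constrained quadratic program \eqref{general small alpha 1}: there is a multiplier $\lambda(s)\ge 0$ with stationarity $\sigma(s)\sigma(s)\tran v^*(s)-b(s)=Q\tran\lambda(s)$ and complementary slackness $\lambda(s)\tran Qv^*(s)=0$. Left-multiplying the stationarity identity by $v^*(s)\tran$ and using $v^*(s)\tran Q\tran\lambda(s)=(Qv^*(s))\tran\lambda(s)=0$ yields precisely $v^*(s)\tran\sigma(s)\sigma(s)\tran v^*(s)=v^*(s)\tran b(s)$. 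Substituting this identity turns the drift into $\tfrac12\|\sigma(s)\tran v^*(s)\|^2$ and delivers \eqref{eq:MedianEquilibrium}; the existence and regularity of $v^*$ needed here are already supplied by Lemma \ref{le:QuadProgSolution}, while the lognormal-median computation is standard.

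For the comparison, I would note that the reachable sets are $\bbX_t^{x_0,\hat{\bm{\pi}}_{\xi_i}}=(\xi_i,\infty)$ for $t\in(0,T]$ (by positivity and full support of the geometric Brownian motion $Y$), so that \eqref{eq:MedianEquilibrium} applies to both $\xi_1$ and $\xi_2$ whenever $x>\xi_2$ (the case $t=0$ is immediate since the only reachable point $x_0$ exceeds $\xi_2$). Writing $C:=\exp\{\tfrac12\int_t^T\|\sigma(s)\tran v^*(s)\|^2ds\}$, both medians are affine in $x$ with the common slope $C$, so
\[
G^{\hat{\bm{\pi}}_{\xi_1}}(t,x,1/2)-G^{\hat{\bm{\pi}}_{\xi_2}}(t,x,1/2)=(\xi_2-\xi_1)(C-1),
\]
independent of $x$. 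It then suffices to check $C>1$: by Assumption \ref{as:Feasibility2} there is a feasible $v$ with $b(s)\tran v>0$, and scaling $v$ by a small positive factor makes the objective in \eqref{general small alpha 1} strictly negative, forcing $v^*(s)\neq 0$; non-degeneracy of $\sigma(s)\sigma(s)\tran$ from Assumption \ref{as:Parameters}(ii) then gives $\|\sigma(s)\tran v^*(s)\|^2>0$ on a set of positive measure, so $\int_t^T\|\sigma(s)\tran v^*(s)\|^2ds>0$ and $C>1$. Since $\xi_2>\xi_1$, the displayed difference is strictly positive, which is the claim.
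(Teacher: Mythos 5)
Your proof is correct and follows essentially the route the paper intends: the paper leaves this proposition as a straightforward calculation from the wealth dynamics displayed after Theorem \ref{th:EquilibriumLowerLimit} together with the identity $b(t)\tran v^*(t)=\|\sigma(t)\tran v^*(t)\|^2>0$ from Lemma \ref{le:QuadProgSolution}, which is exactly what you use (your KKT step merely re-derives that lemma's identity, so you could simply cite it). The comparison argument, reducing to $(\xi_2-\xi_1)(C-1)>0$ with $C>1$ guaranteed by $\inf_{s}\|\sigma(s)\tran v^*(s)\|>0$, is likewise the intended computation and handles the reachable-set domains correctly.
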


The multiplicity of intra-personal equilibria for median maximization raises a question of which equilibrium strategy to choose. Proposition \ref{prop:PortfolioInsurance} shows that with a smaller portfolio insurance level $\xi$, at any time and any wealth level, the median of the terminal wealth becomes strictly larger. On the other hand, with a smaller portfolio insurance level $\xi$, the agent's wealth in the future can reach lower wealth levels. Therefore, $\xi$ becomes a parameter that trades off the growth of the portfolio, which is measured by the median of the terminal wealth, and the risk of the portfolio, which is measured by the lowest level the agent's wealth in the future may touch. As a comparison, the mean-variance portfolio selection problem features a tradeoff between the growth and risk of the portfolio that are measured respectively by the expectation and variance of the portfolio return.

Now, imagine that an investor specifies a maximum amount of loss $L$, e.g., 20\% of the initial wealth, she can tolerate. Moreover, she wants to maximize the median of her terminal wealth and to have a consistent investment plan. Then, portfolio \eqref{eq:EquiStrategyPropTarget} with $\xi=x_0-L$ can be recommended to her.

\subsection{Comparison with Fractional Kelly}
One of the critiques of Kelly's strategy is that it entails too much risk, and to address this issue, the so-called fractional Kelly strategies have been proposed in the literature; see for instance \citet{maclean1992growth}. Formally, fixing $\gamma > 0$, a {\em $\gamma$-fractional Kelly strategy} is defined to be
 \begin{align}\label{eq:FractionalKelly}
   \bm{\pi}_{\gamma-\mathrm{Kelly}}(t,x) = \gamma v^*(t)x,\quad t\in[0,T),x\in \bbR.
 \end{align}
 It is well known that in the market setting in the present paper, the $\gamma$-fractional Kelly strategy is the one that maximizes the expected utility of terminal wealth with a constant relative risk aversion degree $1/\gamma$; see for instance \citet{KaratzasIShreveS:98momf}. For $\gamma\in (0,1)$, the $\gamma$-fractional Kelly strategy leads to less investment in the  risky assets compared to the Kelly strategy.

 Now, for the intra-personal equilibrium $\hat{\bm{\pi}}_\xi$ for median maximization with $\xi\in (0, x_0)$, we have
 \begin{align*}
   \hat{\bm{\pi}}_\xi(t,x)/x = \big((x-\xi)/x\big)v^*(t),\quad x>\xi.
 \end{align*}
 Therefore, compared to Kelly's strategy, $\hat{\bm{\pi}}_\xi$ implies less investment in the  risky assets because $(x-\xi)/x<1$. In the following, we compare the intra-personal equilibrium $\hat{\bm{\pi}}_\xi$ with the fractional Kelly strategy in terms of their growth and risk.

\begin{proposition}\label{prop:CompwithFracKelly}
 For any $\gamma>0$,
 \begin{align}\label{eq:MedianFractionalKelly}
   G^{\bm{\pi}_{\gamma-\mathrm{Kelly}}}(t,x,1/2) = xe^{(\gamma-\frac{1}{2}\gamma^2)\int_t^T\|\sigma(s)\tran v^*(s)\|^2ds},\quad x>0,\; t\in [0,T).
 \end{align}
 Moreover, for fixed $\xi\in (0,x_0)$, $\gamma\in(0,1)$, and $t\in[0,T)$, we have
 \begin{align}
   a_{t,\gamma}:=\frac{e^{\frac{1}{2}\int_t^T\|\sigma(s)\tran v^*(s)\|^2ds}-1}{e^{\frac{1}{2}\int_t^T\|\sigma(s)\tran v^*(s)\|^2ds} -e^{(\gamma-\frac{1}{2}\gamma^2)\int_t^T\|\sigma(s)\tran v^*(s)\|^2ds} }>1,
 \end{align}
 and $G^{\bm{\pi}_{\gamma-\mathrm{Kelly}}}(t,x,1/2)$ is strictly larger than (strictly smaller than, respectively) $G^{\hat{\bm{\pi}}_\xi}(t,x,1/2)$ if and only if $x<a_{t,\gamma}\xi$ ($x>a_{t,\gamma}\xi$, respectively).
\end{proposition}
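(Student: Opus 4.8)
The plan is to establish the two median formulas explicitly and then reduce the comparison to elementary algebra. First I would derive the fractional Kelly median formula~\eqref{eq:MedianFractionalKelly}. Under $\bm{\pi}_{\gamma-\mathrm{Kelly}}$ the wealth equation~\eqref{eq:WealthSDEGeneral} (with $r\equiv 0$) becomes the linear SDE $dX(s) = \gamma X(s) v^*(s)\tran b(s)\,ds + \gamma X(s) v^*(s)\tran \sigma(s)\,dW(s)$, a geometric Brownian motion started at $x$ at time $t$. Applying It\^o's formula to $\ln X$ shows that $\ln X^{\bm{\pi}_{\gamma-\mathrm{Kelly}}}_{t,x}(T)$ is Gaussian with mean $\ln x + \int_t^T\big[\gamma v^*(s)\tran b(s)-\tfrac12\gamma^2\|\sigma(s)\tran v^*(s)\|^2\big]ds$; since the median of a Gaussian equals its mean and $y\mapsto e^y$ is increasing, the median of $X(T)$ is the exponential of this mean. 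To bring it into the stated form I would invoke the first-order (KKT) conditions for the quadratic program~\eqref{general small alpha 1}: writing $\sigma(s)\sigma(s)\tran v^*(s) = b(s) + Q\tran\lambda(s)$ with $\lambda(s)\ge 0$ and complementary slackness $\lambda(s)\tran Q v^*(s)=0$ gives the identity $b(s)\tran v^*(s) = \|\sigma(s)\tran v^*(s)\|^2$, so the mean collapses to $\ln x + (\gamma-\tfrac12\gamma^2)\int_t^T\|\sigma(s)\tran v^*(s)\|^2 ds$, which yields~\eqref{eq:MedianFractionalKelly}. This is exactly the same identity that turns Proposition~\ref{prop:PortfolioInsurance}'s computation of $G^{\hat{\bm{\pi}}_\xi}$ into~\eqref{eq:MedianEquilibrium}.

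For the remaining claims I would abbreviate $M:=\int_t^T\|\sigma(s)\tran v^*(s)\|^2 ds$, $A:=e^{M/2}$, and $B:=e^{(\gamma-\frac12\gamma^2)M}$, so that $G^{\hat{\bm{\pi}}_\xi}(t,x,1/2)=\xi+(x-\xi)A$ and $G^{\bm{\pi}_{\gamma-\mathrm{Kelly}}}(t,x,1/2)=xB$. A preliminary point is that $M>0$ for $t<T$: the optimal value of~\eqref{general small alpha 1} equals $-\tfrac12\|\sigma(s)\tran v^*(s)\|^2$ by the identity above, and it is strictly negative because Assumption~\ref{as:Feasibility2} supplies a feasible direction with $b(s)\tran v>0$ (scaling it by a small $\eta>0$ gives a negative objective value); hence $\|\sigma(s)\tran v^*(s)\|^2>0$ pointwise, and integrating the strictly positive, piecewise-continuous integrand over the nondegenerate interval $[t,T)$ gives $M>0$. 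Next, since $g(\gamma):=\gamma-\tfrac12\gamma^2$ satisfies $g'(\gamma)=1-\gamma>0$ on $(0,1)$ with $g(0)=0$ and $g(1)=\tfrac12$, we have $0<g(\gamma)<\tfrac12$ for $\gamma\in(0,1)$, whence $1<B<A$. Consequently $A-1>0$ and $A-B>0$, so $a_{t,\gamma}=\frac{A-1}{A-B}>0$, and $a_{t,\gamma}>1\iff A-1>A-B\iff B>1$, which holds; this proves $a_{t,\gamma}>1$.

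Finally, the comparison follows by solving $xB>\xi+(x-\xi)A$, which rearranges to $x(B-A)>\xi(1-A)$. Since $B-A<0$ and $1-A<0$, dividing by $B-A$ reverses the inequality and produces $x<\frac{\xi(1-A)}{B-A}=\frac{\xi(A-1)}{A-B}=a_{t,\gamma}\xi$; the strictly-smaller case is identical with the inequalities reversed, and equality holds exactly at $x=a_{t,\gamma}\xi$. I expect no conceptual obstacle here: the only places demanding care are the derivation and justification of the KKT identity $b(s)\tran v^*(s)=\|\sigma(s)\tran v^*(s)\|^2$, which underlies both median formulas and the sign of $M$, and the bookkeeping of inequality directions when dividing by the negative quantities $B-A$ and $1-A$. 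Everything else is routine log-normal computation and algebra.
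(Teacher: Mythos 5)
Your proposal is correct and follows essentially the same route as the paper's proof, which simply labels the median formula and the final comparison ``straightforward calculation'' and invokes the identity $b(s)\tran v^*(s)=\|\sigma(s)\tran v^*(s)\|^2>0$ (Lemma \ref{le:QuadProgSolution}) that you re-derive via the KKT conditions. Your write-up is in fact slightly more careful than the paper's: you correctly pin down $\gamma-\tfrac12\gamma^2\in(0,\tfrac12)$ for $\gamma\in(0,1)$ (the paper loosely writes $(0,1)$, though only the bound $\tfrac12$ yields $a_{t,\gamma}>1$), and you make explicit the sign bookkeeping in the final inequality.
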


Proposition \ref{prop:CompwithFracKelly} shows that at time $t\in(0,T)$, neither of the intrapersonal equilibrium $\hat{\bm{\pi}}_{\xi}$ and the fractional Kelly strategy dominates the other in terms of median of the terminal wealth: the former implies a higher median of the terminal wealth than the latter when the wealth level is high and vice versa when the wealth level is low. At the initial time (with initial wealth $x_0$), which of the above two strategies imply a higher median of the terminal wealth depends on the value of $\xi$ and $\gamma$. On the risk side, the intrapersonal equilibrium $\hat{\bm{\pi}}_{\xi}$ entails less risk than the fractional Kelly strategy in that the former implies a higher level of minimum wealth. Finally, as implied by Theorem \ref{th:EquilibriumLowerLimit}, the fractional Kelly strategy (except for the case $\gamma=1$) is {\em not} an intrapersonal equilibrium for median maximization; i.e., it is an inconsistent investment strategy for median maximization.

\subsection{Expected Utility Maximization}

\begin{proposition}\label{eq:EUMaximization}
  The intra-personal equilibrium $\hat{\bm{\pi}}_\xi$ for median maximization is the optimal portfolio strategy that maximizes $\expect[\ln (X^{\bm{\pi}}(T)-\xi)]$.
\end{proposition}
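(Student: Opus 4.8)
The plan is to reduce the claim to the classical (constrained) Kelly problem by the change of variable $Y(t):=X^{\bm{\pi}}(t)-\xi$. Since $r\equiv 0$ throughout, a constant amount $\xi$ held in the risk-free asset remains equal to $\xi$, so the shift is self-financing and $dY(t)=dX^{\bm{\pi}}(t)$. Hence, from \eqref{eq:WealthSDEDollar} with $r\equiv 0$,
\begin{align*}
dY(t) = \pi(t)\tran b(t)\,dt + \pi(t)\tran \sigma(t)\,dW(t),
\end{align*}
which is exactly the dynamics of a self-financing wealth process (with zero interest rate) in which $\pi(t)$ is the dollar amount invested in the risky assets. The portfolio constraint $Q\pi\ge 0$ is unchanged, and maximizing $\expect[\ln(X^{\bm{\pi}}(T)-\xi)]$ is the same as maximizing $\expect[\ln Y(T)]$ over admissible $\pi$ keeping $Y>0$. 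This is precisely the problem whose solution is Kelly's portfolio for the process $Y$, so by \eqref{eq:Kelly} the optimal dollar amount is $\pi(t)=v^*(t)Y(t)=v^*(t)(X^{\bm{\pi}}(t)-\xi)$, i.e.\ the strategy $\hat{\bm{\pi}}_\xi$ of \eqref{eq:EquiStrategyPropTarget}.

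First I would make the change of variable precise and record the $Y$-dynamics, checking that feasibility of $\pi$ in the original problem coincides with feasibility in the $Y$-problem. Second, to obtain a self-contained verification rather than merely citing the known Kelly result, I would apply It\^o's formula to $\ln Y(t)$ to get
\begin{align*}
\ln Y(T) = \ln Y(0) + \int_0^T\Big(u(t)\tran b(t) - \tfrac12 u(t)\tran\sigma(t)\sigma(t)\tran u(t)\Big)dt + \int_0^T u(t)\tran\sigma(t)\,dW(t),
\end{align*}
where $u(t):=\pi(t)/Y(t)$ is the vector of wealth proportions (well defined since $Y>0$). Taking expectations and using that, under Assumption \ref{as:Parameters}, the stochastic integral is a true martingale for admissible strategies, the expected log wealth equals $\ln Y(0)$ plus the time integral of the instantaneous growth rate $u\tran b-\tfrac12 u\tran\sigma\sigma\tran u$. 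Third, I would maximize this integrand pointwise: since $Y>0$, the constraint $Q\pi\ge 0$ is equivalent to $Qu\ge 0$, and the inner maximization $\max_{Qu\ge 0}\{u\tran b-\tfrac12 u\tran\sigma\sigma\tran u\}$ is exactly the negative of the quadratic program \eqref{general small alpha 1}, whose maximizer is $u=v^*(t)$. Because $v^*(t)$ is deterministic and the resulting strategy $\pi(t)=v^*(t)Y(t)$ is admissible with $Y>0$, the pointwise maximum is attained simultaneously for a.e.\ $t$, giving the global optimizer $\hat{\bm{\pi}}_\xi$.

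The main obstacle is not the computation but the technical bookkeeping around positivity and integrability. I would need to confirm that the candidate $\hat{\bm{\pi}}_\xi$ is feasible, which follows because the controlled wealth satisfies $X^{\hat{\bm{\pi}}_\xi}_{0,x_0}(t)>\xi$ (so $Y>0$ and $Q\hat{\bm{\pi}}_\xi = Qv^*\,(X-\xi)\ge 0$), and that competing strategies for which $Y$ can hit or cross zero yield $\expect[\ln Y(T)]=-\infty$ and are therefore dominated. One should also verify the martingale property of $\int_0^\cdot u\tran\sigma\,dW$ (or argue via a localizing sequence together with Fatou's lemma) so that dropping the stochastic integral upon taking expectations is justified; the non-degeneracy in Assumption \ref{as:Parameters} and the piecewise continuity of the coefficients make this standard. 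Once these points are secured, the pointwise optimality established above yields the claim.
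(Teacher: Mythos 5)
Your reduction is exactly what the paper's proof amounts to: the paper disposes of this proposition in one line by citing \citet{KaratzasIShreveS:98momf}, i.e., it invokes precisely the ``standard Kelly result'' that you reach via the shift $Y(t)=X^{\bm{\pi}}(t)-\xi$. Since $r\equiv 0$ the shift is self-financing, the cone constraint is unchanged, and maximizing $\expect[\ln(X^{\bm{\pi}}(T)-\xi)]$ is the constrained log-optimal problem for $Y$, whose solution $\pi(t)=v^*(t)Y(t)$ is $\hat{\bm{\pi}}_\xi$ of \eqref{eq:EquiStrategyPropTarget}. So in substance you and the paper take the same route; you simply expand the citation into a verification argument, which is a reasonable thing to want to do.

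The one step of your verification that would not survive scrutiny as written is dropping the stochastic integral ``via a localizing sequence together with Fatou's lemma.'' After It\^o, $\ln Y(T)=\ln Y(0)+\int_0^T\big(u\tran b-\tfrac12\|\sigma\tran u\|^2\big)dt+M(T)$ with $M$ only a local martingale, and for a general admissible competitor $\ln Y$ is bounded neither above nor below; consequently neither Fatou nor reverse Fatou applies along a localizing sequence, and you cannot conclude $\expect[M(T)]\le 0$ this way. The standard repair is the numeraire (ratio--supermartingale) argument, which also dovetails with the constrained program \eqref{general small alpha 1}: let $Y^*:=X^{\hat{\bm{\pi}}_\xi}-\xi$ and compute, for any feasible $\pi$ with $u=\pi/Y$ and $Qu\ge 0$, that the drift of $Y/Y^*$ equals $(u-v^*)\tran\big(b-\sigma\sigma\tran v^*\big)$; by the KKT conditions \eqref{eq:LeQuadProgEq1} of Lemma \ref{le:QuadProgSolution} this is $-\lambda\tran Qu\le 0$, so $Y/Y^*$ is a nonnegative local supermartingale, hence a true supermartingale, giving $\expect[Y(T)/Y^*(T)]\le 1$. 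Jensen's inequality then yields $\expect[\ln Y(T)]\le\expect[\ln Y^*(T)]$, and the right-hand side is finite and explicit because $v^*$ is deterministic and bounded (indeed $\expect[\ln Y^*(T)]=\ln(x_0-\xi)+\tfrac12\int_0^T\|\sigma(t)\tran v^*(t)\|^2dt$ by Lemma \ref{le:QuadProgSolution}). With that substitution your argument is complete; the pointwise maximization and feasibility/positivity checks in your write-up are fine.
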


Proposition \ref{eq:EUMaximization} shows that the intra-personal equilibrium $\hat{\bm{\pi}}_\xi$ under median maximization is the same as the portfolio that maximizes the expected utility of the terminal wealth with utility function $\log(x-\xi)$. This utility function is a special case of the so-called hyperbolic absolute risk aversion (HARA); see for instance Section 6 of \citet{MertonR:71oc} and \citet{KimOmberg}. It is worth emphasizing that we are only able to prove the above equivalence between the intra-personal equilibrium under median maximization and expected utility maximization with a logarithmic utility function in the market setting in the present paper. 

Although median maximization and expected utility maximization with the above particular utility function yield the same portfolio strategies, these two models differ. Median of a distribution is easy to understand while utility functions and thus the expected utility of a distribution are not. One can apply median maximization without calibration while we need to infer the utility function before we can employ the expected utility theory. Thus, median maximization is a more accessible model than expected utility maximization for investors. Moreover, the parameter $\xi$ in the intra-personal equilibrium under median maximization is not a priori given. It represents a portfolio insurance level and trades off the growth and risk of investment. In the expected utility theory, however, the parameter $\xi$ is exogenously given and does not have an economic meaning. It needs to be calibrated from investors' preferences under risk.

%

\section{Application: Household Portfolio Shares}\label{se:HouseholdShares}
\citet{WachterYogo2010:Household} find that households with a higher level of wealth tend to have a larger portfolio shares in risky assets (i.e., to invest more percentage of their wealth in risky assets). More precisely, in one of their studies, the authors consider a representative sample, provided by the Board of Governors of the Federal Reserve System, of approximately 3,000 households. In this sample, the net worth, i.e., the wealth, and the portfolio share in risky assets of each household is observed. The authors conduct linear regression with the log net worth as the explanatory variable and the portfolio share in risky assets as the dependent variable in the cross-section of households. The authors divide the households in the sample into four age groups: 26--35, 36--45, 56--65, 66--75, and find that the portfolio share in risky assets is more positively correlated with the log net worth for elder age groups; see Table 4 of \citet{WachterYogo2010:Household}.

The above empirical finding cannot be explained by expected utility maximization with a power utility function even when the asset prices have stochastic volatility, stochastic return rates, or jumps, because the resulting optimal percentage of wealth invested in risky assets is independent of the agent's wealth. The mean-variance log return model proposed by \citet{dai2020dynamic} cannot explain the finding either because in their model the percentage of wealth invested in risky assets is independent of the agent's wealth as well. \citet{WachterYogo2010:Household} develop a life-cycle consumption and portfolio choice model to explain this empirical finding.

The intra-personal equilibrium under median maximization is consistent with the above empirical finding,
 so our model, which is simpler than the life-cycle consumption and portfolio choice model proposed by \citet{WachterYogo2010:Household}, is an alternative to explain this empirical finding. Indeed, suppose
that the households follow $\hat{\bm{\pi}}_\xi$ for some $\xi>0$ to do investment. Then, when households become older, their wealth $x$ also becomes larger on average, so the portfolio shares in risky assets, which is $(x-\xi)/x$ under $\hat{\bm{\pi}}_\xi$, also become larger. To confirm the above intuition, we conduct a numerical analysis in the following.

Suppose that there are 3,000 households in the market, indexed by $j=1,\dots, 3000$, and their age is between 26 and 35. Each household $j$ is endowed with initial wealth $x_{0,j}$ and has a portfolio insurance level $\beta$, e.g., 60\%, proportion of her initial wealth, i.e., $\xi_{j} = \beta x_{0,j}$, $j=1,\dots, 3,000$. As a result, the portfolio shares in risky assets for household $j$ are $(x_{0,j}-\xi_j)/x_{0,j} = 1-\beta$, which are the same for different households. Now, imagine that after $t$ years, household $j$'s investment in risky assets generate a gross return rate $R_{t,j}$, so her wealth becomes $X_{t,j}=\xi_{j} + (x_{0,j}-\xi_j)R_{t,j}$ and thus her portfolio shares in risky assets becomes
\begin{align*}
p_{t,j}:=\frac{\xi_{j} + (x_{0,j}-\xi_j)R_{t,j}-\xi_j}{\xi_{j} + (x_{0,j}-\xi_j)R_{t,j}} = \frac{(1-\beta)R_{t,j}}{\beta + (1-\beta)R_{t,j}}.
\end{align*}
Note that after $t=10$, 20, 30, and 40 years, the households' age become 36--45, 46--55, 56--65, and 66--75, respectively. Thus, for each $t\in \{10,20,30,40\}$, we follow \citet{WachterYogo2010:Household} to run linear regression with $\ln X_{t,j}$ as the explanatory variable and $p_{t,j}$ as the dependent variable.

We simulate the initial wealth of the 3,000 households by setting $x_{0,j}$ to be the $j$-th sample of $\bar x_0 e^{\varrho U}$, where $U$ is a standard normal random variable. Thus, $\ln \bar x_0$ and $\varrho$ represents respectively the average log net worth and the standard deviation of the log net worth across households with in the age group 26--35. We use the sample provided by Survey of Consumer Finances that tracks the wealth of US households every three years from 1989 to 2016 to estimate $\bar x_0$ and $\varrho$.\footnote{The sample is available at \url{https://www.federalreserve.gov/econres/scfindex.htm}.} Following the study in \citet{WachterYogo2010:Household}, we exclude households with non-positive net worth or with no risky-asset holding from the sample. Using the data in 2016, we obtain the following estimates: $\bar x_0=61811.8$ and $\varrho=0.0569$.

On the other hand, according the survey data by Survey of Consumer Finances from the same source as above, the percentage of net worth invested in risky assets for households in the age group 26--35 ranges from $38\%$ to $58\%$ across different survey years. Thus, we set the value of $1-\beta$ to be in the range $40\%$--$60\%$, so we choose three values of $\beta$: $40\%$, $50\%$ and $60\%$.

We simulate the gross return rate $R_{t,j}$ of the 3,000 households from the following distribution: $(1+\mu t)e^{-\frac{1}{2} \varpi^2 t+ \varpi \sqrt{t}Z }$, where $Z$ is a standard normal random variable that is independent of $U$, $\mu=4\%$, and $\varpi>0$. In other words, we set the average excess return rate per year across households to be 4\%, and $\varpi$ measures the standard deviation of the annual log return rate across households. Because we do not have the data to estimae $\varpi$, we simply choose three values of $\varpi$ in the following: 0.65\%, 0.70\%, and 0.75\%.

%

Finally, we run linear regression with $X_{t,j}$ to be the explanatory variable and $p_{t,j}$ to be the dependent variable to obtain the coefficient of $\ln X_{t,j}$. We repeat the simulation for 2,000 times and report the mean and standard deviation (in parentheses) of the coefficient of $\ln X_{t,j}$ in Table \ref{ta:PortfolioShares}, where different rows and columns refer to different values of $t\in \{10,20,30,40\}$ (corresponding to age 36--45, 46--55, 56--65, and 66--75, respectively) and  $\varpi\in \{0.65\%,0.70\%,0.75\%\}$. We can see that the coefficient is indeed more positive for larger with $t$, which is consistent with the empirical finding in \citet{WachterYogo2010:Household}.

\begin{table}
	\begin{center}
		\begin{tabular}{|l|l|c|c|c|c|c|}
			\hline
			\multicolumn{2}{|c|}{} & 26--35 & 36--45 & 46--55 & 56--65 & 66--75 \\ \hline
			& $\beta=40\%$ & 0 & 1.26 (0.09) & 2.28 (0.11) & 3.02 (0.11) & 3.52 (0.11) \\
			\cline{2-7}
			$\varpi=0.65 \% $ & $\beta=50\%$ & 0 & 1.23 (0.11) & 2.41 (0.13) & 3.38 (0.15) & 4.13 (0.14) \\
			\cline{2-7}
			& $\beta=60\%$ & 0 & 1.06 (0.11) & 2.27 (0.15) & 3.41 (0.17) & 4.39 (0.18) \\
			\hline
			& $\beta=40\%$  & 0 & 1.45 (0.10) & 2.60 (0.12) & 3.40 (0.11) & 3.92 (0.12) \\
			\cline{2-7}
			$\varpi=0.70 \%$ & $\beta=50\%$ & 0 & 1.41 (0.11) & 2.74 (0.14) & 3.82 (0.15) & 4.62 (0.16) \\
			\cline{2-7}
			& $\beta=60\%$ & 0 & 1.22 (0.12) & 2.60 (0.16) & 3.88 (0.18) & 4.96 (0.19) \\
			\hline
			& $\beta=40\%$ & 0 & 1.65 (0.11) & 2.93 (0.13) & 3.79 (0.12) & 4.32 (0.12) \\
			\cline{2-7}
			$\varpi=0.75 \%$ & $\beta=50\%$ & 0 & 1.61 (0.12) & 3.10 (0.15) &  4.28 (0.16) & 5.12 (0.16) \\
			\cline{2-7}
			& $\beta=60\%$ & 0 & 1.39 (0.12) & 2.95 (0.17) & 4.37 (0.19) & 5.53 (0.19) \\
			\hline
			\multicolumn{2}{|c|}{\citet{WachterYogo2010:Household}} &  0.52 & 1.84 & 3.56 & 3.88 & 4.32 \\
			\hline
		\end{tabular}
	\end{center}
	\caption{\small Sensitivity (in percentage) of portfolio shares in risky assets $p_{t,j}$ with respect to wealth $\ln X_{t,j}$ in the cross-section of households. The second to sixth columns refer the age groups 26--35, 36--45, 46--55, 56--65, 66--75, respectively, which correspond to $t=0$, 10, 20, 30, and 40, respectively. The number of households is 3,000, and their initial net worth is simulated from $\bar x_0 e^{\varrho U}$ with $\bar x_0=61811.8$ and $\varrho=0.0569$, where $U$ is a standard normal random variable. The value of $\beta$ is set to be 40\%, 50\%, and 60\%. The gross return rate of the households are simulated from $(1+\mu t)e^{-\frac{1}{2} \varpi^2 t+ \varpi \sqrt{t}Z }$ with $\mu=4\%$ and $\varpi$ to one of the three values: =0.65\%, 0.70\%, and 0.75\%, where $Z$ is a standard normal random variable independent of $U$. The numbers in parentheses are standard error of the estimates of the sensitivities. The last row reports the sensitivity of portfolio shares in risky assets with respect to net worth in the empirical study of \citet[Table 4]{WachterYogo2010:Household}.
	}
	\label{ta:PortfolioShares}
\end{table}

\section{Conclusions}\label{se:Conclusions}
Median is a popular alternative to mean as a summary statistic of a distribution. In this paper, we studied portfolio selection under the $\alpha$-quantile maximization, particularly under median maximization when $\alpha$ is set to be $1/2$. We considered an agent who trades a risk-free asset and multiple  risky assets continuously in time with an objective of maximizing the $\alpha$-quantile of her wealth at certain terminal time, and the mean return rates and volatility of the assets are assumed to be deterministic. Because of time inconsistency, we considered intra-personal equilibrium strategies.

We found that in the class of time-varying, affine portfolio strategies, the intra-personal equilibrium does not exist  when $\alpha>1/2$ and leads to zero investment in the risky assets when $\alpha<1/2$. For the case of $\alpha=1/2$, namely the case of median maximization, a time-varying affine strategy is an intra-personal equilibrium strategy if and only if it is a portfolio insurance strategy. Different choices of the portfolio insurance level then induce different intra-personal equilibria and can be interpreted as different degrees of risk attitude of the agent.

We compared the intra-personal equilibrium strategy under median maximization to fractional Kelly strategies   and showed that the former is better in terms of trading off growth and risk.  We also showed that the intra-personal equilibrium strategy can explain why households with a higher level of wealth tend to invest more percentage of their wealth in risky assets.   Finally, in the Appendices, we compared the intra-personal equilibrium strategy with the pre-committed and naive strategies and extended our model to the case in which the agent is concerned about median of wealth at multiple times and derived similar results.

\appendix

\section{Comparison with Pre-Committed and Naive Strategies}\label{comp pre and naive}

When facing time inconsistency, some individuals may commit their future selves to follow the plans they set up today that are optimal under today's decision criteria, and such plans are called pre-committed strategies. For instance, one can delegate her investment to a portfolio manager and asks the manager to maximize her decision criterion today. In the following, we compare the intra-personal equilibrium $\hat{\bm{\pi}}_\xi$ under median maximization, which is a rational choice of an agent who is not able to commit her future selves to following her plan today, with the pre-committed strategy under median maximization. To facilitate the comparison, we assume the same portfolio insurance level $\xi$ in the derivation of the pre-committed strategy.

\begin{proposition}\label{prop:PreCommitted}
\begin{enumerate}
  \item[(i)]The pre-committed strategy $\bm{\pi}_{0,\mathrm{pc}}$ that maximizes the time-0 median of the terminal wealth is
\begin{align}
  \bm{\pi}_{0,\mathrm{pc}}(t,x) &= \Delta_{0,\mathrm{pc}} (t,x) v^*(t)(x-\xi),\quad t\in[0,T),x>\xi,\label{eq:PreCommittedStrategy}\\
  \Delta_{0,\mathrm{pc}} (t,x):&=\frac{1}{\sqrt{\int_t^T\|\sigma(\tau)\tran v^*(\tau)\|^2d\tau}}\times \frac{\Phi'\big(d(t,\bm{z}_0(t,x))\big)}{\Phi\big(d(t,\bm{z}_0(t,x))\big)},\notag\\
  d(t,z):&=\frac{-\int_t^T \|\sigma(\tau)\tran v^*(\tau)\|^2d\tau+z}{\sqrt{\int_t^T\|\sigma(\tau)\tran v^*(\tau))\|^2d\tau}}\notag
\end{align}
with $\bm{z}_0(t,x)$ uniquely determined by
\begin{align*}
  \frac{x-\xi}{\Phi\big(d(t,\bm{z}_0(t,x))\big)} = \frac{x_0-\xi}{\Phi\left(-\sqrt{\int_0^T\|\sigma(\tau)\tran v^*(\tau)\|^2d\tau}\right)}=:k_0^*.
\end{align*}
Moreover,
\begin{align*}
  X_{0,x_0}^{\bm{\pi}_{0,\mathrm{pc}}}(T) = \xi + k_0^*\mathbf{1}_{\int_0^Tv^*(\tau)\tran \sigma(\tau)dW(\tau)\ge 0},
\end{align*}
and $\Delta_{0,\mathrm{pc}} (t,x)$ is strictly decreasing, continuous in $x$ and satisfies
\begin{align*}
  \lim_{x\downarrow \xi}\Delta_{0,\mathrm{pc}} (t,x) = +\infty,\quad \lim_{x\uparrow \xi+k_0^*}\Delta_{0,\mathrm{pc}} (t,x)=0.
\end{align*}
\item[(ii)] $G^{\bm{\pi}_{0,\mathrm{pc}}}(0,x_0,1/2) =  \xi + k_0^*$, and
\begin{align*}
   G^{\bm{\pi}_{0,\mathrm{pc}}}(t,x,1/2) = \begin{cases}
      \xi + k_0^*, & x\in [x_0,\xi+k_0^*),\\
     \xi, & x\in (\xi, x_0),
   \end{cases}
   \quad t\in (0,T).
\end{align*}
In addition, $G^{\bm{\pi}_{0,\mathrm{pc}}}(0,x_0,1/2)>G^{\hat{\bm{\pi}}_\xi}(0,x_0,1/2)$, and for any $t\in (0,T)$,
\begin{align}
    & G^{\bm{\pi}_{0,\mathrm{pc}}}(t,x,1/2)
   \begin{cases}
     \ge G^{\hat{\bm{\pi}}_\xi}(t,x,1/2), & x_0\le x \le \xi + (x_0-\xi)\tilde a_t,\\
     < G^{\hat{\bm{\pi}}_\xi}(t,x,1/2), & x\in (\xi,x_0)\cup(\xi + (x_0-\xi)\tilde a_t,\xi+k_0^*),
   \end{cases}\label{eq:EquiPreComMedianComparison}
   \\
  &\tilde a_t :=e^{-\frac{1}{2}\int_t^T\|\sigma(s)\tran v^*(s)\|^2ds}/\Phi\left(-\sqrt{\int_0^T\|\sigma(\tau)\tran v^*(\tau)\|^2d\tau}\right) \in \big(1,k_0^*/(x_0-\xi)\big).\notag
\end{align}
\end{enumerate}
\end{proposition}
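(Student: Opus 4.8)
The plan is to treat the pre-committed problem as a static terminal-wealth problem, solve it by the martingale/pricing-kernel method, recover $\bm{\pi}_{0,\mathrm{pc}}$ by replication, and finally compute and compare the medians in part (ii).

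\emph{Reduction to a static problem.} Since $r\equiv 0$ and $\{v:Qv\ge 0\}$ is a convex cone, the Kelly direction $v^*(t)$ solving \eqref{general small alpha 1} satisfies the KKT conditions $\sigma(t)\sigma(t)\tran v^*(t)-b(t)=Q\tran\lambda(t)$ with $\lambda(t)\ge 0$ and complementary slackness $\lambda(t)\tran Qv^*(t)=0$; contracting with $v^*(t)$ yields the identity $v^*(t)\tran b(t)=\|\sigma(t)\tran v^*(t)\|^2$, which I will use repeatedly. This lets me take $M(t):=\int_0^t v^*(\tau)\tran\sigma(\tau)\,dW(\tau)$ as the single effective risk factor, a $\prob$-martingale with quadratic variation $\int_0^t\|\sigma(\tau)\tran v^*(\tau)\|^2d\tau$, and to price and replicate attainable payoffs through the state-price density $\rho(t):=\exp\bigl(-M(t)-\tfrac12\int_0^t\|\sigma(\tau)\tran v^*(\tau)\|^2d\tau\bigr)$. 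Writing $\Sigma_t^2:=\int_t^T\|\sigma(\tau)\tran v^*(\tau)\|^2d\tau$, any feasible terminal wealth $X(T)=\xi+Y$ with $Y\ge 0$ then obeys the budget identity $\expect[\rho(T)Y]=x_0-\xi$.

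\emph{Part (i).} To maximize the median of $X(T)$ I want $\prob(X(T)\ge m)\ge 1/2$ at minimal cost, so the optimal $Y$ is the digital payoff $Y=k_0^*\mathbf 1_{\{M(T)\ge 0\}}$ that concentrates mass on the cheapest event of probability $1/2$ (where $\rho$ is smallest). A Girsanov change of measure gives $\expect[\rho(T)\mathbf 1_{\{M(T)\ge 0\}}]=\Phi(-\Sigma_0)$, so the budget identity pins down $k_0^*$ and the terminal wealth stated in the proposition. The feedback form comes from replication: computing $X(t)-\xi=\rho(t)^{-1}\expect[\rho(T)k_0^*\mathbf 1_{\{M(T)\ge 0\}}\mid\mathcal F_t]$ by a further change of measure yields $X(t)-\xi=k_0^*\Phi(d(t,M(t)))$, which identifies the state map (so that $\bm{z}_0(t,X(t))=M(t)$, with uniqueness because $d(t,\cdot)$ and $\Phi$ are strictly increasing) and, upon matching the diffusion coefficient in Itô's formula against $\bm{\pi}\tran\sigma\,dW$ in \eqref{eq:WealthSDEGeneral}, produces $\bm{\pi}_{0,\mathrm{pc}}=\Delta_{0,\mathrm{pc}}v^*(x-\xi)$ with $\Delta_{0,\mathrm{pc}}$ as stated; this strategy is automatically feasible since its coefficient on $v^*$ is positive and $Qv^*\ge 0$. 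The monotonicity, continuity and the two limits of $\Delta_{0,\mathrm{pc}}$ follow from the strict monotonicity of the inverse Mills ratio $\Phi'/\Phi$ (with limits $+\infty$ and $0$) composed with the increasing map $x\mapsto d(t,\bm{z}_0(t,x))$.

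\emph{Part (ii) and the main obstacle.} For the medians I would use that, conditional on $X(t)=x$, the committed terminal wealth is the two-valued variable $\xi+k_0^*\mathbf 1_{\{M(T)\ge 0\}}$ with $M(t)=\bm{z}_0(t,x)$, so its conditional law is Bernoulli with probability $\Phi(\bm{z}_0(t,x)/\Sigma_t)$ of the high value; the right-continuous median is therefore $\xi+k_0^*$ or $\xi$ according as this probability is $\ge 1/2$ or $<1/2$, i.e. according to the sign of $\bm{z}_0(t,x)$, which locates the switching wealth level and gives the time-$0$ value $\xi+k_0^*$. The comparison with $\hat{\bm{\pi}}_\xi$ then reduces to elementary inequalities, since Proposition \ref{prop:PortfolioInsurance} supplies the closed form $G^{\hat{\bm{\pi}}_\xi}(t,x,1/2)=\xi+(x-\xi)e^{\Sigma_t^2/2}$; in particular the strict time-$0$ inequality reduces to the Gaussian tail bound $\Phi(-a)<e^{-a^2/2}$, which I would verify by signing the derivative of $e^{-a^2/2}-\Phi(-a)$. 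I expect the hardest part to be the rigorous justification of the static reduction under the cone constraint, namely that $\rho$ furnishes a valid cost lower bound for \emph{every} payoff attainable with $Q\pi\ge 0$, so that the digital payoff is globally optimal for median maximization rather than merely optimal within $v^*$-proportional strategies; I would establish this by convex duality for constrained markets in the spirit of the Kelly characterization already invoked in \eqref{eq:Kelly}. The careful bookkeeping of the conditional-median switch and of the three comparison regions in \eqref{eq:EquiPreComMedianComparison} is then delicate but routine.
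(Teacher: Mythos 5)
Your proposal is correct and follows essentially the same route as the paper: both reduce the pre-committed problem to a static terminal-payoff problem under the dual-cone pricing kernel (whose market price of risk is $\sigma\tran v^*$ by the KKT identity $b\tran v^*=\|\sigma\tran v^*\|^2$), identify the optimal payoff as the digital $\xi+k_0^*\mathbf 1_{\{M(T)\ge 0\}}$, recover the feedback strategy from the Girsanov computation $X(t)-\xi=k_0^*\Phi\big(d(t,M(t))\big)$ plus It\^o matching, and obtain part (ii) from the conditional Bernoulli law of the terminal wealth together with the tail bound $\Phi(-a)<e^{-a^2/2}$ proved by signing a derivative. The only difference is bookkeeping: where you sketch the constrained-market duality and the optimality of the digital payoff directly (your KKT-based supermartingale argument is exactly what is needed), the paper outsources these two steps to the cited static quantile-maximization results of \citet{HeXDZhouXY:2011PortfolioChoiceviaQuantiles} and \citet{HE2021}.
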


Proposition \ref{prop:PreCommitted}-(i) shows the pre-committed portfolio strategy under median maximization. Recall that under the intra-personal equilibrium $\hat{\bm{\pi}}_\xi$, the agent's dollar amount invested in the risky assets is proportional to the distance between the current wealth and portfolio insurance level, and the proportion is independent of the current wealth level. For the pre-committed portfolio strategy, however, this proportion depends on the current wealth level and can become arbitrarily large when the wealth approaches the portfolio insurance level. In terms of the wealth process, under the intra-personal equilibrium, the agent has potential to attain arbitrarily high wealth levels in the future, but under the pre-committed portfolio strategy, the wealth in the future is capped at certain level $\xi+k_0^*$.

The pre-committed portfolio strategy obviously implies a higher level of time-0 median of the terminal wealth than the intra-personal equilibrium. After the initial time, i.e., at time $t\in (0,T)$, however, the pre-committed portfolio strategy results in smaller median of the terminal wealth than the intra-personal equilibrium when the wealth level at that time is very low or very high; see \eqref{eq:EquiPreComMedianComparison}.

It can be costly to implement pre-committed strategies; for instance, portfolio delegation usually incurs some management fees. In some situations, individuals can be unaware of the time-inconsistency or wrongly believe that they can commit their future selves to the plan set up today. As a result, they may keep re-optimizing and updating their plans over time. In the extreme case, at each instant an agent can only implement her plan for an infinitesimally small time period and re-optimizes and updates the plan afterwards. The resulting strategy that is actually implemented by the agent over time is called the {\em naive strategy}.

\begin{proposition}\label{prop:NaiveStrategy}
The naive portfolio strategy $\bm{\pi}_{\mathrm{na}}$ under median maximization is given by
\begin{align}\label{eq:NaiveStrategy}
  \bm{\pi}_{\mathrm{na}}(t,x) = \Delta_{\mathrm{na}}(t) v^*(t)(x-\xi),
\end{align}
where
\begin{align}
  \Delta_{\mathrm{na}}(t):=\frac{1}{\sqrt{\int_t^T\|\sigma(s)\tran v^*(s)\|^2ds}}\times \frac{\Phi'\left(-\sqrt{\int_t^T\|\sigma(s)\tran v^*(s)\|^2ds}\right)}{\Phi\left(-\sqrt{\int_t^T\|\sigma(s)\tran v^*(s)\|^2ds}\right)}.
\end{align}
Moreover, $\Delta_{\mathrm{na}}(t)>1,t\in [0,T)$ and $\lim_{t\uparrow T}\Delta_{\mathrm{na}}(t)=+\infty$. Furthermore, for any fixed $t\in [0,T)$ and $x>\xi$,
denoting by $G^{\bm{\pi}_{\mathrm{na}}}(t,x,1/2)$ the limit of the median, conditional on time-$t$ wealth level of $x$, of the wealth at time $\tau$ as $\tau$ goes to $T$, we have $G^{\bm{\pi}_{\mathrm{na}}}(t,x,1/2)=\xi$.
\end{proposition}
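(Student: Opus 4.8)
The plan is to obtain $\bm{\pi}_{\mathrm{na}}$ from the pre-committed strategy of Proposition \ref{prop:PreCommitted} and then analyze the wealth process it generates. By definition the naive agent at time $t$ with wealth $x$ acts as her time-$t$ self who solves the pre-commitment problem for the time-$t$ median starting from $(t,x)$ and implements only its time-$t$ action; that is, $\bm{\pi}_{\mathrm{na}}(t,x)=\bm{\pi}_{t,\mathrm{pc}}(t,x)$, where $\bm{\pi}_{t,\mathrm{pc}}$ is the strategy of Proposition \ref{prop:PreCommitted} with the initial data $(0,x_0)$ replaced by $(t,x)$. First I would evaluate that formula at the starting point itself. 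Writing $w(t):=\sqrt{\int_t^T\|\sigma(s)\tran v^*(s)\|^2ds}$, the defining equation for the auxiliary variable $\bm z_t$ becomes $\frac{x-\xi}{\Phi(d(t,\bm z_t(t,x)))}=\frac{x-\xi}{\Phi(-w(t))}$, and since $\Phi$ is strictly increasing this forces $d(t,\bm z_t(t,x))=-w(t)$. Substituting $d=-w(t)$ into $\Delta_{t,\mathrm{pc}}(t,x)=\frac{1}{w(t)}\Phi'(d)/\Phi(d)$ collapses it to $\frac{1}{w(t)}\Phi'(-w(t))/\Phi(-w(t))=\Delta_{\mathrm{na}}(t)$, which yields \eqref{eq:NaiveStrategy}.

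For the two pointwise claims on $\Delta_{\mathrm{na}}$, set $u=w(t)>0$; strict positivity holds for $t<T$ because $\sigma\sigma\tran\succeq\delta I$ and $v^*\ne 0$. The inequality $\Delta_{\mathrm{na}}(t)>1$ is the Mills-ratio bound $\Phi'(u)>u\,(1-\Phi(u))$ for $u>0$: dividing by $u\Phi(-u)$ and using $\Phi'(-u)=\Phi'(u)$ and $\Phi(-u)=1-\Phi(u)$ gives exactly $\Phi'(-u)/(u\Phi(-u))>1$. For the limit, as $t\uparrow T$ we have $w(t)\downarrow 0$ while $\Phi'(-w(t))\to\Phi'(0)=1/\sqrt{2\pi}$ and $\Phi(-w(t))\to\Phi(0)=1/2$, so the $1/w(t)$ factor drives $\Delta_{\mathrm{na}}(t)\to+\infty$.

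For the median limit I would solve the wealth SDE. Under $\bm{\pi}_{\mathrm{na}}$ the shifted wealth $Y(s):=X^{\bm{\pi}_{\mathrm{na}}}_{t,x}(s)-\xi$ satisfies $dY(s)=\Delta_{\mathrm{na}}(s)Y(s)\big(v^*(s)\tran b(s)\,ds+v^*(s)\tran\sigma(s)\,dW(s)\big)$ with $Y(t)=x-\xi>0$, a time-inhomogeneous geometric Brownian motion. Hence $Y(\tau)=(x-\xi)e^{A(t,\tau)+M(t,\tau)}$ with $M$ a mean-zero Gaussian stochastic integral and $A(t,\tau)=\int_t^\tau\big(\Delta_{\mathrm{na}}(s)v^*(s)\tran b(s)-\tfrac12\Delta_{\mathrm{na}}(s)^2\|\sigma(s)\tran v^*(s)\|^2\big)ds$; because the exponent is symmetric about $A(t,\tau)$, the median of $X(\tau)$ given $X(t)=x$ equals $\xi+(x-\xi)e^{A(t,\tau)}$, and it suffices to show $A(t,\tau)\to-\infty$ as $\tau\uparrow T$.

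The crux, and the main obstacle, is the asymptotics of $A$. Set $g(s):=\|\sigma(s)\tran v^*(s)\|^2$, so $w(s)^2=\int_s^T g$. Near $T$ one has $\Phi'(-w(s))/\Phi(-w(s))\to\sqrt{2/\pi}$, hence $\Delta_{\mathrm{na}}(s)\sim\sqrt{2/\pi}/w(s)$ and the variance term behaves like $\tfrac12\Delta_{\mathrm{na}}(s)^2 g(s)\sim g(s)/(\pi\,w(s)^2)$. Writing $h(s):=w(s)^2$ gives $g=-h'$, so $\int_t^\tau \frac{g(s)}{w(s)^2}ds=\int_t^\tau-\frac{h'(s)}{h(s)}ds=\ln\frac{w(t)^2}{w(\tau)^2}\to+\infty$; the variance integral thus diverges to $+\infty$. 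To dominate the drift term I would argue $g$ is bounded away from $0$ near $T$: by Assumption \ref{as:Feasibility2} the left limit $b(T-)\ne 0$, so the quadratic program \eqref{general small alpha 1} has $v^*(T-)\ne 0$ and $g(T-)=\|\sigma(T-)\tran v^*(T-)\|^2>0$ by non-degeneracy. Then $w(s)^2\asymp (T-s)$, so $\int_t^T \Delta_{\mathrm{na}}(s)|v^*(s)\tran b(s)|\,ds\lesssim\int_t^T (T-s)^{-1/2}ds<\infty$ stays bounded. The negative variance term therefore wins, $A(t,\tau)\to-\infty$, and $\xi+(x-\xi)e^{A(t,\tau)}\to\xi$, giving $G^{\bm{\pi}_{\mathrm{na}}}(t,x,1/2)=\xi$. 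The delicate point is making the two integral comparisons rigorous over the final subinterval on which $v^*$ is continuous, in particular confirming $\liminf_{s\uparrow T} g(s)>0$ from the non-degeneracy of $b(T-)$.
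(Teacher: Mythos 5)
Your proposal is correct and takes essentially the same route as the paper: identify $\bm{\pi}_{\mathrm{na}}(t,x)=\bm{\pi}_{t,\mathrm{pc}}(t,x)$ so that $d(t,\cdot)$ collapses to $-\sqrt{\int_t^T\|\sigma(s)\tran v^*(s)\|^2ds}$, get $\Delta_{\mathrm{na}}>1$ from the Mills-ratio inequality, solve the wealth SDE as a time-inhomogeneous geometric Brownian motion, and conclude the median tends to $\xi$ because the drift integral stays finite while the variance integral diverges. Your two ``delicate points'' are not gaps: the fact that $\inf_{s\in[0,T)}\|\sigma(s)\tran v^*(s)\|>0$ is exactly the conclusion of Lemma \ref{le:QuadProgSolution} (which the paper invokes here), and your logarithmic-derivative computation $\int_t^\tau g/w^2\,ds=\ln\bigl(w(t)^2/w(\tau)^2\bigr)$ is just a slightly different way of verifying the same divergence $\int_t^T\Delta_{\mathrm{na}}(s)^2ds=+\infty$ that the paper obtains from $w(s)^2\asymp(T-s)$.
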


Proposition \ref{prop:NaiveStrategy} shows that under the naive strategy, the dollar amount invested in the risky assets is also proportional to the distance between the current wealth level and the portfolio insurance level. Moreover, the proportion is always strictly larger than 1, implying higher risky asset holdings than the intra-personal equilibrium. The proportion even goes to infinity when it is near the terminal time, showing that under the naive strategy, the agent would take an infinite amount of risk around the terminal time.

Because under the naive strategy the agent invests infinite amount of money in risky assets around the terminal time, the terminal wealth of the naive strategy is not well defined. We, however, can still study the wealth around the terminal time, in particular the median of the wealth at time $\tau$ when $\tau$ is very close to the terminal time. It turns out that the limit of the median exists when $\tau$ goes to the terminal time $T$, and the limit is $\xi$. This shows that in terms of the median of terminal wealth, the naive strategy always underperforms the intra-personal equilibrium. The reason is because under the naive strategy the agent takes an infinite amount of risk around the terminal time, which significantly reduces the median of the terminal wealth.

\section{Multiple Target Dates}\label{se:MultiTerminalDates}

Suppose that the agent is concerned about the her wealth at not only the terminal time $T$ but also some intermediate moments. More precisely, consider multiple time points $0=:T_0<T_1<\dots <T_{N}:=T$. At time $t\in [T_{n-1},T_{n})$ with wealth $x$, the agent's decision criterion at that time is a weighted average of the $\alpha$-level quantile of her wealth at $T_{n},T_{n+1},\dots, T_{N}$, i.e., is
\begin{align}\label{ObjMultipleT}
  J^{\bm{\pi}}(t,x,\alpha) = \sum_{i=n}^{i=N} w_{n,i}G^{\bm{\pi}}(t,x,\alpha;T_{i}),
\end{align}
where $G^{\bm{\pi}}(t,x,\alpha;T_i)$ stands for the $\alpha$-level quantile of $X^{\bm{\pi}}_{t,x}(T_i)$, $w_{n,i}\ge 0,i=n,\dots, N$ are constants and satisfy $\sum_{i=n}^Nw_{n,i}=1$, and $w_{n,N}>0$.

\begin{definition}\label{de:EquilibriumStrategyMultipleTime}
$\hat{\bm{\pi}}\in \Pi$ is an intra-personal equilibrium for multi-time $\alpha$-level quantile maximization if for any $n=1,\dots, N$, $t\in[T_{n-1},,T_{n})$, $x\in \bbX_t^{x_0,\hat{\bm{\pi}}}$, and $\pi\neq \hat{\bm{\pi}}(t,x)$ with $Q\pi\ge 0$, there exists $\epsilon_0\in (0,T-t)$ such that
\begin{align}\label{eq:EquiDefDollarMultipleTime}
J^{\hat{\bm{\pi}}_{t,\epsilon,\pi}}(t,x,\alpha) -J^{\hat{\bm{\pi}}}(t,x,\alpha) \le 0,\quad \forall \epsilon\in(0,\epsilon_0],
\end{align}
where $\hat{\bm{\pi}}_{t,\epsilon,\pi}$ is given by \eqref{eq:PerturbatedPolicy}.
\end{definition}

  \begin{theorem}\label{th:QuantileMaximizationMultiple}
 Suppose Assumptions \ref{as:Parameters} and \ref{as:Feasibility2} hold.
 \begin{enumerate}
\item[(i)]
 Suppose $\alpha=1/2$. Then, $\hat{\bm{\pi}}\in \mathbb{A}$ is an intra-personal equilibrium for multi-time $\alpha$-level quantile maximization if and only if $\hat{\bm{\pi}}$ is given by \eqref{eq:EquiStrategyPropTarget} for some constant $\xi<x_0$.
 \item[(ii)] Suppose $\alpha \in (0, 1/2)$. Then,     $\hat{\bm{\pi}}\in \mathbb{A}$ is an intra-personal equilibrium for multi-time $\alpha$-level quantile maximization if and only if it implies zero investment in the risky assets at all time, i.e., if and only if $\hat{\bm{\pi}}$ is given by \eqref{tail risk equilibrium Strategy} for some $\theta\in C_{\mathrm{pw}}([0,T))$.
 \item[(iii)] Suppose $\alpha \in (1/2, 1)$. Then,   there does not exist any intra-personal equilibrium for multi-time $\alpha$-level quantile maximization in $\mathbb{A}$.
 \end{enumerate}
\end{theorem}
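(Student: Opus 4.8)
\emph{Approach.} The plan is to reduce Theorem \ref{th:QuantileMaximizationMultiple} to the single--date Theorem \ref{th:EquilibriumLowerLimit}, exploiting the linearity of the criterion \eqref{ObjMultipleT} in the individual quantiles together with the fact that on the terminal interval the criterion collapses to a single quantile. Fixing $t\in[T_{n-1},T_n)$ and a deviation to the constant $\pi$ on $[t,t+\epsilon)$, and writing $\hat\pi:=\hat{\bm{\pi}}(t,x)$, linearity gives
\begin{align*}
J^{\hat{\bm{\pi}}_{t,\epsilon,\pi}}(t,x,\alpha)-J^{\hat{\bm{\pi}}}(t,x,\alpha)=\sum_{i=n}^{N}w_{n,i}\Big[G^{\hat{\bm{\pi}}_{t,\epsilon,\pi}}(t,x,\alpha;T_i)-G^{\hat{\bm{\pi}}}(t,x,\alpha;T_i)\Big],
\end{align*}
so I would analyze each bracket by the same infinitesimal expansion used for Theorem \ref{th:EquilibriumLowerLimit}, now with terminal date $T_i$ in place of $T$.

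On the last interval $[T_{N-1},T)$ one has $n=N$ and, since $w_{N,N}=1$, $J^{\bm{\pi}}(t,x,\alpha)=G^{\bm{\pi}}(t,x,\alpha;T)$; hence \eqref{eq:EquiDefDollarMultipleTime} is literally the single--date condition \eqref{eq:EquiDefDollar} there. This disposes of part (iii): by the mechanism underlying Theorem \ref{th:EquilibriumLowerLimit}-(iii), for $\alpha>1/2$ every affine candidate admits, at times arbitrarily close to $T$, a deviation that strictly increases $G^{\bm{\pi}}(t,x,\alpha;T)$; such times lie in $[T_{N-1},T)$, so the multi--date condition fails and no equilibrium can exist. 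For part (ii) the same identification shows, by the argument of Theorem \ref{th:EquilibriumLowerLimit}-(ii) run on $[T_{N-1},T)$, that any equilibrium invests nothing there; I would then induct downward: once $\hat{\bm{\pi}}$ invests nothing on $[T_n,T)$, wealth is frozen after $T_n$ (recall $r\equiv0$), so $X^{\hat{\bm{\pi}}}_{t,x}(T_i)=X^{\hat{\bm{\pi}}}_{t,x}(T_n)$ and all quantiles in \eqref{ObjMultipleT} coincide, whence $J^{\bm{\pi}}(t,x,\alpha)=G^{\bm{\pi}}(t,x,\alpha;T_n)$ on $[T_{n-1},T_n)$ --- a single--date problem with horizon $T_n$ --- and Theorem \ref{th:EquilibriumLowerLimit}-(ii) forces zero investment on $[T_{n-1},T_n)$ as well.

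For the median case (i) the collapse is unavailable, since the portfolio--insurance continuation keeps investing and the quantiles at distinct $T_i$ genuinely differ. Letting $g_i$ denote the $T_i$--median as a function of the time--$(t{+}\epsilon)$ wealth, the $\epsilon$--order expansion of each bracket reads $\epsilon\big[g_i'(x)(\pi\tran b(t)-\hat\pi\tran b(t))+c_i(x)(\|\sigma(t)\tran\pi\|^2-\|\sigma(t)\tran\hat\pi\|^2)\big]+o(\epsilon)$, with $g_i'(x)>0$ and the $\sqrt{\epsilon}$--order terms cancelling because the median is insensitive to symmetric spread at first order. Summing against the weights replaces $(g_i',c_i)$ by $\big(\sum_i w_{n,i}g_i',\sum_i w_{n,i}c_i\big)=:(G',C)$ with $G'>0$, so the aggregate rate is the concave quadratic $G'\,\pi\tran b(t)+C\|\sigma(t)\tran\pi\|^2$ up to an additive constant; equilibrium forces $C<0$ and makes $\hat\pi$ its maximizer over $\{Q\pi\ge0\}$, which by the quadratic program \eqref{general small alpha 1} is proportional to $v^*(t)$. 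Evaluating $G'$ and $C$ under the ansatz \eqref{eq:EquiStrategyPropTarget} (using Proposition \ref{prop:PortfolioInsurance}, whose formula holds verbatim with each $T_i$) gives $g_i'(x)=e^{\frac12\int_t^{T_i}\|\sigma(s)\tran v^*(s)\|^2ds}$ and $c_i(x)=-g_i'(x)/(2(x-\xi))$, hence $C=-G'/(2(x-\xi))$ and maximizer $(x-\xi)v^*(t)$ --- which both verifies the ``if'' direction and, through the backward induction with the level matched at each $T_n$ by continuity, pins down a single $\xi<x_0$ for the ``only if'' direction.

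The hard part is precisely this weighted--sum step for (i): because the ratios $g_i'(x)/c_i(x)$ differ across $i$ for a general affine candidate, the per--date rates do not share a common scalar factor, so one cannot simply pull one out and quote Theorem \ref{th:EquilibriumLowerLimit}. Instead one must check that the aggregated coefficients $(G',C)$ enter the sign and maximizer conditions exactly as the single--date coefficients do --- in particular that $C<0$ is both necessary for existence and self--consistent under \eqref{eq:EquiStrategyPropTarget} --- and that the backward determination of the affine coefficients returns the \emph{same} constant $\xi$ on every interval $[T_{n-1},T_n)$. By contrast, the degenerate regime (ii) and the terminal--interval reduction (iii) are routine once the collapse of \eqref{ObjMultipleT} to a single quantile is noticed.
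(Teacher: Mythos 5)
Your reductions for parts (ii) and (iii) follow essentially the paper's route (the paper formalizes the transfer to subintervals via its Lemma \ref{reachable state}, which guarantees that states reachable from an intermediate reachable state are reachable from $x_0$, so the multi-date equilibrium property really does imply the single-date one on $[T_{N-1},T)$), and your ``if'' computation for (i) is correct: under \eqref{eq:EquiStrategyPropTarget} every date-$T_i$ rate is maximized by the same $\pi=(x-\xi)v^*(t)$, so the weighted rate is too. However, your backward induction for (ii) has a flaw: Theorem \ref{th:EquilibriumLowerLimit}-(ii) applied on $[T_n,T)$ from a reachable state $x_1$ only gives $\hat{\bm{\pi}}(s,x)=\theta(s)(x-x_1)$, i.e.\ zero investment \emph{along the equilibrium path}, not an identically zero strategy. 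The collapse $J^{\bm{\pi}}=G^{\bm{\pi}}(\cdot\,;T_n)$ must hold for the perturbed strategies $\hat{\bm{\pi}}_{t,\epsilon,\pi}$ as well, but a deviation before $T_n$ moves wealth off the frozen path, after which the continuation $\theta(s)(x-x_1)$ keeps trading and the quantiles at $T_n,\dots,T_N$ no longer coincide. One must first force $\theta\equiv 0$ (when the reachable set at $T_n$ is an open interval, by applying the single-date result from two distinct starting wealths) or observe that the reachable set is the singleton $\{x_0\}$ (in which case \eqref{tail risk equilibrium Strategy} already holds); this is exactly why the paper argues with the identically-zero time $t^*$ rather than path-wise zero investment.

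The genuine gap is in the ``only if'' direction of (i). Your aggregated first-order condition (the paper's $\lambda_1^{\hat{\bm{\pi}}},\lambda_2^{\hat{\bm{\pi}}}$ in Lemma \ref{le:NeccOptTwoMedian}) only yields $\hat{\bm{\pi}}(t,x)=\big(\tilde a_0(t)+\tilde a_1(t)x\big)v^*(t)$ on $[T_{N-2},T_{N-1})$ with \emph{undetermined} coefficients; it does not by itself produce \eqref{eq:EquiStrategyPropTarget}. Two further nontrivial steps are needed and are absent from your sketch: first, one must show $\tilde a_1\equiv 1$, which the paper obtains by deriving a transport PDE for the weighted quantile, solving it explicitly, and matching large-$x$ asymptotics against the quantile of the multiplicative factor of the wealth equation to get $\int_t^{T_{N-1}}\tilde a_1(s)\big(1-\tilde a_1(s)\big)\|\sigma(s)\tran v^*(s)\|^2ds=0$ for \emph{all} $t$ (Lemmas \ref{le:NeccPDETwoMedian} and \ref{le:EquilibriumQuantileTwoMedian}); second, one must show the intercept $\tilde a_0$ is \emph{constant in $t$ within the interval} and equals $-\xi$. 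This last step is the hardest point of the whole theorem: the paper rules out a time-varying insurance level by showing that if $d\hat a_0$ were a nonzero positive measure, then a Brownian support/path construction forces the median at the lower edge of the reachable set to exceed strictly the affine quantile formula, a contradiction (Proposition \ref{prop:NeccAlphaHalfMultipleT}, mirroring Proposition \ref{prop:NeccAlphaHalf}). Your proposal replaces both steps by ``backward determination of the affine coefficients \dots matched at each $T_n$ by continuity,'' but continuity across the dates $T_n$ is the easy part; it is the within-interval constancy (and the identification $\tilde a_1\equiv1$) that carries the weight of the proof and for which no argument is given.
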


Theorem \ref{th:QuantileMaximizationMultiple} shows that the intra-personal equilibrium for multi-time quantile maximization is the same as for single-time-point quantile maximization. In particular, for median maximization, although the decision criterion is discontinuous at each time point $T_i$, the portfolio insurance level remains constant over time.

\section{Proofs}\label{se:Proofs}

\subsection{A Lemma}
\begin{lemma}\label{le:QuadProgSolution}
  Suppose Assumptions \ref{as:Parameters} and \ref{as:Feasibility2} hold. For each fixed $t\in [0,T)$, $v^*(t)\neq 0$ and $b(t)\tran v^*(t) = \|\sigma(t)\tran v^*(t)\|^2>0$. Consequently, $v^*\in C_{\mathrm{pw}}([0,T))$, $\inf_{t\in[0,T)}\|v^*(t)\|>0$, and $\inf_{t\in[0,T)}\|\sigma(t)\tran v^*(t)\|>0$.
\end{lemma}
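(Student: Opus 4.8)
The plan is to analyze the quadratic program \eqref{general small alpha 1} first pointwise in $t$, and then to handle the dependence on $t$. Fix $t\in[0,T)$ and abbreviate $A:=\sigma(t)\sigma(t)\tran$ and $b:=b(t)$. By Assumption \ref{as:Parameters}(ii) we have $A\succeq \delta I$, so the objective $f(v):=\tfrac12 v\tran A v-b\tran v$ is strictly convex and coercive, while the feasible set $\{v:Qv\ge 0\}$ is a nonempty closed convex cone (it contains $0$). Hence a unique minimizer $v^*(t)$ exists. To rule out $v^*(t)=0$, I would invoke Assumption \ref{as:Feasibility2} to pick $v$ with $b\tran v>0$ and $Qv\ge 0$; the ray $\{sv:s\ge 0\}$ is feasible and $f(sv)=\tfrac12 s^2 v\tran A v-s\,b\tran v<0=f(0)$ for small $s>0$, so the minimizer cannot be the origin.

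The key identity would follow by exploiting that the feasible set is a cone. Since $v^*(t)$ minimizes $f$ over the cone and the ray $\{s\,v^*(t):s\ge 0\}$ stays inside it, the scalar map $\phi(s):=f(s\,v^*(t))$ satisfies $\phi(1)\le\phi(s)$ for all $s\ge 0$; because $f(v^*(t))<0=\phi(0)$ this minimum over $[0,\infty)$ is interior, so $\phi'(1)=0$, i.e. $v^*(t)\tran A v^*(t)=b\tran v^*(t)$, which is exactly $\|\sigma(t)\tran v^*(t)\|^2=b(t)\tran v^*(t)$. As $v^*(t)\neq 0$ and $A\succ 0$ give $v^*(t)\tran A v^*(t)>0$, this common value is strictly positive. (Equivalently, KKT stationarity $Av^*=b+Q\tran\lambda$ with $\lambda\ge 0$ and complementary slackness $\lambda\tran Qv^*=0$ yields the same identity after pairing with $v^*$.)

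For the regularity in $t$, I would establish continuity of the solution map on each piece where $(\sigma,b)$ is continuous, via a boundedness-plus-subsequence argument. The bound $A(t)\succeq \delta I$ together with $f(v^*(t))\le f(0)=0$ gives $\tfrac{\delta}{2}\|v^*(t)\|^2\le \tfrac12 v^*(t)\tran A(t) v^*(t)\le b(t)\tran v^*(t)\le \|b(t)\|\,\|v^*(t)\|$, hence $\|v^*(t)\|\le 2\|b(t)\|/\delta$, a locally uniform bound. If $t_n\to t$ within a continuity piece, the bounded sequence $v^*(t_n)$ has convergent subsequences; any limit is feasible since $\{Qv\ge 0\}$ is closed, and passing to the limit in $f_{t_n}(v^*(t_n))\le f_{t_n}(v)$ (using that $A(t),b(t)$ converge) shows the limit minimizes $f$ at $t$, so it equals $v^*(t)$ by uniqueness; therefore $v^*(t_n)\to v^*(t)$. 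Because $\sigma$ and $b$ extend continuously to each closed subinterval and the left-limit feasibility in Assumption \ref{as:Feasibility2} keeps the limiting program's minimizer nonzero, $v^*$ extends continuously as well, giving $v^*\in C_{\mathrm{pw}}([0,T))$.

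The uniform lower bounds then follow from compactness: $[0,T)$ is a finite union of intervals on each of which $v^*$ extends continuously to a compact interval with nowhere-vanishing norm (including the left-endpoint limits), so $\|v^*\|$ attains a positive minimum on each, whence $\inf_{t\in[0,T)}\|v^*(t)\|>0$; and $\|\sigma(t)\tran v^*(t)\|^2\ge \delta\|v^*(t)\|^2$ then forces $\inf_{t\in[0,T)}\|\sigma(t)\tran v^*(t)\|>0$. I expect the main obstacle to be the continuity of the parametric-QP solution map across the breakpoints of $(\sigma,b)$---setting up the locally uniform bound and the subsequence argument so that the left-limit feasibility condition is used exactly where nonvanishing at the endpoint limits is needed.
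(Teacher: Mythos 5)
Your proof is correct, and it follows the same overall skeleton as the paper's (pointwise analysis of the QP, nonvanishing via Assumption \ref{as:Feasibility2}, then piecewise continuity plus compactness), but it executes two key steps differently. For the identity $b(t)\tran v^*(t)=\|\sigma(t)\tran v^*(t)\|^2$, the paper writes down the KKT system and pairs the stationarity equation with $v^*(t)$, using complementary slackness $\lambda\tran Qv^*(t)=0$; you instead exploit that the feasible set is a cone and optimize along the ray $\{s\,v^*(t):s\ge 0\}$, getting $\phi'(1)=0$ — an equally valid and arguably more elementary derivation (you note the KKT route as an alternative, which is exactly what the paper does). The bigger divergence is the regularity in $t$: the paper simply cites Theorem 4.4 of \citet{daniel1973stability} to get $v^*\in C([t_{i-1},t_i])$ on each piece, whereas you give a self-contained argument — the locally uniform bound $\|v^*(t)\|\le 2\|b(t)\|/\delta$, a subsequence extraction, and identification of every limit point with $v^*(t)$ via uniqueness. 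Your argument is sound precisely because the feasible cone $\{Qv\ge 0\}$ does not vary with $t$ and the strict convexity is uniform (through $\delta$), and it correctly uses the left-limit clause of Assumption \ref{as:Feasibility2} to keep the endpoint-extended minimizers nonzero, which is where the paper's compactness step also relies on that clause. What the paper's citation buys is brevity and coverage of more general perturbation settings; what your argument buys is a fully elementary, self-contained proof with no external stability theorem.
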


\begin{pfof}{Lemma \ref{le:QuadProgSolution}}
By Assumption \ref{as:Parameters}, there exists $0=:t_0<t_1<\dots<t_N:=T$ such that $\sigma$ and $b$ are continuous on $[t_{i-1},t_i)$ and can be continuously extended to $[t_{i-1},t_i]$, $i=1,\dots, N$. Thus, in the following, we only need to fix $i$ and consider the continuous extension of $b$ and $\sigma$ on $[t_{i-1},t_i]$.

  Fix any $t\in [t_{i-1},t_{i}]$. It is obvious that the optimal solution of \eqref{general small alpha 1} uniquely exists. Moreover, by Assumption \ref{as:Feasibility2}, there exists $v_0\in \mathbb{R}^m$ with $Qv\ge 0$ and $b(t)\tran v_0>0$. For sufficiently small $\epsilon>0$, $v_\epsilon:=\epsilon v_0$ satisfies $Qv_\epsilon\ge 0$ and $\frac{1}{2}v_\epsilon\tran \sigma(t)\sigma(t)\tran v_\epsilon-b(t)\tran v_\epsilon = \epsilon\big(\frac{1}{2}v_0\tran \sigma(t)\sigma(t)\tran v_0\epsilon -b(t)\tran v_0\big)<0$, so the optimal solution of \eqref{general small alpha 1}, namely $v^*(t)$, cannot be 0.

  The Lagrange dual theory implies the follow equations:
  \begin{align}\label{eq:LeQuadProgEq1}
    \left\{
    \begin{array}{l}
    \sigma(t)\sigma(t)\tran v^*(t) - b(t) - Q\tran \lambda = 0,\\
    \lambda\tran Q v^*(t) = 0,\\
    \lambda\ge 0,Qv^*(t)\ge 0.
    \end{array}
    \right.
  \end{align}
  Multiplying by $v^*(t)\tran$ from left on both sides of the first equation of \eqref{eq:LeQuadProgEq1} and recalling the second equation of \eqref{eq:LeQuadProgEq1}, we immediately conclude that $b(t)\tran v^*(t) =\|\sigma(t)\tran v^*(t)\|^2$. Moreover, because $v^*(t)\neq 0$ and $\sigma(t)\sigma(t)\tran$ is positive definite, we have $\|\sigma(t)\tran v^*(t)\|^2>0$.

    Finally, by Assumption \ref{as:Parameters} and Theorem 4.4 in \citet{daniel1973stability}, we immediately conclude that $v^*\in C([t_{i-1},t_i])$. As a result, $\inf_{t\in[t_{i-1},t_i]}\|v^*(t)\|>0$ and $\inf_{t\in[t_{i-1},t_i]}\|\sigma(t)\tran v^*(t)\|^2>0$.
%
%
%
%
%
%
  \halmos
\end{pfof}

\subsection{Proof of Theorem \ref{th:EquilibriumLowerLimit}}
We introduce some notations to be used in the following proof.

For any interval $[a,b)$ and open set $O$ in $\bbR^l$, denote by $C^{0,\infty}([a,b)\times O)$ the set of functions $g(t,z)$ from $[a,b)\times O$ to $\bbR$ such that its derivatives with respect to $z$ of any order exist and are continuous in $(t,z)$ on $[a,b)\times O$ and by $C^{1,\infty}([a,b)\times O)$ the set of functions $g(t,z)$ from $[a,b)\times O$ to $\bbR$ such that its first-order derivative with respect to $t$ and its derivatives with respect to $z$ of any order exist and are continuous in $(t,z)$ on $[a,b)\times O$. 

For any $x\in \bbR^l$ and $\delta\ge 0$, denote by $B^\ell_\delta(x):=\{y\in \bbR^l\mid \|y-x\|\le \delta\}$.

The proof of Theorem \ref{th:EquilibriumLowerLimit} is divided into three parts: In Section \ref{subsubse:CalDerive}, we prove a crucial lemma. In Section \ref{subsubse:Sufficiency}, we prove the sufficiency part of the theorem. In Section \ref{subsubse:Necessity}, we prove the remaining part of the theorem. Because the proof is involved, we present it by summarizing important intermediate steps of the proofs as lemmas and relegate all proofs in Section \ref{subsubse:proofs}.

\subsubsection{Calculation of Derivatives}\label{subsubse:CalDerive}
\begin{lemma}\label{le:QuantileDifferentiability}
  Consider any $\hat{\bm{\pi}}\in \mathbb{A}$, i.e., $\hat{\bm{\pi}}(t,x) = \theta_0(t) + \theta_1(t) x,\; t\in[0,T),x\in \bbR$ with $\theta_0,\theta_1\in C_{\mathrm{pw}}([0,T))$, and define
  \begin{align}
    t^*:&=\inf\{t\in[0,T):\theta_0(s)=\theta_1(s)=0,\forall s\in[t,T)\},\label{eq:T0}\\
    t_*:&=\inf\{t\in[0,t^*):\theta_0(s)+\xi\theta_1(s)=0,\forall s\in[t,t^*)\text{ and some }\xi \in \bbR\}.\label{eq:T0prime}
  \end{align}
  Then, if $t_*<t^*$, there exists unique $\xi\in \bbR$ such that $\theta_0(s)+\xi\theta_1(s)=0,\forall s\in[t_*,t^*)$. Denote
   \begin{align}\label{eq:SingularPoints}
     \mathbb{S}^{\hat{\bm{\pi}}}_t=\emptyset,\; t\in [0,t_*),\quad  \mathbb{S}^{\hat{\bm{\pi}}}_t=\{\xi\},\; t\in[t_*,T).
   \end{align}
   Then, for any $t\in [0,t^*)$, there exists $\eta\in (0,t^*-t)$ such that for any $x \in \bbR\backslash \mathbb{S}^{\hat{\bm{\pi}}}_t$,
  $\alpha \in(0,1)$, and $\pi\in \bbR^m$, $G^{\hat{\bm{\pi}}_{t,\epsilon,\pi}}(t,x,\alpha)$ is continuous in $\epsilon \in[0,\eta)$, $F^{\hat{\bm{\pi}}}_y(t,x,G^{\hat{\bm{\pi}}}(t,x,\alpha))>0$, and
  \begin{align}
    &\lim_{\epsilon\downarrow 0}\frac{G^{\hat{\bm{\pi}}_{t,\epsilon,\pi}}(t,x,\alpha) - G^{\hat{\bm{\pi}}}(t,x,\alpha)}{\epsilon}
    \notag \\
    &= -\frac{{\cal A}^{\pi}F^{\hat{\bm{\pi}}}(t,x,G^{\hat{\bm{\pi}}}(t,x,\alpha))}{F^{\hat{\bm{\pi}}}_y(t,x,G^{\hat{\bm{\pi}}}(t,x,\alpha))} =\frac{\varphi^{\hat{\bm{\pi}}}_{t,x,\alpha}(\hat{\bm{\pi}}(t,x)) - \varphi^{\hat{\bm{\pi}}}_{t,x,\alpha}(\pi)}{F^{\hat{\bm{\pi}}}_y(t,x,G^{\hat{\bm{\pi}}}(t,x,\alpha))},
    \label{eq:QuanDerivativeWRTeps}
  \end{align}
  where ${\cal A}^{\pi}$ is applied to $F^{\hat{\bm{\pi}}}(t,x,y)$ as a function of $(t,x)$ with ${\cal A}^{\pi}f(t,x):=f_t(t,x) + b(t)\tran \pi f_x(t,x) + \frac{1}{2}\|\sigma(t)\tran \pi\|^2 f_{xx}(t,x)$ and
  \begin{align}\label{eq:NeccOptObj}
  \varphi^{\hat{\bm{\pi}}}_{t,x,\alpha}(v):=F^{\hat{\bm{\pi}}}_x(t,x,G^{\hat{\bm{\pi}}}(t,x,\alpha)) b(t)\tran v +\frac{1}{2}F^{\hat{\bm{\pi}}}_{xx}(t,x,G^{\hat{\bm{\pi}}}(t,x,\alpha))\|\sigma(t)\tran v\|^2.
\end{align}
  \end{lemma}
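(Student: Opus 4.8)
The plan is to reduce everything to the regularity of the terminal-wealth CDF $F^{\hat{\bm{\pi}}}(t,x,y)$ in $(t,x,y)$ away from the singular set, and then to combine a Dynkin-type expansion with the implicit function theorem.

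First I would dispose of the uniqueness of $\xi$. If $t_*<t^*$ and both $\xi_1$ and $\xi_2$ satisfied $\theta_0(s)+\xi_i\theta_1(s)=0$ on $[t_*,t^*)$, then $(\xi_1-\xi_2)\theta_1\equiv 0$ there, forcing $\theta_1\equiv 0$ and hence $\theta_0\equiv 0$ on $[t_*,t^*)$; since $\hat{\bm{\pi}}$ already vanishes on $[t^*,T)$ this would make it vanish on $[t_*,T)$, contradicting the minimality of $t^*$. The same contradiction shows $\theta_1$ is not a.e.\ zero on any $[t,t^*)\subseteq[t_*,t^*)$, so that, by the positive-definiteness of $\sigma\sigma\tran$ in Assumption \ref{as:Parameters}, $\int_t^{t^*}\|\sigma(u)\tran\theta_1(u)\|^2du>0$ for every $t\in[t_*,t^*)$.

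The technical heart is the regularity and non-degeneracy of $F^{\hat{\bm{\pi}}}$. Since $\hat{\bm{\pi}}$ is affine, the wealth process solves a linear SDE that I would solve explicitly. On $[t_*,t^*)$ one has $\hat{\bm{\pi}}(s,x)=\theta_1(s)(x-\xi)$, so $Z(s):=X^{\hat{\bm{\pi}}}_{t,x}(s)-\xi$ obeys the geometric equation $dZ=Z[\theta_1\tran b\,ds+\theta_1\tran\sigma\,dW]$; because the strategy is zero after $t^*$, $X^{\hat{\bm{\pi}}}_{t,x}(T)=\xi+(x-\xi)\exp(\cdots)$ is a shifted log-normal. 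For $x\neq\xi$ the non-degeneracy integral above is strictly positive, so this law has a smooth, strictly positive density on $(\xi,\infty)$ (or $(-\infty,\xi)$), giving $F_y^{\hat{\bm{\pi}}}(t,x,\cdot)>0$ and joint smoothness of $F^{\hat{\bm{\pi}}}$ in $(t,x)$; at $x=\xi$ the law collapses to the point mass at $\xi$, which is exactly why $\xi$ is excluded. For $t\in[0,t_*)$, where $\mathbb{S}^{\hat{\bm{\pi}}}_t=\emptyset$, I would condition on $X^{\hat{\bm{\pi}}}_{t,x}(t_*)$ and compose its conditional law with the independent log-normal factor carrying the wealth from $t_*$ to $t^*$; the full-support, smooth log-normal factor smooths the mixture, again yielding an absolutely continuous terminal law with smooth positive density and a jointly smooth $F^{\hat{\bm{\pi}}}$. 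I expect this regularity/non-degeneracy step — in particular proving smoothness up to and controlling behaviour near the time-varying zero-investment level $-\theta_0(t)/\theta_1(t)$ for $t<t_*$, and isolating the genuine singularity at $\xi$ — to be the main obstacle.

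With regularity in hand, I would compute the $\epsilon$-derivative of the CDF. By the Markov property and the tower rule, $F^{\hat{\bm{\pi}}_{t,\epsilon,\pi}}(t,x,y)=\expect[F^{\hat{\bm{\pi}}}(t+\epsilon,\bar X(t+\epsilon),y)]$, where $\bar X$ invests the constant amount $\pi$ on $[t,t+\epsilon)$ and is therefore Gaussian; choosing $\eta<t^*-t$ keeps the post-perturbation evaluation in the smooth regime, and since $x\neq\xi$ the Gaussian $\bar X(t+\epsilon)$ meets the singular point only on a null set whose contribution is $o(\epsilon)$. Dynkin's formula then gives $\lim_{\epsilon\downarrow0}\epsilon^{-1}[F^{\hat{\bm{\pi}}_{t,\epsilon,\pi}}(t,x,y)-F^{\hat{\bm{\pi}}}(t,x,y)]={\cal A}^\pi F^{\hat{\bm{\pi}}}(t,x,y)$. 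Because $F^{\hat{\bm{\pi}}}$ is the CDF of the unperturbed diffusion it solves the backward Kolmogorov equation ${\cal A}^{\hat{\bm{\pi}}(t,x)}F^{\hat{\bm{\pi}}}=0$; subtracting this identity turns ${\cal A}^\pi F^{\hat{\bm{\pi}}}$ into $\varphi^{\hat{\bm{\pi}}}_{t,x,\alpha}(\pi)-\varphi^{\hat{\bm{\pi}}}_{t,x,\alpha}(\hat{\bm{\pi}}(t,x))$ when evaluated at $y=G^{\hat{\bm{\pi}}}(t,x,\alpha)$, matching the right-hand side of the claim. Finally I would pass from the CDF to the quantile: continuity of $F^{\hat{\bm{\pi}}_{t,\epsilon,\pi}}(t,x,y)$ in $(\epsilon,y)$ together with the strict monotonicity supplied by $F_y^{\hat{\bm{\pi}}}>0$ makes $G^{\hat{\bm{\pi}}_{t,\epsilon,\pi}}(t,x,\alpha)$ the unique root of $F^{\hat{\bm{\pi}}_{t,\epsilon,\pi}}(t,x,\cdot)=\alpha$ and continuous in $\epsilon$. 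Writing $G_\epsilon,G_0$ for the perturbed and unperturbed quantiles and using $F^{\hat{\bm{\pi}}_{t,\epsilon,\pi}}(t,x,G_\epsilon)=\alpha=F^{\hat{\bm{\pi}}}(t,x,G_0)$, I would split the difference into a change-of-strategy term $F^{\hat{\bm{\pi}}_{t,\epsilon,\pi}}(t,x,G_\epsilon)-F^{\hat{\bm{\pi}}}(t,x,G_\epsilon)\sim\epsilon\,{\cal A}^\pi F^{\hat{\bm{\pi}}}(t,x,G_0)$ and a change-of-level term $F^{\hat{\bm{\pi}}}(t,x,G_\epsilon)-F^{\hat{\bm{\pi}}}(t,x,G_0)\sim F_y^{\hat{\bm{\pi}}}(t,x,G_0)(G_\epsilon-G_0)$; setting the sum to zero and dividing by $\epsilon$ yields $-{\cal A}^\pi F^{\hat{\bm{\pi}}}(t,x,G_0)/F_y^{\hat{\bm{\pi}}}(t,x,G_0)=[\varphi^{\hat{\bm{\pi}}}_{t,x,\alpha}(\hat{\bm{\pi}}(t,x))-\varphi^{\hat{\bm{\pi}}}_{t,x,\alpha}(\pi)]/F_y^{\hat{\bm{\pi}}}(t,x,G_0)$, as claimed.
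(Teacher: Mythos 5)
Your differentiation skeleton is essentially the paper's own: both arguments use the tower property $F^{\hat{\bm{\pi}}_{t,\epsilon,\pi}}(t,x,y)=\expect[F^{\hat{\bm{\pi}}}(t+\epsilon,X^{\pi}_{t,x}(t+\epsilon),y)]$, a Dynkin-type expansion, the Kolmogorov equation ${\cal A}^{\hat{\bm{\pi}}}F^{\hat{\bm{\pi}}}=0$ to convert ${\cal A}^{\pi}F^{\hat{\bm{\pi}}}$ into the $\varphi$-difference, and continuity of $G_\epsilon$ via the implicit function theorem followed by the change-of-strategy/change-of-level splitting. Your uniqueness argument for $\xi$ is fine, and your explicit shifted-lognormal computation on $[t_*,t^*)$ is a legitimate elementary substitute for the paper's citation in that regime (there the paper instead invokes Corollary 3 and Theorem 2-(i) of \citet{HeJiang2020:LinearSDE}).

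The genuine gap is the case $t\in[0,t_*)$, which is exactly the regularity input the paper does not prove itself but imports from the companion paper (Theorem 2-(ii) of \citet{HeJiang2020:LinearSDE}: $F^{\hat{\bm{\pi}}}\in C^{1,\infty}$ on all of $\bbR^2$, with bounds on the derivatives). Your substitute --- condition on $X^{\hat{\bm{\pi}}}_{t,x}(t_*)$ and let the independent lognormal factor smooth the mixture --- does not deliver this. First, writing $X^{\hat{\bm{\pi}}}_{t,x}(T)-\xi=\big(X^{\hat{\bm{\pi}}}_{t,x}(t_*)-\xi\big)\cdot\mathcal{E}$, the factor $\mathcal{E}$ cannot remove an atom of $X^{\hat{\bm{\pi}}}_{t,x}(t_*)$ at $\xi$: if $\prob\big(X^{\hat{\bm{\pi}}}_{t,x}(t_*)=\xi\big)>0$, then $F^{\hat{\bm{\pi}}}(t,x,\cdot)$ jumps at $y=\xi$ and the lemma fails there. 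Ruling this out for a general affine strategy on $[t,t_*)$ --- where the solution is a stochastic exponential multiplying a stochastic integral with random integrand --- is a nontrivial no-atom statement that you do not prove. Second, even granting no atom, your multiplicative-convolution argument yields smoothness and positivity of the density only away from $y=\xi$, whereas the lemma is asserted for every $\alpha\in(0,1)$: when $X^{\hat{\bm{\pi}}}_{t,x}(t_*)$ puts mass on both sides of $\xi$ (e.g.\ $\theta_1\equiv 0$ and $\theta_0\not\equiv 0$ on $[t,t_*)$ makes it a nondegenerate Gaussian), the level $y=\xi$ is an interior point of the support and equals $G^{\hat{\bm{\pi}}}(t,x,\alpha_0)$ for $\alpha_0=\prob\big(X^{\hat{\bm{\pi}}}_{t,x}(T)\le\xi\big)\in(0,1)$, so you need positivity of $F^{\hat{\bm{\pi}}}_y$ and control of $F^{\hat{\bm{\pi}}}_x,F^{\hat{\bm{\pi}}}_{xx},F^{\hat{\bm{\pi}}}_t$ exactly at $y=\xi$. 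Third, the smoothing step tacitly presupposes that the law of $X^{\hat{\bm{\pi}}}_{t,x}(t_*)$ itself has a smooth density, which for the pre-$t_*$ affine dynamics is again the very linear-SDE regularity theorem you are trying to avoid; you flag this as ``the main obstacle'' but leave it unresolved, so as written your proof covers only $t\in[t_*,t^*)$. (A smaller gloss: you write the change-of-strategy term as $\epsilon\,{\cal A}^{\pi}F^{\hat{\bm{\pi}}}(t,x,G_0)$ although the CDF difference is evaluated at the moving level $G_\epsilon$; the paper closes this with a Lipschitz estimate on $y\mapsto{\cal A}^{\pi}F^{\hat{\bm{\pi}}}$ together with moment bounds on $X^{\pi}_{t,x}$, which your outline should make explicit.)
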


Lemma \ref{le:QuantileDifferentiability} provides the derivative $G^{\hat{\bm{\pi}}_{t,\epsilon,\pi}}(t,x,\alpha)$ in $\epsilon=0$. For $t\in[t^*,T)$, it is obvious that $F^{\hat{\bm{\pi}}}(t,x,y)=\mathbf 1_{x\le y}$ and thus is not differentiable in $x$ and $y$. For $t\in [0,t^*)$, we have desired differentiability except at singular points $x\in \mathbb{S}^{\hat{\bm{\pi}}}_t$.

\subsubsection{Sufficiency}\label{subsubse:Sufficiency}

\begin{proposition}\label{prop:Sufficiency}
$\hat{\bm{\pi}}$ as given by \eqref{eq:EquiStrategyPropTarget} is an intra-personal equilibrium for $\alpha=1/2$ and $\hat{\bm{\pi}}$ as given by \eqref{tail risk equilibrium Strategy} is an intra-personal equilibrium for $\alpha\in(0,1/2)$.
\end{proposition}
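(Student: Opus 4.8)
The plan is to verify Definition~\ref{de:EquilibriumStrategy} directly, treating the median case and the case $\alpha\in(0,1/2)$ separately, since the terminal-wealth law under $\hat{\bm{\pi}}$ is nondegenerate in the former but degenerate in the latter.

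For $\alpha=1/2$ I would invoke Lemma~\ref{le:QuantileDifferentiability}. Under \eqref{eq:EquiStrategyPropTarget} we have $\theta_1=v^*$, which never vanishes by Lemma~\ref{le:QuadProgSolution}, so $\hat{\bm{\pi}}$ is nondegenerate throughout $[0,T)$, every $t\in[0,T)$ lies in $[0,t^*)$, and the unique singular point is $\xi$; since the reachable set at $t\in(0,T]$ is $(\xi,+\infty)$, every reachable wealth level is non-singular (and $Q\hat{\bm{\pi}}(t,x)=(x-\xi)Qv^*(t)\ge 0$ for $x>\xi$, so $\hat{\bm{\pi}}$ is feasible), whence the derivative formula \eqref{eq:QuanDerivativeWRTeps} applies. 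Because $F^{\hat{\bm{\pi}}}_y>0$ there, the sign of the one-sided $\epsilon$-derivative of the quantile equals that of $\varphi^{\hat{\bm{\pi}}}_{t,x,1/2}(\hat{\bm{\pi}}(t,x))-\varphi^{\hat{\bm{\pi}}}_{t,x,1/2}(\pi)$, so it suffices to show that $\hat{\bm{\pi}}(t,x)$ is the unique minimizer of $v\mapsto\varphi^{\hat{\bm{\pi}}}_{t,x,1/2}(v)$ over $\{v:Qv\ge 0\}$. To this end I would solve the wealth equation explicitly: using $b(t)\tran v^*(t)=\|\sigma(t)\tran v^*(t)\|^2$ from Lemma~\ref{le:QuadProgSolution}, $X^{\hat{\bm{\pi}}}_{t,x}(T)-\xi$ is lognormal, so $F^{\hat{\bm{\pi}}}(t,x,y)=\Phi\big((\ln\tfrac{y-\xi}{x-\xi}-\tfrac12 s^2)/s\big)$ with $s^2=\int_t^T\|\sigma(u)\tran v^*(u)\|^2\,du$. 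Differentiating at $y=G^{\hat{\bm{\pi}}}(t,x,1/2)$, where the argument of $\Phi$ is $0$, yields $F^{\hat{\bm{\pi}}}_x<0$ and $F^{\hat{\bm{\pi}}}_{xx}>0$, so by \eqref{eq:NeccOptObj} the map $\varphi^{\hat{\bm{\pi}}}_{t,x,1/2}$ is a strictly convex quadratic. Substituting $v=(x-\xi)\tilde v$ and using $x>\xi$, its minimization reduces exactly to problem \eqref{general small alpha 1}, whose unique solution is $v^*(t)$; hence the minimizer is $(x-\xi)v^*(t)=\hat{\bm{\pi}}(t,x)$, and strict convexity makes the inequality strict for every feasible $\pi\neq\hat{\bm{\pi}}(t,x)$. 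The strictly negative one-sided derivative at $\epsilon=0$ then gives \eqref{eq:EquiDefDollar} for all small $\epsilon$.

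For $\alpha\in(0,1/2)$ the reachable set is the single point $\{x_0\}$: under \eqref{tail risk equilibrium Strategy} the wealth stays at $x_0$ with zero investment, so $x_0$ is exactly the singular point and Lemma~\ref{le:QuantileDifferentiability} is unavailable; I would therefore argue by hand. Perturbing to a constant $\pi\neq 0$ on $[t,t+\epsilon)$ and then following $\hat{\bm{\pi}}$, one solves forward to obtain $X^{\hat{\bm{\pi}}_{t,\epsilon,\pi}}_{t,x_0}(T)=x_0+(X(t+\epsilon)-x_0)M$, where $M>0$ is lognormal and independent of the Gaussian increment $X(t+\epsilon)-x_0$, which has mean $\int_t^{t+\epsilon}\pi\tran b(u)\,du$ of order $\epsilon$ and variance $\int_t^{t+\epsilon}\|\sigma(u)\tran\pi\|^2\,du$ of order $\epsilon$. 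Since $M>0$ preserves signs and the increment has no atom, $\prob(X(T)\le x_0)=\prob(X(t+\epsilon)\le x_0)=\Phi(-m_\epsilon/s_\epsilon)$ with $m_\epsilon/s_\epsilon=O(\sqrt\epsilon)\to 0$, so this probability tends to $1/2>\alpha$. Hence for small $\epsilon$ it exceeds $\alpha$, and because $X(T)$ has no atom at $x_0$ the $\alpha$-quantile obeys $G^{\hat{\bm{\pi}}_{t,\epsilon,\pi}}(t,x_0,\alpha)\le x_0=G^{\hat{\bm{\pi}}}(t,x_0,\alpha)$, which is \eqref{eq:EquiDefDollar}.

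I expect the case $\alpha\in(0,1/2)$ to be the main obstacle: the unperturbed terminal wealth is deterministic, so the convenient derivative identity \eqref{eq:QuanDerivativeWRTeps} cannot be used, and one must instead control the small-time behavior of the perturbed quantile directly. The delicate points are showing that the standardized drift of the Gaussian increment vanishes like $\sqrt\epsilon$ (so the limiting probability is exactly $1/2$, independently of $\pi$), verifying that the positive factor $M$ leaves the event $\{X(T)\le x_0\}$ unchanged, and handling the right-limits of $b$ and $\sigma$ cleanly at the discontinuity points of the piecewise-continuous coefficients.
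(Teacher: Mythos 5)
Your proposal is correct and follows essentially the same route as the paper's own proof: for $\alpha=1/2$ it computes the lognormal law of $X^{\hat{\bm{\pi}}}_{t,x}(T)-\xi$ to get $F^{\hat{\bm{\pi}}}_x<0$ and $F^{\hat{\bm{\pi}}}_{xx}>0$ at the median, reduces the minimization of $\varphi^{\hat{\bm{\pi}}}_{t,x,1/2}$ over the cone to problem \eqref{general small alpha 1} via the scaling $v=(x-\xi)\tilde v$, and applies Lemma \ref{le:QuantileDifferentiability}; for $\alpha\in(0,1/2)$ it factors the perturbed terminal wealth as a Gaussian increment times a positive (lognormal) factor and shows $F^{\hat{\bm{\pi}}_{t,\epsilon,\pi}}(t,x_0,x_0)=\Phi(O(\sqrt{\epsilon}))\to 1/2>\alpha$, exactly as in the paper's equations \eqref{process for non invest}--\eqref{prob for non invest}. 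The only cosmetic differences are that independence of the lognormal factor from the increment is not needed (positivity alone fixes the sign of $X(T)-x_0$), and the no-atom remark at the end is superfluous since $F(t,x_0,x_0)>\alpha$ together with monotonicity of the CDF already forces $G^{\hat{\bm{\pi}}_{t,\epsilon,\pi}}(t,x_0,\alpha)\le x_0$.
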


\subsubsection{Necessity}\label{subsubse:Necessity}

\begin{lemma}\label{le:NeccOpt}
  Consider $\hat{\bm{\pi}}\in \mathbb{A}$ and define $t^*$, $t_*$, $\xi$, and $\mathbb{S}^{\hat{\bm{\pi}}}_t$ as in Lemma \ref{le:QuantileDifferentiability}. Suppose that $\hat{\bm{\pi}}$ is an intra-personal equilibrium  strategy for a given $\alpha \in (0,1)$. Then, for any $t\in[0,t^*)$ and $x\in \bbX^{x_0,\hat{\bm{\pi}}}_t\backslash \mathbb{S}^{\hat{\bm{\pi}}}_t$, we have $F^{\hat{\bm{\pi}}}_{xx}(t,x,G^{\hat{\bm{\pi}}}(t,x,\alpha))>0$, $F^{\hat{\bm{\pi}}}_{x}(t,x,G^{\hat{\bm{\pi}}}(t,x,\alpha))<0$, and
  \begin{align}\label{eq:NeccOptSolution}
    \hat{\bm{\pi}}(t,x) = -\frac{F^{\hat{\bm{\pi}}}_{x}(t,x,G^{\hat{\bm{\pi}}}(t,x,\alpha))}{F^{\hat{\bm{\pi}}}_{xx}(t,x,G^{\hat{\bm{\pi}}}(t,x,\alpha))}v^*(t).
  \end{align}
\end{lemma}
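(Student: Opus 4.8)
The plan is to convert the equilibrium condition into a pointwise static optimization over the portfolio constraint cone via Lemma \ref{le:QuantileDifferentiability}, and then to read off the sign conditions and the explicit formula from that optimization. Fix $t\in[0,t^*)$ and $x\in\bbX^{x_0,\hat{\bm{\pi}}}_t\backslash\mathbb{S}^{\hat{\bm{\pi}}}_t$. By Lemma \ref{le:QuantileDifferentiability}, $F^{\hat{\bm{\pi}}}_y(t,x,G^{\hat{\bm{\pi}}}(t,x,\alpha))>0$ and the one-sided derivative of $\epsilon\mapsto G^{\hat{\bm{\pi}}_{t,\epsilon,\pi}}(t,x,\alpha)$ at $\epsilon=0$ equals $[\varphi^{\hat{\bm{\pi}}}_{t,x,\alpha}(\hat{\bm{\pi}}(t,x))-\varphi^{\hat{\bm{\pi}}}_{t,x,\alpha}(\pi)]/F^{\hat{\bm{\pi}}}_y$. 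The equilibrium inequality \eqref{eq:EquiDefDollar} forces this derivative to be $\le 0$ for every admissible $\pi$; since $F^{\hat{\bm{\pi}}}_y>0$, this is exactly the statement that $\hat{\bm{\pi}}(t,x)$ minimizes the quadratic $\varphi^{\hat{\bm{\pi}}}_{t,x,\alpha}(\pi)=F^{\hat{\bm{\pi}}}_x\,b(t)\tran\pi+\tfrac12 F^{\hat{\bm{\pi}}}_{xx}\|\sigma(t)\tran\pi\|^2$ over the cone $\{\pi:Q\pi\ge0\}$, with all derivatives evaluated at $(t,x,G^{\hat{\bm{\pi}}}(t,x,\alpha))$.

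Second, I would pin down the sign of $F^{\hat{\bm{\pi}}}_x$ from the wealth dynamics rather than from the optimization. Because $\hat{\bm{\pi}}$ is affine, on $[t,t^*)$ the wealth solves a \emph{linear} SDE, so its terminal value admits the representation $X^{\hat{\bm{\pi}}}_{t,x}(T)=Z_{t,T}\,x+R_{t,T}$, where $Z_{t,T}=\exp\big(\int_t^T(\theta_1\tran b-\tfrac12\|\theta_1\tran\sigma\|^2)ds+\int_t^T\theta_1\tran\sigma\,dW\big)>0$ almost surely and $R_{t,T}$ is independent of $x$. Pathwise strict monotonicity in $x$ gives $F^{\hat{\bm{\pi}}}(t,\cdot,y)$ nonincreasing, and differentiating $F^{\hat{\bm{\pi}}}(t,x,y)=\prob(Z_{t,T}x+R_{t,T}\le y)$ yields $F^{\hat{\bm{\pi}}}_x=-\expect[Z_{t,T}\mid X^{\hat{\bm{\pi}}}_{t,x}(T)=y]\,F^{\hat{\bm{\pi}}}_y$. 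Since $Z_{t,T}>0$ and $F^{\hat{\bm{\pi}}}_y>0$, this is strictly negative, establishing $F^{\hat{\bm{\pi}}}_x<0$.

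Third, with $F^{\hat{\bm{\pi}}}_x<0$ in hand I would rule out $F^{\hat{\bm{\pi}}}_{xx}\le0$ by a coercivity argument. By Assumption \ref{as:Feasibility2} there is $v_0$ with $Qv_0\ge0$ and $b(t)\tran v_0>0$; along the ray $\pi=\lambda v_0$ the linear part $\lambda F^{\hat{\bm{\pi}}}_x b(t)\tran v_0\to-\infty$, while if $F^{\hat{\bm{\pi}}}_{xx}\le0$ the quadratic part $\tfrac12\lambda^2 F^{\hat{\bm{\pi}}}_{xx}\|\sigma(t)\tran v_0\|^2$ is nonpositive, so $\varphi^{\hat{\bm{\pi}}}_{t,x,\alpha}(\lambda v_0)\to-\infty$, contradicting that $\hat{\bm{\pi}}(t,x)$ is a finite minimizer. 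Hence $F^{\hat{\bm{\pi}}}_{xx}>0$, so $\varphi^{\hat{\bm{\pi}}}_{t,x,\alpha}$ is strictly convex (its Hessian $F^{\hat{\bm{\pi}}}_{xx}\sigma(t)\sigma(t)\tran$ is positive definite by Assumption \ref{as:Parameters}) and has a unique minimizer over the cone.

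Finally, I would identify that minimizer with the claimed expression. Dividing $\varphi^{\hat{\bm{\pi}}}_{t,x,\alpha}$ by $F^{\hat{\bm{\pi}}}_{xx}>0$ and substituting $\pi=c\,u$ with $c:=-F^{\hat{\bm{\pi}}}_x/F^{\hat{\bm{\pi}}}_{xx}>0$ turns the problem into $c^2\min_{Qu\ge0}[\tfrac12\|\sigma(t)\tran u\|^2-b(t)\tran u]$, which is exactly problem \eqref{general small alpha 1}, whose unique solution is $v^*(t)$. Therefore the minimizer is $c\,v^*(t)$, and by uniqueness $\hat{\bm{\pi}}(t,x)=-\tfrac{F^{\hat{\bm{\pi}}}_x}{F^{\hat{\bm{\pi}}}_{xx}}v^*(t)$, as claimed. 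The main obstacle is the second step: everything downstream collapses in the degenerate case $F^{\hat{\bm{\pi}}}_x=F^{\hat{\bm{\pi}}}_{xx}=0$, where $\varphi^{\hat{\bm{\pi}}}_{t,x,\alpha}$ vanishes identically and the ratio is undefined, so the crux is the strict negativity $F^{\hat{\bm{\pi}}}_x<0$, which must be extracted from the positive density of $X^{\hat{\bm{\pi}}}_{t,x}(T)$ at its $\alpha$-quantile together with the strictly positive multiplicative sensitivity $Z_{t,T}>0$ afforded by the linear-SDE representation.
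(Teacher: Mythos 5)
Your proposal follows essentially the same route as the paper's own proof: both convert the equilibrium condition, via Lemma \ref{le:QuantileDifferentiability} and the positivity of $F^{\hat{\bm{\pi}}}_y$ at the quantile, into the pointwise condition that $\hat{\bm{\pi}}(t,x)$ minimizes $\varphi^{\hat{\bm{\pi}}}_{t,x,\alpha}$ over the cone $\{\pi: Q\pi\ge 0\}$; both rule out $F^{\hat{\bm{\pi}}}_{xx}\le 0$ by the identical coercivity argument along a ray $\lambda v_0$ with $Qv_0\ge 0$, $b(t)\tran v_0>0$ (Assumption \ref{as:Feasibility2}); and both identify the minimizer with $-\big(F^{\hat{\bm{\pi}}}_{x}/F^{\hat{\bm{\pi}}}_{xx}\big)v^*(t)$ by rescaling the strictly convex quadratic to problem \eqref{general small alpha 1} and using uniqueness of its solution. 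The one place you diverge is the strict inequality $F^{\hat{\bm{\pi}}}_{x}(t,x,G^{\hat{\bm{\pi}}}(t,x,\alpha))<0$: the paper obtains it by citing Corollary 3, Theorem 2-(iv) and Corollary 2-(ii) of \citet{HeJiang2020:LinearSDE}, whereas you sketch a self-contained derivation from the linear-SDE representation $X^{\hat{\bm{\pi}}}_{t,x}(T)=Z_{t,T}x+R_{t,T}$ with $Z_{t,T}>0$, via the formula $F^{\hat{\bm{\pi}}}_x=-\expect\big[Z_{t,T}\mid X^{\hat{\bm{\pi}}}_{t,x}(T)=y\big]F^{\hat{\bm{\pi}}}_y$. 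You correctly diagnose this as the crux of the lemma, since pathwise monotonicity alone gives only $F^{\hat{\bm{\pi}}}_x\le 0$ and the degenerate case $F^{\hat{\bm{\pi}}}_x=F^{\hat{\bm{\pi}}}_{xx}=0$ cannot be excluded by coercivity. Your conditional-expectation formula is the right mechanism and yields the correct sign, but as written it is heuristic: differentiating $\prob(Z_{t,T}x+R_{t,T}\le y)$ under the expectation and conditioning on $\{X^{\hat{\bm{\pi}}}_{t,x}(T)=y\}$ requires joint regularity of $(Z_{t,T},R_{t,T})$, i.e., exactly the smoothness and non-degeneracy of the law of $X^{\hat{\bm{\pi}}}_{t,x}(T)$ away from the singular point that the cited results of \citet{HeJiang2020:LinearSDE} provide. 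So your argument is correct in structure; to make it complete you should either invoke those results as the paper does, or supply the regularity analysis that turns your representation-based sketch into a proof.
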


Lemma \ref{le:NeccOpt} proves a necessary condition for $\hat{\bm{\pi}}\in \mathbb{A}$ to be an equilibrium strategy.

\begin{lemma}\label{le:NeccNonDegAlphaLarge}
   Consider $\hat{\bm{\pi}}\in \mathbb{A}$, recall $t^*$, $t_*$, and $\xi$ as defined in Lemma \ref{le:QuantileDifferentiability}, and define
  \begin{align}
    \underline{t}:=\sup\{s\in [0, T]: \theta_0(\tau)+\theta_1(\tau)x_0=0, \forall \tau\in [0, s]  \}.
  \end{align}
  Then, $\underline{t}\ge t^*$ if and only if $\underline{t}=T$. Moreover, if $\hat{\bm{\pi}}$ is an intra-personal equilibrium  strategy for $\alpha \in[1/2,1)$, then $t^*=T$ and $\underline{t} = 0$.
\end{lemma}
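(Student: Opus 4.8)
The plan is to treat the two assertions separately, establishing the elementary equivalence first and then the two equalities under the equilibrium hypothesis. For the equivalence $\underline t\ge t^*\Leftrightarrow\underline t=T$ I would argue purely at the level of the coefficient functions. Write $g(\tau):=\theta_0(\tau)+\theta_1(\tau)x_0$, so that $\underline t$ is the right endpoint of the maximal initial interval on which $g$ vanishes, while by definition of $t^*$ we have $\theta_0=\theta_1=0$, hence $g=0$, on $[t^*,T)$. If $\underline t\ge t^*$, then $g=0$ on $[0,\underline t)\supseteq[0,t^*)$ and also on $[t^*,T)$, and since $g(t^*)=0$ from the latter interval (when $t^*<T$) these patch together to give $g\equiv 0$ on $[0,T)$, whence $\underline t=T$; the converse is immediate because $t^*\le T$.

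For $t^*=T$ I would argue by contradiction: suppose $t^*<T$ and choose a continuity point $t$ of $b$ and $\sigma$ in $(t^*,T)$ together with any reachable wealth $x\in\bbX_t^{x_0,\hat{\bm{\pi}}}$. Since $\hat{\bm{\pi}}\equiv 0$ on $[t,T)$, following $\hat{\bm{\pi}}$ keeps the wealth frozen, so $X^{\hat{\bm{\pi}}}_{t,x}(T)=x$ and $G^{\hat{\bm{\pi}}}(t,x,\alpha)=x$. Now perturb on $[t,t+\epsilon)$ by a constant dollar amount $\pi$ with $b(t)\tran\pi>0$ and $Q\pi\ge 0$, which exists by Assumption~\ref{as:Feasibility2}; because the strategy is again zero after the window, the terminal wealth equals $x+\Delta$ with $\Delta=\pi\tran\int_t^{t+\epsilon}b(s)ds+\pi\tran\int_t^{t+\epsilon}\sigma(s)dW(s)$, a normal variable with mean $\mu_\Delta=\pi\tran\int_t^{t+\epsilon}b(s)ds>0$ and standard deviation $\sigma_\Delta=(\int_t^{t+\epsilon}\|\sigma(s)\tran\pi\|^2ds)^{1/2}>0$ for all small $\epsilon$. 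Hence $\prob(X^{\hat{\bm{\pi}}_{t,\epsilon,\pi}}_{t,x}(T)>x)=\prob(\Delta>0)=\Phi(\mu_\Delta/\sigma_\Delta)>1/2$, which for any $\alpha\ge 1/2$ forces $G^{\hat{\bm{\pi}}_{t,\epsilon,\pi}}(t,x,\alpha)>x$ for every small $\epsilon$, contradicting \eqref{eq:EquiDefDollar}. Therefore $t^*=T$.

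For $\underline t=0$, knowing $t^*=T$, I would again suppose $\underline t>0$, so that $\hat{\bm{\pi}}(\tau,x_0)=g(\tau)=0$ on $[0,\underline t)$ and consequently $X^{\hat{\bm{\pi}}}_{0,x_0}\equiv x_0$ on $[0,\underline t)$ by uniqueness, making $x_0$ reachable there. I would then split into cases according to the singular set $\mathbb{S}^{\hat{\bm{\pi}}}_t$ of Lemma~\ref{le:QuantileDifferentiability}. If $t_*>0$, pick $t\in(0,\min(\underline t,t_*))$; then $\mathbb{S}^{\hat{\bm{\pi}}}_t=\emptyset$, so $x_0\in\bbX_t^{x_0,\hat{\bm{\pi}}}\setminus\mathbb{S}^{\hat{\bm{\pi}}}_t$ and Lemma~\ref{le:NeccOpt} gives $\hat{\bm{\pi}}(t,x_0)=-(F^{\hat{\bm{\pi}}}_x/F^{\hat{\bm{\pi}}}_{xx})v^*(t)$ with $F^{\hat{\bm{\pi}}}_x<0<F^{\hat{\bm{\pi}}}_{xx}$ and $v^*(t)\ne 0$, hence $\hat{\bm{\pi}}(t,x_0)\ne 0$, contradicting $g(t)=0$. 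If $t_*=0$ and $x_0\ne\xi$, the same application of Lemma~\ref{le:NeccOpt} at any $t\in[0,\underline t)$ works since $x_0\notin\{\xi\}=\mathbb{S}^{\hat{\bm{\pi}}}_t$. The remaining case $t_*=0$, $x_0=\xi$ is the genuinely degenerate one: here $\theta_0+x_0\theta_1\equiv 0$, so $X^{\hat{\bm{\pi}}}_{0,x_0}\equiv x_0$ on all of $[0,T)$ and the only reachable level is the singular point, where Lemma~\ref{le:NeccOpt} is silent. For this case I would redo the perturbation directly: with $\pi$ as above the post-window wealth is $X^{\hat{\bm{\pi}}_{t,\epsilon,\pi}}_{t,x_0}(T)=x_0+\Delta M$, where $M>0$ is the independent geometric factor generated by $\hat{\bm{\pi}}(s,x)=\theta_1(s)(x-x_0)$ on $[t+\epsilon,T)$, so $\{X(T)>x_0\}=\{\Delta>0\}$ and again $\prob(X(T)>x_0)=\Phi(\mu_\Delta/\sigma_\Delta)>1/2$, yielding $G^{\hat{\bm{\pi}}_{t,\epsilon,\pi}}>x_0=G^{\hat{\bm{\pi}}}$ and the desired contradiction.

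I expect the main obstacle to be this last, degenerate case, where the wealth is pinned at the singular point $\xi=x_0$ so that every reachable state lies in $\mathbb{S}^{\hat{\bm{\pi}}}_t$ and the derivative machinery of Lemmas~\ref{le:QuantileDifferentiability}--\ref{le:NeccOpt} does not apply. The realization that rescues the argument is that one does not need the precise rate of change of the quantile: since following $\hat{\bm{\pi}}$ yields a terminal wealth identically equal to $x_0$, it suffices to show that an admissible perturbation places strictly more than half of the terminal mass strictly above $x_0$, and the factorization $X(T)-x_0=\Delta M$ with $M>0$ reduces this to the sign computation $\prob(\Delta>0)>1/2$. Care is also needed in checking that $t$ can be taken to be a continuity point of $b,\sigma$ and that the chosen $\pi$ yields $\mu_\Delta>0$ and $\sigma_\Delta>0$ uniformly for small $\epsilon$, but these follow from Assumptions~\ref{as:Parameters} and~\ref{as:Feasibility2}.
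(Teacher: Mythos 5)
Your proposal is correct and mirrors the paper's own proof: the same elementary patching argument for the equivalence, the same perturbation-by-a-profitable-constant-portfolio contradiction for $t^*=T$, and the same three-case analysis for $\underline{t}=0$ (invoking Lemma \ref{le:NeccOpt} when $t_*>0$ or $\xi\neq x_0$, and handling the degenerate case $t_*=0$, $\xi=x_0$ by the factorization $X(T)-x_0=\Delta M$ with $M>0$, which is exactly the paper's equations \eqref{process for non invest}--\eqref{prob for non invest}). No gaps to report.
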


The differentiability result in Lemma \ref{le:QuantileDifferentiability} applies to $t\in[0,t^*)$ only. On the other hand, for $t\in[0,\underline{t}]$, $\bbX^{x_0,\hat{\bm{\pi}}}_t=\{x_0\}$ is a singleton, so we cannot obtain too much information about the property of an equilibrium strategy. Thus, we expect to conduct analysis of the equilibrium strategy for $t\in(\underline{t},t^*)$, and Lemma \ref{le:NeccNonDegAlphaLarge} provides some properties of the interval $(\underline{t},t^*)$.

\begin{lemma}\label{le:NeccPDE}
   Consider $\hat{\bm{\pi}}\in \mathbb{A}$, recall $t^*$, $t_*$, $\xi$, and $\mathbb{S}^{\hat{\bm{\pi}}}_t$ as defined in Lemma \ref{le:QuantileDifferentiability} and $\underline{t}$ as defined in Lemma \ref{le:NeccNonDegAlphaLarge}. Suppose $\underline{t}< t^*$ and $\hat{\bm{\pi}}$ is an intra-personal equilibrium  strategy for a given $\alpha \in (0,1)$. Then,
   \begin{enumerate}
     \item[(i)] For each $t\in(\underline{t},T]$, $\bbX^{x_0,\hat{\bm{\pi}}}_t$ is either $(\underline{x}(t),+\infty)$ for some $\underline{x}(t)\in \bbR$, or $(-\infty, \bar{x}(t))$ for some $\bar{x}(t)\in \bbR$, or $\bbR$. Moreover, $\bbX^{x_0,\hat{\bm{\pi}}}_t$ is increasing in $t\in [0,T]$.
     \item[(ii)] There exists $a_0,a_1\in C([\underline{t},t^*])$ taking values in $\bbR$ such that
   \begin{align}
     &-\frac{F^{\hat{\bm{\pi}}}_{x}(t,x,G^{\hat{\bm{\pi}}}(t,x,\alpha))}{F^{\hat{\bm{\pi}}}_{xx}(t,x,G^{\hat{\bm{\pi}}}(t,x,\alpha))} = a_0(t) + a_1(t)x>0,\quad t\in (\underline{t},t^*), x\in \bbX^{x_0,\hat{\bm{\pi}}}_t\backslash \mathbb{S}^{\hat{\bm{\pi}}}_t,\label{eq:NeccStrategy1}\\
     &\hat{\bm{\pi}}(t,x) = \big(a_0(t) + a_1(t)x\big)v^*(t),\quad (t,x)\in (\underline{t},t^*)\times \bbR.\label{eq:NeccStrategy2}
   \end{align}
   Consequently, the following PDE holds:
    \begin{align}\label{quantile equation}
    \left\{
    \begin{array}{l}
G^{\hat{\bm{\pi}}}_t(t,x,\alpha)  + \frac{1}{2}G^{\hat{\bm{\pi}}}_x(t,x,\alpha) \rho(t)\big(a_0(t)+a_1(t)x\big)=0 , \quad t\in (\underline{t},t^*),x\in \bbX^{x_0,\hat{\bm{\pi}}}_t\backslash \mathbb{S}^{\hat{\bm{\pi}}}_t,   \\
\lim_{t \uparrow t^*,x'\rightarrow x}G^{\hat{\bm{\pi}}}_t(t,x',\alpha)=x, \quad x\in \bbR,
\end{array}
\right.
 \end{align}
 where
 \begin{align}\label{eq:rhot}
   \rho(t):=2b(t)\tran v^*(t)-\|\sigma(t)\tran v^*(t)\|^2.
 \end{align}
   \end{enumerate}
\end{lemma}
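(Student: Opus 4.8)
\emph{Proof proposal.} The plan is to first use the necessary condition of Lemma~\ref{le:NeccOpt} to pin down the \emph{shape} of $\hat{\bm{\pi}}$ on $(\underline{t},t^*)$, then read off the reachable sets of part~(i) from the resulting wealth dynamics, and finally turn those dynamics into the quantile PDE of part~(ii). To begin with \eqref{eq:NeccStrategy1}--\eqref{eq:NeccStrategy2}, note that by Lemma~\ref{le:NeccOpt}, for $t\in(\underline{t},t^*)$ and $x\in\bbX^{x_0,\hat{\bm{\pi}}}_t\backslash\mathbb{S}^{\hat{\bm{\pi}}}_t$ we have $\hat{\bm{\pi}}(t,x)=c(t,x)\,v^*(t)$ with $c(t,x):=-F^{\hat{\bm{\pi}}}_x/F^{\hat{\bm{\pi}}}_{xx}>0$. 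Since $t>\underline{t}$, a nonzero amount is invested immediately after $\underline{t}$, so $X^{\hat{\bm{\pi}}}_{0,x_0}(t)$ is non-degenerate and $\bbX^{x_0,\hat{\bm{\pi}}}_t$ contains two distinct points $x_1\neq x_2$. Because $\hat{\bm{\pi}}(t,\cdot)=\theta_0(t)+\theta_1(t)\,\cdot$ is affine while $c(t,x)v^*(t)$ is a scalar multiple of the fixed vector $v^*(t)\neq 0$ (Lemma~\ref{le:QuadProgSolution}), evaluating at $x_1,x_2$ and subtracting forces $\theta_1(t)\parallel v^*(t)$, and then $\theta_0(t)\parallel v^*(t)$. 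Writing $\theta_0=a_0v^*$ and $\theta_1=a_1v^*$ with scalars $a_0,a_1$ gives \eqref{eq:NeccStrategy2} for every $x\in\bbR$ by affinity, and $c(t,x)=a_0(t)+a_1(t)x$, which is \eqref{eq:NeccStrategy1}; positivity is inherited from Lemma~\ref{le:NeccOpt}.

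For part~(i), substituting $\hat{\bm{\pi}}=(a_0+a_1x)v^*$ into \eqref{eq:WealthSDEGeneral} and using $b\tran v^*=\|\sigma\tran v^*\|^2=:g>0$ (Lemma~\ref{le:QuadProgSolution}) together with $v^*\tran\sigma\,dW=\sqrt{g}\,d\tilde W$ for a scalar Brownian motion $\tilde W$, the wealth solves the scalar linear SDE $dX=(a_0+a_1X)g\,dt+(a_0+a_1X)\sqrt{g}\,d\tilde W$. I would then use the support characterization of this diffusion: its diffusion coefficient vanishes only where $a_0+a_1x=0$, i.e. (when $a_1\neq 0$) at the single level $\xi=-a_0/a_1$, which must agree with the singular level $\xi$ of Lemma~\ref{le:QuantileDifferentiability} on $[t_*,t^*)$; hence $X-\xi$ is a geometric-type process that cannot cross $\xi$, giving reachable set $(\xi,+\infty)$ or $(-\infty,\xi)$, whereas on an interval where $a_1\equiv 0$ the noise is additive and the reachable set is all of $\bbR$. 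This yields the stated trichotomy. Monotonicity in $t$ follows from path-continuity and the Markov property---a level once reachable can be held in place, so $\bbX^{x_0,\hat{\bm{\pi}}}_t$ cannot shrink, and it is constant on $[t^*,T]$ where the wealth is frozen.

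For the PDE in part~(ii), $F^{\hat{\bm{\pi}}}$ solves the backward Kolmogorov equation $F^{\hat{\bm{\pi}}}_t+(a_0+a_1x)g\,F^{\hat{\bm{\pi}}}_x+\tfrac12(a_0+a_1x)^2g\,F^{\hat{\bm{\pi}}}_{xx}=0$. Substituting $a_0+a_1x=-F^{\hat{\bm{\pi}}}_x/F^{\hat{\bm{\pi}}}_{xx}$ collapses the last two terms and leaves $F^{\hat{\bm{\pi}}}_t=-\tfrac12 g(a_0+a_1x)F^{\hat{\bm{\pi}}}_x$. Differentiating the identity $F^{\hat{\bm{\pi}}}(t,x,G^{\hat{\bm{\pi}}}(t,x,\alpha))=\alpha$ in $t$ and in $x$ yields $G^{\hat{\bm{\pi}}}_t=-F^{\hat{\bm{\pi}}}_t/F^{\hat{\bm{\pi}}}_y$ and $G^{\hat{\bm{\pi}}}_x=-F^{\hat{\bm{\pi}}}_x/F^{\hat{\bm{\pi}}}_y$ (with $F^{\hat{\bm{\pi}}}_y>0$ by Lemma~\ref{le:QuantileDifferentiability}), and eliminating $F^{\hat{\bm{\pi}}}$ gives $G^{\hat{\bm{\pi}}}_t+\tfrac12 g(a_0+a_1x)G^{\hat{\bm{\pi}}}_x=0$; since $\rho=2b\tran v^*-\|\sigma\tran v^*\|^2=g$ this is exactly \eqref{quantile equation}, the terminal condition expressing that as $t\uparrow t^*$ the investment window closes and the quantile converges to the current wealth $x$. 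It then remains to upgrade $a_0,a_1$ from piecewise continuous to $C([\underline{t},t^*])$: although $\theta_0,\theta_1,v^*$ are only piecewise continuous, the law of this linear diffusion depends on the coefficients through their continuous time-integrals, so $F^{\hat{\bm{\pi}}}$ and its $x$-derivatives are continuous in $t$; hence $c=-F^{\hat{\bm{\pi}}}_x/F^{\hat{\bm{\pi}}}_{xx}$ is continuous in $t$, and solving $c(t,x)=a_0(t)+a_1(t)x$ at two fixed reachable abscissae exhibits $a_0,a_1$ as continuous functions, with continuous extension to $t^*$.

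I expect the genuinely delicate points to be the support analysis in part~(i)---the rigorous trichotomy and the monotonicity for the (possibly regime-switching) linear SDE---and the $t$-regularity of $F^{\hat{\bm{\pi}}}$ across the breakpoints of the piecewise-continuous coefficients, on which both the continuity of $a_0,a_1$ and the differentiation in the Kolmogorov step rely. By contrast, the ``parallel to $v^*$'' deduction and the algebraic passage from the backward equation to \eqref{quantile equation} are routine.
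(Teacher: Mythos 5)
Your algebraic core is exactly the paper's argument: Lemma \ref{le:NeccOpt} pins $\hat{\bm{\pi}}(t,x)$ to the direction $v^*(t)$; comparing the affine map $\theta_0(t)+\theta_1(t)x$ with a scalar multiple of $v^*(t)$ at two reachable, non-singular wealth levels forces $\theta_0(t)=a_0(t)v^*(t)$ and $\theta_1(t)=a_1(t)v^*(t)$ (the paper instead left-multiplies by $v^*(t)\tran$, an equivalent step); continuity of $a_0,a_1$ follows from $t$-continuity of $F^{\hat{\bm{\pi}}}_x,F^{\hat{\bm{\pi}}}_{xx}$ evaluated at two fixed abscissae; and the transport equation \eqref{quantile equation} comes from the Kolmogorov equation for $F^{\hat{\bm{\pi}}}$, the cancellation $F^{\hat{\bm{\pi}}}_{xx}(a_0+a_1x)=-F^{\hat{\bm{\pi}}}_x$, and the implicit-function relations $G^{\hat{\bm{\pi}}}_t=-F^{\hat{\bm{\pi}}}_t/F^{\hat{\bm{\pi}}}_y$, $G^{\hat{\bm{\pi}}}_x=-F^{\hat{\bm{\pi}}}_x/F^{\hat{\bm{\pi}}}_y$, together with $b\tran v^*=\|\sigma\tran v^*\|^2$ from Lemma \ref{le:QuadProgSolution}. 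All of this matches the paper's proof of part (ii).

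The genuine gap lies in what you substitute for the probabilistic foundations, which the paper itself does not prove but imports from \citet{HeJiang2020:LinearSDE}: part (i) is cited verbatim from Theorem 3 and Corollary 4 there, and the smoothness of $F^{\hat{\bm{\pi}}}$, its Kolmogorov equation under merely piecewise-continuous coefficients, and the boundary limit as $t\uparrow t^*$ come from Theorem 2 and Corollaries 2--3 there. Your direct proof of part (i) covers only a time-invariant singular level: you argue the diffusion coefficient vanishes at a single fixed $\xi$ that the wealth cannot cross. But on $(\underline{t},t_*)$ no constant $\xi$ with $\theta_0(s)+\xi\theta_1(s)=0$ exists---that is precisely the definition of $t_*$---so the zero $-a_0(t)/a_1(t)$ of the diffusion coefficient can move in time, the wealth can cross formerly singular levels, and the shape and monotonicity of $\bbX^{x_0,\hat{\bm{\pi}}}_t$ then require the genuinely harder analysis of the companion paper (compare the proof of Proposition \ref{prop:NeccAlphaHalf}, which uses the companion result that a non-monotone level forces the lower end of the reachable set to $-\infty$). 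Two further soft spots: ``a level once reachable can be held in place'' is false for general diffusions, whose supports can translate under drift; it is salvageable here only because drift and diffusion vanish together, so that the constant path lies in the Stroock--Varadhan support, and that structural fact must be invoked explicitly. Also, your opening step needs two points of $\bbX^{x_0,\hat{\bm{\pi}}}_t\backslash\mathbb{S}^{\hat{\bm{\pi}}}_t$, while non-degeneracy alone only yields two points of $\bbX^{x_0,\hat{\bm{\pi}}}_t$, one of which could be the singular point; excluding it presupposes the interval structure you establish only afterwards.
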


Lemma \ref{le:NeccPDE} shows that if $\hat{\bm{\pi}}\in \mathbb{A}$ is an intra-personal equilibrium, then it must take the form \eqref{eq:NeccStrategy2} in the time interval $(\underline{t},t^*)$ and the quantile of the terminal wealth under $\hat{\bm{\pi}}\in \mathbb{A}$ satisfies the PDE \eqref{quantile equation}. Note that this PDE is a linear transportation equation defined on a possibly strict subset of $(\underline{t},t^*)\times \bbR$. In order to identify the equilibrium strategy $\hat{\bm{\pi}}$, we need to solve $a_0$ and $a_1$, so it is crucial to solve the transportation equation.

\begin{lemma}\label{le:TransportEq}
  Fix $\tau_1<\tau_2$, $\bar c\in \bbR$, $c,\gamma_0,\gamma_1\in C([\tau_1,\tau_2])$ taking values in $\bbR$, and $\alpha_0,\alpha_1\in \bbR$. Define
  \begin{align}\label{eq:TransportEqSolution}
    \hat g(t,x):=\alpha_0+\alpha_1\left[\int_t^{\tau_2} \gamma_0(s)e^{\int_s^{\tau_2} \gamma_1(\tau)d \tau} d s+ x e^{\int_t^{\tau_2} \gamma_1(s)d s}\right],\quad t\in[\tau_1,\tau_2]\times \bbR.
  \end{align}
  Denote ${\cal X}_{1,t}:=(\bar c,+\infty)$, ${\cal X}_{2,t}:=(-\infty,\bar c)$, and ${\cal X}_{3,t}:=(c(t),+\infty)$, $t\in[\tau_1,\tau_2]$. For each $i=1,2,3$, define
  \begin{align*}
    {\cal D}_i=\{(t,x)\mid x\in  {\cal X}_{i,t},t\in [\tau_1,\tau_2)\},\quad \bar {\cal D}_i=\{(t,x)\mid x\in  {\cal X}_{i,t},t\in [\tau_1,\tau_2]\},
  \end{align*}
 denote by $C^{1,1}(\bar {\cal D}_i)$ the set of real-valued functions that are continuous on $\bar {\cal D}_i$ and differentiable on ${\cal D}_i$, and consider
 \begin{align}\label{eq:TransportEq}
  \left\{
  \begin{array}{l}
    g_t(t,x) + g_x(t,x)(\gamma_0(t)+\gamma_1(t)x) = 0,\quad (t,x)\in {\cal D}_i,\\
    g(\tau_2,x) = \alpha_0 + \alpha_1 x,\quad x\in {\cal X}_{i,\tau_2}.
  \end{array}
  \right.
\end{align}
\begin{enumerate}
  \item[(i)] If $g\in C^{1,1}(\bar {\cal D}_1)$ is the solution to \eqref{eq:TransportEq} with $i=1$, then there exists $\underline{x}\ge \bar c$ such that $g(t,x) = \hat g(t,x)$, $(t,x)\in [\tau_1,\tau_2]\times (\underline{x},+\infty)$.
\item[(ii)] If $g\in C^{1,1}(\bar {\cal D}_1)$ is the solution to \eqref{eq:TransportEq} with $i=2$,
then there exists $\bar{x}\le \bar c$ such that $g(t,x) = \hat g(t,x)$, $(t,x)\in [\tau_1,\tau_2]\times (-\infty,\bar{x})$.
\item[(iii)] If $c(t)$ is decreasing in $t\in[\tau_1,\tau_2]$, $\gamma_0(t)+\gamma_1(t)x>0$ for all $(t,x)\in {\cal D}_3$, and $g\in C^{1,1}(\bar {\cal D}_1)$ is the solution to \eqref{eq:TransportEq} with $i=3$, then $g(t,x) = \hat g(t,x)$, $(t,x)\in {\cal D}_3$.
\end{enumerate}
\end{lemma}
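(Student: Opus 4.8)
The plan is to solve the linear transport equation \eqref{eq:TransportEq} by the method of characteristics and then, in each of the three cases, identify precisely the subregion on which the characteristic through a given point reaches the terminal time $\tau_2$ without leaving the region where the equation and the terminal data are available.

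First I would introduce the characteristic ODE $\dot x(s)=\gamma_0(s)+\gamma_1(s)x(s)$. This is a scalar affine ODE, so its flow is explicit: writing $\Phi_{s,t}(x)$ for the value at time $s$ of the solution started from $x$ at time $t$, variation of constants gives $\Phi_{s,t}(x)=x\,e^{\int_t^s\gamma_1}+\int_t^s\gamma_0(\tau)e^{\int_\tau^s\gamma_1}d\tau$, which is affine and strictly increasing in $x$ because $e^{\int_t^s\gamma_1}>0$. In particular $\Phi_{\tau_2,t}(x)$ equals exactly the bracketed expression in the definition \eqref{eq:TransportEqSolution} of $\hat g$. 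Along any characteristic lying in the interior ${\cal D}_i$ one has $\frac{d}{ds}g(s,\Phi_{s,t}(x))=g_t+g_x(\gamma_0+\gamma_1 x)=0$ by the PDE, so $g$ is constant there; using the $C^{1,1}$ regularity (differentiability in the interior, continuity up to $\bar{\cal D}_i$) and letting $s\uparrow\tau_2$, I would conclude $g(t,x)=\alpha_0+\alpha_1\Phi_{\tau_2,t}(x)=\hat g(t,x)$ at every $(t,x)$ whose forward characteristic stays in $\bar{\cal D}_i$ up to $\tau_2$ and whose endpoint $\Phi_{\tau_2,t}(x)$ lies in ${\cal X}_{i,\tau_2}$.

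With this reduction in hand, each case becomes a question of whether the forward characteristic escapes the domain. Case (iii) is cleanest: if $(t,x)\in{\cal D}_3$ and the characteristic first met the boundary at some $s_0>t$ with $\Phi_{s_0,t}(x)=c(s_0)$, then on $[t,s_0)$ the point stays in ${\cal D}_3$, so $\dot x=\gamma_0+\gamma_1 x>0$ by hypothesis and $x(\cdot)$ is strictly increasing, whence $\Phi_{s_0,t}(x)>x>c(t)\ge c(s_0)$ since $c$ is decreasing, a contradiction; hence the characteristic never leaves ${\cal D}_3$ and $g=\hat g$ on all of ${\cal D}_3$. For cases (i) and (ii) the boundary $x=\bar c$ is a fixed horizontal line that characteristics may cross, so equality can only be claimed for $x$ far from $\bar c$: using monotonicity of $\Phi_{s,t}$ in $x$ together with continuity of $\gamma_0,\gamma_1$ on the compact interval $[\tau_1,\tau_2]$, I would bound $\sup_{t\le s\le\tau_2}\big(\bar c-\int_t^s\gamma_0 e^{\int_\tau^s\gamma_1}d\tau\big)e^{-\int_t^s\gamma_1}$ by a finite constant $\underline x$, which satisfies $\underline x\ge\bar c$ because the $s=t$ term already equals $\bar c$. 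For any $x>\underline x$ and any $t$ we get $\Phi_{s,t}(x)>\bar c$ for all $s\in[t,\tau_2]$, so the characteristic remains in $\bar{\cal D}_1$ and $g(t,x)=\hat g(t,x)$, establishing (i); case (ii) follows from the symmetric estimate with $x$ taken sufficiently negative.

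The main obstacle is not the characteristic computation, which is routine for an affine ODE, but the careful bookkeeping of where characteristics stay inside the open region ${\cal D}_i$ on which the PDE is posed: the terminal data are prescribed only on ${\cal X}_{i,\tau_2}$, so equality with $\hat g$ genuinely fails off the escape-free subregion, and the crux is to show this subregion contains the claimed set. In (i)--(ii) the uniformity of the threshold $\underline x$ (respectively $\bar x$) across all $t\in[\tau_1,\tau_2]$ is exactly what forces the compactness argument, while in (iii) the two monotonicity hypotheses ($c$ decreasing and $\gamma_0+\gamma_1 x>0$) are precisely the geometric conditions that forbid escape and yield the conclusion on the full domain.
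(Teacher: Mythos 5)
Your proposal is correct and follows essentially the same route as the paper's proof: both solve the affine characteristic ODE explicitly (your $\Phi_{s,t}(x)$ is the paper's $h_1(s)x+h_0(s)$), show $g$ is constant along characteristics via the chain rule and the PDE, and then handle domain escape — by a uniform lower bound on $x$ from continuity/compactness of $\gamma_0,\gamma_1$ in cases (i)--(ii), and by the monotonicity of the characteristic together with the decrease of $c$ in case (iii). The only cosmetic differences are that you define the threshold $\underline{x}$ as a supremum rather than via an explicit bound with a Lipschitz-type constant $L$, and you phrase the case-(iii) argument as a contradiction at the first exit time instead of directly showing the characteristic is increasing and hence stays in ${\cal D}_3$.
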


Lemma \ref{le:TransportEq} solves the linear transportation equation with affine terminal condition. The solution is also an affine function of $x$ for each $t$.

\begin{lemma}\label{le:EquilibriumQuantile}
   Consider $\hat{\bm{\pi}}\in \mathbb{A}$, recall $t^*$, $t_*$, $\xi$, and $\mathbb{S}^{\hat{\bm{\pi}}}_t$ as defined in Lemma \ref{le:QuantileDifferentiability} and $\underline{t}$ as defined in Lemma \ref{le:NeccNonDegAlphaLarge}. Suppose $\underline{t}< t^*$ and $\hat{\bm{\pi}}$ is an intra-personal equilibrium  for a given $\alpha \in (0,1)$, and recall $a_0$ and $a_1$ as defined in Lemma \ref{le:NeccPDE}. Define
   \begin{align}\label{eq:EquiQuantCoeff}
     \beta_1(t,\alpha) = e^{\frac{1}{2}\int_t^{t^*}a_1(s)\rho(s)ds},\quad \beta_0(t,\alpha):=\frac{1}{2}\int_t^{t^*}a_0(s)\rho(s)\beta_1(s,\alpha)ds,\quad t\in [\underline{t},t^*].
   \end{align}
   Fix any $t\in (\underline{t},t^*)$.
   \begin{enumerate}
   \item[(i)] There exists $s\in[t,t^*)$ such that $a_1(s)\neq 0$. Consequently,
   \begin{align}\label{eq:a1Eq0}
     \int_t^{t^*}a_1(s)^2\|\sigma(s)\tran v^*(s)\|^2ds>0.
   \end{align}
     \item[(ii)] If $\bbX^{x_0,\hat{\bm{\pi}}}_t \supseteq (\underline{x}_t,+\infty)$ for certain $\underline{x}_t\in \bbR$, then there exists $\underline{c}_t> \max(\underline{x}_t,\xi)$ such that
         \begin{align}\label{eq:EquilibriumQuantile}
           G^{\hat{\bm{\pi}}}(s,x,\alpha) = \beta_1(s,\alpha)x + \beta_0(s,\alpha)
         \end{align}
         for all $(s,x)\in [t,t^*]\times [\underline{c}_t,+\infty)$. Moreover,
         \begin{align}\label{eq:a1Eq1}
         &\frac{1}{2}\int_t^{t^*} a_1(s)\big(1-a_1(s)\big)\|\sigma(s)\tran v^*(s)\|^2ds+ \Phi^{-1}(\alpha) \sqrt{\int_t^{t^*} a_1(s)^2 \|\sigma(s)\tran v^*(s)\|^2 ds}=0.
         \end{align}
          \item[(iii)] If $\bbX^{x_0,\hat{\bm{\pi}}}_t \supseteq (-\infty,\bar{x}_t)$ for certain $\bar{x}_t\in \bbR$, then there exists $\bar c_t<\min(\xi, \bar{x}_t)$ such that \eqref{eq:EquilibriumQuantile} holds all $(s,x)\in [t,t^*]\times (-\infty,\bar{c}_t)$. Moreover,
         \begin{align}\label{eq:a1Eq2}
         &\frac{1}{2}\int_t^{t^*} a_1(s)\big(1-a_1(s)\big)\|\sigma(s)\tran v^*(s)\|^2ds+ \Phi^{-1}(1-\alpha) \sqrt{\int_t^{t^*} a_1(s)^2 \|\sigma(s)\tran v^*(s)\|^2 ds}=0.
         \end{align}
   \end{enumerate}
\end{lemma}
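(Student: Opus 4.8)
My plan is to read off both the affine form of $G^{\hat{\bm\pi}}$ and the two scalar identities by confronting the transportation-PDE characterization from Lemma~\ref{le:NeccPDE} with a direct computation of the quantile obtained by solving the linear wealth SDE on $[t,t^*]$. Since $r\equiv 0$ and $\hat{\bm\pi}\equiv 0$ on $[t^*,T)$, the terminal wealth equals $X^{\hat{\bm\pi}}_{t,x}(t^*)$, so throughout I treat $G^{\hat{\bm\pi}}(\cdot,\cdot,\alpha)$ as the $\alpha$-quantile of $X^{\hat{\bm\pi}}_{t,x}(t^*)$, and I repeatedly use $b\tran v^*=\|\sigma\tran v^*\|^2$ and $\inf_t\|\sigma\tran v^*\|>0$ from Lemma~\ref{le:QuadProgSolution}, together with $\rho=\|\sigma\tran v^*\|^2$.

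For part (i) I would suppose $a_1\equiv 0$ on $[t,t^*)$ and derive a contradiction. Then $\hat{\bm\pi}(s,\cdot)=a_0(s)v^*(s)$ is a wealth-independent dollar amount, so $X^{\hat{\bm\pi}}_{t,x}(t^*)$ is Gaussian with mean $x+m_0$ and variance $v_0:=\int_t^{t^*}a_0^2\|\sigma\tran v^*\|^2\,ds$, where $m_0:=\int_t^{t^*}a_0\|\sigma\tran v^*\|^2\,ds$, giving $G^{\hat{\bm\pi}}(t,x,\alpha)=x+m_0+\Phi^{-1}(\alpha)\sqrt{v_0}$. On the other hand the transportation solution forces $\beta_1\equiv 1$ and $G^{\hat{\bm\pi}}(t,x,\alpha)=x+\beta_0=x+m_0$. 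For $\alpha\neq 1/2$ this yields $\Phi^{-1}(\alpha)\sqrt{v_0}=0$, hence $v_0=0$, hence $a_0\equiv 0$ on $[t,t^*)$, contradicting $t<t^*$; for $\alpha=1/2$ I would instead compute $F^{\hat{\bm\pi}}_{xx}(t,x,G^{\hat{\bm\pi}})=-\Phi^{-1}(\alpha)\Phi'(\Phi^{-1}(\alpha))/v_0=0$, contradicting $F^{\hat{\bm\pi}}_{xx}>0$ from Lemma~\ref{le:NeccOpt}. The strict inequality \eqref{eq:a1Eq0} then follows because $\|\sigma\tran v^*\|$ is bounded away from $0$.

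For the affine form in (ii) and (iii) I would apply Lemma~\ref{le:TransportEq} to the PDE \eqref{quantile equation} with $\gamma_0=\tfrac12\rho a_0$, $\gamma_1=\tfrac12\rho a_1$ and terminal data $\alpha_0=0,\alpha_1=1$; substituting these into \eqref{eq:TransportEqSolution} reproduces precisely $\hat g(s,x)=\beta_1(s,\alpha)x+\beta_0(s,\alpha)$. In the right-half-line case the boundary of $\bbX_t$ is decreasing in $t$ (part (i) of Lemma~\ref{le:NeccPDE}) and the transport velocity $\tfrac12\rho(a_0+a_1x)$ is strictly positive on the reachable region (part (ii) of Lemma~\ref{le:NeccPDE}), so case (iii) of Lemma~\ref{le:TransportEq} delivers \eqref{eq:EquilibriumQuantile} on $[t,t^*]\times[\underline c_t,+\infty)$; the left-half-line case (iii) of the present lemma uses case (ii) of Lemma~\ref{le:TransportEq}. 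The singular point $\xi$ must be kept strictly outside the domain, which is the source of the thresholds $\underline c_t>\max(\underline x_t,\xi)$ and $\bar c_t<\min(\xi,\bar x_t)$, since $\xi$ is the only place where $G^{\hat{\bm\pi}}$ fails to be differentiable and the PDE breaks down.

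Finally, to pin down \eqref{eq:a1Eq1} and \eqref{eq:a1Eq2} I would compute $G^{\hat{\bm\pi}}$ directly. On the interval where $\xi$ is the singular point one has $a_0=-\xi a_1$, so $Z:=X-\xi$ solves the geometric SDE $dZ=a_1 Z\big(\|\sigma\tran v^*\|^2\,ds+(v^*)\tran\sigma\,dW\big)$, whence $X^{\hat{\bm\pi}}_{t,x}(t^*)=\xi+(x-\xi)E$ with $\ln E$ normal of mean $\mu_E:=\int_t^{t^*}(a_1-\tfrac12 a_1^2)\|\sigma\tran v^*\|^2\,ds$ and variance $s_E^2:=\int_t^{t^*}a_1^2\|\sigma\tran v^*\|^2\,ds$. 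For $x>\xi$ the $\alpha$-quantile is $\xi+(x-\xi)e^{\mu_E+\Phi^{-1}(\alpha)s_E}$, and equating its slope with $\beta_1(t,\alpha)=e^{\frac12\int_t^{t^*}a_1\|\sigma\tran v^*\|^2 ds}$ yields \eqref{eq:a1Eq1} after cancellation; for $x<\xi$ the factor $x-\xi<0$ reflects the quantile level to $1-\alpha$ and produces \eqref{eq:a1Eq2}. (When no fixed singular point exists, the same identities follow by solving the linear SDE to get $X^{\hat{\bm\pi}}_{t,x}(t^*)=P+xE$ with the identical log-normal slope $E$ and extracting $\beta_1$ from the $x\to\pm\infty$ asymptotics of the already-affine quantile.) I expect the main obstacle to be exactly this bookkeeping around the singular point and the reachable sets: verifying that characteristics launched from $[\underline c_t,+\infty)$ stay in the region where $\hat{\bm\pi}$ is affine and $G^{\hat{\bm\pi}}$ is differentiable for every $s\in[t,t^*]$, and justifying the direct quantile computation—including the sign reflection—on the correct side of $\xi$.
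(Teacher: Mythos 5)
Your overall architecture---derive the affine form of $G^{\hat{\bm{\pi}}}$ from the transport PDE of Lemma \ref{le:NeccPDE} via Lemma \ref{le:TransportEq}, then pin down $a_1$ by comparing with a direct computation of the quantile from the explicit solution of the linear wealth SDE---is exactly the paper's, and your treatment of parts (ii)/(iii) (the lognormal slope $E$, the reflection $\alpha\mapsto 1-\alpha$ for $x<\xi$, and the $x\to\pm\infty$ slope asymptotics in the general case, which require the non-degeneracy from part (i)) is sound modulo the usual pasting over the partition on which $a_0,a_1,\rho$ are continuous; your use of case (iii) of Lemma \ref{le:TransportEq} with the decreasing boundary $c(t)=\max(\underline{x}(t),\xi)$ is a legitimate variant of the paper's use of cases (i)--(ii) on fixed half-lines. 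However, part (i) contains a genuine gap for $\alpha\in(0,1/2)$. Under the hypothesis $a_1\equiv 0$ on $[t,t^*)$ you assert that the transport solution is $x+\beta_0=x+m_0$; in fact $\beta_0(t,\alpha)=\frac{1}{2}\int_t^{t^*}a_0(s)\rho(s)\,ds=\frac{1}{2}m_0$, because the velocity in \eqref{quantile equation} is $\frac{1}{2}\rho a_0$, not $\rho a_0$---that factor $\frac{1}{2}$ (half the drift of the wealth process) is precisely the content of the equilibrium first-order condition and cannot be dropped. Equating the Gaussian quantile $x+m_0+\Phi^{-1}(\alpha)\sqrt{v_0}$ with the correct transport solution $x+\frac{1}{2}m_0$ gives $\frac{1}{2}m_0+\Phi^{-1}(\alpha)\sqrt{v_0}=0$, not $\Phi^{-1}(\alpha)\sqrt{v_0}=0$. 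For $\alpha\ge 1/2$ a contradiction still follows, since $a_0>0$ by \eqref{eq:NeccStrategy1} (with $a_1\equiv0$) forces $m_0>0$; but for $\alpha<1/2$ we have $\Phi^{-1}(\alpha)<0$, the two terms can balance, and $v_0=0$ simply does not follow, so your contradiction evaporates in exactly the regime that matters downstream (Proposition \ref{prop:NeccAlphaNotHalf} divides by $\sqrt{\int_t^{t^*}a_1(s)^2\|\sigma(s)\tran v^*(s)\|^2ds}$, i.e., needs \eqref{eq:a1Eq0} for $\alpha<1/2$).

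The paper closes this case with an extra argument your proposal does not contain: the identity holds at every starting time, i.e., $\frac{1}{2}\int_s^{t^*}a_0\|\sigma\tran v^*\|^2d\tau=-\Phi^{-1}(\alpha)\sqrt{\int_s^{t^*}a_0^2\|\sigma\tran v^*\|^2d\tau}$ for all $s\in[t,t^*)$; squaring, differentiating in $s$, and dividing by $a_0(s)\|\sigma(s)\tran v^*(s)\|^2>0$ yields $\big(\Phi^{-1}(\alpha)\big)^2a_0(s)=\frac{1}{2}\int_s^{t^*}a_0\|\sigma\tran v^*\|^2d\tau$, so that $g(s):=\int_s^{t^*}a_0\|\sigma\tran v^*\|^2d\tau$ solves a linear ODE with terminal condition $g(t^*)=0$, whence $g\equiv 0$ and $a_0\equiv 0$ on $[t,t^*)$, contradicting the definition of $t^*$. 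Without this (or an equivalent) step, part (i) is unproved for $\alpha\in(0,1/2)$. One genuinely positive deviation worth keeping: your $\alpha=1/2$ argument, computing $F^{\hat{\bm{\pi}}}_{xx}(t,x,G^{\hat{\bm{\pi}}})=-\Phi^{-1}(\alpha)\Phi'(\Phi^{-1}(\alpha))/v_0=0$ and contradicting $F^{\hat{\bm{\pi}}}_{xx}>0$ from Lemma \ref{le:NeccOpt}, is correct and bypasses the transport comparison entirely; it is cleaner than the paper's treatment of that sub-case, though it does not extend to $\alpha<1/2$, where $F^{\hat{\bm{\pi}}}_{xx}>0$ holds automatically for a Gaussian law.
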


Lemma \ref{le:NeccPDE} already proved that an equilibrium strategy $\hat{\bm{\pi}}\in \mathbb{A}$ must take the form \eqref{eq:NeccStrategy2} with $a_0$ and $a_1$ undetermined. Lemma \ref{le:EquilibriumQuantile} provides a necessary condition for $a_1$, which is then used in the following Proposition \ref{prop:NeccAlphaNotHalf} to characterize equilibrium strategies for $\alpha\neq 1/2$ and in the following Proposition \ref{prop:NeccAlphaHalf} to characterize equilibrium strategies for $\alpha=1/2$.

\begin{proposition}\label{prop:NeccAlphaNotHalf}
  For $\alpha\in(0,1/2)$, $\hat{\bm{\pi}}\in\mathbb{A}$ is an intra-personal equilibrium  if and only if $\hat{\bm{\pi}}$ is given by \eqref{tail risk equilibrium Strategy} for some $\theta\in C_{\mathrm{pw}}([0,T))$ taking values in $\bbR^m$. For $\alpha\in (1/2,1)$, any $\hat{\bm{\pi}}\in\mathbb{A}$ is not an intra-personal equilibrium.
\end{proposition}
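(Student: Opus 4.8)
The plan is to split the claim into its two implication directions, with essentially all the work in necessity. For $\alpha\in(0,1/2)$ the sufficiency of the zero-investment form \eqref{tail risk equilibrium Strategy} is already supplied by Proposition \ref{prop:Sufficiency}, so what remains is to show that every equilibrium $\hat{\bm{\pi}}\in\mathbb{A}$ reduces to \eqref{tail risk equilibrium Strategy} when $\alpha<1/2$ and that none exists when $\alpha>1/2$. Starting from an equilibrium $\hat{\bm{\pi}}$, I would first run the dichotomy of Lemma \ref{le:NeccNonDegAlphaLarge} on the position of $\underline t$. If $\underline t\ge t^*$, that lemma forces $\underline t=T$, i.e. $\theta_0(\tau)+\theta_1(\tau)x_0\equiv 0$, which is exactly \eqref{tail risk equilibrium Strategy} with $\theta=\theta_1$; this settles the degenerate alternative when $\alpha<1/2$, and since the same lemma gives $\underline t=0<t^*=T$ for every $\alpha\ge 1/2$, this alternative never arises when $\alpha>1/2$. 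Hence the whole remaining task is to rule out the nondegenerate regime $\underline t<t^*$ for every $\alpha\neq 1/2$.

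In that regime Lemma \ref{le:NeccPDE} fixes $\hat{\bm{\pi}}(t,x)=(a_0(t)+a_1(t)x)v^*(t)$ on $(\underline t,t^*)\times\bbR$ with continuous $a_0,a_1$, and restricts each reachable set $\bbX_t^{x_0,\hat{\bm{\pi}}}$ to be a right-ray, a left-ray, or all of $\bbR$. The positivity \eqref{eq:NeccStrategy1}, $a_0(t)+a_1(t)x>0$ on $\bbX_t^{x_0,\hat{\bm{\pi}}}$, converts this geometry into sign data for $a_1$: a right-ray forces $a_1(t)\ge 0$, a left-ray forces $a_1(t)\le 0$, and $\bbX_t^{x_0,\hat{\bm{\pi}}}=\bbR$ forces $a_1(t)\equiv 0$, which is excluded by \eqref{eq:a1Eq0} in Lemma \ref{le:EquilibriumQuantile}-(i). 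So the reachable set is a genuine ray. I would treat the right-ray case, in which \eqref{eq:a1Eq1} holds for all $t\in(\underline t,t^*)$ and $a_1\ge 0$; the left-ray case is entirely analogous, with \eqref{eq:a1Eq2} and $\Phi^{-1}(1-\alpha)$ replacing \eqref{eq:a1Eq1} and $\Phi^{-1}(\alpha)$.

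The key is that \eqref{eq:a1Eq1} is an identity in $t$. Writing $w(s):=\|\sigma(s)\tran v^*(s)\|^2>0$ and $m(t):=\sqrt{\int_t^{t^*}a_1(s)^2 w(s)\,ds}$, which is strictly positive on $(\underline t,t^*)$ by \eqref{eq:a1Eq0}, I would differentiate \eqref{eq:a1Eq1} in $t$ and cancel the common nonzero factor $\tfrac12 a_1(t)w(t)$ on $\{a_1\neq 0\}$ to reach the pointwise relation $m(t)=\Phi^{-1}(\alpha)a_1(t)/\big(a_1(t)-1\big)$. For $\alpha\in(1/2,1)$ this finishes the case at once: as $\Phi^{-1}(\alpha)>0$ and $a_1\ge 0$, the requirement $m>0$ forces $a_1>1$ wherever $a_1\neq 0$, so $m(t)=\Phi^{-1}(\alpha)a_1/(a_1-1)>\Phi^{-1}(\alpha)$ is bounded away from $0$; but Lemma \ref{le:EquilibriumQuantile}-(i) produces times $t_n\uparrow t^*$ with $a_1(t_n)\neq 0$ along which $m(t_n)\to 0$, a contradiction. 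Thus no equilibrium exists for $\alpha>1/2$.

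The hard case is $\alpha\in(0,1/2)$: here $\Phi^{-1}(\alpha)<0$ renders the sign information consistent and \eqref{eq:a1Eq1} alone admits spurious monotone solutions, so I must invoke the global reachability structure. Squaring $m=\Phi^{-1}(\alpha)a_1/(a_1-1)$ and matching $(m^2)'=-a_1^2 w$ yields the autonomous equation $a_1'=a_1 w(a_1-1)^3/\big(2\Phi^{-1}(\alpha)^2\big)$ on $\{a_1\neq 0\}$, while $m>0$ pins $a_1$ into $(0,1)$ there, so $a_1'<0$ and $a_1$ is strictly decreasing on each such interval. Because $a_1$ cannot rise out of a zero, the component of $\{a_1>0\}$ meeting $t^*$ (nonempty by Lemma \ref{le:EquilibriumQuantile}-(i)) must be all of $(\underline t,t^*)$, so $a_1>0$ is strictly decreasing throughout, giving $a_1(\underline t^+)>0$. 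The contradiction then comes from the left endpoint: the right-ray carries a constant zero-investment level $\xi<x_0$, so $a_0=-\xi a_1$ and the defining identity $\theta_0(\underline t)+\theta_1(\underline t)x_0=0$ becomes $a_1(\underline t)(x_0-\xi)=0$, forcing $a_1(\underline t)=0$, incompatible with $a_1$ being continuous, positive and strictly decreasing on $(\underline t,t^*)$. This excludes $\underline t<t^*$ for $\alpha<1/2$ as well, leaving only \eqref{tail risk equilibrium Strategy}. I expect the main obstacle to be exactly this last step: the first-order quantile condition is too weak to kill the nonzero-$a_1$ solutions, so one must rigorously extract the constant absorbing level $\xi$ and the boundary value $a_1(\underline t)=0$ from the reachability geometry of Lemma \ref{le:NeccPDE}.
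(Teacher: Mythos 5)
Your overall architecture (the dichotomy on $\underline t$ versus $t^*$ via Lemma \ref{le:NeccNonDegAlphaLarge}, sufficiency from Proposition \ref{prop:Sufficiency}, reduction to ruling out $\underline t<t^*$) matches the paper, and your $\alpha\in(1/2,1)$ branch is sound: differentiating \eqref{eq:a1Eq1} does give $m(t)=\Phi^{-1}(\alpha)a_1(t)/(a_1(t)-1)$ with $m(t):=\big(\int_t^{t^*}a_1(s)^2\|\sigma(s)\tran v^*(s)\|^2ds\big)^{1/2}$, and the incompatibility between $m(t)>\Phi^{-1}(\alpha)$ and $m(t_n)\to0$ along times with $a_1(t_n)\neq 0$ is a genuine contradiction. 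The gap is in the $\alpha\in(0,1/2)$ endgame, exactly where you predicted difficulty. Your contradiction rests on two claims that no lemma supplies: (a) that on all of $(\underline t,t^*)$ there is a \emph{constant} zero-investment level $\xi$ with $a_0=-\xi a_1$ --- Lemma \ref{le:QuantileDifferentiability} provides such a $\xi$ only on $[t_*,t^*)$, and constancy of an insurance level is precisely the hard content of the median case (Proposition \ref{prop:NeccAlphaHalf}), not something readable from the ray geometry; and (b) that $\theta_0(\underline t)+\theta_1(\underline t)x_0=0$. Claim (b) fails on two counts: for $\alpha<1/2$ nothing forces $\underline t>0$ in the hypothetical regime $\underline t<t^*$ (Lemma \ref{le:NeccNonDegAlphaLarge} pins $\underline t=0$ only for $\alpha\ge 1/2$, and if $\underline t=0$ there is no identity at $\underline t$ at all); and even when $\underline t>0$, the definition of $\underline t$ only gives $\theta_0(\tau)+\theta_1(\tau)x_0=0$ for $\tau<\underline t$, i.e.\ an identity for the \emph{left} limits at $\underline t$, whereas $a_1(\underline t)$ is the \emph{right} limit (the $\theta_i$ and $v^*$ lie in $C_{\mathrm{pw}}$ and may jump at $\underline t$). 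So $a_1(\underline t)=0$ does not follow, and the left-endpoint contradiction collapses.

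The fix is to extract the contradiction at the right endpoint $t^*$, which is what the paper does far more economically: apply Cauchy--Schwarz to $\int_t^{t^*}a_1 w\,ds$ (with $w:=\|\sigma\tran v^*\|^2$) in \eqref{eq:a1Eq1} or \eqref{eq:a1Eq2}, divide by $m(t)>0$ (legitimate by \eqref{eq:a1Eq0}), and obtain $\big|m(t)-2\lambda\big|\le\big(\int_t^{t^*}w\,ds\big)^{1/2}$ with $\lambda=\Phi^{-1}(\alpha)$ or $\Phi^{-1}(1-\alpha)$; letting $t\uparrow t^*$ sends the right side to $0$ while the left side tends to $2|\lambda|>0$. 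This kills both ray types and both $\alpha<1/2$ and $\alpha>1/2$ in one stroke, with no sign analysis, no ODE, and no appeal to behavior at $\underline t$. Alternatively, your own machinery can be closed at $t^*$: since $m(t)\to0$, your pointwise relation forces $a_1(t)\to0$ as $t\uparrow t^*$, while your ODE gives $|a_1'|\le a_1 w/\big(2\Phi^{-1}(\alpha)^2\big)$ on $\{a_1>0\}$, so Gronwall yields $a_1(t)\ge a_1(t_0)\exp\big(-\int_{t_0}^{t^*}w(s)\,ds/\big(2\Phi^{-1}(\alpha)^2\big)\big)>0$ uniformly --- a contradiction located where the information actually lives.
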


\begin{proposition}\label{prop:NeccAlphaHalf}
  For $\alpha=1/2$, $\hat{\bm{\pi}}\in\mathbb{A}$ is an intra-personal equilibrium  if and only if it is given by \eqref{eq:EquiStrategyPropTarget} for some $\xi<x_0$.
\end{proposition}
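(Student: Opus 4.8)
The sufficiency direction is already established in Proposition \ref{prop:Sufficiency}, so my plan is to prove necessity: assuming $\hat{\bm{\pi}}\in\mathbb{A}$ is an intra-personal equilibrium for $\alpha=1/2$, I will show it must coincide with \eqref{eq:EquiStrategyPropTarget}. Since $1/2\in[1/2,1)$, Lemma \ref{le:NeccNonDegAlphaLarge} gives $t^*=T$ and $\underline{t}=0$, so $\underline{t}<t^*$ and Lemmas \ref{le:NeccPDE} and \ref{le:EquilibriumQuantile} are available on all of $(0,T)$. In particular, Lemma \ref{le:NeccPDE}(ii) forces $\hat{\bm{\pi}}(t,x)=\big(a_0(t)+a_1(t)x\big)v^*(t)$ on $(0,T)\times\bbR$ for some $a_0,a_1\in C([0,T])$, and Lemma \ref{le:NeccPDE}(i) restricts the reachable set to one of the three stated shapes.

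The first substantive step is to pin down $a_1$. Because $\Phi^{-1}(1/2)=0=\Phi^{-1}(1-1/2)$, both \eqref{eq:a1Eq1} and \eqref{eq:a1Eq2} collapse to $\int_t^{t^*}a_1(s)\big(1-a_1(s)\big)\|\sigma(s)\tran v^*(s)\|^2\,ds=0$ for every $t\in(0,T)$. The integrand is continuous, so differentiating in $t$ yields $a_1(t)\big(1-a_1(t)\big)\|\sigma(t)\tran v^*(t)\|^2=0$ pointwise; since $\|\sigma(t)\tran v^*(t)\|>0$ by Lemma \ref{le:QuadProgSolution}, this gives $a_1(t)\in\{0,1\}$, and continuity of $a_1$ forces $a_1\equiv 0$ or $a_1\equiv 1$. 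Lemma \ref{le:EquilibriumQuantile}(i) excludes $a_1\equiv 0$, hence $a_1\equiv 1$ and $\hat{\bm{\pi}}(t,x)=\big(a_0(t)+x\big)v^*(t)$.

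The crux is to show that $a_0$ is constant, and I expect this to be the main obstacle. With $a_1\equiv 1$ the wealth equation $dX=\big(a_0(t)+X\big)\,dL(t)$, where $dL(t):=v^*(t)\tran b(t)\,dt+v^*(t)\tran\sigma(t)\,dW(t)$, is linear and solves explicitly as $X^{\hat{\bm{\pi}}}_{t,x}(T)=\mathcal{E}_{t,T}\big(a_0(t)+x+K_t\big)-a_0(T)$, where $\mathcal{E}_{t,T}$ is the stochastic exponential of $L$ on $[t,T]$ and $K_t:=\int_t^T \mathcal{E}_{t,u}^{-1}\,da_0(u)$. On the other hand, Lemma \ref{le:EquilibriumQuantile}(ii) (via the transport equation \eqref{quantile equation} solved in Lemma \ref{le:TransportEq}) shows that the true median $G^{\hat{\bm{\pi}}}(t,x,1/2)=\beta_1(t)x+\beta_0(t)$ is exactly affine in $x$ on the reachable set. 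Writing $c:=a_0(t)+x$, this means $\mathrm{med}\big[\mathcal{E}_{t,T}(c+K_t)\big]$ must be affine in $c$ with slope $\mathrm{med}(\mathcal{E}_{t,T})=\beta_1(t)$. Since $\mathcal{E}_{t,T}>0$ and the family $\{\mathcal{E}_{t,u}^{-1}\}_{u\in[t,T]}$ is non-degenerate and not perfectly correlated across $u$, the variable $K_t$ has strictly positive variance whenever $a_0$ is non-constant, and in that case the median of the product $\mathcal{E}_{t,T}(c+K_t)$ fails to be affine in $c$, contradicting the affine form dictated by equilibrium. The cleanest way to close this is by contradiction, using the quantile-sensitivity identity $\tfrac{d}{dc}\mathrm{med}\big[\mathcal{E}_{t,T}(c+K_t)\big]=\expect\big[\mathcal{E}_{t,T}\mid \mathcal{E}_{t,T}(c+K_t)=m_c\big]$ and checking that this conditional expectation cannot stay equal to the constant $\beta_1(t)$ for all $c$ unless $\mathrm{Var}(K_t)=0$, i.e. unless $a_0$ is constant.

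Finally, writing the resulting constant as $a_0\equiv-\xi$ gives $\hat{\bm{\pi}}(t,x)=v^*(t)(x-\xi)$, which is \eqref{eq:EquiStrategyPropTarget}. To see $\xi<x_0$, I would note that $x_0\in\bbX^{x_0,\hat{\bm{\pi}}}_0$ is non-singular unless $x_0=\xi$; applying Lemma \ref{le:NeccOpt} at $(0,x_0)$ then gives $x_0-\xi=-F^{\hat{\bm{\pi}}}_x/F^{\hat{\bm{\pi}}}_{xx}>0$ from $F^{\hat{\bm{\pi}}}_x<0$ and $F^{\hat{\bm{\pi}}}_{xx}>0$, while the degenerate case $x_0=\xi$ is excluded because, as in Theorem \ref{prop:BoundaryPoint}, deviating to $v^*(t)$ at the insurance level strictly raises the median and hence cannot be an equilibrium. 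This yields $\xi<x_0$ and completes the necessity argument.
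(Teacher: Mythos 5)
Your outline tracks the paper's proof faithfully up to the identification of $a_1$: sufficiency via Proposition \ref{prop:Sufficiency}, then Lemma \ref{le:NeccNonDegAlphaLarge} (giving $t^*=T$, $\underline{t}=0$), Lemma \ref{le:NeccPDE} for the affine representation, the collapse of \eqref{eq:a1Eq1}--\eqref{eq:a1Eq2} at $\Phi^{-1}(1/2)=0$, and Lemma \ref{le:EquilibriumQuantile}(i) plus continuity to force $a_1\equiv 1$; the closing argument for $\xi<x_0$ is also fine. The gap is exactly where you anticipated difficulty: the step ``$a_0$ non-constant $\Rightarrow$ the median of $\mathcal{E}_{t,T}(c+K_t)$ is not affine in $c$.'' You justify it by saying that $K_t$ has positive variance and is not perfectly correlated with $\mathcal{E}_{t,T}$, and you propose to close it via the identity $\tfrac{d}{dc}\mathrm{med}\big[\mathcal{E}_{t,T}(c+K_t)\big]=\expect\big[\mathcal{E}_{t,T}\mid \mathcal{E}_{t,T}(c+K_t)=m_c\big]$. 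But the implication ``$\mathrm{Var}(K_t)>0\Rightarrow$ median not affine / conditional expectation not constant'' is false as a statement about a general joint law. Counterexample: take $K=1/\mathcal{E}$ with $\mathcal{E}$ lognormal. Then $\mathcal{E}(c+K)=c\,\mathcal{E}+1$, whose median is $c\,\mathrm{med}(\mathcal{E})+1$, exactly affine in $c$ with slope $\mathrm{med}(\mathcal{E})$, and $\expect[\mathcal{E}\mid c\mathcal{E}+1=m_c]=(m_c-1)/c=\mathrm{med}(\mathcal{E})$ is constant, even though $\mathrm{Var}(K)>0$. The same happens whenever $K=g(\mathcal{E})$ with $z\mapsto z\big(c+g(z)\big)$ increasing. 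This counterexample is not far-fetched here: since $K_t=\int_t^T \mathcal{E}_{t,s}^{-1}\,da_0(s)$, a point mass of $da_0$ at $s=T$ would produce precisely $K_t=\mathcal{E}_{t,T}^{-1}\,\Delta a_0$. So any correct proof must use the specific structure of $K_t$ — in particular that $da_0$ is a positive, \emph{atomless} measure spread over an interval — and nothing in your outline does.

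This is also why the paper's argument looks so different from yours: it never tries to refute affineness in the interior (under the equilibrium hypothesis the median \emph{is} affine for all $x>-a_0(t)$, by Lemma \ref{le:NeccPDE}(ii) together with Lemma \ref{le:TransportEq}(iii) and the monotonicity of $a_0$). Instead it evaluates the wealth at the boundary point $x=-a_0(t)$, i.e.\ $c=0$, where $X^{\hat{\bm{\pi}}}_{t,-a_0(t)}(T)=\int_t^T\big(\tilde Z_1(T;t)/\tilde Z_1(s;t)\big)\,da_0(s)-a_0(T)$ has no linear term, and proves $\prob\big(X^{\hat{\bm{\pi}}}_{t,-a_0(t)}(T)\le -a_0(t)\beta_1(t,1/2)+\beta_0(t,1/2)\big)<1/2$. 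That strict inequality is obtained by (a) the convexity bound $e^x-1\ge x$ with strictness off $x=0$, (b) an explicit deterministic path $m(\cdot)$ with $\int_t^T(e^{m(s)}-1)\,d\mu(s)>0$ but $\int_t^T m(s)\,d\mu(s)<0$ (this is where atomlessness of $da_0$ enters), (c) a support theorem giving that path-neighborhood positive probability, and (d) the fact that $\int_t^T M(s)\,d\mu(s)$ is a centered Gaussian. Continuity of the quantile in $x$ then pushes the strict inequality into a right-neighborhood of $-a_0(t)$, contradicting the interior affine formula. To complete your proof you would essentially have to reproduce this boundary argument; the interior conditional-expectation identity, even if rigorously justified, does not by itself exclude non-constant $a_0$.
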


\subsubsection{Detailed Proofs}\label{subsubse:proofs}
  \begin{pfof}{Lemma \ref{le:QuantileDifferentiability}}
  Suppose $t_*<t^*$ and fix any $t\in[t_*,t^*)$. By  Corollary 3 and Theorem 2-(i) in \citet{HeJiang2020:LinearSDE}, there exists $\eta\in (0,t^*-t)$ such that $F^{\hat{\bm{\pi}}}(s,x,y)\in C^{1,\infty}([t,t+\eta]\times \bbR^2\backslash(\xi,\xi))$, $F_t^{\hat{\bm{\pi}}}(s,x,y)\in C^{0,\infty}([t,t+\eta]\times \bbR^2\backslash(\xi,\xi))$, $F^{\hat{\bm{\pi}}}_t$ is bounded on $[t,t+\eta]\times \bbR^2\backslash(\xi,\xi)$, and the derivatives of $F^{\hat{\bm{\pi}}}$ and $F^{\hat{\bm{\pi}}}_t$ with respect to $x$ and $y$ of any order are bounded on $[t,t+\eta]\times \bbR^2\backslash B^2_\delta(\xi)$ for any $\delta>0$.

 Fix $\pi\in \bbR^m$. For any fixed $x\in \bbR$, $y\neq \xi$, noting that $F^{\hat{\bm{\pi}}_{t,\epsilon,\pi}}(t,x,y) = \expect[F^{\hat{\bm{\pi}}}(t+\epsilon,X_{t,x}^\pi(t+\epsilon),y)]$ and recalling that for any $\delta>0$, $F^{\hat{\bm{\pi}}}_t$ and $F^{\hat{\bm{\pi}}}_x$ are bounded on $[t,t+\eta]\times \bbR^2\backslash B^2_\delta(\xi)$, we derive by the dominated convergence theorem that for any $\epsilon_0\in [0,\eta)$,
  \begin{align*}
    \lim_{\epsilon\rightarrow \epsilon_0}\sup_{x\in \bbR,y\in \bbR\backslash B_\delta(\xi)} \left|F^{\hat{\bm{\pi}}_{t,\epsilon,\pi}}(t,x,y)-F^{\hat{\bm{\pi}}_{t,\epsilon_0,\pi}}(t,x,y)\right| = 0.
  \end{align*}
  The above, together with the continuity of $F^{\hat{\bm{\pi}}_{t,\epsilon,\pi}}(t,x,y)$ in $y\neq \xi$ as implied by  Corollary 3 and Theorem 2-(iii) in \citet{HeJiang2020:LinearSDE}, yields that $F^{\hat{\bm{\pi}}_{t,\epsilon,\pi}}(t,x,y)$ is continuous in $(\epsilon,y)$ with $\epsilon\in [0,\eta)$ and $y\neq \xi$. Now, for any $x\neq \xi$,  Corollary 3 and Corollary 2-(ii) in \citet{HeJiang2020:LinearSDE}  show that for any $\alpha\in (0,1)$, $F^{\hat{\bm{\pi}}_{t,\epsilon,\pi}}(t,x,G^{\hat{\bm{\pi}}_{t,\epsilon,\pi}}(t,x,\alpha))=\alpha$, $G^{\hat{\bm{\pi}}_{t,\epsilon,\pi}}(t,x,\alpha)\neq \xi$, and $F^{\hat{\bm{\pi}}_{t,\epsilon,\pi}}_y(t,x,G^{\hat{\bm{\pi}}_{t,\epsilon,\pi}}(t,x,\alpha))>0$. By  Corollary 3 and Theorem 2-(iii) in \citet{HeJiang2020:LinearSDE}, $F^{\hat{\bm{\pi}}_{t,\epsilon,\pi}}_y(t,x,y)$ is continuous in $y\neq \xi$. The implicit function theorem then yields that $G^{\hat{\bm{\pi}}_{t,\epsilon,\pi}}(t,x,\alpha)$ is continuous in $\epsilon \in [0,\eta)$.

  Fixing $x\in \bbR$, because $G^{\hat{\bm{\pi}}_{t,0,\pi}}(t,x,\alpha)=G^{\hat{\bm{\pi}}}(t,x,\alpha)\neq \xi$ and because $G^{\hat{\bm{\pi}}_{t,\epsilon,\pi}}(t,x,\alpha)$ is continuous in $\epsilon$, there exists $\delta>0$ such that $|G^{\hat{\bm{\pi}}_{t,\epsilon,\pi}}(t,x,\alpha)-\xi|>\delta$ for sufficiently small $\epsilon$. Because the derivatives of $F^{\hat{\bm{\pi}}}$ and $F^{\hat{\bm{\pi}}}_t$ with respect to $x$ and $y$ of any orders are bounded on $[t,t+\eta)\times \bbR^2\backslash B^2_\delta(\xi)$, there exists $L>0$ such that
  \begin{align}
    &\sup_{s\in[t,\eta)} \left|{\cal A}^\pi F^{\hat{\bm{\pi}}}(s,x,y_1)-{\cal A}^\pi F^{\hat{\bm{\pi}}}(s,x,y_2)\right|\notag\\
    &\le L(1+|x| + |x|^2)|y_1-y_2|,\quad
     x\in \bbR, (y_1,y_2)\in [\xi+\delta,+\infty)^2\cup (-\infty,\xi-\delta]^2.\label{eq:GeneratorLip}
  \end{align}
  Because $\expect\left[\sup_{s\in[t,\eta)}|X^{\pi}_{t,x}(s)|^p\right]<+\infty$ for any $p\ge 1$, we conclude that from \eqref{eq:GeneratorLip} that for sufficiently small $\epsilon>0$,
  \begin{align}
    &\expect\left[\sup_{s\in[t,\eta)}\left|{\cal A}^\pi F^{\hat{\bm{\pi}}}(s,X^{\pi}_{t,x}(s),G^{\hat{\bm{\pi}}_{t,\epsilon,\pi}}(t,x,\alpha))-{\cal A}^\pi F^{\hat{\bm{\pi}}}(s,X^{\pi}_{t,x}(s),G^{\hat{\bm{\pi}}}(t,x,\alpha))\right|\right]\notag\\
    &\le L'|G^{\hat{\bm{\pi}}_{t,\epsilon,\pi}}(t,x,\alpha)-G^{\hat{\bm{\pi}}}(t,x,\alpha)|\label{eq:LeQuanDiffEst1}
  \end{align}
  for some constant $L'>0$. On the other hand, we have
  \begin{align*}
  &F^{\hat{\bm{\pi}}_{t,\epsilon,\pi}}\left(t,x,G^{\hat{\bm{\pi}}_{t,\epsilon,\pi}}(t,x,\alpha)\right) - F^{\hat{\bm{\pi}}}\left(t,x,G^{\hat{\bm{\pi}}_{t,\epsilon,\pi}}(t,x,\alpha)\right)\\
   &= \expect[F^{\hat{\bm{\pi}}}(t+\epsilon,X^{\pi}_{t,x}(t+\epsilon),G^{\hat{\bm{\pi}}_{t,\epsilon,\pi}}(t,x,\alpha))] - F^{\hat{\bm{\pi}}}\left(t,x,G^{\hat{\bm{\pi}}_{t,\epsilon,\pi}}(t,x,\alpha)\right)\\
  &={\cal A}^\pi F^{\hat{\bm{\pi}}}(t,x,G^{\hat{\bm{\pi}}}(t,x,\alpha))\epsilon  +  \expect\Big[\int_t^{t+\epsilon}\Big({\cal A}^\pi F^{\hat{\bm{\pi}}}(s,X^{\pi}_{t,x}(s),G^{\hat{\bm{\pi}}_{t,\epsilon,\pi}}(t,x,\alpha))\\
  &\quad -{\cal A}^\pi F^{\hat{\bm{\pi}}}(s,X^{\pi}_{t,x}(s),G^{\hat{\bm{\pi}}}(t,x,\alpha))\Big)ds\Big]\\
  &\quad +  \expect\left[\int_t^{t+\epsilon}\Big({\cal A}^\pi F^{\hat{\bm{\pi}}}(s,X^{\pi}_{t,x}(s),G^{\hat{\bm{\pi}}}(t,x,\alpha))-{\cal A}^\pi F^{\hat{\bm{\pi}}}(t,x,G^{\hat{\bm{\pi}}}(t,x,\alpha))\Big)ds\right].
\end{align*}
Combining the above with \eqref{eq:LeQuanDiffEst1}, recalling that $F_t^{\hat{\bm{\pi}}}$, $F_{x}^{\hat{\bm{\pi}}}$, and $F_{xx}^{\hat{\bm{\pi}}}$ are bounded on $[t,t+\eta)\times \bbR^2\backslash B^2_\delta(\xi)$, noting that $\expect\left[\sup_{s\in[t,\eta)}|X^{\pi}_{t,x}(s)|^p\right]<+\infty$ for any $p\ge 1$, and applying the dominated convergence theorem, we conclude that
\begin{align}
  \lim_{\epsilon\downarrow 0}\frac{F^{\hat{\bm{\pi}}_{t,\epsilon,\pi}}\left(t,x,G^{\hat{\bm{\pi}}_{t,\epsilon,\pi}}(t,x,\alpha)\right) - F^{\hat{\bm{\pi}}}\left(t,x,G^{\hat{\bm{\pi}}_{t,\epsilon,\pi}}(t,x,\alpha)\right)}{\epsilon} = {\cal A}^\pi F^{\hat{\bm{\pi}}}(t,x,G^{\hat{\bm{\pi}}}(t,x,\alpha)).\label{eq:LeQuanDiffEq1}
\end{align}

Because $F^{\hat{\bm{\pi}}_{t,\epsilon,\pi}}(t,x,G^{\hat{\bm{\pi}}_{t,\epsilon,\pi}}(t,x,\alpha))=\alpha$ for any $\epsilon\in[0,\eta)$, we conclude that
\begin{align*}
 0 &= \frac{F^{\hat{\bm{\pi}}_{t,\epsilon,\pi}}(t,x,G^{\hat{\bm{\pi}}_{t,\epsilon,\pi}}(t,x,\alpha)) - F^{\hat{\bm{\pi}}}(t,x,G^{\hat{\bm{\pi}}_{t,\epsilon,\pi}}(t,x,\alpha))}{\epsilon}\\
  & \quad + \frac{F^{\hat{\bm{\pi}}}(t,x,G^{\hat{\bm{\pi}}_{t,\epsilon,\pi}}(t,x,\alpha)) - F^{\hat{\bm{\pi}}}(t,x,G^{\hat{\bm{\pi}}}(t,x,\alpha))}{\epsilon}.
\end{align*}
Because $F^{\hat{\bm{\pi}}}_y$ is continuous on $[t,t+\eta)\times \bbR^2\backslash(\xi,\xi)$, $G^{\hat{\bm{\pi}}}(t,x,\alpha)\neq \xi$, and $F^{\hat{\bm{\pi}}}_y(t,x,y)>0$ for $y$ with $F^{\hat{\bm{\pi}}}(t,x,y)\in (0,1)$, the above and \eqref{eq:LeQuanDiffEq1} immediately lead to the first two equality in \eqref{eq:QuanDerivativeWRTeps}. In addition,  Corollary 3 and Theorem 2-(i) in \citet{HeJiang2020:LinearSDE} shows that ${\cal A}^{\hat{\bm{\pi}}}F^{\hat{\bm{\pi}}}(s,x,y)=0$ on $[t,t+\eta)\times \bbR^2\backslash(\xi,\xi)$. The last equality in \eqref{eq:QuanDerivativeWRTeps} then follows.

Suppose $t_*>0$ and fix $t\in [0,t_*)$. By  Corollary 3 and Theorem 2-(ii) in \citet{HeJiang2020:LinearSDE}, there exists $\eta\in (0,t^*-t)$ such that $F^{\hat{\bm{\pi}}}(s,x,y)\in C^{1,\infty}([t,t+\eta)\times \bbR^2 )$, $F_t^{\hat{\bm{\pi}}}(s,x,y)\in C^{0,\infty}([t,t+\eta)\times \bbR^2)$, the derivatives of $F^{\hat{\bm{\pi}}}$ with respect to $x$ and $y$ of any order are bounded on $[t,t+\eta)\times \bbR^2$, and $\sup_{s\in[t,t+\eta),y\in \bbR}|\frac{\partial^{i+j}F^{\hat{\bm{\pi}}}_t}{\partial x^i\partial y^j}(s,x,y)|$ is of polynomial growth in $x$ for any $i,j\in \mathbb{N}_0$. The remaining proof then follows the same line as for the case $t\in [t_*,t^*)$.\halmos
  \end{pfof}

  \begin{pfof}{Proposition \ref{prop:Sufficiency}}
We first consider the case $\alpha=1/2$. It is straightforward to see that
\begin{align*}
  d(X_{0,x_0}^{\hat{\bm{\pi}}}(s)-\xi) = (X_{0,x_0}^{\hat{\bm{\pi}}}(s)-\xi)\left[v^*(s)\tran b(s)ds + v^*(s)\tran \sigma(s)dW(s)\right],\quad s\in[0,T)
\end{align*}
and $X_{0,x_0}^{\hat{\bm{\pi}}}(0)-\xi=x_0-\xi>0$. Because $v^*(s)\neq 0$ and $\sigma(s)\sigma(s)\tran$ is positive definition, $s\in[0,T)$, we conclude $\bbX^{x_0,\hat{\bm{\pi}}}_0 = \{x_0\}$ and $\bbX^{x_0,\hat{\bm{\pi}}}_t = (\xi,+\infty)$, $t\in(0,T)$. For every $t\in[0,T)$, by the definition of $v^*(t)$, $Qv^*(t)\ge 0$. Because $X_{0,x_0}^{\hat{\bm{\pi}}}(t)>\xi$, we immediately conclude $Q\hat{\bm{\pi}}(t,X_{0,x_0}^{\hat{\bm{\pi}}}(t))\ge 0$. In addition, $v^*(t)$ is bounded in $t\in[0,T)$. Thus, $\hat{\bm{\pi}}\in \Pi$.

Now, fix any $t\in [0,T)$ and $x\in \bbX^{x_0,\hat{\bm{\pi}}}_t$. Straightforward calculation shows that
\begin{align*}
  F^{\hat{\bm{\pi}}}_x(t,x,G^{\hat{\bm{\pi}}}(t,x,1/2)) = -\frac{\phi(0)}{\sqrt{\int_t^T\|\sigma(s)\tran v^*(s)\|^2ds}(x-\xi)}<0,\\
  F^{\hat{\bm{\pi}}}_{xx}(t,x,G^{\hat{\bm{\pi}}}(t,x,1/2)) = -(x-\xi)^{-1}F^{\hat{\bm{\pi}}}_x(t,x,G^{\hat{\bm{\pi}}}(t,x,1/2))>0.
\end{align*}
Then, because $v^*(t)$ is the optimal solution to \eqref{general small alpha 1}, $\hat{\bm{\pi}}(t,x) = v^*(t)(x-\xi)$ is the unique optimal solution of
\begin{align}\label{eq:AKeyQuadraticProg}
\left\{
  \begin{array}{cl}
    \underset{\pi \in \bbR^m}{\min} & \varphi^{\hat{\bm{\pi}}}_{t,x,\alpha}(\pi)\\
    \text{subject to} & Q\pi\ge 0,
  \end{array}
\right.
\end{align}
with $\alpha =1/2$, where $\varphi^{\hat{\bm{\pi}}}_{t,x,\alpha}$ is defined by \eqref{eq:NeccOptObj}. As a result, for any $\pi\in  \bbR^m$ with $Q\pi\ge 0$, $\pi\neq \hat{\bm{\pi}}(t,x)$, Lemma \ref{le:QuantileDifferentiability} yields that
\begin{align*}
  \lim_{\epsilon\downarrow 0}\frac{G^{\hat{\bm{\pi}}_{t,\epsilon,\pi}}(t,x,1/2) - G^{\hat{\bm{\pi}}}(t,x,1/2)}{\epsilon}<0.
\end{align*}
Thus, $\hat{\bm{\pi}}$ is an equilibrium strategy.

Next, we consider the case $\alpha<1/2$. It is straightforward to see that $X_{0,x_0}^{\hat{\bm{\pi}}}(t)\equiv x_0$ and thus $\bbX^{x_0,\hat{\bm{\pi}}}_t = \{x_0\},t\in[0,T]$. As a result, $\hat{\bm{\pi}}(t,X_{0,x_0}^{\hat{\bm{\pi}}}(t))=0,t\in[0,T)$. In addition, $\theta(t)$ is bounded in $t\in[0,T)$. Thus, $\hat{\bm{\pi}}\in \Pi$.

Fix any $t\in[0,T)$ and $\pi \in \bbR^m$ with $\pi \neq \hat{\bm{\pi}}(t,x_0)=0$. For any $\epsilon\in(0,T-t)$, straightforward calculation leads to
 \begin{align}\label{process for non invest}
&X^{\hat{\bm{\pi}}_{t,\epsilon,\pi}}_{t,x_0}(T)-x_0 \nonumber \\
=&\left[  \int_{t}^{t+\epsilon} b(s)\tran \pi ds  +\int_{t}^{t+\epsilon}  \pi \tran \sigma(s) dW(s) \right] e^{  \int_{t+\epsilon}^T  [ b(\tau)\tran \theta(\tau) -\frac{1}{2} \|\sigma(\tau) \tran \theta(\tau) \|^2 ]d \tau +\int_{t+\epsilon}^T \theta(\tau) \tran \sigma(\tau) dW(\tau) }.
\end{align}
As a result, 
\begin{align}
 & F^{\hat{\bm{\pi}}_{t,\epsilon,\pi}}(t,x_0,x_0) = \prob(X^{\hat{\bm{\pi}}_{t,\epsilon,\pi}}_{t,x_0}(T) \leq x_0 )=\prob\left(\int_{t}^{t+\epsilon} b(s)\tran \pi ds+\int_{t}^{t+\epsilon}  \pi \tran \sigma(s) dW(s) \leq 0\right)\nonumber\\
&=\Phi\left(\frac{-\int_{t}^{t+\epsilon} b(s)\tran \pi ds}{ \sqrt{ \int_{t}^{t+\epsilon} \|\sigma(s)\tran \pi\|^2 ds }   } \right).\label{prob for non invest}
\end{align}
Because $\alpha<1/2$, the right-hand side of the above is strictly larger than $\alpha$ when $\epsilon$ is sufficiently small. As a result, $G^{\hat{\bm{\pi}}_{t,\epsilon,\pi}}(t,x_0,\alpha)\le x_0$ for sufficiently small $\epsilon$. Thus, $\hat{\bm{\pi}}$ is an equilibrium strategy.\halmos
\end{pfof}

\begin{pfof}{Lemma \ref{le:NeccOpt}}
  For any $t\in[0,t_*)$ and $x\in \bbX^{x_0,\hat{\bm{\pi}}}_t$ and for any $t\in[t_*,t^*)$ and $x\in \bbX^{x_0,\hat{\bm{\pi}}}_t$ with $x\neq \xi$, Lemma \ref{le:QuantileDifferentiability} implies that
\begin{align}
  \varphi^{\hat{\bm{\pi}}}_{t,x,\alpha}(\hat{\bm{\pi}}(t,x))\le \varphi^{\hat{\bm{\pi}}}_{t,x,\alpha}(\pi),\quad \forall \pi\neq \hat{\bm{\pi}}(t,x)\text{ with }Q\pi \ge 0.\label{eq:LeNeccOptEq1}
\end{align}
 According to Corollary 3, Theorem 2-(iv) and Corollary 2-(ii) in \citet{HeJiang2020:LinearSDE}, we have $F^{\hat{\bm{\pi}}}_x(t,x,G^{\hat{\bm{\pi}}}(t,x,\alpha))<0$. On the other hand, because of Assumption \ref{as:Feasibility2}, we can find $v_0\in \bbR^m$ with $Qv_0\ge 0$ such that $b(t)\tran v_0>0$. Then $Q(\lambda v_0)\ge 0$ for any $\lambda>0$. If $F^{\hat{\bm{\pi}}}_{xx}(t,x,G^{\hat{\bm{\pi}}}(t,x,\alpha))\le 0$, we consider $\lambda v_0$ for sufficiently large, positive scalar $\lambda$ so that $\lambda v_0 \neq \hat{\bm{\pi}}(t,x)$ and $\varphi^{\hat{\bm{\pi}}}_{t,x,\alpha}(\hat{\bm{\pi}}(t,x))>\varphi^{\hat{\bm{\pi}}}_{t,x,\alpha}(\lambda v_0)$, which contradicts \eqref{eq:LeNeccOptEq1}. Thus, we must have $F^{\hat{\bm{\pi}}}_{xx}(t,x,G^{\hat{\bm{\pi}}}(t,x,\alpha))>0$. Then, \eqref{eq:LeNeccOptEq1} immediately implies that
\begin{align*}
-\hat{\bm{\pi}}(t,x)F^{\hat{\bm{\pi}}}_{xx}(t,x,G^{\hat{\bm{\pi}}}(t,x,\alpha))/F^{\hat{\bm{\pi}}}_{x}(t,x,G^{\hat{\bm{\pi}}}(t,x,\alpha))
 \end{align*}
 is the optimizer of \eqref{general small alpha 1}, i.e., \eqref{eq:NeccOptSolution} holds.\halmos
\end{pfof}

\begin{pfof}{Lemma \ref{le:NeccNonDegAlphaLarge}}
If $\underline{t}=T$, then it is obvious that $\underline{t}\ge t^*$. On the other hand, by the definition of $t^*$, $\theta_0(\tau)+\theta_1(\tau)x_0=0,\tau\in[t^*,T)$. Thus, $\underline{t}\ge t^*$ implies $\underline{t}=T$.

Next, we fix $\alpha \in[1/2,1)$. For the sake of contradiction, suppose $t^*<T$. Then, for any $t\in [t^*,T)$ and $x\in \bbX^{x_0,\hat{\bm{\pi}}}_t$, we have $\hat{\bm{\pi}}(t,x)=0$. Choose any $\pi\in \bbR^m$ with $b(t)\tran \pi>0$ and $Q\pi\ge 0$. Note that such $\pi$ exists due to Assumption \ref{as:Feasibility2}. Then, we must have $\pi \neq 0=\hat{\bm{\pi}}(t,x)$. For any $\epsilon\in(0,T-t)$, straightforward calculation leads to
 \begin{align*}
X^{\hat{\bm{\pi}}_{t,\epsilon,\pi}}_{t,x}(T) =x + \int_{t}^{t+\epsilon} b(s)\tran \pi ds  +\int_{t}^{t+\epsilon}  \pi \tran \sigma(s) dW(s) .
\end{align*}
Because $\alpha\ge 1/2$, $b(t)\tran \pi>0$, and $b$ is right-continuous, straightforward calculation shows that for sufficiently small $\epsilon>0$, $G^{\hat{\bm{\pi}}_{t,\epsilon,\pi}}(t,x,\alpha)>x = G^{\hat{\bm{\pi}}}(t,x,\alpha)$. This contradicts the assumption that $\hat{\bm{\pi}}$ is an equilibrium strategy. Thus, we must have $t^*=T$.

Next, for the sake of contradiction, suppose $\underline{t}>0$. Then, $X_{0,x_0}^{\hat{\bm{\pi}}}(s)=x_0$ and thus $\bbX^{x_0,\hat{\bm{\pi}}}_s=\{x_0\}$ for all $s\in[0,\underline{t})$. Recall $t_*$ as defined in \eqref{eq:T0prime} and $\theta_0(s) +  \theta_1(s)\xi=0,s\in [t_*,t^*)=[t_*,T)$ for certain uniquely determined $\xi \in \bbR$. When $t_*>0$ we can choose any $t\in[0,t_*\wedge \underline{t})$ and when $\xi \neq x_0$ we can choose any $t\in[0,\underline{t})$. In either case, Lemma \ref{le:NeccOpt} can apply to this particular $t$ together with $x_0 \in \bbX^{x_0,\hat{\bm{\pi}}}_t$, leading to $F^{\hat{\bm{\pi}}}_{x}(t,x_0,G^{\hat{\bm{\pi}}}(t,x_0,\alpha))<0$, $F^{\hat{\bm{\pi}}}_{xx}(t,x_0,G^{\hat{\bm{\pi}}}(t,x_0,\alpha))>0$, and
  \begin{align*}
    \hat{\bm{\pi}}(t,x_0) = -\frac{F^{\hat{\bm{\pi}}}_{x}(t,x_0,G^{\hat{\bm{\pi}}}(t,x_0,\alpha))}{F^{\hat{\bm{\pi}}}_{xx}(t,x_0,G^{\hat{\bm{\pi}}}(t,x_0,\alpha))}v^*(t).
  \end{align*}
The above is a contradiction because $\hat{\bm{\pi}}(t,x_0)=0$ and $v^*(t)\neq 0$.

When $t_*=0$ and $\xi =x_0$, we have $\underline{t}=T$ and thus $\bbX^{x_0,\hat{\bm{\pi}}}_s=\{x_0\}$ and $X_{s,x_0}^{\hat{\bm{\pi}}}(T) = X_{0,x_0}^{\hat{\bm{\pi}}}(T)$ for all $s\in[0,\underline{t})$. Fix any $t\in[0,T)$ and choose any $\pi\in \bbR^m$ with $b(t)\tran \pi>0$ and $Q\pi\ge 0$, and such $\pi$ can be found due to Assumption \ref{as:Feasibility2}. For any $\epsilon\in(0,T-t)$, \eqref{prob for non invest} holds. Thus, because $\alpha\ge 1/2$, for sufficiently small $\epsilon>0$, $F^{\hat{\bm{\pi}}_{t,\epsilon,\pi}}(t,x_0,x_0)<\alpha$. In addition, because $\pi\neq 0$, $F^{\hat{\bm{\pi}}_{t,\epsilon,\pi}}(t,x_0,y)$ is strictly increasing and continuous in $y$ for any $\epsilon \in (0,T-t)$. Consequently, we conclude that $G^{\hat{\bm{\pi}}_{t,\epsilon,\pi}}(t,x_0,\alpha)>x_0=G^{\hat{\bm{\pi}}}(t,x_0,\alpha)$ for sufficiently small $\epsilon>0$. This contradicts the assumption that $\hat{\bm{\pi}}$ is an equilibrium policy. The proof then completes.\halmos
\end{pfof}

\begin{pfof}{Lemma \ref{le:NeccPDE}}
  Part (i) is an immediate consequence of   Corollary 4 and Theorem 3 in \citet{HeJiang2020:LinearSDE}. We prove part (ii) in the following.

  Fix any $t\in (\underline{t},t^*)$, Lemma \ref{le:NeccOpt} implies that
  \begin{align}\label{eq:LeNeccPDEEq1}
     -\frac{F^{\hat{\bm{\pi}}}_{x}(t,x,G^{\hat{\bm{\pi}}}(t,x,\alpha))}{F^{\hat{\bm{\pi}}}_{xx}(t,x,G^{\hat{\bm{\pi}}}(t,x,\alpha))}v^*(t) = \hat{\bm{\pi}}(t,x) = \theta_0(t)+\theta_1(t)x,\quad  x\in \bbX^{x_0,\hat{\bm{\pi}}}_t\backslash \mathbb{S}^{\hat{\bm{\pi}}}_t.
  \end{align}
  Multiplying $v^*(t)\tran$ on both sides of the above equality and noting that $v^*(t)\neq 0$, we conclude
  \begin{align*}
    -\frac{F^{\hat{\bm{\pi}}}_{x}(t,x,G^{\hat{\bm{\pi}}}(t,x,\alpha))}{F^{\hat{\bm{\pi}}}_{xx}(t,x,G^{\hat{\bm{\pi}}}(t,x,\alpha))} = \|v^*(t)\|^{-2}\left(v^*(t)\tran\theta_0(t)+v^*(t)\tran\theta_1(t)x\right),\quad x\in \bbX^{x_0,\hat{\bm{\pi}}}_t\backslash \mathbb{S}^{\hat{\bm{\pi}}}_t.
  \end{align*}
  Then,
  \begin{align*}
  \left[\left(-\frac{F^{\hat{\bm{\pi}}}_{x}(t,x',G^{\hat{\bm{\pi}}}(t,x',\alpha))}{F^{\hat{\bm{\pi}}}_{xx}(t,x',G^{\hat{\bm{\pi}}}(t,x',\alpha))}\right) - \left(-\frac{F^{\hat{\bm{\pi}}}_{x}(t,x'',G^{\hat{\bm{\pi}}}(t,x'',\alpha))}{F^{\hat{\bm{\pi}}}_{xx}(t,x'',G^{\hat{\bm{\pi}}}(t,x'',\alpha))}\right)\right]/(x'-x'')
  \end{align*}
  does not depend on the choice of $x',x''\in \bbX^{x_0,\hat{\bm{\pi}}}_t\backslash \mathbb{S}^{\hat{\bm{\pi}}}_t$ with $x'\neq x''$, and we denote this common value by $a_1(t)$. Because $ \bbX^{x_0,\hat{\bm{\pi}}}_t$ is a nonempty open interval and $\mathbb{S}^{\hat{\bm{\pi}}}_t$ is either the empty set or a singleton, we can always find $x',x''\in \mathbb{S}^{\hat{\bm{\pi}}}_t$ with $x'\neq x''$ and thus $a_1(t)$ is well defined. Then, fixing any $\bar x \in  \bbX^{x_0,\hat{\bm{\pi}}}_t\backslash \mathbb{S}^{\hat{\bm{\pi}}}_t$, we have
  \begin{align*}
 -\frac{F^{\hat{\bm{\pi}}}_{x}(t,x,G^{\hat{\bm{\pi}}}(t,x,\alpha))}{F^{\hat{\bm{\pi}}}_{xx}(t,x,G^{\hat{\bm{\pi}}}(t,x,\alpha))} - a_1(t)x  = -\frac{F^{\hat{\bm{\pi}}}_{x}(t,\bar x,G^{\hat{\bm{\pi}}}(t,\bar x,\alpha))}{F^{\hat{\bm{\pi}}}_{xx}(t,\bar x,G^{\hat{\bm{\pi}}}(t,\bar x,\alpha))} - a_1(t)\bar x,\quad x\in \bbX^{x_0,\hat{\bm{\pi}}}_t\backslash \mathbb{S}^{\hat{\bm{\pi}}}_t.
  \end{align*}
 It is obvious that the right-hand side does not depend on the choice of $\bar x$, and we denote it by $a_0(t)$. Consequently,
 \begin{align*}
   -\frac{F^{\hat{\bm{\pi}}}_{x}(t,x,G^{\hat{\bm{\pi}}}(t,x,\alpha))}{F^{\hat{\bm{\pi}}}_{xx}(t,x,G^{\hat{\bm{\pi}}}(t,x,\alpha))} = a_0(t) + a_1(t)x,\quad x\in \bbX^{x_0,\hat{\bm{\pi}}}_t\backslash \mathbb{S}^{\hat{\bm{\pi}}}_t.
 \end{align*}
Combining the above with \eqref{eq:LeNeccPDEEq1}, we immediately conclude that $\theta_0(t) = a_0(t) v^*(t)$ and $\theta_1(t) = a_1(t)v^*(t)$.

Next, we prove that $a_0,a_1\in C([\underline t,t^*])$. For any fixed $t_1\in (\underline t,t^*)$, because $\bbX^{x_0,\hat{\bm{\pi}}}_{t_1}$ is a nonempty interval, we can find distinct $x'\neq x''\in \bbX^{x_0,\hat{\bm{\pi}}}_{t_1}\backslash\{\xi\}\subseteq \bbX^{x_0,\hat{\bm{\pi}}}_{t}\backslash \mathbb{S}^{\hat{\bm{\pi}}}_t$, $t\in [t_1,t^*)$. Then, we have
\begin{align*}
  a_1(t) =
  \left[\left(-\frac{F^{\hat{\bm{\pi}}}_{x}(t,x',G^{\hat{\bm{\pi}}}(t,x',\alpha))}{F^{\hat{\bm{\pi}}}_{xx}(t,x',G^{\hat{\bm{\pi}}}(t,x',\alpha))}\right) - \left(-\frac{F^{\hat{\bm{\pi}}}_{x}(t,x'',G^{\hat{\bm{\pi}}}(t,x'',\alpha))}{F^{\hat{\bm{\pi}}}_{xx}(t,x'',G^{\hat{\bm{\pi}}}(t,x'',\alpha))}\right)\right]/(x'-x''),\quad t\in[t_1,t^*).
\end{align*}
By   Corollary 3,  Theorem 2-(iii) and Corollary 2 in \citet{HeJiang2020:LinearSDE}, $F^{\hat{\bm{\pi}}}_{x}(t,x,G^{\hat{\bm{\pi}}}(t,x,\alpha))$ and $F^{\hat{\bm{\pi}}}_{xx}(t,x,G^{\hat{\bm{\pi}}}(t,x,\alpha))$ are continuous in $t\in [0,t^*)$ for any $x\neq \xi$. As a result, $a_1\in C([t_1,t^*))$. Similarly, $a_0\in C([t_1,t^*);\mathbb{R})$. Because $t_1$ is arbitrary, we have $a_0,a_1\in C((\underline t,t^*))$. For any $t\in (\underline{t},t^*)$, because $\theta_1(t)=a_1(t)v^*(t)$, we have $a_1(t) = \theta_1(t)\tran v^*(t)/\|v^*(t)\|^2$. Because the limits of $v^*(t)$ and $\theta_1(t)$ exist when $t$ converges to $\underline{t}$ and because the former limit is not zero, we conclude that the limit of $a_1(t)$ as $t$ goes to $\underline{t}$ exists. Similarly, the limit of $a_1(t)$ as $t$ goes to $t^*$ exists. Thus, we can extend the definition of $a_1$ to the domain $[\underline{t},t^*]$ and $a_1\in C([\underline t,t^*])$. Similarly, we can show that $a_0\in C([\underline t,t^*])$.

Finally, for any $t\in(\underline{t},t^*)$ and $x\in \bbX^{x_0,\hat{\bm{\pi}}}_{t}\backslash \mathbb{S}^{\hat{\bm{\pi}}}_t$, Corollary 3 and Theorem 2-(i),(ii) in \citet{HeJiang2020:LinearSDE}   yield that
\begin{align*}
  F^{\hat{\bm{\pi}}}_{t}(t,x,G^{\hat{\bm{\pi}}}(t,x,\alpha)) + F^{\hat{\bm{\pi}}}_{x}(t,x,G^{\hat{\bm{\pi}}}(t,x,\alpha))b(t)\tran \hat{\bm{\pi}}(t,x) + \frac{1}{2}F^{\hat{\bm{\pi}}}_{xx}(t,x,G^{\hat{\bm{\pi}}}(t,x,\alpha))\|\sigma(t)\hat{\bm{\pi}}(t,x)\|^2= 0.
\end{align*}
Combining the above with \eqref{eq:NeccStrategy1} and \eqref{eq:NeccStrategy2} and noting that $F^{\hat{\bm{\pi}}}_{x}(t,x,G^{\hat{\bm{\pi}}}(t,x,\alpha))\neq 0$ and thus $a_0(t)+a_1(t)x\neq 0$, we obtain
\begin{align*}
  F^{\hat{\bm{\pi}}}_{t}(t,x,G^{\hat{\bm{\pi}}}(t,x,\alpha)) + \frac{1}{2}F^{\hat{\bm{\pi}}}_{x}(t,x,G^{\hat{\bm{\pi}}}(t,x,\alpha))\rho(t)\big(a_0(t)+a_1(t)x\big)= 0.
\end{align*}
 Corollary 3 and Corollary 2-(ii), (iii) in \citet{HeJiang2020:LinearSDE}   show that
\begin{align*}
  G^{\hat{\bm{\pi}}}_x(t,x,\alpha) = -\frac{F^{\hat{\bm{\pi}}}_x(t,x,G^{\hat{\bm{\pi}}}(t,x,\alpha))}{F^{\hat{\bm{\pi}}}_y(t,x,G^{\hat{\bm{\pi}}}(t,x,\alpha))},\quad G^{\hat{\bm{\pi}}}_t(t,x,\alpha) = -\frac{F^{\hat{\bm{\pi}}}_t(t,x,G^{\hat{\bm{\pi}}}(t,x,\alpha))}{F^{\hat{\bm{\pi}}}_y(t,x,G^{\hat{\bm{\pi}}}(t,x,\alpha))}.
\end{align*}
As a result, we derive
  \begin{align*}
  G^{\hat{\bm{\pi}}}_t(t,x,\alpha)  + \frac{1}{2}G^{\hat{\bm{\pi}}}_x(t,x,\alpha) \rho(t)\big(a_0(t)+a_1(t)x\big)= 0.
\end{align*}
Corollary 3 and Corollary 2-(iv) in \citet{HeJiang2020:LinearSDE} show  that for any $x\in \bbR$,
\begin{align*}
  \lim_{t\uparrow t^*,x'\rightarrow x}G^{\hat{\bm{\pi}}}(t,x',\alpha) =x.
\end{align*}
The proof then completes.\halmos
\end{pfof}

\begin{pfof}{Lemma \ref{le:TransportEq}}
  We prove (i) first. Because $\gamma_0,\gamma_1\in C([\tau_1,\tau_2])$, there exists $L>0$ such that $\sup_{s\in[\tau_1,\tau_2]}|\gamma_i(s)|\le L$, $i=0,1$. Set
  \begin{align*}
    \underline{x}:=e^{L(\tau_2-\tau_1)}\big(\bar c+L e^{L(\tau_2-\tau_1)} (\tau_2-\tau_1)\big).
  \end{align*}
  Fix any $t\in[\tau_1,\tau_2)$ and $x> \underline{x}$, consider
  \begin{align*}
    \varphi(s) = g(s,h_1(s)x + h_0(s)),\quad s\in[t,\tau_2],
  \end{align*}
  where
  \begin{align*}
    h_0(s) = \int_t^se^{\int_\tau^s \gamma_1(z)dz}\gamma_0(\tau)d\tau,\quad h_1(s) =e^{\int_t^s\gamma_1(\tau)d\tau},\quad s\in [t,\tau_2].
  \end{align*}
 Straightforward calculation yields
  \begin{align}\label{eq:LeTransEqhfunEq}
    h_1'(s)x + h_0'(s) = \gamma_0(s) + \gamma_1(s)\big(h_1(s)x + h_0(s)\big),\quad s\in [t,\tau_2].
  \end{align}
  For any $s\in[t,\tau_2]$,
  \begin{align*}
    h_1(s)x + h_0(s)\ge e^{-L(\tau_2-\tau_1)} x - Le^{L(\tau_2-\tau_1)} (\tau_2-\tau_1)> e^{-L(\tau_2-\tau_1)} \underline{x} - L e^{L(\tau_2-\tau_1)} (\tau_2-\tau_1) =\bar c,
  \end{align*}
  where the first inequality is the case because $\sup_{s\in[\tau_1,\tau_2]}|\gamma_i(s)|\le L$, $i=0,1$ and the second inequality is the case because $x> \underline{x}$. Therefore, $\varphi(s), s\in [t,\tau_2]$ is well defined and is differentiable in $s\in[t,\tau_2)$. Moreover, applying the chain rule, we derive
  \begin{align*}
    \varphi'(s) &= g_t(s,h_1(s)x + h_0(s)) + g_x(s,h_1(s)x + h_0(s))\big(h_1'(s)x + h_0'(s)\big)\\
    & =  g_t(s,h_1(s)x + h_0(s)) + g_x(s,h_1(s)x + h_0(s))\big(h_1(s)\gamma_1(s) x + \gamma_0(s) + h_0(s) \gamma_1(s)\big)\\
    & = 0,\quad s\in[t,\tau_2),
  \end{align*}
  where the second equality follows from \eqref{eq:LeTransEqhfunEq} and the third follows from the differentiable equation satisfied by $g$. As a result,
  \begin{align*}
    g(t,x) = \varphi(t) = \varphi(\tau_2) = g\left(\tau_2,h_1(\tau_2)x + h_0(\tau_2)\right) = \alpha_0 + \alpha_1 \left(h_1(\tau_2)x + h_0(\tau_2)\right) = \hat g(t,x),
  \end{align*}
  where the fourth equality is the case due to the terminal condition satisfied by $g$ at $\tau_2$.

  Part (ii) can be proved similarly. For (iii), because $\gamma_0(t)+\gamma_1(t)x\ge 0$ for all $(t,x)\in {\cal D}_3$ and because $c(t)$ is decreasing, for each fixed $(t,x)\in {\cal D}_3$, \eqref{eq:LeTransEqhfunEq} shows that $h_1'(s)x + h_0'(s)\ge 0$, $h_1(s)x + h_0(s)$ is increasing in $s\in[t,\tau_2]$ and thus $(s,h_1(s)x + h_0(s))\in {\cal D}_3, s\in[t,\tau_2)$. Then, following the same proof as in part (i) of the lemma, we conclude that $g(t,x)=\hat g(t,x)$.
  \halmos
\end{pfof}

\begin{pfof}{Lemma \ref{le:EquilibriumQuantile}}
Recall the form of the strategy $\hat{\bm{\pi}}$ as in \eqref{eq:NeccStrategy2}. Suppose $\bbX^{x_0,\hat{\bm{\pi}}}_t \supseteq (\underline{x}_t,+\infty)$ for certain $\underline{x}_t\in \bbR$. Then, $\xi\notin [\underline{x}_t',+\infty)$ for certain $\underline{x}_t'\ge \underline{x}_t$ and \eqref{quantile equation} holds in the region $[t,t^*]\times (\underline{x}_t',+\infty)$. Because $G^{\hat{\bm{\pi}}}\in C([t,t^*]\times (\underline{x}_t',+\infty))$ (due to  Corollary 3 and Corollary 2-(ii) in \citet{HeJiang2020:LinearSDE}) and because there exists a partition of $[t,t^*]$, $t=:\tau_0<\tau_1<\dots<\tau_N=t^*$, such that $a_0,a_1,\rho\in C([\tau_{i-1},\tau_i])$ and $G^{\hat{\bm{\pi}}}\in C^{1,\infty}([\tau_{i-1},\tau_i)\times (\underline{x}_t',+\infty))$ (due to  Corollary 3 and Corollary 2-(iii) in \citet{HeJiang2020:LinearSDE}) $i=1,\dots, N$, by applying Lemma \ref{le:TransportEq} in $[\tau_{i-1},\tau_i)$ sequentially, we conclude that there exists $\underline{c}_t>\max(\underline{x}_t,\xi)$ such that \eqref{eq:EquilibriumQuantile} holds for all $(s,x)\in [t,t^*]\times [\underline{c}_t,+\infty)$. Similarly, in the case $\bbX^{x_0,\hat{\bm{\pi}}}_t \supseteq (-\infty,\bar{x}_t)$ for certain $\bar{x}_t\in \bbR$, we can find $\bar c_t<\min(\xi, \bar{x}_t)$ such that \eqref{eq:EquilibriumQuantile} holds all $(s,x)\in [t,t^*]\times (-\infty,\bar{c}_t)$.

Next, we prove (i). For the sake of contradiction, suppose $a_1(s)=0,s\in[t,t^*)$. Then, we must have $t_*=t^*$; otherwise $\theta_0(s)=-\xi \theta_1(s) = -\xi a_1(s)v^*(s)=0$ for $s\in [t_*\vee t,t^*)$, contradicting the definition of $t^*$. Moreover, by the definition of $t^*$, for any $s\in[t,t^*)$, there exists $\tau\in[s,t^*)$ with $\theta_0(\tau)\neq 0$ and thus $a_0(\tau)\neq 0$.
Also note that
\begin{align*}
  X^{\hat{\bm{\pi}}}_{s,x}(T) = x + \int_s^{t^*}a_0(\tau)v^*(\tau)\tran b(\tau)d\tau +\int_s^{t^*}a_0(\tau)v^*(\tau)\tran \sigma(\tau)dW(\tau),
\end{align*}
which is a normal random variable. Thus, we have
\begin{align}\label{eq:LeEquiQuantEq0}
  G^{\hat{\bm{\pi}}}(s,x,\alpha) = x + \int_s^{t^*}a_0(\tau)v^*(\tau)\tran b(\tau)d\tau + \Phi^{-1}(\alpha)\sqrt{\int_s^{t^*}\|\sigma(\tau)\tran v^*(\tau)\|^2a_0(\tau)^2d\tau}.
\end{align}
Lemma \ref{le:NeccPDE}-(i) shows that it is either the case in which $\bbX^{x_0,\hat{\bm{\pi}}}_t = (\underline{x}_t,+\infty)$ for certain $\underline{x}_t\in \bbR$, or the case in which $\bbX^{x_0,\hat{\bm{\pi}}}_t = (-\infty,\bar{x}_t)$ for certain $\bar{x}_t\in \bbR$, or the case $\bbX^{x_0,\hat{\bm{\pi}}}_t =\bbR$. In either case, as we already proved, there exists a nonempty open interval $I\subset \bbR$ such that \eqref{eq:EquilibriumQuantile} holds for any $s\in[t,t^*)$ and $x\in I$. Comparing \eqref{eq:EquilibriumQuantile} and \eqref{eq:LeEquiQuantEq0}, we conclude
\begin{align*}
    &\beta_0(s,\alpha) = \int_s^{t^*}a_0(\tau)v^*(\tau)\tran b(\tau)d\tau + \Phi^{-1}(\alpha)\sqrt{\int_s^{t^*}\|\sigma(\tau)\tran v^*(\tau)\|^2a_0(\tau)^2d\tau},\\
    &\beta_1(s,\alpha)=1,\quad s\in[t,t^*).
\end{align*}
The above immediately yields that $a_1(s)=0,s\in[t,t^*)$ and
\begin{align}
  &-\Phi^{-1}(\alpha)\sqrt{\int_s^{t^*}\|\sigma(\tau)\tran v^*(\tau)\|^2a_0(\tau)^2d\tau} = \int_s^{t^*}a_0(\tau)v^*(\tau)\tran b(\tau)d\tau -\frac{1}{2} \int_s^{t^*}a_0(\tau)\rho(\tau)d\tau \notag \\
  & = \frac{1}{2}\int_s^{t^*}a_0(\tau)\|\sigma(\tau)\tran v^*(\tau)\|^2d\tau,\quad s\in[t,t^*),\label{eq:LeEquiQuantEq1}
\end{align}
where the second equality is due to the definition of $\rho$. When $\alpha =1/2$, because $\|\sigma(s)\tran v^*(s)\|^2>0, s\in[0,T)$, \eqref{eq:LeEquiQuantEq1} implies that $a_0(s)=0,s\in[t,t^*)$, which is a contradiction. When $\alpha\neq 1/2$, taking square and then taking derivative with respect to $s$ on both sides of \eqref{eq:LeEquiQuantEq1}, and noting that $\|\sigma(s)\tran v^*(s)\|^2>0, s\in[0,T)$ and that $a_0(s)>0, s\in[t,t^*)$   because of \eqref{eq:NeccStrategy1} and $a_1(s)=0$, $s\in [t, t^*)$, we derive the following integral equation
\begin{align*}
  \big(\Phi^{-1}(\alpha)\big)^2 a_0(s) - \frac{1}{2}\int_s^{t^*}a_0(\tau)\|\sigma(\tau)\tran v^*(\tau)\|^2d\tau = 0,\quad s\in[t,t^*).
\end{align*}
Then, $g(s):=\int_s^{t^*}a_0(\tau)\|\sigma(\tau)\tran v^*(\tau)\|^2d\tau = 0$ satisfies
\begin{align*}
  \big(\Phi^{-1}(\alpha)\big)^2 g'(s) + \frac{1}{2}\|\sigma(s)\tran v^*(s)\|^2 g(s) = 0,\; s\in[t,t^*),\quad g(t^*)=0.
\end{align*}
Because $\big(\Phi^{-1}(\alpha)\big)^2\neq 0$ and $\|\sigma(s)\tran v^*(s)\|$ is bounded in $s\in[t,t^*)$, we derive $g(s)=0,s\in[t,t^*]$, i.e., $a_0(s)=0,s\in[t,t^*]$, which is a contradiction.

Next, we suppose $\bbX^{x_0,\hat{\bm{\pi}}}_t \supseteq (\underline{x}_t,+\infty)$ for certain $\underline{x}_t\in \bbR$ and prove \eqref{eq:a1Eq1}.
Straightforward calculation yields that
\begin{align*}
   X^{\hat{\bm{\pi}}}_{t,x}(T) = x \tilde Z_1(T;t) + \tilde Z_2(T;t).
\end{align*}
where
\begin{align}\label{eq:Ztilde1}
  d\tilde Z_1(s;t)&=\tilde Z_1(s;t) \left[b(s)\tran  \theta_1(s)ds  + \theta_1(s) \tran \sigma(s) dW(s)\right],
  \;s\in[t,T],\quad \tilde Z_1(t;t)=1.\\
    d\tilde Z_2(s;t)&=\left(b(s)\tran  \theta_0(s)+b(s)\tran  \theta_1(s)\tilde Z_2(s;t)\right)ds  +\left(\theta_0(s) \tran \sigma(s)+\theta_1(s) \tran \sigma(s) \tilde Z_2(s;t)\right) dW(s), \notag\\
& \qquad s \in [t, T], \quad \tilde Z_2(t;t)=0.\label{eq:Ztilde2}
\end{align}
For any $x>\max(\underline{c}_t,0)$, denote by $G^Z(\alpha),\alpha\in(0,1)$ and $G^Y(x,\alpha),\alpha\in(0,1)$ the right-continuous quantile functions of $\tilde Z_1(T;t)$ and $Y^{\hat{\bm{\pi}}}_{t,x}(T):=X^{\hat{\bm{\pi}}}_{t,x}(T)/x$, respectively. We already proved part (i) of the lemma, which implies that $\tilde Z_1(T;t)$ is a non-degenerate lognormal random variable and thus its quantile function is continuous. Also note that $\lim_{x\uparrow +\infty}Y^{\hat{\bm{\pi}}}_{t,x}(T)= \tilde Z_1(T;t)$ almost surely. As a result,
\begin{align*}
  \lim_{x\uparrow +\infty} \frac{ G^{\hat{\bm{\pi}}}(t,x,\alpha)}{x}=\lim_{x\uparrow +\infty} G^{Y}(x,\alpha)=G^{Z}(\alpha),\quad \forall \alpha \in (0,1).
\end{align*}
By computing $G^{Z}(\alpha)$ from \eqref{eq:Ztilde1} and recalling \eqref{eq:EquilibriumQuantile}, we then derive \eqref{eq:a1Eq1}.

Finally, \eqref{eq:a1Eq2} can be proved similarly.\halmos
\end{pfof}

\begin{pfof}{Proposition \ref{prop:NeccAlphaNotHalf}}
 Suppose that $\hat{\bm{\pi}}\in\mathbb{A}$ is an equilibrium strategy for a given $\alpha \neq 1/2$. Recall $t^*$ as defined in Lemma \ref{le:QuantileDifferentiability} and $\underline{t}$ as defined in Lemma \ref{le:NeccNonDegAlphaLarge}. We prove that it cannot be the case that $\underline{t}<t^*$.

 For the sake of contradiction, suppose $\underline{t}<t^*$. Combining Lemma \ref{le:NeccPDE}-(i) and Lemma \ref{le:EquilibriumQuantile}, we conclude that it is either the case in which \eqref{eq:a1Eq1} holds for any $t\in (\underline{t},t^*)$ or the case in which \eqref{eq:a1Eq2} holds for any $t\in (\underline{t},t^*)$. Denote $\lambda=\Phi^{-1}(\alpha)$ in the first case and $\lambda=\Phi^{-1}(1-\alpha)$ in the second case. Cauchy's inequality implies
 \begin{align*}
   \left|\int_t^{t^*} a_1(s)\|\sigma(s)\tran v^*(s)\|^2 ds\right|\le \sqrt{\int_t^{t^*} \|\sigma(s)\tran v^*(s)\|^2 ds}\sqrt{\int_t^{t^*} a_1(s)^2\|\sigma(s)\tran v^*(s)\|^2 ds}.
 \end{align*}
 Combining the above with \eqref{eq:a1Eq1} and \eqref{eq:a1Eq2}, we obtain
 \begin{align*}
  \left|\int_t^{t^*} a_1(s)^2\|\sigma(s)\tran v^*(s)\|^2ds-  2\lambda\sqrt{\int_t^{t^*} a_1(s)^2 \|\sigma(s)\tran v^*(s)\|^2 ds}\right|\\
  \le \sqrt{\int_t^{t^*} \|\sigma(s)\tran v^*(s)\|^2 ds}\sqrt{\int_t^{t^*} a_1(s)^2\|\sigma(s)\tran v^*(s)\|^2 ds}.
 \end{align*}
 Because of \eqref{eq:a1Eq0}, we can divide $\sqrt{\int_t^{t^*} a_1(s)^2\|\sigma(s)\tran v^*(s)\|^2 ds}$ on both sides of the above inequality to obtain
 \begin{align*}
  \left|\sqrt{\int_t^{t^*} a_1(s)^2\|\sigma(s)\tran v^*(s)\|^2ds}-  2\lambda\right|
  \le \sqrt{\int_t^{t^*} \|\sigma(s)\tran v^*(s)\|^2 ds}.
 \end{align*}
 Sending $t$ to $t^*$ on both sides of the above inequality and noting that $\lambda \neq 0$ because $\alpha\neq 1/2$, we arrive at contradiction.

 We already proved that $\underline{t}\ge t^*$. On the other hand, Lemma \ref{le:NeccNonDegAlphaLarge} shows that if $\hat{\bm{\pi}}\in\mathbb{A}$ is an equilibrium strategy for $\alpha \in (1/2,1)$, we must have $\underline{t}=0$ and $t^*=T$. Thus, any $\hat{\bm{\pi}}\in\mathbb{A}$ is not an equilibrium strategy for $\alpha \in (1/2,1)$. For $\alpha\in (0,1/2)$, by Lemma \ref{le:NeccNonDegAlphaLarge}, $\underline{t}\ge t^*$ implies $\underline{t}=T$, which is the case if and only if $\hat{\bm{\pi}}(s,x_0)=0,s\in[0,T)$, i.e., $\hat{\bm{\pi}}$ takes the form in \eqref{tail risk equilibrium Strategy}. On the other hand, we already proved that any $\hat{\bm{\pi}}$ as given by \eqref{tail risk equilibrium Strategy} is an equilibrium strategy. The proof then completes.\halmos
\end{pfof}

\begin{pfof}{Proposition \ref{prop:NeccAlphaHalf}}
  Suppose that $\hat{\bm{\pi}}\in\mathbb{A}$ is an equilibrium strategy for certain $\alpha = 1/2$. Recall $t^*$ as defined in Lemma \ref{le:QuantileDifferentiability} and $\underline{t}$ as defined in Lemma \ref{le:NeccNonDegAlphaLarge}. Lemma \ref{le:NeccNonDegAlphaLarge} shows that $\underline{t}=0$ and $t^*=T$. Then, Lemma \ref{le:EquilibriumQuantile}-(ii) and (iii) imply that
  \begin{align*}
    \frac{1}{2}\int_t^{t^*} a_1(s)\big(1-a_1(s)\big)\|\sigma(s)\tran v^*(s)\|^2ds=0,\quad t\in (0,t^*),
  \end{align*}
  which implies that $a_1(t)(1-a_1(t))=0,t\in(0,t^*)$. Lemma \ref{le:EquilibriumQuantile}-(i) shows for any $t\in (0,t^*)$, $a_1(s)\neq 0$ for some $s\in [t,t^*)$. Then we must have $a_1(t)=1,t\in (0,T)$ because $a_1\in C([0,T];\bbR)$ as shown by Lemma \ref{le:NeccPDE}-(ii). Then, \eqref{eq:NeccStrategy2} implies that
  \begin{align}\label{eq:PropNeccAlphaHalfEq1}
    \hat{\bm{\pi}}(t,x) = (a_0(t) + x)v^*(t),\quad t\in[0,T),x\in \bbR.
  \end{align}
  What remains is to prove that $a_0(t)$ is constant in $t\in [0,T]$.

According to Lemma \ref{le:NeccPDE}-(i), \eqref{eq:NeccStrategy1},  and $a_1(t)=1,t\in (0,T)$, we have  $\bbX^{x_0,\hat{\bm{\pi}}}_t=(\underline{x}(t),+\infty)$   for some $\underline{x}(t)\in \bbR$.
  Because of \eqref{eq:PropNeccAlphaHalfEq1},  Corollary 4   in \citet{HeJiang2020:LinearSDE} shows that $a_0(t)$ is increasing in $t\in(0,T]$ and $\bbX^{x_0,\hat{\bm{\pi}}}_t=(-a_0(t),+\infty), t\in (0,T]$. By the definition of $t_*$ and $\xi$, we have $t_*=\inf\{t\in[0,T]: a_0(s)=a_0(T),s\in[t,T]\}$ and $\xi=-a_0(T)$. Because $a_0(t)$ is increasing in $t\in (0,T]$, we conclude that $\bbX^{x_0,\hat{\bm{\pi}}}_t\backslash \mathbb{S}^{\hat{\bm{\pi}}}_t = (-a_0(t),+\infty),t\in(0,T]$. Now, Lemma \ref{le:NeccPDE}-(ii) and Lemma \ref{le:TransportEq}-(iii) yield that \eqref{eq:EquilibriumQuantile} holds for any $x>-a_0(t),t\in(0,T]$.

   Now, for the sake of contradiction, suppose $a_0(t)$ is not constant in $t\in [0,T]$, which is equivalent to $t_*>0$. Fix any $t\in (0,t_*)$. Then, $a_0(s)$ is not a constant in $s\in[t,t_*]$, which means that $da_0(s)$ defines a positive measure on $[t,T]$.
Applying It\^o's lemma, recalling \eqref{eq:PropNeccAlphaHalfEq1}, Lemma \ref{le:QuadProgSolution} and $\tilde Z_1$ as in \eqref{eq:Ztilde1}, and noting that $a_0\in C([0,T])$ and $a_1\equiv 1$, we derive
\begin{align*}
  d\left(\frac{a_0(s)+ X^{\hat{\bm{\pi}}}_{t,x}(s) }{\tilde Z_1(s;t)}\right) = \frac{da_0(s)}{\tilde Z_1(s;t)},\quad s\in[t,T),\; x\in \bbR.
\end{align*}
Then, we obtain
\begin{align}
  X^{\hat{\bm{\pi}}}_{t,-a_0(t)}(T) &= -a_0(T) + \int_t^T\frac{\tilde Z_1(T;t)}{\tilde Z_1(s;t)}da_0(s),\notag\\
  & = -a_0(T) + \int_t^Te^{\int_s^T \left(b(\tau)\tran v^*(\tau) - \frac{1}{2}\|\sigma(\tau)\tran v^*(\tau)\|^2d\tau\right)d\tau + \int_s^T v^*(\tau)\tran \sigma(\tau)dW(\tau) }da_0(s),\label{eq:XTwitha0}
\end{align}
where the second equality is the case due to the definition of $\tilde Z_1$ and because $a_1(s)=1$ and thus $\theta_1(s)=v^*(s),s\in[0,T)$. Then, recalling $t^*=T$, $\beta_0,\beta_1$ as defined in \eqref{eq:EquiQuantCoeff}, and that $\rho(s) = 2b(s)\tran v^*(s) - \|\sigma(s)\tran v^*(s)\|^2$, we derive
\begin{align}
  &X^{\hat{\bm{\pi}}}_{t,-a_0(t)}(T)+ a_0(t)\beta_1(t,1/2) - \beta_0(t,1/2)\notag\\
  &=-a_0(T) + \int_t^Te^{\frac{1}{2}\int_s^T\rho(\tau)d\tau + \int_s^T v^*(\tau)\tran \sigma(\tau)dW(\tau) }da_0(s)\notag\\
   &\quad +  a_0(t) e^{\frac{1}{2}\int_t^T\rho(s)ds} - \frac{1}{2}\int_t^T a_0(s)\rho(s) e^{\frac{1}{2}\int_s^T\rho(\tau)d\tau}ds\notag\\
   & = \int_t^Te^{\frac{1}{2}\int_s^T\rho(\tau)d\tau + \int_s^T v^*(\tau)\tran \sigma(\tau)dW(\tau) }da_0(s) - \int_t^T e^{\frac{1}{2}\int_s^T\rho(\tau)d\tau} da_0(s),\label{eq:PropNeccAlphaHalfEq3}
\end{align}
where second equality is the case because
\begin{align*}
  d\left(a_0(s) e^{\frac{1}{2}\int_s^T\rho(\tau)d\tau}\right) = e^{\frac{1}{2}\int_s^T\rho(\tau)d\tau} da_0(s) - \frac{1}{2}\rho(s)e^{\frac{1}{2}\int_s^T\rho(\tau)d\tau} a_0(s)ds.
\end{align*}

Next, we prove
\begin{align}\label{eq:PropNeccAlphaHalfEq2}
  \prob\big(X^{\hat{\bm{\pi}}}_{t,-a_0(t)}(T)\le -a_0(t)\beta_1(t,1/2) + \beta_0(t,1/2)\big)<1/2.
\end{align}
According to \eqref{eq:PropNeccAlphaHalfEq3}, one can see that \eqref{eq:PropNeccAlphaHalfEq2} is equivalent to
\begin{align}\label{eq:PropNeccAlphaHalfEq4}
  \prob\left(\int_t^T\left(e^{M(s)} - 1\right)d\mu(s)\le 0\right)<1/2,
\end{align}
where $M(s):=\int_s^T v^*(\tau)\tran \sigma(\tau) dW(\tau)$ and $d\mu(s):=e^{\frac{1}{2}\int_s^T\rho(\tau)d\tau}da_0(s)$, $s\in [0, T]$. Because $da_0(s)$ is a positive, atomless measure on $[t,T]$, so is $d\mu(s)$.
    In the following, we first prove that
\begin{align}\label{eq:PropNeccAlphaHalfEq5}
  \prob\left(\int_t^T\left(e^{M(s)} - 1\right)d\mu(s)>0,\int_t^TM(s)ds\le 0\right)>0.
\end{align}
To this end, we first construct a path $m(s),s\in[t,T]$ of the stochastic process $M(s),s\in[0,T]$ such that
\begin{align}\label{eq:PropNeccAlphaHalfEq6}
  \int_t^T\left(e^{m(s)} - 1\right)d\mu(s)>0,\quad \int_t^Tm(s)d\mu(s)<0.
\end{align}
The construction is as follows: First, there exists $\zeta_1\in \bbR$ such that $\int_t^T(T-s)(\zeta_1-s)d\mu(s) = 0$.
Because $e^{x}-1>x$ for any $x\neq 0$ and because $d\mu(s)$ is a positive, atomless measure on $[t,T]$, we have $\int_t^T\left(e^{(T-s)(\zeta_1-s)} - 1\right)d\mu(s)>0$. Thus, there exists $\zeta_2<\zeta_1$ such that $\int_t^T\left(e^{(T-s)(\zeta_2-s)} - 1\right)d\mu(s)>0$. As a result,
\begin{align*}
  \int_t^T(T-s)(\zeta_2-s)d\mu(s) = \int_t^T(T-s)(\zeta_1-s)d\mu(s) + (\zeta_2-\zeta_1)\int_t^T(T-s)d\mu(s)<0,
\end{align*}
where the inequality is the case because $\zeta_2<\zeta_1$ and $\int_t^T(T-s)d\mu(s)>0$. Thus, setting $m(s):=(T-s)(\zeta_2-s),s\in[0,T]$, \eqref{eq:PropNeccAlphaHalfEq6} holds. As a result, there exists $\epsilon>0$ such that
\begin{align}\label{eq:PropNeccAlphaHalfEq7}
  \int_t^T\left(e^{m(s)-\epsilon} - 1\right)d\mu(s)>0,\quad \int_t^T(m(s)+\epsilon)d\mu(s)<0.
\end{align}

Define $Y(s):=M(t)-M(s) = \int_t^sv^*(\tau)\tran\sigma(\tau)dW(\tau),s\in[t,T]$ and
\begin{align*}
 f(s):=m(t)-m(s) = (T-t)(\zeta_2-t)-(T-s)(\zeta_2-s),s\in[0,T],
\end{align*}
so $f$ can be considered to be a path of $Y$. Defining an absolutely continuous function $w(s),s\in[0,T]$ by
\begin{align*}
  w(t)=0,\quad w'(s) = \frac{\sigma(s)\tran v^*(s)}{\|\sigma(s)\tran v^*(s)\|^2}f'(s),\; s\in[t,T],
\end{align*}
so we have $f(s) =\int_t^s v^*(z)\tran \sigma(z) w'(z)dz,s\in[0,T]$. Then, Lemma 2 of \citet{HeJiang2020:LinearSDE} shows that $f(s),s\in[t,T]$ is in the support of $Y(s), s\in[t,T]$, with the latter being viewed as a random variable taking values in the space of continuous functions on $[t,T]$ with the maximum norm.
As a result, for any $\epsilon>0$, in particular the one satisfying \eqref{eq:PropNeccAlphaHalfEq7}, we have
\begin{align*}
  &\prob\left(\sup_{s\in[t,T]}|M(s)-m(s)|<\epsilon\right)=\prob\left(\sup_{s\in[t,T]}|Y(T)-Y(s)-(f(T)-f(s))|<\epsilon\right)\\
   &\ge  \prob\left(\sup_{s\in[t,T]}|Y(s)-f(s)|<\epsilon/2\right)>0.
\end{align*}
Combining the above with \eqref{eq:PropNeccAlphaHalfEq7}, we derive
\begin{align*}
  &\prob\left(\int_t^T\left(e^{M(s)} - 1\right)d\mu(s)>0,\int_t^TM(s)ds\le 0\right)\ge \prob\left(\sup_{s\in[t,T]}|M(s)-m(s)|<\epsilon\right)>0,
\end{align*}
i.e., \eqref{eq:PropNeccAlphaHalfEq5} is proved.

Now, recalling that \eqref{eq:PropNeccAlphaHalfEq5} and noting that  $\int_t^T\left(e^{M(s)} - 1\right)d\mu(s)\ge \int_t^TM(s)d\mu(s)$ because $e^x-1\ge x$ for any $x\in \bbR$, we derive
\begin{align*}
  &\prob\left(\int_t^T\left(e^{M(s)} - 1\right)d\mu(s)\le 0\right) \\
  =& \prob\left(\int_t^TM(s)d\mu(s)\le 0\right)  - \prob\left(\int_t^TM(s)d\mu(s)\le 0, \int_t^T\left(e^{M(s)} - 1\right)d\mu(s)>0\right)\\
  <&\prob\left(\int_t^TM(s)d\mu(s)\le 0\right) = 1/2,
\end{align*}
where the last equality is the case because $\int_t^TM(s)d\mu(s)$ is a normal random variable with zero mean. Thus,
\eqref{eq:PropNeccAlphaHalfEq2} is proved.

 Because $t<t_*$,  Corollary 3 and Corollary 2-(ii) in \citet{HeJiang2020:LinearSDE} shows that $  G^{\hat{\bm{\pi}}}(t,x,\alpha)$ is continuous in $(x,\alpha)\in \bbR \times (0, 1)$. Then, we conclude from \eqref{eq:PropNeccAlphaHalfEq2} and  Corollary 3 and Corollary 2-(ii) in \citet{HeJiang2020:LinearSDE} that $ G^{\hat{\bm{\pi}}}(t,- a_0(t),\frac{1}{2})-[\beta_0(t,\frac{1}{2}) - a_0(t)\beta_1(t,\frac{1}{2}) ]>0$, and  there exists $\delta_t>0$ such that
\begin{align}\label{MedianLimitIn a0 ForT}
G^{\hat{\bm{\pi}}}(t,x,1/2)-\big(\beta_0(t,1/2) +x\beta_1(t,1/2) \big) >0,\quad \forall x\in ( -a_0(t)-\delta_t, -a_0(t) +\delta_t ).
\end{align}
 Because have shown that \eqref{eq:EquilibriumQuantile} holds for any $x>-a_0(t)$,
we arrive at contradiction. Then, $a_0(t)$ must be a constant on $[0,T]$, i.e., $\hat{\bm{\pi}}$ must be given by \eqref{eq:EquiStrategyPropTarget} for some $\xi<x_0$.

On the other hand, we already proved that $\hat{\bm{\pi}}$ as given by \eqref{eq:EquiStrategyPropTarget} for any $\xi<x_0$ is an equilibrium strategy. The proof then completes.\halmos
\end{pfof}

\subsection{Other Proofs}

\begin{pfof}{Theorem \ref{prop:BoundaryPoint}}
   Fix any $t\in (0,T)$. Because $\bbX_s^{x_0,\hat{\bm{\pi}}}=(\xi,+\infty)$ for $s\in (0,T)$, because $\tilde{\bm{\pi}}$ agrees with $\hat{\bm{\pi}}$ for $x\in \bbX_s^{x_0,\hat{\bm{\pi}}},s\in[0,T)$, and because $\tilde{\bm{\pi}}(s,x)$ is continuous in $x$, we have $X^{\tilde{\bm{\pi}}}_{t,\xi}(s)=\xi,s\in[t,T]$. As a result, $G^{\tilde{\bm{\pi}}}(t,\xi,1/2) = \xi$. Recall that $v^*(t)\tran b(t)>0$, so there exists $\epsilon_0\in (T-t)$ such that $v^*(t)\tran b(s)>0,s\in [t,t+\epsilon_0]$. Then, for any $\epsilon\in (0,\epsilon_0)$, we have
\begin{align*}
  \prob\left(X_{t,\xi}^{\tilde{\bm{\pi}}_{t,\epsilon,v^*(t)}}(T)>\xi\right)\ge \prob\left(X_{t,\xi}^{v^*(t)}(t+\epsilon)>\xi\right)>1/2,
\end{align*}
where the first inequality is the case because on $\big\{X_{t,\xi}^{v^*(t)}(t+\epsilon)>\xi\big\}$, we have $X_{t,\xi}^{\tilde{\bm{\pi}}_{t,\epsilon,v^*(t)}}(T) = X_{t+\epsilon,X_{t,\xi}^{v^*(t)}(t+\epsilon)}^{\hat{\bm{\pi}}}(T)>\xi$ and the second inequality is the case because $\int_t^{t+\epsilon}v^*(t)\tran b(s)ds>0$. As a result, $G^{\tilde{\bm{\pi}}_{t,\epsilon,v^*(t)}}(t,\xi,1/2)>\xi$ due to the definition of right-continuous quantile functions. \halmos
\end{pfof}

\begin{pfof}{Proposition \ref{prop:CompwithFracKelly}}
Straightforward calculation leads to
 \begin{align}\label{eq:MedianFractionalKelly}
   G^{\bm{\pi}_{\gamma-\mathrm{Kelly}}}(t,x,1/2) = xe^{(\gamma-\frac{1}{2}\gamma^2)\int_t^T\|\sigma(s)\tran v^*(s)\|^2ds},\quad x>0,\; t\in [0,T),\;\gamma>0.
 \end{align}
 For any $s\in [0,T)$, by Assumption \ref{as:Feasibility2}, we have $\|\sigma(s)\tran v^*(s)\|=b(s)\tran v^*(s) >0$. Because $\gamma \in (0,1)$, we have $\gamma-\frac{1}{2}\gamma^2 \in (0, 1)$.
As a result, we conclude that $a_{t,\gamma}$ is well defined and strictly larger than 1.

Finally, the comparison of $G^{\hat{\bm{\pi}}_\xi}(t,x,1/2)$ and $G^{\bm{\pi}_{\gamma-\mathrm{Kelly}}}(t,x,1/2)$ is straightforward. \halmos
\end{pfof}

\begin{pfof}{Proposition \ref{eq:EUMaximization}}
It is a standard result in portfolio selection that $\hat{\bm{\pi}}_\xi$ maximizes $\expect[\ln (X^{\bm{\pi}}(T)-\xi)]$; see for instance \citet{KaratzasIShreveS:98momf}.\halmos
\end{pfof}

\begin{pfof}{Proposition \ref{prop:PreCommitted}}
Given a portfolio insurance level $\xi$, we consider more generally the pre-committed strategy planned by the agent at time $t$ with wealth level $x_t$; i.e., the portfolio strategy that maximizes the median, conditional at time $t$ with wealth level $x_t$, of the terminal wealth at time $T$. Recall that the risk-free rate $r$ is set to be zero for simplicity in our discussion.

According to Section 4 of \citet{HeXDZhouXY:2011PortfolioChoiceviaQuantiles}, the problem of finding the pre-committed strategy is equivalent to the following:
\begin{align}\label{eq:MedianMaxStatic}
  \begin{array}{cl}
    \underset{X}{\mathrm{Max}} & G_{X}(1/2)\\
    \text{Subject to} & \expect_t\left[\big(\hat \rho(T)/\hat \rho(t)\big) X\right]\le x_t,\quad X\ge \xi,
  \end{array}
\end{align}
where $X$ stands for a possible wealth profile attained at time $T$, $G_X(1/2)$ stands for the median of $X$, $\expect_t$ stands for the expectation conditional on the information at time $t$, and
\begin{align*}
  \hat\rho(s):&=e^{-\int_0^s\frac{1}{2}\|\hat\theta(\tau)\|^2d\tau- \int_0^s \hat\theta(\tau)\tran dW(\tau)},\quad s\in [t,T],\\
  \hat\theta(s):&=\underset{\theta\in \bbR^d}{\mathrm{argmin}}\{\|\theta\|^2:\sigma(s)\theta-b(s)\in K^*\},\quad s\in [t,T),
\end{align*}
where $K:=\{\pi \in \mathbb{R}^m:Q\pi\ge 0\}$ and $K^*$ is the dual cone of $K$, i.e., $K^*:=\{y\in \mathbb{R}^m:y\tran \pi \ge 0,\forall \pi \in K\}$. It is known that $\hat \rho(t)/\hat \rho (s),s\in [t,T]$ is the wealth process of the optimal portfolio under expected logarithmic utility maximization with initial wealth at time $t$ to be 1, and this optimal portfolio is indeed the Kelly portfolio \eqref{eq:Kelly}; see for instance \citet{KaratzasIShreveS:98momf}. Therefore, we immediately conclude that $\hat \theta(s)=\sigma(s)\tran v^*(s)$ and consequently
\begin{align}
  \hat\rho(s):&=e^{-\int_0^s\frac{1}{2}\|\sigma(\tau)\tran v^*(\tau)\|^2d\tau - \int_0^sv^*(\tau)\tran \sigma(\tau) dW(\tau)},\quad s\in [t,T].
\end{align}

By changing of variable, $\tilde X= X-\xi$, problem \eqref{eq:MedianMaxStatic} is equivalent to Problem (2) in \citet{HE2021} with $\rho$, $X$, $m$, and $x$ therein set to be $\hat\rho(T)/\hat \rho(t)$, $\tilde X$, the Dirac measure at $1/2$, and $x_t-\xi$, respectively. Then, Corollary 1 in \citet{HE2021} yields that the optimal solution to \eqref{eq:MedianMaxStatic} is $X^* = \xi + k^*_t\mathbf 1_{\hat \rho(T)/\hat \rho(t)\le \beta^*_t}$, where $\beta^*_t:=F^{-1}_{\hat \rho(T)/\hat \rho(t)}(1/2)$ and $k_t^*:= \frac{x_t-\xi}{\expect[\hat \rho(T)/\hat \rho(t) \mathbf 1_{\hat \rho(T)/\hat \rho(t) \le \beta^*_t}]}$. Denoting $Z(t):=\int_0^tv^*(s)\tran \sigma(s)dW(s),t\in[0,T]$, we have $\log(\hat \rho(t)) = -Z(t) - \frac{1}{2}\int_0^t\|\sigma(s)\tran v^*(s)\|^2ds$. As a result, we can rewrite the optimal solution to \eqref{eq:MedianMaxStatic} as follows:
\begin{align*}
  X^* &= \xi + \frac{x_t-\xi}{\expect\left[e^{-(Z(T)-Z(t)) - \frac{1}{2}\int_t^T\|\sigma(s)\tran v^*(s)\|^2ds} \mathbf 1_{ Z(T)-Z(t)\ge 0}\right]}\mathbf 1_{Z(T)-Z(t)\ge 0}\\
  & = \xi + \frac{x_t-\xi}{\Phi\left(-\sqrt{\int_t^T\|\sigma(s)\tran v^*(s)\|^2ds}\right)}\mathbf 1_{Z(T)-Z(t)\ge 0}.
\end{align*}

Denote by $\bm{\pi}_{t,\mathrm{pc}}$ the pre-committed strategy, i.e., the strategy that attains terminal wealth is equal to $ X^*$. Then, for any $s\in [t,T]$, we have
\begin{align}
  X^{\bm{\pi}_{t,\mathrm{pc}}}_{t,x_t}(s) &= \expect_s\left[\big(\hat\rho(T)/\hat \rho(s)\big)X^*\right]\notag \\
  &=\xi + \frac{x_t-\xi}{\Phi\left(-\sqrt{\int_t^T\|\sigma(s)\tran v^*(s)\|^2ds}\right)}\expect_s\left[e^{-(Z(T)-Z(s)) - \frac{1}{2}\int_s^T\|\sigma(\tau)\tran v^*(\tau)\|^2d\tau}\mathbf 1_{Z(T)-Z(t)\ge 0}\right]\notag \\
  & = \xi + \frac{x_t-\xi}{\Phi\left(-\sqrt{\int_t^T\|\sigma(s)\tran v^*(s)\|^2ds}\right)}\Phi\left(d(s,Z(s)-Z(t))\right)\label{eq:PreComTimetWealth}
\end{align}
where
\begin{align*}
  d(s,z):=\frac{z -\int_s^T\|\sigma(\tau)\tran v^*(\tau)\|^2d\tau }{\sqrt{\int_s^T\|\sigma(\tau)\tran v^*(\tau)\|^2d\tau}}.
\end{align*}
By It\^o's lemma and straightforward calculation, we obtain
\begin{align}\label{eq:PreComStrategyGeneral}
 \frac{ \bm{\pi}_{t,\mathrm{pc}}\left(s,X^{\bm{\pi}_{t,\mathrm{pc}}}_{t,x_t}(s)\right)}{X^{\bm{\pi}_{t,\mathrm{pc}}}_{t,x_t}(s)-\xi} = \frac{1}{\sqrt{\int_s^T\|\sigma(\tau)\tran v^*(\tau)\|^2d\tau}} \frac{\Phi'\left(d(s,Z(s)-Z(t))\right)}{\Phi\left(d(s,Z(s)-Z(t))\right)}v^*(s),\quad s\in [t,T).
\end{align}
For each $s\in (t,T]$, $d(s,z)$ is continuous, strictly increasing in $z$ and
 \begin{align*}
   \lim_{z\downarrow -\infty}d(s,z)=-\infty,\quad \lim_{z\uparrow +\infty}d(s,z)=+\infty.
 \end{align*}
 As a result, $ X^{\bm{\pi}_{t,\mathrm{pc}}}_{t,x_t}(s)$ is continuous, strictly increasing in $Z(s)-Z(t)$, and
\begin{align*}
  \lim_{Z(s)-Z(t)\downarrow -\infty}X^{\bm{\pi}_{t,\mathrm{pc}}}_{t,x_t}(s) = \xi,\quad \lim_{Z(s)-Z(t)\uparrow +\infty}X^{\bm{\pi}_{t,\mathrm{pc}}}_{t,x_t}(s) = \xi + \frac{x_t-\xi}{\Phi\left(-\sqrt{\int_t^T\|\sigma(s)\tran v^*(s)\|^2ds}\right)}.
\end{align*}
Denote the inverse function of the relation between $X^{\bm{\pi}_{t,\mathrm{pc}}}_{t,x_t}(s)$ and $Z(s)-Z(t)$ as $\bm{z}_t(s,X^{\bm{\pi}_{t,\mathrm{pc}}}_{t,x_t}(s))$. Then, we have
\begin{align*}
 \frac{ \bm{\pi}_{t,\mathrm{pc}}\left(s,x\right)}{x-\xi} = \frac{1}{\sqrt{\int_s^T\|\sigma(\tau)\tran v^*(\tau)\|^2d\tau}} \frac{\Phi'\left(d(s, \bm{z}_t(s,x))\right)}{\Phi\left(d(s, \bm{z}_t(s,x))\right)}v^*(s),\quad s\in [t,T),x \in (\xi, \xi+k_t^*).
\end{align*}
Moreover, $\big(\Phi'(d)/\Phi(d)\big)' = [-d\Phi'(d)\Phi(d) - \Phi'(d)^2]/\Phi(d)^2<0$, where the inequality is the case because $-d\Phi(d)<\Phi'(d)$ for any $d\in \bbR$. As a result,
\begin{align*}
  \Delta_{t,\mathrm{pc}}(s,x):=\frac{1}{\sqrt{\int_s^T\|\sigma(\tau)\tran v^*(\tau)\|^2d\tau}} \frac{\Phi'\left(d(s, \bm{z}_t(s,x))\right)}{\Phi\left(d(s, \bm{z}_t(s,x))\right)}
\end{align*}
is strictly decreasing in $x$,
\begin{align*}
  \lim_{x\downarrow \xi}\Delta_{t,\mathrm{pc}}(s,x) = \lim_{d\downarrow -\infty}\frac{1}{\sqrt{\int_s^T\|\sigma(\tau)\tran v^*(\tau)\|^2d\tau}} \frac{\Phi'\left(d\right)}{\Phi\left(d\right)} = +\infty,
\end{align*}
and, similarly, $\lim_{x\uparrow \xi+k_t^*}\Delta_{t,\mathrm{pc}}(s,x) = 0$.

Specializing the above at $t=0$, we complete the proof of part (i) of the Proposition. For part (ii), Corollary 1 in \citet{HE2021}  yields that the optimal median of the terminal wealth is $\xi+k^*_0$, i.e.,
\begin{align*}
  G^{\bm{\pi}_{0,\mathrm{pc}}}(0,x_0,1/2) = \xi + k^*_0 = \xi + \frac{x_0-\xi}{\Phi\left(-\sqrt{\int_0^T\|\sigma(s)\tran v^*(s)\|^2ds}\right)}.
\end{align*}
Consider the function $f(x):=\Phi(-x)-e^{-x^2/2}, x\ge 0$. We have $f'(x) = e^{-x^2/2}\big(x - (2\pi)^{-1/2}\big)$, so $f$ is strictly decreasing on $[0,(2\pi)^{-1/2}]$ and strictly increasing on $[(2\pi)^{-1/2},+\infty)$. Because $f(0) = -1/2<0$ and $\lim_{x\uparrow +\infty}f(x)=0$, we conclude that $\Phi(-x)<e^{-x^2/2}$ for any $x\ge 0$. As a result, $G^{\bm{\pi}_{0,\mathrm{pc}}}(0,x_0,1/2)>G^{\hat{\bm{\pi}}_{\xi}}(0,x_0,1/2)$.

For any $t\in (0,T)$ and $x\in (\xi, \xi + k^*_0)$, we have
\begin{align*}
 \prob_t(X^{\bm{\pi}_{0,\mathrm{pc}}}_{0,x_0}(T)=\xi) &=  1- \prob_t(X^{\bm{\pi}_{0,\mathrm{pc}}}_{0,x_0}(T)=\xi + k^*_0)= \prob_t(Z(T)< 0)\\
 & = \prob_t(Z(T)-Z(t)< -Z(t))\begin{cases}
   > 1/2, & Z(t)< 0,\\
   \le 1/2, & Z(t)\ge 0.
 \end{cases}
\end{align*}
 Also note that $Z(t)\ge 0$ if and only if
 \begin{align*}
    X^{\bm{\pi}_{0,\mathrm{pc}}}_{0,x_0}(t) \ge \xi + \frac{x_0-\xi}{\Phi\left(-\sqrt{\int_0^T\|\sigma(s)\tran v^*(s)\|^2ds}\right)}\Phi\left(d(t,0)\right) = x_0.
 \end{align*}
 Therefore,
 \begin{align*}
   G^{\bm{\pi}_{0,\mathrm{pc}}}(t,x,1/2) = \begin{cases}
      \xi + k^*_0, & x\in[ x_0, \xi + k_0^*),\\
      \xi, & x\in (\xi, x_0).
   \end{cases}
 \end{align*}
Then, Proposition \ref{prop:PortfolioInsurance} yields \eqref{eq:EquiPreComMedianComparison} immediately. Recall that $\Phi(-x)<e^{-x^2/2}$ for any $x\ge 0$, so
\begin{align*}
  1<\frac{e^{-\frac{1}{2}\int_0^T\|\sigma(s)\tran v^*(s)\|^2ds}}{\Phi\left(-\sqrt{\int_0^T\|\sigma(\tau)\tran v^*(\tau)\|^2d\tau}\right)} \le \tilde a_t<\frac{1}{\Phi\left(-\sqrt{\int_0^T\|\sigma(\tau)\tran v^*(\tau)\|^2d\tau}\right)} = \frac{k_0^*}{x_0-\xi}.
\end{align*}
The proof then completes. \halmos
\end{pfof}

\begin{pfof}{Proposition \ref{prop:NaiveStrategy}}
  By definition,
  \begin{align*}
    \bm{\pi}_{\mathrm{na}}(t,x) = \bm{\pi}_{t,\mathrm{pc}}(t,X_{t,x}^{\bm{\pi}_{t,\mathrm{pc}}}(t)) = \frac{1}{\sqrt{\int_t^T\|\sigma(\tau)\tran v^*(\tau)\|^2d\tau}} \frac{\Phi'\left(d(t,0)\right)}{\Phi\left(d(t,0)\right)}v^*(t)(x-\xi),
  \end{align*}
  where the second equality is the case due to \eqref{eq:PreComStrategyGeneral}. Recall that $d(t,0) =- \sqrt{\int_t^T\|\sigma(\tau)\tran v^*(\tau)\|^2d\tau}$, we immediately obtain \eqref{eq:NaiveStrategy}.

  Because $\Phi'(d)>d\Phi(-d)$ for any $d> 0$, we conclude that $\Delta_{\mathrm{na}}(t)>1, t\in [0,T)$. Furthermore, it is easy to see that $\lim_{t\rightarrow T}\Delta_{\mathrm{na}}(t)=+\infty$.

  Finally, fix any $t\in [0,T)$ and $x>\xi$ and consider $\tau \in (t,T)$. Then, we have
  \begin{align*}
    X_{t,x}^{\bm{\pi}_{\mathrm{na}}}(\tau) &=\xi + (x-\xi)e^{\int_t^\tau \Delta_{\mathrm{na}}(s)b(s)\tran v^*(s)ds - \frac{1}{2}\int_t^\tau \Delta_{\mathrm{na}}(s)^2 \|\sigma(s)\tran v^*(s)\|^2ds + \int_t^{\tau}\Delta_{\mathrm{na}}(s)v^*(s)\tran \sigma(s)dW(s) }\\
     & = \xi + (x-\xi)e^{\int_t^\tau \left(\Delta_{\mathrm{na}}(s)-\frac{1}{2}\Delta_{\mathrm{na}}(s)^2\right) \|\sigma(s)\tran v^*(s)\|^2ds+ \int_t^{\tau}\Delta_{\mathrm{na}}(s)v^*(s)\tran \sigma(s)dW(s)},
  \end{align*}
  where the second equality is the case due to Lemma \ref{le:QuadProgSolution}. As a result, the median of $ X_{t,x}^{\bm{\pi}_{\mathrm{na}}}(\tau)$, denoted as $G^{\bm{\pi}_{\mathrm{na}}}(t,x,1/2;\tau)$, is
  \begin{align*}
    G^{\bm{\pi}_{\mathrm{na}}}(t,x,1/2;\tau) = \xi + (x-\xi)e^{\int_t^\tau \left(\Delta_{\mathrm{na}}(s)-\frac{1}{2}\Delta_{\mathrm{na}}(s)^2\right) \|\sigma(s)\tran v^*(s)\|^2ds}.
  \end{align*}
  Because $\inf_{s\in[0,T)}\|\sigma(s)\tran v^*(s)\|>0$ as shown in Lemma \ref{le:QuadProgSolution}, it is easy to see that
  \begin{align*}
    \int_t^T\Delta_{\mathrm{na}}(s)ds<+\infty, \quad \int_t^T\Delta_{\mathrm{na}}(s)^2ds=+\infty.
  \end{align*}
  As a result, $\lim_{\tau\uparrow T}G^{\bm{\pi}_{\mathrm{na}}}(t,x,1/2;\tau) = \xi$. \halmos
\end{pfof}

\subsection{Proof of Theorem \ref{th:QuantileMaximizationMultiple}}
The remainder of this session is devoted to the proof of Theorem \ref{th:QuantileMaximizationMultiple}.
Similar to the proof of Theorem \ref{th:EquilibriumLowerLimit}, we present the proof of Theorem \ref{th:QuantileMaximizationMultiple} by summarizing important intermediate steps of the proof as lemmas and relegate all proofs in Section \ref{subsubse:proofsMulti}.

\begin{lemma}\label{reachable state}
Fixed  $T_1 \in [0, T)$.  Consider $\hat{\bm{\pi}}\in \mathbb{A}$ and for any $x\in \bbR$ and $t\in [T_1,T]$, denote by $\bbX_t^{x,T_1,\hat{\bm{\pi}}}$ the set of reachable states of $X^{\hat{\bm{\pi}}}_{T_1,x}(t)$. Then, for any $x_1\in \bbX_{T_1}^{x_0,\hat{\bm{\pi}}}$, we have $\bbX_t^{x_1,T_1,\hat{\bm{\pi}}}\subseteq  \bbX_t^{x_0,\hat{\bm{\pi}}}$  for any $t\in [T_1, T]$.
\end{lemma}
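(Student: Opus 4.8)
The plan is to exploit the linearity of the wealth equation \eqref{eq:WealthSDEGeneral} under an affine strategy $\hat{\bm{\pi}}(s,x)=\theta_0(s)+\theta_1(s)x$, together with the flow (Markov) property of the solution. First I would record the structural fact that, for $t\ge T_1$, the solution started at $(T_1,y)$ is affine in its initial value: there exist random variables $A(t)>0$ and $B(t)$, both measurable with respect to the increments $\{W(s)-W(T_1):s\in[T_1,t]\}$ and hence independent of $\mathcal{F}_{T_1}$, and neither depending on $y$, such that $X^{\hat{\bm{\pi}}}_{T_1,y}(t)=A(t)\,y+B(t)$. The flow property then gives $X^{\hat{\bm{\pi}}}_{0,x_0}(t)=A(t)\,Y+B(t)$ for $t\ge T_1$, where $Y:=X^{\hat{\bm{\pi}}}_{0,x_0}(T_1)$ is $\mathcal{F}_{T_1}$-measurable and therefore independent of $(A(t),B(t))$. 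These facts follow from the explicit solution of the linear SDE and the linear-SDE theory of \citet{HeJiang2020:LinearSDE}.

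Second, writing $\mathbb{S}:=\mathbb{S}_{X^{\hat{\bm{\pi}}}_{0,x_0}(t)}$, $\mathbb{S}_{x_1}:=\mathbb{S}_{X^{\hat{\bm{\pi}}}_{T_1,x_1}(t)}$, and letting $\mathbb{S}_Y$ denote the support of $Y$, I would prove the support inclusion $\mathbb{S}_{x_1}\subseteq\mathbb{S}$. Fix $z\in\mathbb{S}_{x_1}$ and $\delta>0$. Since $x_1\in\bbX_{T_1}^{x_0,\hat{\bm{\pi}}}\subseteq\mathbb{S}_Y$, for every $M>0$ we have $\prob(|Y-x_1|<\delta/(2M))>0$, and by independence
\[
  \prob\big(X^{\hat{\bm{\pi}}}_{0,x_0}(t)\in B_\delta(z)\big)\ge \prob\big(A(t)x_1+B(t)\in B_{\delta/2}(z),\,A(t)<M\big)\,\prob\big(|Y-x_1|<\tfrac{\delta}{2M}\big).
\]
Choosing $M$ large enough that the first factor is positive (possible since $z\in\mathbb{S}_{x_1}$ and $A(t)<\infty$ almost surely) shows $\prob(X^{\hat{\bm{\pi}}}_{0,x_0}(t)\in B_\delta(z))>0$; as $\delta$ is arbitrary, $z\in\mathbb{S}$. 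Moreover, the support at a fixed time is connected, because the support theorem for linear SDEs in \citet{HeJiang2020:LinearSDE} represents it as the closure of the continuous image of a connected set of control paths; hence $\mathbb{S}$, $\mathbb{S}_{x_1}$, and $\mathbb{S}_Y$ are all intervals (possibly degenerate to a point).

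Third I would pass from supports to reachable sets, i.e. to $\bbX_t^{x_0,\hat{\bm{\pi}}}=\mathrm{int}(\mathbb{S})\cup\{\text{thick boundary points}\}$. If $z\in\mathrm{int}(\mathbb{S}_{x_1})$, then monotonicity of the interior under $\mathbb{S}_{x_1}\subseteq\mathbb{S}$ gives $z\in\mathrm{int}(\mathbb{S})\subseteq\bbX_t^{x_0,\hat{\bm{\pi}}}$. The remaining case—$z$ a thick boundary point of $\mathbb{S}_{x_1}$—is the crux. Since $\mathbb{S}_{x_1}$ is an interval, $\partial\mathbb{S}_{x_1}$ is a finite set of endpoints, so the thickness condition forces $z$ to be an atom: $\prob(A(t)x_1+B(t)=z)>0$. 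On this positive-probability, $(A,B)$-measurable event we have $X^{\hat{\bm{\pi}}}_{0,x_0}(t)=z+A(t)(Y-x_1)$ with $A(t)>0$, and I would split on the location of $x_1$ in $\mathbb{S}_Y$. If $x_1\in\mathrm{int}(\mathbb{S}_Y)$, then $Y-x_1$ takes both signs with positive probability arbitrarily close to $0$, so by independence both $(z-\delta,z)$ and $(z,z+\delta)$ receive positive mass for every small $\delta$; connectedness of $\mathbb{S}$ then rules out $z\in\partial\mathbb{S}$, so $z\in\mathrm{int}(\mathbb{S})$. If instead $x_1$ is a (thick) boundary point of $\mathbb{S}_Y$, then $\prob(Y=x_1)>0$, whence $\prob(X^{\hat{\bm{\pi}}}_{0,x_0}(t)=z)\ge\prob(\{A(t)x_1+B(t)=z\}\cap\{Y=x_1\})>0$, so $z$ is an atom of $X^{\hat{\bm{\pi}}}_{0,x_0}(t)$; being a support point it is either interior to $\mathbb{S}$ or a thick boundary point of it, and in either case $z\in\bbX_t^{x_0,\hat{\bm{\pi}}}$.

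The main obstacle is this last step: the thick-boundary analysis genuinely requires the affine-in-initial-value representation to track how an atom on $\partial\mathbb{S}_{x_1}$ is either smeared into $\mathrm{int}(\mathbb{S})$ (when $x_1$ is interior to $\mathbb{S}_Y$) or preserved as an atom of $X^{\hat{\bm{\pi}}}_{0,x_0}(t)$ (when $Y$ has an atom at $x_1$), and it uses connectedness of the time-$t$ support to exclude the case of $z$ being an isolated boundary point. The support inclusion and the interior case are routine by comparison.
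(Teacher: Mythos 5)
Your proposal is correct, but it takes a genuinely different route from the paper. The paper's proof is a short case analysis that leans on the linear-SDE results it cites elsewhere: (a) if the wealth process is still frozen at $x_0$ at time $T_1$, the two reachable sets coincide trivially; (b) if the affine strategy satisfies $\theta_0(s)+\theta_1(s)x_1=0$ on $[T_1,t)$, then $\bbX_t^{x_1,T_1,\hat{\bm{\pi}}}=\{x_1\}$ and one concludes by monotonicity of $\bbX_t^{x_0,\hat{\bm{\pi}}}$ in $t$ (Corollary 4 of \citet{HeJiang2020:LinearSDE}); (c) otherwise, Theorem 3 and Corollary 4 of that reference show that both $X^{\hat{\bm{\pi}}}_{0,x_0}(t)$ and $X^{\hat{\bm{\pi}}}_{T_1,x_1}(t)$ admit densities, so both reachable sets reduce to interiors of supports and the claim follows from the ``well-known'' inclusion of supports under the Markov flow. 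You instead avoid the degenerate-versus-density dichotomy entirely: you use the explicit affine representation $X^{\hat{\bm{\pi}}}_{T_1,y}(t)=A(t)y+B(t)$ with $A(t)>0$ independent of $\cF_{T_1}$, prove the support inclusion from scratch by a truncation-plus-independence estimate, handle interior points by monotonicity of interiors, and treat thick boundary points head-on by tracking how an atom of $A(t)x_1+B(t)$ is either smeared into the interior (when $x_1$ is interior to the support of $X^{\hat{\bm{\pi}}}_{0,x_0}(T_1)$) or survives as an atom of $X^{\hat{\bm{\pi}}}_{0,x_0}(t)$ (when that law has an atom at $x_1$). What your approach buys is uniformity and self-containedness: it covers the paper's cases (a) and (b) as special instances of the atom analysis and never needs the density theorem; what it costs is length, and it still imports one structural input the paper also needs, namely connectedness of the time-$t$ supports of these linear SDEs (your appeal to the support theorem, or equivalently to the structure results of \citet{HeJiang2020:LinearSDE}, is what makes the ``thick boundary point $=$ atom'' reduction and the interval argument in your case analysis legitimate). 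Both proofs are valid; yours makes explicit the mechanism (propagation of atoms under an independent positive affine map) that the paper's citation-based argument leaves implicit.
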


Lemma \ref{reachable state} shows that any state that is reachable at given future time by a wealth equation starting from an intermediate time and state that is reachable from the initial wealth is also reachable from the initial wealth. As a result, if $\hat{\bm{\pi}}$ is an intrapersonal equilibrium at the initial time, it is also an intrapersonal equilibrium at intermediate time.

  \begin{proposition}\label{prop:NeccAlphaNotHalfMultipleT}
  For $\alpha\in(0,1/2)$, $\hat{\bm{\pi}}\in\mathbb{A}$ is an intra-personal equilibrium for the multiple-time-point for $\alpha$-level quantile maximization if and only if $\hat{\bm{\pi}}$ is given by \eqref{tail risk equilibrium Strategy} for some $\theta\in C_{\mathrm{pw}}([0,T))$ taking values in $\bbR^m$. For $\alpha\in (1/2,1)$, any $\hat{\bm{\pi}}\in\mathbb{A}$ is not an intra-personal equilibrium for multiple-time-point $\alpha$-level quantile maximization.
\end{proposition}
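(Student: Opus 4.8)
The plan is to reduce the multi-time problem to the single-time analysis already carried out for Proposition \ref{prop:NeccAlphaNotHalf}, exploiting two structural facts. First, on the last interval $[T_{N-1},T)$ the criterion \eqref{ObjMultipleT} involves only $w_{N,N}=1$, so $J^{\bm{\pi}}(t,x,\alpha)=G^{\bm{\pi}}(t,x,\alpha;T)$ and Definition \ref{de:EquilibriumStrategyMultipleTime} collapses \emph{verbatim} to the single-time condition of Definition \ref{de:EquilibriumStrategy}. Second, if $\hat{\bm{\pi}}$ invests nothing for \emph{every} wealth level on a tail interval $[T_n,T)$, the wealth is frozen there, so $G^{\bm{\pi}}(t,x,\alpha;T_i)=G^{\bm{\pi}}(t,x,\alpha;T_n)$ for all $i\ge n$ and $t<T_n$; since $\sum_i w_{n,i}=1$, the criterion on $[T_{n-1},T_n)$ collapses to the single-target quantile $G^{\bm{\pi}}(t,x,\alpha;T_n)$. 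These two facts drive a backward induction over $n=N,N-1,\dots,1$ in which each interval is analysed as a single-target problem with terminal date $T_n$.

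For sufficiency when $\alpha\in(0,1/2)$, I would mimic Proposition \ref{prop:Sufficiency}. Under \eqref{tail risk equilibrium Strategy} the realized wealth satisfies $X^{\hat{\bm{\pi}}}_{0,x_0}(t)\equiv x_0$, so $G^{\hat{\bm{\pi}}}(t,x_0,\alpha;T_i)=x_0$ for every $i$ and $J^{\hat{\bm{\pi}}}=x_0$. For an admissible perturbation $\pi\neq 0$, the computation behind \eqref{process for non invest}--\eqref{prob for non invest} gives, for each target $T_i>t+\epsilon$,
\begin{align*}
\prob\big(X^{\hat{\bm{\pi}}_{t,\epsilon,\pi}}_{t,x_0}(T_i)\le x_0\big)
=\Phi\!\left(\frac{-\int_t^{t+\epsilon}b(s)\tran\pi\,ds}{\sqrt{\int_t^{t+\epsilon}\|\sigma(s)\tran\pi\|^2ds}}\right),
\end{align*}
which is \emph{independent of $T_i$} because the post-$\epsilon$ dynamics only multiply the perturbation increment by a strictly positive lognormal factor. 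For $\alpha<1/2$ this probability exceeds $\alpha$ once $\epsilon$ is small, so each $G^{\hat{\bm{\pi}}_{t,\epsilon,\pi}}(t,x_0,\alpha;T_i)\le x_0$ and hence $J^{\hat{\bm{\pi}}_{t,\epsilon,\pi}}\le x_0=J^{\hat{\bm{\pi}}}$.

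For necessity I would treat the two regimes separately. When $\alpha\in(1/2,1)$ the last interval already yields a contradiction: the reduced single-target problem on $[T_{N-1},T)$ is a single-time equilibrium, and the arguments of Lemma \ref{le:NeccNonDegAlphaLarge} and Proposition \ref{prop:NeccAlphaNotHalf} applied there force either a strictly profitable positive-drift perturbation (when the strategy vanishes along the realized path near $T$) or, when $t^{*}=T$, the identity \eqref{eq:a1Eq1}/\eqref{eq:a1Eq2}, whose left side diverges as $t\uparrow t^{*}$ since $\Phi^{-1}(\alpha)\neq 0$. When $\alpha\in(0,1/2)$ I would run the backward induction: the last-interval analysis, together with the impossibility of $\underline{t}<t^{*}$ proved exactly as in Proposition \ref{prop:NeccAlphaNotHalf} (the same divergence of \eqref{eq:a1Eq1}/\eqref{eq:a1Eq2}), forces either $\underline{t}=T$ (the form \eqref{tail risk equilibrium Strategy} on the whole horizon, and we are done) or $t^{*}\le T_{N-1}$, i.e.\ $\hat{\bm{\pi}}\equiv 0$ on $[T_{N-1},T)$ for all wealth levels; the latter freezes the wealth, collapses the criterion on $[T_{N-2},T_{N-1})$ to the single target $T_{N-1}$, and the argument repeats, eventually giving \eqref{tail risk equilibrium Strategy} throughout $[0,T)$.

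The main obstacle is making the localization rigorous. Lemmas \ref{le:NeccOpt}--\ref{le:EquilibriumQuantile} are stated under a \emph{global} single-time equilibrium, whereas here the single-target condition is available only on the current interval $[T_{n-1},T_n)$ and only at states reachable from $x_0$. One must check that their proofs invoke the equilibrium inequality solely at times and states inside that interval---which they do, since all conclusions are driven by the behaviour in a left-neighbourhood of $t^{*}$---and then restate the quantile/PDE lemmas for an arbitrary terminal date $T_n$ in place of $T$. Lemma \ref{reachable state} handles the bookkeeping: it ensures $\bbX_t^{x_1,T_{n-1},\hat{\bm{\pi}}}\subseteq\bbX_t^{x_0,\hat{\bm{\pi}}}$, so the states entering the single-target sub-problem on $[T_{n-1},T_n)$ are precisely those at which Definition \ref{de:EquilibriumStrategyMultipleTime} already supplies the needed inequality, and the wealth-freezing collapse is applied at the correct reachable states.
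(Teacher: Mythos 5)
Your proposal is correct, and its backbone is the same as the paper's: collapse the multi-time criterion to a single-target one (on $[T_{N-1},T)$ because $w_{N,N}=1$, and on earlier intervals because a strategy that vanishes identically on a tail interval freezes the wealth there), use Lemma \ref{reachable state} so that Definition \ref{de:EquilibriumStrategyMultipleTime} supplies the single-time equilibrium inequality for sub-problems started at reachable states, and then invoke the single-time results; your sufficiency argument for $\alpha\in(0,1/2)$ is exactly the paper's. Where you differ is in how the necessity dichotomy is obtained. For $\alpha\in(1/2,1)$ the paper simply cites Theorem \ref{th:EquilibriumLowerLimit}-(iii) for the sub-problem on $[T_{N-1},T]$ started at any $x_1\in\bbX_{T_{N-1}}^{x_0,\hat{\bm{\pi}}}$, which is your re-run of Lemma \ref{le:NeccNonDegAlphaLarge} and Proposition \ref{prop:NeccAlphaNotHalf} packaged as a black box. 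For $\alpha\in(0,1/2)$ the paper replaces your backward induction by a one-shot contradiction: supposing $t^*>0$ and $\underline{t}<T$ (hence $\underline{t}<t^*$ by Lemma \ref{le:NeccNonDegAlphaLarge}), it locates the target interval $(T_{i-1},T_i]$ containing $t^*$, picks $T_{i-1}'\in(\max\{T_{i-1},\underline{t}\},t^*)$, notes that on $[T_{i-1}',t^*)$ the criterion collapses to the quantile of wealth at $t^*$, and applies Theorem \ref{th:EquilibriumLowerLimit}-(ii) to the sub-problems started at two distinct points $x'\neq x''$ of the open interval $\bbX_{T_{i-1}'}^{x_0,\hat{\bm{\pi}}}$; the two forced representations $\theta(t)(x-x')$ and $\theta(t)(x-x'')$ of $\hat{\bm{\pi}}$ give $\theta\equiv 0$ on $[T_{i-1}',t^*)$, contradicting the definition of $t^*$. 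This buys two simplifications over your route: no induction over intervals (the argument works wherever $t^*$ sits), and no need to restate Lemmas \ref{le:NeccOpt}--\ref{le:EquilibriumQuantile} for an arbitrary terminal date, since Theorem \ref{th:EquilibriumLowerLimit} is already stated for a generic horizon and initial wealth, so instantiating it on $[T_{i-1}',t^*]$ performs the localization automatically. Your approach remains valid --- the limiting contradiction from \eqref{eq:a1Eq1}/\eqref{eq:a1Eq2} as $t\uparrow t^*$ is a legitimate substitute for the two-initial-states trick, and your interval-by-interval scheme is essentially what the paper itself is forced to do for the harder case $\alpha=1/2$ (Proposition \ref{prop:NeccAlphaHalfMultipleT}) --- but it carries exactly the lemma-restating burden you flag, which the paper's argument sidesteps.
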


Proposition \ref{prop:NeccAlphaNotHalfMultipleT} shows that Theorem \ref{th:QuantileMaximizationMultiple}-(ii) and (iii) are true. In the following, we deal with the remaining case $\alpha=1/2$, i.e. the case of median maximization. To this end, for any $\hat{\bm{\pi}}\in\mathbb{A}$, define
\begin{align}
    \tau^*:&=\inf\{t\in[0,T_{N-1}):\theta_0(s)=\theta_1(s)=0,\forall s\in[t,T_{N-1})\},\label{eq:tau0}\\
    \tau_*:&=\inf\{t\in[0,\tau^*):\theta_0(s)+\xi_1\theta_1(s)=0,\forall s\in[t,\tau^*)\text{ and some }\xi_1 \in \bbR\}.\label{eq:tau0prime}
  \end{align}
    Then, there exists unique $\xi_1\in \bbR$ such that $\theta_0(s)+\xi_1 \theta_1(s)=0,\forall s\in[\tau_*,\tau^*)$. Denote \begin{align}\label{eq:SingularPointsNew}
     \tilde{\mathbb{S}}^{\hat{\bm{\pi}}}_t=\emptyset,\; t\in [0,\tau_*),\quad  \tilde{\mathbb{S}}^{\hat{\bm{\pi}}}_t=\{\xi_1\},\; t\in[\tau_*,T_{N-1}).
   \end{align}
   Recall $t^*$, $t_*$, and $\xi$ as defined in Lemma \ref{le:QuantileDifferentiability}. Recalling that $\mathbb{S}^{\hat{\bm{\pi}}}_t=\emptyset,\forall t\in [0,t_*)$, we derive that $\mathbb{S}^{\hat{\bm{\pi}}}_t \subseteq \tilde{\mathbb{S}}^{\hat{\bm{\pi}}}_t,\forall t\in [0,T_{N-1})$ because we have $\xi_1=\xi$ and $t_*=\tau_*$ in the case $t_*<T_{N-1}$.

\begin{lemma}\label{TwoMedianNeceCondi}
 Suppose $\hat{\bm{\pi}}\in\mathbb{A}$ is an intra-personal equilibrium for the multiple-time-point median maximization.
Then, the left end of $\bbX_{T_{N-1}}^{x_0,\hat{\bm{\pi}}} $ is finite and
\begin{align}\label{equilibrium for later time period}
   \hat{\bm{\pi}}(t,x)= v^*(t) (x-\xi), t\in [T_{N-1}, T), x\in \bbR
  \end{align}
for some constant $\xi$ that satisfies
\begin{align}\label{decreasing threshold}
 \xi< x,\quad \forall x\in  \bbX_{T_{N-1}}^{x_0,\hat{\bm{\pi}}}.
  \end{align}
Recall $t^*$, $t_*$, $\xi$, and $\mathbb{S}^{\hat{\bm{\pi}}}_t$ as defined in Lemma \ref{le:QuantileDifferentiability} and $\tau^*$, $ \tau_*$, and $\xi_1$ as defined in \eqref{eq:tau0} and \eqref{eq:tau0prime}. Then, $\tau^*=T_{N-1}$, $t^*=T$, and $t_*\leq T_{N-1}$.
\end{lemma}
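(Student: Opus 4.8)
The plan is to exploit the fact that on the last period $[T_{N-1},T)$ the multi-time criterion \eqref{ObjMultipleT} collapses to the single-time median of $X(T)$: for $t\in[T_{N-1},T)$ one has $n=N$ and $\sum_{i=N}^N w_{N,i}=w_{N,N}=1$, so $J^{\bm\pi}(t,x,1/2)=G^{\bm\pi}(t,x,1/2;T)=G^{\bm\pi}(t,x,1/2)$, and hence for $x\in\bbX_t^{x_0,\hat{\bm\pi}}$ the multi-time inequality \eqref{eq:EquiDefDollarMultipleTime} is literally the single-time one \eqref{eq:EquiDefDollar}. First I would fix an arbitrary reachable level $x_1\in\bbX_{T_{N-1}}^{x_0,\hat{\bm\pi}}$ and regard $\hat{\bm\pi}$ restricted to $[T_{N-1},T)$ as a candidate equilibrium for the single-time median problem started at time $T_{N-1}$ with wealth $x_1$. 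Lemma \ref{reachable state} guarantees $\bbX_t^{x_1,T_{N-1},\hat{\bm\pi}}\subseteq\bbX_t^{x_0,\hat{\bm\pi}}$, so the single-time inequality holds at every state reachable from $(T_{N-1},x_1)$, making $\hat{\bm\pi}$ a genuine single-time median equilibrium for this shifted problem. Since the proofs of Lemmas \ref{le:QuantileDifferentiability}--\ref{le:EquilibriumQuantile} and Proposition \ref{prop:NeccAlphaHalf} nowhere use that the initial time is $0$, the characterization in Theorem \ref{th:EquilibriumLowerLimit}-(i) applies verbatim on $[T_{N-1},T)$ and yields $\hat{\bm\pi}(t,x)=v^*(t)(x-\xi)$ with a finite constant $\xi<x_1$. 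As $\hat{\bm\pi}$ is a single fixed affine strategy and $\theta_1\equiv v^*\neq0$ determines $\xi=-\theta_0(t)/\theta_1$-component uniquely, this $\xi$ is independent of $x_1$; letting $x_1$ range over $\bbX_{T_{N-1}}^{x_0,\hat{\bm\pi}}$ gives $\xi<x$ for all $x\in\bbX_{T_{N-1}}^{x_0,\hat{\bm\pi}}$, so $\xi$ is a finite lower bound and the left end of the set is finite. This delivers \eqref{equilibrium for later time period} and \eqref{decreasing threshold}.

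Next I would read off $t^*=T$ and $t_*\le T_{N-1}$ directly from this form. Because $v^*(t)\neq0$ on $[0,T)$ by Lemma \ref{le:QuadProgSolution}, the representation forces $\theta_1(t)=v^*(t)\neq0$ throughout $[T_{N-1},T)$, so no $t<T$ can satisfy $\theta_0\equiv\theta_1\equiv0$ on $[t,T)$; by \eqref{eq:T0} this means $t^*=T$. Moreover $\theta_0(s)+\xi\theta_1(s)=-\xi v^*(s)+\xi v^*(s)=0$ on $[T_{N-1},T)=[T_{N-1},t^*)$, so $T_{N-1}$ belongs to the set defining $t_*$ in \eqref{eq:T0prime}, whence $t_*\le T_{N-1}$.

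The remaining and most delicate claim is $\tau^*=T_{N-1}$, i.e.\ that the strategy does not vanish as $t\uparrow T_{N-1}$. I would argue by contradiction: if $\tau^*<T_{N-1}$ then $\theta_0\equiv\theta_1\equiv0$ on $[\tau^*,T_{N-1})$, so $\hat{\bm\pi}$ freezes wealth there and $\bbX_t^{x_0,\hat{\bm\pi}}=\bbX_{T_{N-1}}^{x_0,\hat{\bm\pi}}\subseteq(\xi,+\infty)$ for $t\in[\tau^*,T_{N-1})$. I would pick $t\in[\max(\tau^*,T_{N-2}),T_{N-1})$ and a reachable $x>\xi$, where the criterion is $J^{\bm\pi}(t,x,1/2)=w_{N-1,N-1}G^{\bm\pi}(t,x,1/2;T_{N-1})+w_{N-1,N}G^{\bm\pi}(t,x,1/2;T)$ with $w_{N-1,N}>0$, and perturb by a small $\pi$ with $b(t)\tran\pi>0$, $Q\pi\ge0$ (available by Assumption \ref{as:Feasibility2}). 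Since the frozen wealth makes $X^{\hat{\bm\pi}}_{t,x}(T_{N-1})\equiv x$ degenerate, Lemma \ref{le:QuantileDifferentiability} does not cover the $T_{N-1}$-quantile; instead I would compute directly that $X_{t,x}^{\hat{\bm\pi}_{t,\epsilon,\pi}}(T_{N-1})=x+\int_t^{t+\epsilon}b(s)\tran\pi\,ds+\int_t^{t+\epsilon}\pi\tran\sigma(s)\,dW(s)$ is Gaussian with median $x+\int_t^{t+\epsilon}b(s)\tran\pi\,ds>x$, so the $T_{N-1}$-median strictly increases. For the $T$-quantile the relevant CDF is differentiable because the post-$T_{N-1}$ dynamics are nondegenerate lognormal and $t<t^*=T$, so Lemma \ref{le:QuantileDifferentiability} gives the derivative $-\varphi^{\hat{\bm\pi}}_{t,x,1/2}(\pi)/F^{\hat{\bm\pi}}_y$, strictly positive for a sufficiently down-scaled $\pi$ since $F^{\hat{\bm\pi}}_x<0$ makes the linear term dominate. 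Both terms of $J$ thus strictly increase for small $\epsilon$, contradicting \eqref{eq:EquiDefDollarMultipleTime}; hence $\tau^*=T_{N-1}$ (the case $N=1$ being vacuous as $T_{N-1}=0$). I expect this last step---treating the non-differentiable $T_{N-1}$-quantile by an explicit Gaussian-median computation while simultaneously controlling the differentiable $T$-quantile with a single perturbation that raises every term of the weighted criterion---to be the main obstacle.
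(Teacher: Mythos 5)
Your proposal is correct and follows essentially the same route as the paper's proof: the reduction of the last period to the single-time median problem via Lemma \ref{reachable state} and Theorem \ref{th:EquilibriumLowerLimit}-(i) (with $\xi$ independent of the starting state because the affine coefficients are fixed), the immediate reading-off of $t^*=T$ and $t_*\le T_{N-1}$ from \eqref{equilibrium for later time period}, and the contradiction argument for $\tau^*=T_{N-1}$ in which a single perturbation $\pi$ with $b(t)\tran\pi>0$, $Q\pi\ge 0$ strictly raises the $T_{N-1}$-median by the explicit Gaussian computation (as in Lemma \ref{le:NeccNonDegAlphaLarge}) and, after down-scaling, also raises the $T$-median via \eqref{eq:QuanDerivativeWRTeps} and $F^{\hat{\bm{\pi}}}_x<0$. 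Your observation that in the contradiction scenario every reachable state lies in $(\xi,+\infty)$ while $\mathbb{S}^{\hat{\bm{\pi}}}_t\subseteq\{\xi\}$ even streamlines the paper's case distinction on whether $\underline{t}\ge\tau^*$, which serves exactly this purpose of producing a non-singular reachable state.
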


Lemma \ref{TwoMedianNeceCondi} shows that if $\hat{\bm{\pi}}\in\mathbb{A}$ is an intra-personal equilibrium for the multiple-time-point median maximization, it must be a portfolio insurance strategy in the period $[T_{N-1},T)$, and the portfolio insurance level must be lower than any wealth level that is reachable at time $T_{N-1}$.

Denote by $F^{\bm{\pi}}(t,x,y;s):=\prob(X^{\bm{\pi}}_{t,x}(s)\le y), y\in \bbR$ the cumulative distribution function of the wealth at time $s$ given wealth level of $x$ at time $t\leq s$.
\begin{lemma}\label{le:NeccOptTwoMedian}
Recall $\underline{t}$ as defined in Lemma \ref{le:NeccNonDegAlphaLarge} and $\tilde{\mathbb{S}}^{\hat{\bm{\pi}}}_t$ as in \eqref{eq:SingularPointsNew}. Suppose $\hat{\bm{\pi}}\in\mathbb{A}$ is an intra-personal equilibrium for multiple-time-point median maximization.
For any $t\in[T_{N-2},T_{N-1})$, $x\in \bbX^{x_0,\hat{\bm{\pi}}}_t\backslash \tilde{\mathbb{S}}^{\hat{\bm{\pi}}}_t$, denoting
  \begin{align*}
   &\lambda_1^{\hat{\bm{\pi}}}(t,x)=(1-w_{N-1,N})\frac{F^{\hat{\bm{\pi}}}_x(t,x,G^{\hat{\bm{\pi}}}(t,x,\frac{1}{2};T_{N-1});T_{N-1})}{F^{\hat{\bm{\pi}}}_y(t,x, G^{\hat{\bm{\pi}}}(t,x,\frac{1}{2};T_{N-1});T_{N-1})}+w_{N-1,N}\frac{ F^{\hat{\bm{\pi}}}_x(t,x, G^{\hat{\bm{\pi}}}(t,x,\frac{1}{2}))}{ F^{\hat{\bm{\pi}}}_y(t,x, G^{\hat{\bm{\pi}}}(t,x,\frac{1}{2}))}, \\
   &\lambda_2^{\hat{\bm{\pi}}}(t,x)=(1-w_{N-1,N})\frac{F^{\hat{\bm{\pi}}}_{xx}(t,x, G^{\hat{\bm{\pi}}}(t,x,\frac{1}{2};T_{N-1});T_{N-1})}{F^{\hat{\bm{\pi}}}_y(t,x, G^{\hat{\bm{\pi}}}(t,x,\frac{1}{2};T_{N-1});T_{N-1})}+w_{N-1,N}\frac{ F^{\hat{\bm{\pi}}}_{xx}(t,x, G^{\hat{\bm{\pi}}}(t,x,\frac{1}{2}))}{ F^{\hat{\bm{\pi}}}_y(t,x, G^{\hat{\bm{\pi}}}(t,x,\frac{1}{2}))},
  \end{align*}
 we have $\lambda_2^{\hat{\bm{\pi}}}(t,x)>0$, $\lambda_1^{\hat{\bm{\pi}}}(t,x)<0$, and
  \begin{align}\label{eq:NeccOptSolutionTwoMedian}
    \hat{\bm{\pi}}(t,x) = -\frac{\lambda_1^{\hat{\bm{\pi}}}(t,x)}{\lambda_2^{\hat{\bm{\pi}}}(t,x)}v^*(t).
  \end{align}
Moreover, we have $\underline{t}\leq T_{N-2}$.
\end{lemma}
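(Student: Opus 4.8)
The plan is to adapt the single-target-date necessity argument (Lemma \ref{le:NeccOpt}) to the two-term objective that is active on the second-to-last period. First I would establish an infinitesimal derivative formula for $J$ analogous to Lemma \ref{le:QuantileDifferentiability}. Fix $t\in[T_{N-2},T_{N-1})$ and $x\in\bbX_t^{x_0,\hat{\bm{\pi}}}\backslash\tilde{\mathbb{S}}^{\hat{\bm{\pi}}}_t$. By Lemma \ref{TwoMedianNeceCondi} we have $\tau^*=T_{N-1}$ and $t^*=T$, so $t<\tau^*$ and $t<t^*$; and since $\mathbb{S}^{\hat{\bm{\pi}}}_t\subseteq\tilde{\mathbb{S}}^{\hat{\bm{\pi}}}_t$, the point $x$ is a regular point for both horizons. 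Hence Lemma \ref{le:QuantileDifferentiability} gives differentiability (in the perturbation parameter $\epsilon$) of the terminal quantile $G^{\hat{\bm{\pi}}}(t,x,\tfrac12)$, and the same argument with horizon $T_{N-1}$ in place of $T$ gives it for $G^{\hat{\bm{\pi}}}(t,x,\tfrac12;T_{N-1})$. Taking the weighted sum with weights $1-w_{N-1,N}$ and $w_{N-1,N}$ then yields
\[
\lim_{\epsilon\downarrow 0}\frac{J^{\hat{\bm{\pi}}_{t,\epsilon,\pi}}(t,x,\tfrac12)-J^{\hat{\bm{\pi}}}(t,x,\tfrac12)}{\epsilon}=\Psi(\hat{\bm{\pi}}(t,x))-\Psi(\pi),\quad \Psi(v):=\lambda_1^{\hat{\bm{\pi}}}(t,x)\,b(t)\tran v+\tfrac12\lambda_2^{\hat{\bm{\pi}}}(t,x)\,\|\sigma(t)\tran v\|^2,
\]
with $\lambda_1^{\hat{\bm{\pi}}},\lambda_2^{\hat{\bm{\pi}}}$ exactly as in the statement.

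The equilibrium inequality \eqref{eq:EquiDefDollarMultipleTime} then forces $\Psi(\hat{\bm{\pi}}(t,x))\le\Psi(\pi)$ for every feasible $\pi$, i.e.\ $\hat{\bm{\pi}}(t,x)$ minimizes $\Psi$ over $\{v:Qv\ge 0\}$. The sign of $\lambda_1^{\hat{\bm{\pi}}}$ is then immediate: by the monotonicity and density results of \citet{HeJiang2020:LinearSDE} (applied with horizons $T_{N-1}$ and $T_N$), each numerator derivative $F_x$ is negative and each denominator $F_y$ is positive at the corresponding quantile point, and since the two weights are nonnegative and sum to one, $\lambda_1^{\hat{\bm{\pi}}}(t,x)<0$. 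For $\lambda_2^{\hat{\bm{\pi}}}$ I would argue by contradiction: if $\lambda_2^{\hat{\bm{\pi}}}(t,x)\le 0$, pick $v_0$ with $Qv_0\ge 0$ and $b(t)\tran v_0>0$ (Assumption \ref{as:Feasibility2}); then $\Psi(sv_0)=s\lambda_1^{\hat{\bm{\pi}}}b(t)\tran v_0+\tfrac12 s^2\lambda_2^{\hat{\bm{\pi}}}\|\sigma(t)\tran v_0\|^2\to-\infty$ as $s\to+\infty$ (using $\lambda_1^{\hat{\bm{\pi}}}<0$, $b(t)\tran v_0>0$, and $\|\sigma(t)\tran v_0\|>0$ by positive definiteness of $\sigma\sigma\tran$), so $\Psi$ has no minimizer on the cone, contradicting the previous step. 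Hence $\lambda_2^{\hat{\bm{\pi}}}(t,x)>0$.

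Given $\lambda_2^{\hat{\bm{\pi}}}>0$, the quadratic $\Psi$ is strictly convex, and writing $\Psi(v)=\lambda_2^{\hat{\bm{\pi}}}\big(\tfrac12\|\sigma(t)\tran v\|^2-\mu\,b(t)\tran v\big)$ with $\mu:=-\lambda_1^{\hat{\bm{\pi}}}/\lambda_2^{\hat{\bm{\pi}}}>0$, the scaling $v\mapsto\mu v$ reduces its constrained minimization to problem \eqref{general small alpha 1}; by Lemma \ref{le:QuadProgSolution} the unique minimizer is $\mu\,v^*(t)$, giving \eqref{eq:NeccOptSolutionTwoMedian}. For the final claim $\underline{t}\le T_{N-2}$, I would suppose $\underline{t}>T_{N-2}$ and pick $t\in(T_{N-2},\min(\underline{t},T_{N-1}))$ (nonempty since the $T_i$ are strictly increasing); then $\bbX_t^{x_0,\hat{\bm{\pi}}}=\{x_0\}$ and $\hat{\bm{\pi}}(t,x_0)=\theta_0(t)+\theta_1(t)x_0=0$, whereas \eqref{eq:NeccOptSolutionTwoMedian}, valid when $x_0\notin\tilde{\mathbb{S}}^{\hat{\bm{\pi}}}_t$, forces $\hat{\bm{\pi}}(t,x_0)=-\lambda_1^{\hat{\bm{\pi}}}/\lambda_2^{\hat{\bm{\pi}}}\,v^*(t)\neq 0$ because $\lambda_1^{\hat{\bm{\pi}}}<0$, $\lambda_2^{\hat{\bm{\pi}}}>0$, and $v^*(t)\neq 0$; the degenerate case $x_0\in\tilde{\mathbb{S}}^{\hat{\bm{\pi}}}_t$ is handled directly as in Lemma \ref{le:NeccNonDegAlphaLarge}, by perturbing with a $\pi$ satisfying $b(t)\tran\pi>0$ and observing that the median strictly increases, contradicting equilibrium.

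The step I expect to be the main obstacle is the first one: rigorously extending the differentiability formula of Lemma \ref{le:QuantileDifferentiability} to the intermediate-horizon quantile $G^{\hat{\bm{\pi}}}(\cdot,\cdot,\tfrac12;T_{N-1})$ and verifying that the two singular sets interact correctly, so that excising $\tilde{\mathbb{S}}^{\hat{\bm{\pi}}}_t$ genuinely guarantees joint differentiability of both terms of $J$ at $x$. Once the derivative formula is secured, the sign analysis for $\lambda_1^{\hat{\bm{\pi}}}$, $\lambda_2^{\hat{\bm{\pi}}}$ and the reduction to the quadratic program \eqref{general small alpha 1} are routine adaptations of the single-date proof.
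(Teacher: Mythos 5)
Your proof of the displayed characterization is correct and takes essentially the paper's route: Lemma \ref{le:QuantileDifferentiability} applied once with horizon $T$ and once with $[t,T]$ replaced by $[t,T_{N-1}]$, the weighted derivative of $J$, the equilibrium inequality forcing $\hat{\bm{\pi}}(t,x)$ to minimize the quadratic functional over the cone, the sign of $\lambda_1^{\hat{\bm{\pi}}}$ from the results of \citet{HeJiang2020:LinearSDE}, the sign of $\lambda_2^{\hat{\bm{\pi}}}$ by the large-multiple contradiction, and the scaling reduction to problem \eqref{general small alpha 1}. The non-degenerate part of the claim $\underline{t}\le T_{N-2}$ is also handled as in the paper.

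There is, however, a genuine gap in the degenerate case $x_0\in\tilde{\mathbb{S}}^{\hat{\bm{\pi}}}_t$ of the last claim, which you dispatch in one sentence ``as in Lemma \ref{le:NeccNonDegAlphaLarge}.'' That analogy fails. In Lemma \ref{le:NeccNonDegAlphaLarge} the degenerate case has the wealth frozen at $x_0$ all the way to the single horizon $T$, so the explicit computation \eqref{prob for non invest} shows that \emph{any} $\pi$ with $b(t)\tran\pi>0$ raises the terminal quantile above $x_0$. Here, $\underline{t}>T_{N-2}$ combined with $\xi_1=x_0$ freezes the wealth only up to $T_{N-1}$; on $[T_{N-1},T)$ the strategy is \eqref{equilibrium for later time period}, and \eqref{decreasing threshold} applied to $\bbX_{T_{N-1}}^{x_0,\hat{\bm{\pi}}}=\{x_0\}$ gives $\xi<x_0$, so the terminal wealth $X^{\hat{\bm{\pi}}}_{t,x_0}(T)$ is nondegenerate. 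The frozen-wealth computation therefore only shows that the \emph{intermediate} median $G^{\hat{\bm{\pi}}}(t,x_0,1/2;T_{N-1})$ increases; it says nothing about the terminal median, and for an arbitrary $\pi$ with $b(t)\tran\pi>0$ the terminal median need not increase, since the quadratic term $\frac{1}{2}F^{\hat{\bm{\pi}}}_{xx}\|\sigma(t)\tran\pi\|^2$ in \eqref{eq:NeccOptObj} is not sign-controlled and dominates for large $\pi$. This matters because the equilibrium condition concerns $J=w_{N-1,N}G^{\hat{\bm{\pi}}}(\cdot,\cdot,1/2)+(1-w_{N-1,N})G^{\hat{\bm{\pi}}}(\cdot,\cdot,1/2;T_{N-1})$, and the weight $1-w_{N-1,N}$ on the intermediate median may be zero, so improving the intermediate median alone contradicts nothing.

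To close this case one must argue, as the paper does, that the terminal median can also be improved: since $\xi\neq x_0=\xi_1$, having both $\theta_0+\xi\theta_1\equiv 0$ and $\theta_0+x_0\theta_1\equiv 0$ on a left-neighborhood of $T_{N-1}$ would force $\theta_0=\theta_1\equiv 0$ there, contradicting $\tau^*=T_{N-1}$; hence $t_*=T_{N-1}$, the terminal singular set $\mathbb{S}^{\hat{\bm{\pi}}}_t$ is empty, and the derivative formula \eqref{eq:QuanDerivativeWRTeps} is valid at $(t,x_0)$. Because $F^{\hat{\bm{\pi}}}_x(t,x_0,\cdot)<0$ and $\hat{\bm{\pi}}(t,x_0)=0$, a sufficiently small feasible $\pi$ with $b(t)\tran\pi>0$ gives $\varphi^{\hat{\bm{\pi}}}_{t,x_0,1/2}(\pi)<0$ and so strictly increases the terminal median for small $\epsilon$; the same $\pi$ also strictly increases the intermediate median, so $J$ strictly increases, yielding the contradiction. (The paper additionally treats the sub-case $\xi_1=\xi$, where $\underline{t}=T$ and the direct frozen-wealth computation does apply to both horizons.) With this repaired, your argument coincides with the paper's proof.
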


Lemma \ref{le:NeccOptTwoMedian} shows that any intra-personal equilibrium $\hat{\bm{\pi}}\in\mathbb{A}$ must take a particular form as in \eqref{eq:NeccOptSolutionTwoMedian} in the period $[T_{N-2},T_{N-1})$: the allocation across the risky assets in this period is the same as the Kelly strategy. 

\begin{lemma}\label{le:NeccPDETwoMedian}
  Suppose $\hat{\bm{\pi}}\in\mathbb{A}$ is an intra-personal equilibrium for the multiple-time-point median maximization problem. Recall $\xi_1$ and $\tilde{\mathbb{S}}^{\hat{\bm{\pi}}}_t$ as defined in \eqref{eq:tau0prime} and \eqref{eq:SingularPointsNew}, respectively. Then,
   \begin{enumerate}
     \item[(i)] For each $t\in(T_{N-2},T_{N-1}]$, $\bbX^{x_0,\hat{\bm{\pi}}}_t=(\underline{x}(t),+\infty)$ for some $\underline{x}(t)< x_0$, and $\underline{x}(t)$ is decreasing in $t \in(T_{N-2},T_{N-1}]$.
     \item[(ii)] There exists $\tilde a_0,\tilde a_1\in C([T_{N-2},T_{N-1}])$ taking values in $\bbR$ such that
   \begin{align}
     &-\frac{\lambda_1^{\hat{\bm{\pi}}}(t,x)}{\lambda_2^{\hat{\bm{\pi}}}(t,x)} = \tilde a_0(t) + \tilde a_1(t)x>0,\quad t\in (T_{N-2},T_{N-1}), x\in \bbX^{x_0,\hat{\bm{\pi}}}_t\backslash \tilde{\mathbb{S}}^{\hat{\bm{\pi}}}_t,\label{eq:NeccStrategyTwoMedian1}\\
     &\hat{\bm{\pi}}(t,x) = \big(\tilde a_0(t) + \tilde a_1(t)x\big)v^*(t),\quad (t,x)\in (T_{N-2},T_{N-1})\times \bbR.\label{eq:NeccStrategyTwoMedian2}
   \end{align}
   Consequently, $g^{\hat{\bm{\pi}}}(t,x):=w_{N-1,N}G^{\hat{\bm{\pi}}}(t,x,\frac{1}{2})+(1-w_{N-1,N}) G^{\hat{\bm{\pi}}}(t,x,\frac{1}{2};T_{N-1})  $satisfies following PDE:
    \begin{align}\label{TwoMedianEquation}
    \left\{
    \begin{array}{l}
g^{\hat{\bm{\pi}}}_t(t,x)  + \frac{1}{2}g^{\hat{\bm{\pi}}}_x(t,x) \rho(t)\big(\tilde a_0(t)+\tilde a_1(t)x\big)=0 , \; t\in (T_{N-2},T_{N-1}),x\in \bbX^{x_0,\hat{\bm{\pi}}}_t\backslash \tilde{\mathbb{S}}^{\hat{\bm{\pi}}}_t,   \\
\underset{t \uparrow  T_{N-1},x'\rightarrow x}{\lim}g^{\hat{\bm{\pi}}}(t,x')=w_{N-1,N} \left((x -\xi )e^{\int_{T_{N-1}}^T \frac{1}{2}\rho(s) ds }+\xi\right)+(1-w_{N-1,N})x, \; x\in \bbR \backslash \{\xi\},
\end{array}
\right.
 \end{align}
 where $\rho(t)$ is as defined in \eqref{eq:rhot}.
   \end{enumerate}
\end{lemma}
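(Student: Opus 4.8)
The plan is to follow the architecture of the proof of Lemma \ref{le:NeccPDE}, adapting each step to the two-quantile objective $g^{\hat{\bm{\pi}}}=w_{N-1,N}G^{\hat{\bm{\pi}}}(t,x,\frac12)+(1-w_{N-1,N})G^{\hat{\bm{\pi}}}(t,x,\frac12;T_{N-1})$. For part (i), I would first invoke Lemma \ref{TwoMedianNeceCondi} (which gives $t^*=T$, $t_*\le T_{N-1}$, and that the left end of $\bbX_{T_{N-1}}^{x_0,\hat{\bm{\pi}}}$ is finite) together with Lemma \ref{le:NeccOptTwoMedian} (which gives $\underline t\le T_{N-2}$). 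Since $\underline t\le T_{N-2}<T=t^*$, Lemma \ref{le:NeccPDE}-(i) applies and shows that for each $t\in(T_{N-2},T_{N-1}]$ the reachable set $\bbX^{x_0,\hat{\bm{\pi}}}_t$ is one of $(\underline x(t),+\infty)$, $(-\infty,\bar x(t))$, $\bbR$, and is increasing (nested) in $t$. The finite left end of $\bbX^{x_0,\hat{\bm{\pi}}}_{T_{N-1}}$ forces it to be $(\underline x(T_{N-1}),+\infty)$; nestedness then forces $\bbX^{x_0,\hat{\bm{\pi}}}_t\subseteq\bbX^{x_0,\hat{\bm{\pi}}}_{T_{N-1}}$ to be bounded below for every earlier $t$, ruling out the other two shapes, so $\bbX^{x_0,\hat{\bm{\pi}}}_t=(\underline x(t),+\infty)$ with $\underline x(t)$ decreasing. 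Finally, $x_0$ is reachable for $t>\underline t$ by non-degeneracy, giving $\underline x(t)<x_0$.

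For the affine form in part (ii), I would start from the necessary condition $\hat{\bm{\pi}}(t,x)=-\frac{\lambda_1^{\hat{\bm{\pi}}}(t,x)}{\lambda_2^{\hat{\bm{\pi}}}(t,x)}v^*(t)$ of Lemma \ref{le:NeccOptTwoMedian} together with $\hat{\bm{\pi}}(t,x)=\theta_0(t)+\theta_1(t)x$. Multiplying by $v^*(t)\tran/\|v^*(t)\|^2$ (legitimate since $v^*(t)\neq0$ and $\inf\|v^*\|>0$ by Lemma \ref{le:QuadProgSolution}) yields $-\lambda_1^{\hat{\bm{\pi}}}/\lambda_2^{\hat{\bm{\pi}}}=\tilde a_0(t)+\tilde a_1(t)x$ with $\tilde a_i(t):=\theta_i(t)\tran v^*(t)/\|v^*(t)\|^2$, which is affine in $x$ and positive because $\lambda_1^{\hat{\bm{\pi}}}<0<\lambda_2^{\hat{\bm{\pi}}}$; matching coefficients in $(\tilde a_0+\tilde a_1x)v^*=\theta_0+\theta_1x$ gives $\theta_i=\tilde a_iv^*$, hence \eqref{eq:NeccStrategyTwoMedian2} on all of $(T_{N-2},T_{N-1})\times\bbR$. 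Continuity of $\tilde a_0,\tilde a_1$ on the open interval follows from continuity in $t$ of the derivatives of both CDFs (Corollary 3 and Corollary 2 of \citet{HeJiang2020:LinearSDE}, applied to $F^{\hat{\bm{\pi}}}(t,x,\cdot)$ and to $F^{\hat{\bm{\pi}}}(t,x,\cdot;T_{N-1})$), and extension to $[T_{N-2},T_{N-1}]$ uses $\tilde a_i=\theta_i\tran v^*/\|v^*\|^2$ and the one-sided limits of the $C_{\mathrm{pw}}$ functions $\theta_i,v^*$ at the endpoints, exactly as in Lemma \ref{le:NeccPDE}.

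The heart of the proof is the PDE. Writing $F:=F^{\hat{\bm{\pi}}}(t,x,\cdot)$, $\tilde F:=F^{\hat{\bm{\pi}}}(t,x,\cdot;T_{N-1})$, $p:=\tilde a_0(t)+\tilde a_1(t)x$, and evaluating all derivatives at the respective medians, the Kolmogorov backward equations $\cA^{\hat{\bm{\pi}}}F=0$ and $\cA^{\hat{\bm{\pi}}}\tilde F=0$ with $\hat{\bm{\pi}}=pv^*$ become
\begin{align*}
F_t+p\,b(t)\tran v^*(t)F_x+\tfrac12p^2\|\sigma(t)\tran v^*(t)\|^2F_{xx}&=0,\\
\tilde F_t+p\,b(t)\tran v^*(t)\tilde F_x+\tfrac12p^2\|\sigma(t)\tran v^*(t)\|^2\tilde F_{xx}&=0.
\end{align*}
Dividing the first by $F_y>0$ and the second by $\tilde F_y>0$, using $G^{\hat{\bm{\pi}}}_t=-F_t/F_y$, $G^{\hat{\bm{\pi}}}_x=-F_x/F_y$ (and likewise for $\tilde G$), and then forming the weighted sum with weights $w_{N-1,N}$ and $1-w_{N-1,N}$, the first-order coefficients assemble into $g^{\hat{\bm{\pi}}}_t$ and $g^{\hat{\bm{\pi}}}_x$; the surviving second-order term is exactly $\tfrac12p^2\|\sigma(t)\tran v^*(t)\|^2\lambda_2^{\hat{\bm{\pi}}}(t,x)$, and crucially $\lambda_1^{\hat{\bm{\pi}}}=-g^{\hat{\bm{\pi}}}_x$ since $F_x/F_y=-G^{\hat{\bm{\pi}}}_x$. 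The necessary condition gives $\lambda_2^{\hat{\bm{\pi}}}p=-\lambda_1^{\hat{\bm{\pi}}}=g^{\hat{\bm{\pi}}}_x$, so substituting $\lambda_2^{\hat{\bm{\pi}}}=g^{\hat{\bm{\pi}}}_x/p$ collapses the second-order term and leaves $g^{\hat{\bm{\pi}}}_t+\tfrac12p\,\rho(t)g^{\hat{\bm{\pi}}}_x=0$, which is \eqref{TwoMedianEquation}. For the terminal condition I would let $t\uparrow T_{N-1}$: the median of $X^{\hat{\bm{\pi}}}_{t,x}(T_{N-1})$ tends to $x$, while, since $\hat{\bm{\pi}}$ is the portfolio-insurance strategy \eqref{equilibrium for later time period} on $[T_{N-1},T)$, the median of $X^{\hat{\bm{\pi}}}_{t,x}(T)$ tends by Proposition \ref{prop:PortfolioInsurance} and $\rho=\|\sigma\tran v^*\|^2$ (Lemma \ref{le:QuadProgSolution}) to $\xi+(x-\xi)e^{\int_{T_{N-1}}^T\frac12\rho(s)ds}$; the weighted combination gives the stated boundary value for $x\neq\xi$.

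The main obstacle is this PDE derivation, and specifically recognizing the two algebraic coincidences that make it work: that $\lambda_1^{\hat{\bm{\pi}}}=-g^{\hat{\bm{\pi}}}_x$ and that the weighted combination of the two normalized backward equations produces $\lambda_2^{\hat{\bm{\pi}}}$ as the coefficient of the surviving second-order term, so that the single necessary condition $p=-\lambda_1^{\hat{\bm{\pi}}}/\lambda_2^{\hat{\bm{\pi}}}$ (which couples both CDFs and cannot be split into a relation for each) eliminates all second-order dependence simultaneously. The remaining difficulty is regularity bookkeeping: one must justify that both $F^{\hat{\bm{\pi}}}(t,x,\cdot)$ and $F^{\hat{\bm{\pi}}}(t,x,\cdot;T_{N-1})$ are $C^{1,\infty}$ away from the singular point $\xi_1$ and that both medians are differentiable in $(t,x)$, which on $(T_{N-2},T_{N-1})$ follows from the smoothness and implicit-function arguments of \citet{HeJiang2020:LinearSDE} exactly as in the single-date case, now applied to both terminal dates $T$ and $T_{N-1}$.
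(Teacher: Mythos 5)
Your proposal is correct and follows essentially the same route as the paper's proof: part (i) from the shape/monotonicity of reachable sets plus the finite left end at $T_{N-1}$; part (ii) by obtaining the affine form from Lemma \ref{le:NeccOptTwoMedian}, then summing the two normalized Kolmogorov backward equations with weights $w_{N-1,N}$ and $1-w_{N-1,N}$ and using $p=-\lambda_1^{\hat{\bm{\pi}}}/\lambda_2^{\hat{\bm{\pi}}}$ (equivalently $\lambda_1^{\hat{\bm{\pi}}}=-g^{\hat{\bm{\pi}}}_x$) to collapse the second-order term into $\frac{1}{2}\rho\, p\, g^{\hat{\bm{\pi}}}_x$; and the boundary condition from $\tau^*=T_{N-1}$, $t^*=T$ and the portfolio-insurance form \eqref{equilibrium for later time period} on $[T_{N-1},T)$. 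One small repair: in part (i) you should cite Corollary 4 and Theorem 3 of \citet{HeJiang2020:LinearSDE} directly (as the paper does) rather than Lemma \ref{le:NeccPDE}-(i), whose stated hypothesis is an equilibrium for the single-date problem, which the multi-date equilibrium need not satisfy; since those cited results depend only on the affine structure of the strategy and not on any equilibrium property, this is a citation imprecision rather than a substantive gap.
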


 \begin{lemma}\label{le:EquilibriumQuantileTwoMedian}
 Suppose $\hat{\bm{\pi}}\in\mathbb{A}$ is an intra-personal equilibrium for the multiple-time-point median maximization problem. Recall $\xi_1$ and $\tilde{\mathbb{S}}^{\hat{\bm{\pi}}}_t$ as defined in \eqref{eq:tau0prime} and \eqref{eq:SingularPointsNew}, respectively, and $\tilde a_0$, $\tilde a_1$, and $\underline{x}(t)$ as defined in Lemma \ref{le:NeccPDETwoMedian}. Define
   \begin{align}\label{eq:EquiMedianCoeff}
   \tilde  \beta_1(t):& =  e^{\frac{1}{2}\int_t^{T_{N-1}}\tilde a_1(s)\rho(s)ds}, \; \tilde \beta_0(t): = \frac{1}{2}\int_t^{T_{N-1}}\tilde a_0(s)\rho(s)\tilde\beta_1(s)ds,\; t\in [T_{N-2},T_{N-1}].
   \end{align}
   Fix any $t\in (T_{N-2},T_{N-1})$.
   \begin{enumerate}
   \item[(i)] There exists $s\in[t,T_{N-1})$ such that $\tilde a_1(s)\neq 0$. Consequently,
   \begin{align}\label{eq:a1Eq0TwoMedian}
     \int_t^{T_{N-1}}\tilde a_1(s)^2\|\sigma(s)\tran v^*(s)\|^2ds>0.
   \end{align}
     \item[(ii)]There exists $\underline{c}_t> \max(\underline{x}(t),\xi,\xi_1)$ such that
         \begin{align}\label{eq:EquilibriumWeightedQuantile}
         &w_{N-1,N} G^{\hat{\bm{\pi}}}(s,x,1/2)+(1-w_{N-1,N}) G^{\hat{\bm{\pi}}}(s,x,1/2;T_{N-1})\nonumber\\ & =w_{N-1,N} \big(\beta_1(s, 1/2)x
         + \beta_0(s, 1/2)\big)+(1-w_{N-1,N})\big(\tilde\beta_1(s)x + \tilde\beta_0(s)\big)
         \end{align}
         holds for all $(s,x)\in [t,T_{N-1}]\times [\underline{c}_t,+\infty)$, where $\beta_0(s,1/2),\beta_1(s,1/2)$ are as defined in \eqref{eq:EquiQuantCoeff} with $a_0(s)=-\xi,a_1(s)=1, s\in [T_{N-1}, T)$ and $a_0(s)=\tilde a_0(s),a_1(s)=\tilde a_1(s), s\in [T_{N-2}, T_{N-1})$. Moreover,
         \begin{align}\label{eq:a1Eq1TwoMedian}
         &\frac{1}{2}\int_t^{T_{N-1}} \tilde a_1(s)\big(1-\tilde a_1(s)\big)\|\sigma(s)\tran v^*(s)\|^2ds =0.
         \end{align}
   \end{enumerate}
\end{lemma}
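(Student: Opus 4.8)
The plan is to mirror the single-date argument of Lemma \ref{le:EquilibriumQuantile}, but carrying the weighted quantile through the last interval $(T_{N-2},T_{N-1})$ and exploiting that the portfolio-insurance structure of the terminal period has already been pinned down. First I would establish the affine representation \eqref{eq:EquilibriumWeightedQuantile}. By \eqref{eq:NeccStrategyTwoMedian2} the strategy on $(T_{N-2},T_{N-1})$ is $\hat{\bm{\pi}}(t,x)=(\tilde a_0(t)+\tilde a_1(t)x)v^*(t)$, so Lemma \ref{le:NeccPDETwoMedian} gives that $g^{\hat{\bm{\pi}}}$ solves the transport equation \eqref{TwoMedianEquation}; its terminal datum at $T_{N-1}$ is affine, since by \eqref{equilibrium for later time period}, Proposition \ref{prop:PortfolioInsurance}, and $\rho=\|\sigma\tran v^*\|^2$ (Lemma \ref{le:QuadProgSolution}) the $T$-median enters $T_{N-1}$ as $\xi+(x-\xi)e^{\frac{1}{2}\int_{T_{N-1}}^T\rho(s)ds}$ and the $T_{N-1}$-median enters as $x$. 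Because $\bbX^{x_0,\hat{\bm{\pi}}}_t=(\underline{x}(t),+\infty)$ with $\underline{x}(t)$ decreasing and the drift $\frac{1}{2}\rho(\tilde a_0+\tilde a_1 x)$ strictly positive there (Lemma \ref{le:NeccPDETwoMedian}), I would apply Lemma \ref{le:TransportEq}-(iii) sequentially over a partition on which $G^{\hat{\bm{\pi}}}$ is piecewise smooth (regularity from \citet{HeJiang2020:LinearSDE}, as in Lemma \ref{le:EquilibriumQuantile}), restricting to $x$ large enough to avoid the singular points $\xi,\xi_1$. This yields $\underline{c}_t>\max(\underline{x}(t),\xi,\xi_1)$ on which $g^{\hat{\bm{\pi}}}$ is affine; by linearity of the transport operator the solution decomposes into the $w_{N-1,N}$-weighted sum of the backward evolutions of the two terminal data, which by the explicit formula in Lemma \ref{le:TransportEq} are exactly $\beta_1(s,1/2)x+\beta_0(s,1/2)$ and $\tilde\beta_1(s)x+\tilde\beta_0(s)$, giving \eqref{eq:EquilibriumWeightedQuantile}.

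For part (i) I would argue by contradiction using the reachable-set dichotomy rather than a quantile computation. Suppose $\tilde a_1\equiv 0$ on $[t,T_{N-1})$. Since Lemma \ref{TwoMedianNeceCondi} gives $\tau^*=T_{N-1}$, the strategy cannot vanish on any terminal subinterval of $[0,T_{N-1})$, so $\tilde a_0\not\equiv 0$ on $[t,T_{N-1})$. Then $\hat{\bm{\pi}}(u,x)=\tilde a_0(u)v^*(u)$ is wealth-independent, and from any reachable $x_1\in\bbX^{x_0,\hat{\bm{\pi}}}_t$ the wealth at $T_{N-1}$ equals $x_1+\int_t^{T_{N-1}}\tilde a_0 v^*\tran b\,du+\int_t^{T_{N-1}}\tilde a_0 v^*\tran\sigma\,dW$, a Gaussian of variance $\int_t^{T_{N-1}}\tilde a_0(u)^2\|\sigma(u)\tran v^*(u)\|^2du>0$, whose support is all of $\bbR$. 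Hence $\bbX^{x_0,\hat{\bm{\pi}}}_{T_{N-1}}=\bbR$, contradicting $\bbX^{x_0,\hat{\bm{\pi}}}_{T_{N-1}}=(\underline{x}(T_{N-1}),+\infty)$ from Lemma \ref{le:NeccPDETwoMedian}-(i). Therefore $\tilde a_1(s)\neq 0$ for some $s\in[t,T_{N-1})$, and since $\tilde a_1\in C$ and $\|\sigma\tran v^*\|^2>0$, estimate \eqref{eq:a1Eq0TwoMedian} follows.

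For the integral identity \eqref{eq:a1Eq1TwoMedian} I would divide \eqref{eq:EquilibriumWeightedQuantile} by $x$ and send $x\to+\infty$. Because the wealth is affine in its initial value on both subperiods (on $[T_{N-1},T)$ the insurance map $x\mapsto \xi+(x-\xi)\cdot(\text{GBM})$ is affine), $X^{\hat{\bm{\pi}}}_{s,x}(T)/x$ and $X^{\hat{\bm{\pi}}}_{s,x}(T_{N-1})/x$ converge almost surely to $\tilde Z_1(T_{N-1};s)$ times the Kelly growth factor on $[T_{N-1},T)$, and to $\tilde Z_1(T_{N-1};s)$ respectively, both non-degenerate lognormals with continuous quantile functions (by part (i)). Writing $P=\int_s^{T_{N-1}}\tilde a_1(u)\|\sigma(u)\tran v^*(u)\|^2du$, $Q=\int_s^{T_{N-1}}\tilde a_1(u)^2\|\sigma(u)\tran v^*(u)\|^2du$, and $R=\int_{T_{N-1}}^T\|\sigma(u)\tran v^*(u)\|^2du$, and using $\Phi^{-1}(1/2)=0$, the two limiting medians are $e^{P-\frac{1}{2}Q+\frac{1}{2}R}$ and $e^{P-\frac{1}{2}Q}$, while the right-hand side of \eqref{eq:EquilibriumWeightedQuantile} divided by $x$ tends to $w_{N-1,N}e^{\frac{1}{2}P+\frac{1}{2}R}+(1-w_{N-1,N})e^{\frac{1}{2}P}$. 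The shared positive factor $w_{N-1,N}e^{\frac{1}{2}R}+(1-w_{N-1,N})$ cancels on both sides, leaving $e^{P-\frac{1}{2}Q}=e^{\frac{1}{2}P}$, i.e.\ $\frac{1}{2}(P-Q)=0$, which is precisely \eqref{eq:a1Eq1TwoMedian}.

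I expect the main obstacle to be the first step: confirming that the weighted median inherits the correct affine terminal datum from the last-period portfolio-insurance dynamics and that Lemma \ref{le:TransportEq}-(iii) genuinely applies on the half-line with a decreasing lower boundary, together with justifying the affine-in-$x$ wealth structure and the almost-sure growth-factor limits that license the $x\to+\infty$ passage. Once the combined affine form is in hand, parts (i) and (ii) are short: (i) rests on the reachable-set dichotomy of Lemma \ref{le:NeccPDETwoMedian}-(i), and the cancellation of the factor $w_{N-1,N}e^{\frac{1}{2}R}+(1-w_{N-1,N})$ in (ii) is exactly why the $\Phi^{-1}$ term is absent, as it must be for $\alpha=1/2$.
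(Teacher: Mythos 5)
Your proposal is correct and follows essentially the same route as the paper's proof: part (i) via the same non-degenerate-Gaussian/reachable-set contradiction (using Lemma \ref{reachable state}, $\tau^*=T_{N-1}$ from Lemma \ref{TwoMedianNeceCondi}, and the finite left endpoint from Lemma \ref{le:NeccPDETwoMedian}-(i)), and part (ii) by solving the transport PDE \eqref{TwoMedianEquation} with Lemma \ref{le:TransportEq} and then dividing \eqref{eq:EquilibriumWeightedQuantile} by $x$ and letting $x\to+\infty$, exactly as in the proof of \eqref{eq:a1Eq1} in Lemma \ref{le:EquilibriumQuantile} (which the paper simply cites). The only differences are cosmetic: you invoke Lemma \ref{le:TransportEq}-(iii) with the decreasing boundary rather than the paper's sequential half-line application, and you write out explicitly the lognormal-median computation and the cancellation of the common factor $w_{N-1,N}e^{\frac{1}{2}R}+(1-w_{N-1,N})$ that the paper leaves implicit.
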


\begin{proposition}\label{prop:NeccAlphaHalfMultipleT}
  Suppose $\alpha=1/2$. Then, $\hat{\bm{\pi}}\in \mathbb{A}$ is an intra-personal equilibrium for multiple-time-point $\alpha$-level quantile maximization if and only if $\hat{\bm{\pi}}$ is given by \eqref{eq:EquiStrategyPropTarget} for some constant $\xi<x_0$.
\end{proposition}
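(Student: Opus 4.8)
The plan is to mirror the single--horizon argument of Proposition~\ref{prop:NeccAlphaHalf}, using the auxiliary Lemmas~\ref{TwoMedianNeceCondi}--\ref{le:EquilibriumQuantileTwoMedian} as the multi--horizon replacements for Lemmas~\ref{le:NeccNonDegAlphaLarge}, \ref{le:NeccPDE} and \ref{le:EquilibriumQuantile}, and then to propagate the conclusion backward from the last period. The guiding observation is that the quadratic program~\eqref{general small alpha 1} and its minimizer $v^*(t)$ do not depend on the weights $w_{n,i}$: differentiating the weighted criterion $J^{\hat{\bm\pi}}$ in the perturbation length $\epsilon$ produces a convex combination, over the target dates $T_n,\dots,T_N$, of functionals of the form~\eqref{eq:NeccOptObj}, so the direction of investment that is optimal for each single horizon is automatically optimal for their weighted sum. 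This is exactly what lets the single--date machinery carry over.

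For sufficiency I would take $\hat{\bm\pi}(t,x)=v^*(t)(x-\xi)$ with $\xi<x_0$ and verify~\eqref{eq:EquiDefDollarMultipleTime} directly. Under this strategy $X^{\hat{\bm\pi}}_{0,x_0}-\xi$ is a strictly positive geometric process, hence $X^{\hat{\bm\pi}}_{t,x}(T_i)-\xi$ is lognormal for every $T_i$ and each conditional CDF $F^{\hat{\bm\pi}}(t,x,y;T_i)$ satisfies $F^{\hat{\bm\pi}}_x<0$ and $F^{\hat{\bm\pi}}_{xx}>0$ at the pertinent median, exactly as in the computation in the proof of Proposition~\ref{prop:Sufficiency}. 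Consequently the infinitesimal rate of change of $J^{\hat{\bm\pi}}$ under a deviation $\pi$ equals, up to a positive factor, a nonnegative combination of the quadratics $\varphi^{\hat{\bm\pi}}_{t,x,1/2}(\pi)$ across horizons; minimizing it over $\{Q\pi\ge0\}$ is again program~\eqref{general small alpha 1}, whose unique solution is $v^*(t)(x-\xi)=\hat{\bm\pi}(t,x)$, so no profitable deviation exists.

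For necessity I would chain the lemmas from the last interval inward. Lemma~\ref{TwoMedianNeceCondi} already forces $\hat{\bm\pi}(t,x)=v^*(t)(x-\xi)$ on $[T_{N-1},T)$ for a constant $\xi$ lying below every reachable state, with $t^*=T$ and $\tau^*=T_{N-1}$. On $[T_{N-2},T_{N-1})$, Lemma~\ref{le:NeccPDETwoMedian} pins the strategy to the affine form~\eqref{eq:NeccStrategyTwoMedian2}, $\hat{\bm\pi}(t,x)=(\tilde a_0(t)+\tilde a_1(t)x)v^*(t)$ with $\tilde a_0,\tilde a_1$ continuous, while Lemma~\ref{le:EquilibriumQuantileTwoMedian} supplies the integral identity~\eqref{eq:a1Eq1TwoMedian}; differentiating it in $t$ gives $\tilde a_1(t)(1-\tilde a_1(t))=0$, and since $\tilde a_1$ is continuous and, by Lemma~\ref{le:EquilibriumQuantileTwoMedian}-(i), nonzero somewhere on each $[t,T_{N-1})$, we conclude $\tilde a_1\equiv1$. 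It then remains to show $\tilde a_0$ is constant and equal to $-\xi$: I would reproduce the support argument of Proposition~\ref{prop:NeccAlphaHalf}, showing that if $d\tilde a_0(s)$ were a nontrivial positive measure then, starting the wealth at the lower boundary $-\tilde a_0(t)$ and using $e^{y}-1>y$ together with the characterization of the support of $\int_t^{\cdot}v^*(s)\tran\sigma(s)\,dW(s)$ (Lemma~2 of \citet{HeJiang2020:LinearSDE}), the weighted criterion would strictly exceed the affine value recorded in~\eqref{eq:EquilibriumWeightedQuantile}, a contradiction. Matching the constant across $T_{N-1}$ through continuity of the reachable--set boundary $-\tilde a_0(t)$, which must agree with the trapping level $\xi$ of the portfolio--insurance continuation, yields $\hat{\bm\pi}(t,x)=v^*(t)(x-\xi)$ on $[T_{N-2},T)$. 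The same chain of Lemmas~\ref{le:NeccPDETwoMedian}--\ref{le:EquilibriumQuantileTwoMedian} then applies, with the obvious relabelling, to each earlier interval $[T_{n-1},T_n)$: once portfolio insurance with level $\xi$ is known on $[T_n,T)$, the terminal datum of the transport equation at $T_n$ is the affine--in--$x$ function inherited from that continuation (carrying the deterministic factor $e^{\frac{1}{2}\int_{T_n}^{T}\rho(s)\,ds}$), so the argument reproduces $\tilde a_1\equiv1$ and $\tilde a_0\equiv-\xi$ on $(T_{n-1},T_n)$. Iterating downward over $n=N,N-1,\dots,1$ extends the representation to all of $[0,T)$ with a single $\xi$, and $\xi<x_0$ follows because $x_0$ is reachable while $\xi$ lies strictly below every reachable state.

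The main obstacle is the constancy of $\tilde a_0$. Unlike the single--date case, the quantity one must show strictly exceeds its affine candidate is the \emph{weighted} object $w_{N-1,N}G^{\hat{\bm\pi}}(t,\cdot,1/2)+(1-w_{N-1,N})G^{\hat{\bm\pi}}(t,\cdot,1/2;T_{N-1})$, which couples the median at the true horizon $T$ with that at the intermediate date $T_{N-1}$. The delicate point is to verify that the probability estimate of the type $\prob(\cdots\le0)<1/2$ used in Proposition~\ref{prop:NeccAlphaHalf} survives this convex combination, i.e. that a perturbation of $\tilde a_0$ raises the weighted sum of the two medians strictly. I expect this to succeed because both wealth levels $X_{t,\cdot}(T_{N-1})$ and $X_{t,\cdot}(T)$ are driven by the same Brownian path, so the path--perturbation in the support lemma can be engineered to move them in the same direction; but this is where the bookkeeping beyond the single--date proof concentrates, and it is the step most likely to demand the full strength of the support--of--the--stochastic--integral lemma.
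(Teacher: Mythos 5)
Your proposal follows the paper's own route almost step for step: the lemma chain (Lemma \ref{TwoMedianNeceCondi} for $[T_{N-1},T)$, Lemma \ref{le:NeccPDETwoMedian} for the affine form, Lemma \ref{le:EquilibriumQuantileTwoMedian} and the identity \eqref{eq:a1Eq1TwoMedian} giving $\tilde a_1\equiv 1$), the support/path-perturbation argument for the intercept, and the backward iteration over the periods $[T_{i-1},T_i)$ are exactly what the paper does, and your sufficiency argument (the derivative of $J$ decomposes into a positive combination of single-horizon quadratics, each minimized by $v^*(t)(x-\xi)$) is the intended one. One of your worries is also unnecessary: you do not need to engineer a single path perturbation that moves both medians in the same direction, because the paper derives a strict inequality for each horizon separately --- \eqref{MedianLimitIn hata0 ForT} for the $T$-median and the analogue of \eqref{MedianLimitIn a0 ForT} for the $T_{N-1}$-median, each against its own affine candidate --- and then simply adds them with the weights $w_{N-1,N}\in(0,1]$ to contradict \eqref{eq:EquilibriumWeightedQuantile}.

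The genuine gap is your step ``matching the constant across $T_{N-1}$ through continuity of the reachable-set boundary.'' No such continuity is available, and invoking it begs the question. Since $\mathbb{A}$ only requires $\theta_0,\theta_1\in C_{\mathrm{pw}}([0,T))$, the configuration $\tilde a_0\equiv-\xi_1$ on $[T_{N-2},T_{N-1})$ with a jump to $-\xi$ at $T_{N-1}$ and $\xi_1>\xi$ is a perfectly admissible strategy whose reachable-set boundary genuinely jumps from $\xi_1$ down to $\xi$ at $T_{N-1}$; all one can prove a priori (via the reachable-set results of \citet{HeJiang2020:LinearSDE}) is that the intercept $\hat a_0$ is \emph{increasing}, which permits exactly this jump. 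Moreover, your atomless-case argument does not extend to it as stated: when $d\hat a_0$ restricted to $[t,T)$ is a single atom at $T_{N-1}$, the representation \eqref{eq:PropNeccAlphaHalfEqTwoMedian3} shows that $X^{\hat{\bm\pi}}_{t,\xi_1}(T)-\bigl(\beta_1(t,1/2)\xi_1+\beta_0(t,1/2)\bigr)$ is a positive multiple of $e^{M}-1$ for a single centered normal $M$, whose median is exactly zero; so the strict inequality you need fails at the boundary point (the bound $e^y-1>y$ only gives probability $\le 1/2$, with equality). This is precisely why the paper devotes a separate case ($\tau_*\le T_{N-2}$) to the jump: there, the strategy on $[T_{N-2},T_{N-1})$ is pure portfolio insurance at level $\xi_1$, so $G^{\hat{\bm\pi}}(t,x,1/2;T_{N-1})$ is \emph{exactly} affine for $x>\xi_1$, while for $x$ strictly above $\xi_1$ the $T$-horizon wealth contains two lognormal terms (one from the atom, one from $x-\xi_1$) and the linear-versus-exponential comparison then does give $\prob\bigl(X^{\hat{\bm\pi}}_{t,x}(T)\le\beta_1(t,1/2)x+\beta_0(t,1/2)\bigr)<1/2$; combining the exact affine identity at horizon $T_{N-1}$ with this strict inequality at horizon $T$ and $w_{N-1,N}>0$ contradicts \eqref{eq:EquilibriumWeightedQuantile}. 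So the cross-date matching is a substantive extra case requiring its own median computation, not a continuity remark; with that case added (and with monotonicity of $\hat a_0$ established first, which your sketch silently presupposes when treating $d\tilde a_0$ as a positive measure), your proof coincides with the paper's.
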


\subsubsection{Detailed Proofs}\label{subsubse:proofsMulti}
\begin{pfof}{Lemma \ref{reachable state}}
If $\underline{t}\geq T_1$, then $X_{0,x_0}^{\hat{\bm{\pi}}}(T_1)=x_0$ and thus $\bbX_{T_1}^{x_0,\hat{\bm{\pi}}}=\{x_0\}$. For any $t\in [T_1, T]$, we have $X^{\hat{\bm{\pi}}}_{T_1,x_0}(t)=X^{\hat{\bm{\pi}}}_{0,x_0}(t)$ and thus $\bbX_t^{x_0,T_1,\hat{\bm{\pi}}}=\bbX_t^{x_0,\hat{\bm{\pi}}}$.
In the following, we consider the case in which $\underline{t}< T_1$.

Fix $x_1\in \bbX_{T_1}^{x_0,\hat{\bm{\pi}}} $ and $t\in [T_1, T]$.
If $\theta_0(s)+\theta_1(s)x_1=0$ for any $s\in [T_1, t)$, then  Corollary 4   in \citet{HeJiang2020:LinearSDE} shows that $\bbX_t^{x_1,T_1,\hat{\bm{\pi}}}=\{x_1\} \subseteq \bbX_{T_1}^{x_0,\hat{\bm{\pi}}}\subseteq \bbX_{t}^{x_0,\hat{\bm{\pi}}}$. If $\theta_0(s)+\theta_1(s)x_1\neq 0$ for some $s\in [T_1, t)$, then  Corollary 4 and Theorem 3 in \citet{HeJiang2020:LinearSDE} show that $X^{\hat{\bm{\pi}}}_{0,x_0}(t)$ and $X^{\hat{\bm{\pi}}}_{T_1,x_1}(t)$ have density functions, so $\bbX_t^{x_0,\hat{\bm{\pi}}}$ and $\bbX_t^{x_1,T_1,\hat{\bm{\pi}}}$ are the interiors of the support of $X^{\hat{\bm{\pi}}}_{0,x_0}(t)$ and $X^{\hat{\bm{\pi}}}_{T_1,x_1}(t)$, respectively. It is well known that the support of $X^{\hat{\bm{\pi}}}_{T_1,x}(t)$ is contained in the support of $X^{\hat{\bm{\pi}}}_{0,x_0}(t)$ for each $x$ in the support of $X^{\hat{\bm{\pi}}}_{0,x_0}(T_1)$. As a result, $\bbX_t^{x_1,T_1,\hat{\bm{\pi}}}\subseteq \bbX_t^{x_1,\hat{\bm{\pi}}}$ for any $x_1\in \bbX_{T_1}^{x_0,\hat{\bm{\pi}}} $.\halmos
\end{pfof}

\begin{pfof}{Proposition \ref{prop:NeccAlphaNotHalfMultipleT}}
 For $\alpha\in(0,1/2)$, similar to Proposition \ref{prop:Sufficiency}, it's straightforward to show that $\hat{\bm{\pi}}$ as given by \eqref{tail risk equilibrium Strategy} is an intra-personal equilibrium for multiple-time-point $\alpha$-level quantile maximization.

Suppose that $\hat{\bm{\pi}}\in\mathbb{A}$ is an intra-personal equilibrium for the multiple-time-point $\alpha$-level quantile maximization. For $\alpha\in (1/2, 1)$, fix any $x_{1}\in \bbX_{T_{N-1}}^{x_0,\hat{\bm{\pi}}}$. According to Lemma \ref{reachable state}, $\hat{\bm{\pi}}(t,x), t\in [T_{N-1}, T), x\in \bbR$ is an intra-personal equilibrium for the $\alpha$-level quantile maximization with time interval $[T_{N-1}, T]$ and initial wealth $x_{1}$ at time $T_{N-1}$. According to Theorem \ref{th:EquilibriumLowerLimit}, however, $\hat{\bm{\pi}}$ cannot be an intrapersonal equilibrium so we derive a contradiction. Thus, there does not exist intra-personal equilibrium in $\mathbb{A}$ for the multiple-time-point $\alpha$-level quantile maximization with $\alpha \in (1/2,1)$.

For  $\alpha\in(0,1/2)$, recall $t^*$ as defined in Lemma \ref{le:QuantileDifferentiability}, $\underline{t}$ as defined in  Lemma \ref{le:NeccNonDegAlphaLarge}. We claim that it is either the case $t^*=0$ or the case $\underline{t}=T$ and, consequently, \eqref{tail risk equilibrium Strategy} holds. For the sake of contradiction, suppose $t^*>0$ and $\underline{t}<T$. Then, the definition of $t^*$ and $\underline{t}$ implies that $\underline{t} <t^*$. Moreover, there exist two target dates $T_{i-1}<T_{i}$, such that $t^*\in (T_{i-1}, T_{i}]$, $i\in \{1,2,\dots, N \}$. Fix any $T_{i-1}'\in (\max\{T_{i-1}, \underline{t}\}, t^*)$. For any $t\in [T_{i-1}', t^*)$, $\epsilon\in (0, t^*-t)$, $s\in [t^*, T]$, and $x\in \bbR$, we have $X^{\hat{\bm{\pi}}}_{t,x}(s)=X^{\hat{\bm{\pi}}}_{t,x}(t^*)$, $X^{\hat{\bm{\pi}}_{t,\epsilon,\pi}}_{t,x}(s)=X^{\hat{\bm{\pi}}_{t,\epsilon,\pi}}_{t,x}(t^*)$.  For any $x'\in \bbX_{T_{i-1}'}^{x_0,\hat{\bm{\pi}}}$, by Lemma \ref{reachable state}, $\hat{\bm{\pi}}$ is an intra-personal equilibrium for the $\alpha$-level quantile maximization for the period $[T_{i-1}', t^*]$ and with initial wealth $x'$ at time $T_{i-1}'$. Then, by Theorem \ref{th:EquilibriumLowerLimit}, $\hat{\bm{\pi}}(t,x) = \theta(t) (x-x'),  \quad \forall t\in[T_i', t^*), x\in\mathbb{R}$  for some $\theta\in C_{\mathrm{pw}}([T_{i-1}', t^*))$. Because $\underline{t}< T_{i-1}'$,  Corollary 4 and Theorem 3 in \citet{HeJiang2020:LinearSDE} imply that $\bbX_{T_{i-1}'}^{x_0,\hat{\bm{\pi}}}$ is a nonempty open interval. Then, taking arbitrary $x'\neq x''$ in $\bbX_{T_{i-1}'}^{x_0,\hat{\bm{\pi}}}$, we have $\hat{\bm{\pi}}(t,x) = \theta(t) (x-x')$ and $\hat{\bm{\pi}}(t,x)=\theta(t) (x-x'')$ for all $t\in[T_{i-1}', t^*)$ and $x\in\mathbb{R}$, so $\theta(t)=0 $ for any $ t\in[T_{i-1}', t^*)$. This, however, contradicts, the definition of $t^*$.
 \halmos
 \end{pfof}

\begin{pfof}{Lemma \ref{TwoMedianNeceCondi}}
Fix any $x_{N-1}\in \bbX_{T_{N-1}}^{x_0,\hat{\bm{\pi}}}$. According to Lemma \ref{reachable state}, $\hat{\bm{\pi}}$ is an intra-personal equilibrium for the $\alpha$-level quantile maximization in the period $[T_{N-1}, T]$ and with initial wealth $x_{N-1}$ at time $T_{N-1}$. Theorem \ref{th:EquilibriumLowerLimit} then yields that \eqref{equilibrium for later time period} holds for some constant $\xi<x_{N-1}$. Because $x_{N-1}\in \bbX_{T_{N-1}}^{x_0,\hat{\bm{\pi}}}$ is arbitrary, we derive \eqref{decreasing threshold} holds and, consequently, the left end of $\bbX_{T_{N-1}}^{x_0,\hat{\bm{\pi}}} $ is finite.

Next, it is straightforward to see from \eqref{equilibrium for later time period} that $t^*=T$ and $t_*\le T_{N-1}$. For the sake of contradiction, suppose $\tau^*<T_{N-1}$. Then, similar to the proof in
Lemma \ref{le:NeccNonDegAlphaLarge}, for any $t\in [\tau^* \vee T_{N-2},T_{N-1})$, $x\in \bbX^{x_0,\hat{\bm{\pi}}}_t$, and $\pi\in \bbR^m$ with $b(t)\tran \pi>0$ and $Q\pi\ge 0$, there exists $\epsilon_0>0$ such that $G^{\bm{\pi}_{t,\epsilon,\pi}}(t,x,\frac{1}{2};T_{N-1})>x =   G^{\hat{\bm{\pi}}}(t,x,\frac{1}{2};T_{N-1})$ for any $\epsilon \in (0, \epsilon_0]$. Recall $\underline{t}$ as defined in Lemma \ref{le:NeccNonDegAlphaLarge}.
If $\underline{t}\geq \tau^*$, then $\underline{t}\geq T_{N-1}$ and, consequently, $\bbX_{T_{N-1}}^{x_0,\hat{\bm{\pi}}}=\{x_0\}$. Then, \eqref{decreasing threshold} implies that $\xi<x_0$. As a result, for any $t\in [\tau^* \vee T_{N-2},T_{N-1})$, $\bbX^{x_0,\hat{\bm{\pi}}}_t\backslash \mathbb{S}^{\hat{\bm{\pi}}}_t=\{x_0\}$ is nonempty. If $\underline{t}< \tau^*$, then for any $t\in [\tau^* \vee T_{N-2},T_{N-1})$,  Corollary 4 and Theorem 3 in \citet{HeJiang2020:LinearSDE} show that $\bbX^{x_0,\hat{\bm{\pi}}}_t$ is a nonempty open interval and thus $\bbX^{x_0,\hat{\bm{\pi}}}_t\backslash \mathbb{S}^{\hat{\bm{\pi}}}_t$ is nonempty. Therefore, whether or not $\underline{t}\geq \tau^*$, for any $t\in [\tau^* \vee T_{N-2},T_{N-1})$, we can always find some $x\in \bbX^{x_0,\hat{\bm{\pi}}}_t\backslash \mathbb{S}^{\hat{\bm{\pi}}}_t$, and Corollary 3, Theorem 2-(iv) and Corollary 2-(ii) in \citet{HeJiang2020:LinearSDE} and
  show that $F^{\hat{\bm{\pi}}}_{x}(t,x,G^{\hat{\bm{\pi}}}(t,x,\frac{1}{2}))<0$. By the definition of $\tau^*$, we have $\hat{\bm{\pi}}(t,x)=0$, so we conclude from \eqref{eq:QuanDerivativeWRTeps} that there exists $\pi\in \bbR^m$ with $b(t)\tran \pi>0$ and $Q\pi\ge 0$ and $\epsilon_1\in (0, \epsilon_0]$ such that
$G^{\bm{\pi}_{t,\epsilon,\pi}}(t,x,\frac{1}{2})>G^{\hat{\bm{\pi}}}(t,x,\frac{1}{2}),\forall \epsilon\in (0, \epsilon_1]$. Recall that we already showed that for such $\pi$, $G^{\bm{\pi}_{t,\epsilon,\pi}}(t,x,\frac{1}{2};T_{N-1})>x =   G^{\hat{\bm{\pi}}}(t,x,\frac{1}{2};T_{N-1})$ for sufficiently small $\epsilon$. As a result, $\hat{\bm{\pi}}$ is not an intra-personal equilibrium, which is a contradiction. Thus, we must have $\tau^*=T_{N-1}$.
  \halmos
\end{pfof}

\begin{pfof}{Lemma \ref{le:NeccOptTwoMedian}}
For each $\alpha\in (0,1)$, denote by
 \begin{align}\label{eq:NeccOptObjNew}
 \tilde \varphi^{\hat{\bm{\pi}}}_{t,x,\alpha}(v):&=F^{\hat{\bm{\pi}}}_x(t,x, G^{\hat{\bm{\pi}}}(t,x,\alpha;T_{N-1});T_{N-1}) b(t)\tran v \notag\\
 &+\frac{1}{2}F^{\hat{\bm{\pi}}}_{xx}(t,x, G^{\hat{\bm{\pi}}}(t,x,\alpha;T_{N-1});T_{N-1})\|\sigma(t)\tran v\|^2.
\end{align}
For any $t\in[T_{N-2},T_{N-1})$  and $x\in \bbX^{x_0,\hat{\bm{\pi}}}_t\backslash \tilde{\mathbb{S}}^{\hat{\bm{\pi}}}_t$, Lemma \ref{le:QuantileDifferentiability} with $[t,T]$ therein replaced by $[t,T_{N-1}]$ yield that
\begin{align*}
\lim_{\epsilon\downarrow 0}\frac{ G^{\bm{\pi}_{t,\epsilon,\pi}}(t,x,\alpha;T_{N-1}) - G^{\hat{\bm{\pi}}}(t,x,\alpha;T_{N-1})}{\epsilon}=\frac{\tilde \varphi^{\hat{\bm{\pi}}}_{t,x,\alpha}(\hat{\bm{\pi}}(t,x)) - \tilde\varphi^{\hat{\bm{\pi}}}_{t,x,\alpha}(\pi)}{F^{\hat{\bm{\pi}}}_y(t,x, G^{\hat{\bm{\pi}}}(t,x,\alpha;T_{N-1});T_{N-1})}.
\end{align*}
Combining the above with Lemma \ref{le:QuantileDifferentiability}, we have
 \begin{align*}
    \lim_{\epsilon\downarrow 0}\frac{J(t,x; \bm{\pi}_{t,\epsilon,\pi} )-J(t,x; \hat{\bm{\pi}} )}{\epsilon}
    &=(1-w_{N-1,N})\frac{\tilde \varphi^{\hat{\bm{\pi}}}_{t,x,\alpha}(\hat{\bm{\pi}}(t,x)) - \tilde\varphi^{\hat{\bm{\pi}}}_{t,x,\alpha}(\pi)}{F^{\hat{\bm{\pi}}}_y(t,x, G^{\hat{\bm{\pi}} }(t,x,\alpha;T_{N-1});T_{N-1})}\\
    &+w_{N-1,N}\frac{\varphi^{\hat{\bm{\pi}}}_{t,x,\alpha}(\hat{\bm{\pi}}(t,x)) - \varphi^{\hat{\bm{\pi}}}_{t,x,\alpha}(\pi)}{F^{\hat{\bm{\pi}}}_y(t,x,G^{\hat{\bm{\pi}}}(t,x,\alpha))}.
  \end{align*}
Now, we set $\alpha=\frac{1}{2}$. Then, we have
\begin{align*}
  \lim_{\epsilon\downarrow 0}\frac{J(t,x; \bm{\pi}_{t,\epsilon,\pi} )-J(t,x; \hat{\bm{\pi}} )}{\epsilon} = \hat \varphi^{\hat{\bm{\pi}}}_{t,x}(\hat{\bm{\pi}}(t,x)) - \hat\varphi^{\hat{\bm{\pi}}}_{t,x}(\pi),
\end{align*}
where
   \begin{align*}
 &\hat \varphi^{\hat{\bm{\pi}}}_{t,x}(v):=\lambda_1^{\hat{\bm{\pi}}}(t,x)b(t)\tran v  +\frac{1}{2}\lambda_2^{\hat{\bm{\pi}}}(t,x) \|\sigma(t)\tran v\|^2.
\end{align*}
As a result, because $\hat{\bm{\pi}}$ is an intra-personal equilibrium, we have
\begin{align}\label{eq:LeNeccOptEq2}
  \hat\varphi^{\hat{\bm{\pi}}}_{t,x}(\hat{\bm{\pi}}(t,x))\le \hat\varphi^{\hat{\bm{\pi}}}_{t,x}(\pi),\quad \forall \pi\neq \hat{\bm{\pi}}(t,x)\text{ with }Q\pi \ge 0.
\end{align}
Similar to the proof of Lemma \ref{le:NeccOpt}, one can show that $\lambda_1^{\hat{\bm{\pi}}}(t,x)<0$ and $\lambda_2^{\hat{\bm{\pi}}}(t,x)>0$. Then, \eqref{eq:LeNeccOptEq2} immediately implies that $-\hat{\bm{\pi}}(t,x)\lambda_2^{\hat{\bm{\pi}}}(t,x)/\lambda_1^{\hat{\bm{\pi}}}(t,x)$ is the optimizer of \eqref{general small alpha 1}, i.e., \eqref{eq:NeccOptSolutionTwoMedian} holds.

Next, we prove $\underline{t}\leq T_{N-2}$. For the sake of contradiction, suppose $\underline{t}>T_{N-2}$. By the definition of $\underline{t}$, $X_{0,x_0}^{\hat{\bm{\pi}}}(s)=x_0$ and thus $\bbX^{x_0,\hat{\bm{\pi}}}_s=\{x_0\}$ for all $s\in[0,\underline{t}]$. When $\tau_*>T_{N-2}$, we choose any $t\in[T_{N-2}, \tau_*\wedge \underline{t} \wedge T_{N-1})$ and when $\tau_*\le T_{N-2}$ and $\xi_1\neq x_0$, we choose any $t\in[T_{N-2},\underline{t}\wedge T_{N-1} )$. For the above selected $t$, because $x_0\neq \xi_1$ and $\bbX^{x_0,\hat{\bm{\pi}}}_t=\{x_0\}$, we have $x_0\in \bbX^{x_0,\hat{\bm{\pi}}}_t\backslash \tilde{\mathbb{S}}^{\hat{\bm{\pi}}}_t$. As a result, $\lambda_1^{\hat{\bm{\pi}}}(t,x_0)<0$, $\lambda_2^{\hat{\bm{\pi}}}(t,x_0)>0$, and $\hat{\bm{\pi}}(t,x_0) = -\frac{\lambda_1^{\hat{\bm{\pi}}}(t,x_0)}{\lambda_2^{\hat{\bm{\pi}}}(t,x_0)}v^*(t)$.
The above is a contradiction because $t<\underline{t}$ and thus $\hat{\bm{\pi}}(t,x_0)=0$ and because $v^*(t)\neq 0$.

The remaining case is when $\tau_*\leq T_{N-2}$ and $\xi_1 =x_0$. In this case, we have $\underline{t}\geq T_{N-1}$, so $X_{0,x_0}^{\hat{\bm{\pi}}}(s)=x_0$ and $\bbX^{x_0,\hat{\bm{\pi}}}_s=\{x_0\}$ for all $s\in[0,\underline{t}]$. Fix $t\in[T_{N-2},T_{N-1})$. For any $\pi\in \bbR^m$ with $b(t)\tran \pi>0$ and $Q\pi\ge 0$, there exists $\epsilon_0\in(0,T_{N-1}-t)$, such that for any $\epsilon\in(0,\epsilon_0]$, $ F^{\bm{\pi}_{t,\epsilon,\pi}}(t,x_0,x_0;T_{N-1})=\Phi\left(\frac{-\int_{t}^{t+\epsilon} b(s)\tran \pi ds}{ \sqrt{ \int_{t}^{t+\epsilon} \|\sigma(s)\tran \pi\|^2 ds }   } \right)<\frac{1}{2}$.  In addition, because $\pi\neq 0$, $  F^{\bm{\pi}_{t,\epsilon,\pi}}(t,x_0,y;T_{N-1})$ is strictly increasing and continuous in $y$ for any $\epsilon \in (0,\epsilon_0]$. Consequently, we conclude that $ G^{\bm{\pi}_{t,\epsilon,\pi}}(t,x_0,\frac{1}{2};T_{N-1})>x_0=  G^{\hat{\bm{\pi}}}(t,x_0,\frac{1}{2};T_{N-1})$ for any $\epsilon \in (0,\epsilon_0]$, where the equality is the case because $X_{0,x_0}^{\hat{\bm{\pi}}}(T_{N-1})=x_0$. In other words, the median of wealth at time $T_{N-1}$ is improved by taking any strategy $\pi$ with $b(t)\tran \pi>0$ and $Q\pi\ge 0$. To complete the proof, we only need to show that there exists $\pi$ with $b(t)\tran \pi>0$ and $Q\pi\ge 0$ such that taking $\pi$ also improves the median of wealth at time $T$ because this implies that $\hat{\bm{\pi}}$ is not an intra-personal equilibrium and thus we have a contradiction.

If $\xi_1=\xi$, then we must have $\underline{t}=T$ because $\underline{t}>T_{N-1}$, $\xi_1=x_0$, and \eqref{equilibrium for later time period} holds. In this case, for any $\pi\in \bbR^m$ with $b(t)\tran \pi>0$ and $Q\pi\ge 0$, there exists $\epsilon_1\in(0,\epsilon_0]$ such that $F^{\bm{\pi}_{t,\epsilon,\pi}}(t,x_0,x_0)=\Phi\left(\frac{-\int_{t}^{t+\epsilon} b(s)\tran \pi ds}{ \sqrt{ \int_{t}^{t+\epsilon} \|\sigma(s)\tran \pi\|^2 ds }   } \right)<\frac{1}{2}$ for any $\epsilon \in (0,\epsilon_1]$. In addition, because $\pi\neq 0$, $  F^{\bm{\pi}_{t,\epsilon,\pi}}(t,x_0,y)$ is strictly increasing and continuous in $y$ for any $\epsilon \in (0,\epsilon_0]$. As a result, $ G^{\bm{\pi}_{t,\epsilon,\pi}}(t,x_0,\frac{1}{2})>x_0= G^{\hat{\bm{\pi}}}(t,x_0,\frac{1}{2}),\forall\epsilon \in (0,\epsilon_1]$, where the equality is the case because $\underline{t}=T$ and thus $X_{0,x_0}^{\hat{\bm{\pi}}}(T)=x_0$. 

If $\xi_1\neq \xi$, then we must have $t_*= T_{N-1}$. As a result,  Corollary 3, Theorem 2-(iv) and Corollary 2-(ii) in \citet{HeJiang2020:LinearSDE} imply that $F^{\hat{\bm{\pi}}}_{x}(t,x_0,G^{\hat{\bm{\pi}}}(t,x_0,\alpha))<0$. Because $\hat{\bm{\pi}}(t,x_0)=0$,
 we conclude from \eqref{eq:QuanDerivativeWRTeps} that there exists $\epsilon_1\in (0, \epsilon_0]$ and $\pi\in \bbR^m$ with $b(t)\tran \pi>0$ and $Q\pi\ge 0$ such that $G^{\bm{\pi}_{t,\epsilon,\pi}}(t,x,\frac{1}{2})>G^{\hat{\bm{\pi}}}(t,x,\frac{1}{2}),\forall \epsilon\in (0, \epsilon_1]$.
 \halmos
\end{pfof}

  \begin{pfof}{Lemma \ref{le:NeccPDETwoMedian}}
 Corollary 4   in \citet{HeJiang2020:LinearSDE} shows that $\bbX^{x_0,\hat{\bm{\pi}}}_t$ is increasing in $t\in [0,T]$ and that $\bbX^{x_0,\hat{\bm{\pi}}}_t$ is a half space or $\bbR$ for $t\in (\underline{t},T]$. Combining the above with \eqref{decreasing threshold} and recalling $\underline{t}\le T_{N-2}$ as shown in Lemma \ref{le:NeccOptTwoMedian}, we immediately derive part (i) of the Lemma.

 Next, we prove part (ii). Recall that Lemma \ref{TwoMedianNeceCondi} shows $\tau^*=T_{N-1}$ and $t^*=T$ and that Lemma \ref{le:NeccOptTwoMedian} shows $\underline{t}\le T_{N-2}$. Then, following the same proof as the one of Lemma \ref{le:NeccPDE}, we can derive \eqref{eq:NeccStrategyTwoMedian1} and \eqref{eq:NeccStrategyTwoMedian2} from \eqref{eq:NeccOptSolutionTwoMedian}. Because $\tau^*=T_{N-1}$ and $t^*=T$ and because $\mathbb{S}^{\hat{\bm{\pi}}}_t  \subseteq \tilde{\mathbb{S}}^{\hat{\bm{\pi}}}_t,\forall t\in [0,T_{N-1})$, we derive from Corollary 3, Theorem 2 and Corollary 2 in \citet{HeJiang2020:LinearSDE} that for $t\in(0,T_{N-1})$ and $x\in \bbX^{x_0,\hat{\bm{\pi}}}_{t}\backslash \mathbb{\tilde S}^{\hat{\bm{\pi}}}_t$,
\begin{align*}
&  F^{\hat{\bm{\pi}}}_{t}(t,x,G^{\hat{\bm{\pi}}}(t,x, 1/2)) + F^{\hat{\bm{\pi}}}_{x}(t,x,G^{\hat{\bm{\pi}}}(t,x, 1/2))b(t)\tran \hat{\bm{\pi}}(t,x) \\
&\qquad + \frac{1}{2}F^{\hat{\bm{\pi}}}_{xx}(t,x,G^{\hat{\bm{\pi}}}(t,x, 1/2))\|\sigma(t)\hat{\bm{\pi}}(t,x)\|^2= 0,\\
&   F^{\hat{\bm{\pi}}}_{t}(t,x, G^{\hat{\bm{\pi}}}(t,x, 1/2;T_{N-1});T_{N-1}) + F^{\hat{\bm{\pi}}}_{x}(t,x, G^{\hat{\bm{\pi}}}(t,x, 1/2;T_{N-1});T_{N-1})b(t)\tran \hat{\bm{\pi}}(t,x)\\
&\qquad  + \frac{1}{2}F^{\hat{\bm{\pi}}}_{xx}(t,x, G^{\hat{\bm{\pi}}}(t,x, 1/2;T_{N-1});T_{N-1})\|\sigma(t)\hat{\bm{\pi}}(t,x)\|^2= 0.
\end{align*}
Multiplying by $w_{N-1,N}/ F^{\hat{\bm{\pi}}}_{y}(t,x,G^{\hat{\bm{\pi}}}(t,x, \frac{1}{2}))$ and $(1-w_{N-1,N})/ F^{\hat{\bm{\pi}}}_{y}(t,x, G^{\hat{\bm{\pi}}}(t,x, \frac{1}{2};T_{N-1});T_{N-1})$ to both sides of the first and second equations, respectively, in the above and summing them up, we obtain
\begin{align*}
w_{N-1,N}\frac{ F^{\hat{\bm{\pi}}}_{t}(t,x,G^{\hat{\bm{\pi}}}(t,x, \frac{1}{2})) }{ F^{\hat{\bm{\pi}}}_{y}(t,x,G^{\hat{\bm{\pi}}}(t,x, \frac{1}{2})) }  +(1-w_{N-1,N})\frac{F^{\hat{\bm{\pi}}}_{t}(t,x,G^{\hat{\bm{\pi}}}(t,x, \frac{1}{2};T_{N-1});T_{N-1}) }{ F^{\hat{\bm{\pi}}}_{y}(t,x,G^{\hat{\bm{\pi}}}(t,x, \frac{1}{2};T_{N-1});T_{N-1}) }\\
+\lambda_1^{\hat{\bm{\pi}}}(t,x) b(t)\tran \hat{\bm{\pi}}(t,x) + \frac{1}{2}\lambda_2^{\hat{\bm{\pi}}}(t,x)  \|\sigma(t)\hat{\bm{\pi}}(t,x)\|^2= 0.
\end{align*}
Combining the above with \eqref{eq:NeccStrategyTwoMedian1} and \eqref{eq:NeccStrategyTwoMedian2}, we obtain
\begin{align}\label{OriginalTwoWeightEqua}
 w_{N-1,N}\frac{ F^{\hat{\bm{\pi}}}_{t}(t,x,G^{\hat{\bm{\pi}}}(t,x,\frac{1}{2})) }{ F^{\hat{\bm{\pi}}}_{y}(t,x,G^{\hat{\bm{\pi}}}(t,x,\frac{1}{2})) }  +(1-w_{N-1,N})\frac{F^{\hat{\bm{\pi}}}_{t}(t,x,G^{\hat{\bm{\pi}}}(t,x,\frac{1}{2};T_{N-1});T_{N-1}) }{ F^{\hat{\bm{\pi}}}_{y}(t,x,G^{\hat{\bm{\pi}}}(t,x,\frac{1}{2};T_{N-1});T_{N-1}) }\nonumber\\
  + \frac{1}{2}\lambda_1^{\hat{\bm{\pi}}}(t,x)\rho(t)\big(\tilde a_0(t)+\tilde a_1(t)x\big)= 0.
\end{align}
 Corollary 3 and    Corollary 2-(ii), (iii) in \citet{HeJiang2020:LinearSDE}  yield that
\begin{align*}
&  G^{\hat{\bm{\pi}}}_x(t,x,1/2) = -\frac{F^{\hat{\bm{\pi}}}_x(t,x,G^{\hat{\bm{\pi}}}(t,x,1/2))}{F^{\hat{\bm{\pi}}}_y(t,x,G^{\hat{\bm{\pi}}}(t,x,1/2))},\quad G^{\hat{\bm{\pi}}}_t(t,x,1/2) = -\frac{F^{\hat{\bm{\pi}}}_t(t,x,G^{\hat{\bm{\pi}}}(t,x,1/2))}{F^{\hat{\bm{\pi}}}_y(t,x,G^{\hat{\bm{\pi}}}(t,x,1/2))},\\
&  G^{\hat{\bm{\pi}}}_x(t,x,1/2;T_{N-1}) = -\frac{F^{\hat{\bm{\pi}}}_x(t,x, G^{\hat{\bm{\pi}}}(t,x,1/2;T_{N-1});T_{N-1})}{F^{\hat{\bm{\pi}}}_y(t,x, G^{\hat{\bm{\pi}}}(t,x,1/2;T_{N-1});T_{N-1})},\\
&  G^{\hat{\bm{\pi}}}_t(t,x,1/2;T_{N-1}) = -\frac{F^{\hat{\bm{\pi}}}_t(t,x, G^{\hat{\bm{\pi}}}(t,x,1/2;T_{N-1});T_{N-1})}{F^{\hat{\bm{\pi}}}_y(t,x, G^{\hat{\bm{\pi}}}(t,x,1/2;T_{N-1});T_{N-1})}  .
\end{align*}
Plugging above into \eqref{OriginalTwoWeightEqua}, we derive the differential equation in \eqref{TwoMedianEquation}. Because $\tau^* = T_{N-1}$, we conclude from Corollary 3 and   Corollary 2-(iv) in \citet{HeJiang2020:LinearSDE}   that $ \lim_{t\uparrow T_{N-1},x'\rightarrow x} G^{\hat{\bm{\pi}}}(t,x',1/2;T_{N-1}) =x,\forall x\in \bbR$. Because $t^*=T$, Corollary 3 and   Corollary 2-(iv) in \citet{HeJiang2020:LinearSDE} yield  that
\begin{align*}
  \lim_{t\uparrow T_{N-1},x'\rightarrow x} G^{\hat{\bm{\pi}}}(t,x',1/2) =G^{\hat{\bm{\pi}}}(T_{N-1},x,1/2)=(x -\xi )e^{\int_{T_{N-1}}^T (\rho(s)/2) ds }+\xi,\forall x\neq \xi,
\end{align*}
where the where the second equality is the case due to \eqref{equilibrium for later time period}. As a result, we derive the boundary condition in \eqref{TwoMedianEquation}. \halmos
    \end{pfof}

    \begin{pfof}{Lemma \ref{le:EquilibriumQuantileTwoMedian}}
We prove part (i) first. For the sake of contradiction, suppose $\tilde a_1(s)=0,s\in[t,T_{N-1})$. Because $\tau^*=T_{N-1}$ by Lemma \ref{TwoMedianNeceCondi}, for any $s\in[t,T_{N-1})$, there exists $\tau\in[s,T_{N-1})$ with $\theta_0(\tau)\neq 0$ and thus $\tilde a_0(\tau)\neq 0$. As a result, for any $x\in \bbX_t^{x_0,\hat{\bm{\pi}}}$,
\begin{align*}
   X^{\hat{\bm{\pi}}}_{t,x}(T_{N-1}) = x + \int_t^{T_{N-1}}\tilde a_0(\tau)v^*(\tau)\tran b(\tau)d\tau +\int_t^{T_{N-1}}\tilde a_0(\tau)v^*(\tau)\tran \sigma(\tau)dW(\tau)
\end{align*}
is a non-degenerate normal random variable, which, together with Lemma \ref{reachable state}, implies that $\bbX_{T_{N-1}}^{x_0,\hat{\bm{\pi}}} \supseteq \bbX_{T_{N-1}}^{x,t,\hat{\bm{\pi}}}=\bbR$. On the other hand, \eqref{decreasing threshold} implies that the left end of $\bbX_{T_{N-1}}^{x_0,\hat{\bm{\pi}}} $ is finite, so we arrive at contradiction.

Next, we prove part (ii). Because $\bbX^{x_0,\hat{\bm{\pi}}}_t=(\underline{x}(t),+\infty)$ by Lemma \ref{le:NeccPDETwoMedian} and because $\bbX^{x_0,\hat{\bm{\pi}}}_s$ is increasing in $s$, there exists $\underline{x}_t$ such that $[\underline{x}_t,+\infty)\subset \bbX^{x_0,\hat{\bm{\pi}}}_s\backslash\{\xi,\xi_1\},\forall s\in [t,T_{N-1}]$. As a result, \eqref{TwoMedianEquation} holds in the region $[t,T_{N-1}]\times (\underline{x}_t,+\infty)$. Similar to the proof of Lemma \ref{le:EquilibriumQuantile}, we can apply Lemma \ref{le:TransportEq} to conclude that there exists $\underline{c}_t>\max(\underline{x}(t),\xi,\xi_1)$ such that
 \begin{align*}
         g^{\hat{\bm{\pi}}}(s,x)=w_{N-1,N}\big(\beta_1(s, 1/2)x+ \beta_0(s, 1/2)\big)+(1-w_{N-1,N})\big(\tilde\beta_1(s)x + \tilde\beta_0(s)\big)
 \end{align*}
 for all $(s,x)\in [t,T_{N-1}]\times [\underline{c}_t,+\infty)$. Consequently, \eqref{eq:EquilibriumWeightedQuantile} holds.

 Finally, the proof of \eqref{eq:a1Eq1TwoMedian} is the same as the proof of \eqref{eq:a1Eq1} in Lemma \ref{le:EquilibriumQuantile}.\halmos
\end{pfof}

 \begin{pfof}{Proposition \ref{prop:NeccAlphaHalfMultipleT}}
 By Lemma \ref{le:EquilibriumQuantileTwoMedian}-(ii), we have
  \begin{align*}
    \frac{1}{2}\int_t^{T_{N-1}} \tilde a_1(s)\big(1-\tilde a_1(s)\big)\|\sigma(s)\tran v^*(s)\|^2ds=0,\quad \forall t\in (T_{N-2},T_{N-1}),
  \end{align*}
  which implies that $\tilde a_1(t)(1-\tilde a_1(t))=0,t\in(T_{N-2},T_{N-1})$. Lemma \ref{le:EquilibriumQuantileTwoMedian}-(i) shows for any $t\in (T_{N-2},T_{N-1})$, $\tilde a_1(s)\neq 0$ for some $s\in [t,T_{N-1})$. Then, we must have $\tilde a_1(t)=1,t\in (T_{N-2},T_{N-1})$ because $\tilde a_1\in C([T_{N-2},T_{N-1}))$ as shown by Lemma \ref{le:NeccPDETwoMedian}-(ii). Then, \eqref{eq:NeccStrategyTwoMedian2} implies that
  \begin{align}\label{eq:PropNeccAlphaHalfEq1TwoMedian}
    \hat{\bm{\pi}}(t,x) = (\tilde a_0(t) + x)v^*(t),\quad t\in[T_{N-2},T_{N-1}),x\in \bbR.
  \end{align}
  Define $\hat a_0$ by setting $\hat a_0(t) = \tilde a_0(t),t\in[T_{N-2},T_{N-1})$ and $\hat a_0(t) = -\xi,t\in[T_{N-1},T)$. Then, we have
  \begin{align*}
    \hat{\bm{\pi}}(t,x) = (\hat a_0(t) + x)v^*(t),\quad t\in[T_{N-2},T),x\in \bbR.
  \end{align*}
  Next, we prove that $\hat a_0(t)=-\xi$ for any $t\in [T_{N-2},T)$.

Lemma \ref{le:NeccPDETwoMedian}-(i) shows that $\bbX^{x_0,\hat{\bm{\pi}}}_t=(\underline{x}(t),+\infty),t\in (T_{N-2},T_{N-1}]$ for some decreasing function $\underline{x}$ on $(T_{N-2},T_{N-1}]$. Moreover, we derive from \eqref{decreasing threshold} that $\underline{x}(T_{N-1})\ge \xi$. Because the closure of $\bbX^{x_0,\hat{\bm{\pi}}}_{T_{N-1}}$, which is equal to $[\underline{x}(T_{N-1}),+\infty)$, is the support of $X^{\hat{\bm{\pi}}}_{0,x_0}(T_{N-1})$, we have $\prob(X^{\hat{\bm{\pi}}}_{0,x_0}(T_{N-1})\ge \xi)\ge \prob(X^{\hat{\bm{\pi}}}_{0,x_0}(T_{N-1})\ge \underline{x}(T_{N-1}))=1$. Because $\hat{\bm{\pi}}(s,x) = (- \xi + x)v^*(s),s\in [T_{N-1},T)$ and because $\underline{x}(T_{N-1})\ge \xi$, we conclude that for any $t\in (T_{N-1},T]$, $\prob(X^{\hat{\bm{\pi}}}_{0,x_0}(t)\ge \xi)=1$ and thus, by  Corollary 4   in \citet{HeJiang2020:LinearSDE}, $\bbX^{x_0,\hat{\bm{\pi}}}_t=(\underline{x}(t),+\infty)$ for some $\underline{x}(t)\in \bbR$.

Fix any $t_1\in (T_{N-2}, T)$ and $x_1\in \bbX^{x_0,\hat{\bm{\pi}}}_{t_1} \backslash \{-\hat a_0(t_1) \}$. Lemma \ref{reachable state} implies that $\bbX_t^{x_1,t_1,\hat{\bm{\pi}}} \subseteq \bbX^{x_0,\hat{\bm{\pi}}}_{t} = (\underline{x}(t),+\infty),\forall t\in [t_1,T] $. Corollary 4   in \citet{HeJiang2020:LinearSDE} then shows that $\hat a_0$ must be increasing on $(t_1,T)$ and $\bbX_t^{x_1,t_1,\hat{\bm{\pi}}} = (-\hat a_0(t),+\infty),\forall t\in (t_1,T)$; otherwise the lower end of $\bbX_t^{x_1,t_1,\hat{\bm{\pi}}}$ would be $-\infty$. For each $t\in (t_1,T)$, because $\bbX_t^{x_1,t_1,\hat{\bm{\pi}}} \subseteq \bbX^{x_0,\hat{\bm{\pi}}}_{t}=(\underline{x}(t),+\infty)$, we derive $-\hat a_0(t)\ge \underline{x}(t)$. Moreover,  Corollary 4  in \citet{HeJiang2020:LinearSDE} shows that $\bbX_s^{x_1,t_1,\hat{\bm{\pi}}}$ is increasing in $s\ge t_1$, so we have $\{x_1\} = \bbX_{t_1}^{x_1,t_1,\hat{\bm{\pi}}}\subseteq \bbX_s^{x_1,t_1,\hat{\bm{\pi}}}=(-\hat a_0(s),+\infty),\forall s\in (t_1,T_{N-1})$. Sending $s$ in the above to $t_1$ and recalling that $\hat a_0$ is right-continuous on $[T_{N-2},T)$, we conclude $x_1\ge -\hat a_0(t_1)$. Because $x_1\in \bbX^{x_0,\hat{\bm{\pi}}}_{t_1} \backslash \{-\hat a_0(t_1) \} = (\underline{x}(t_1),+\infty) \backslash \{-\hat a_0(t_1) \}$ is arbitrary, we derive that $\underline{x}(t_1)\ge -\hat a_0(t_1)$. Because $t_1$ is arbitrary, we conclude that $\hat a_0$ is increasing on $(T_{N-2},T)$ and that $\underline{x}(t)= -\hat a_0(t)$ and thus $\bbX^{x_0,\hat{\bm{\pi}}}_t=(-\hat a_0(t),+\infty)$ for all $t\in (T_{N-2},T)$.


By the definition of $\tau_*$ and $\xi_1$, we have $\tau_*=\inf\{t\in[0,T_{N-1}]: \hat a_0(s)=-\xi_1,s\in[t,T_{N-1}]\}$. For $t\in (T_{N-2},\tau_*)$, $\tilde{\mathbb{S}}^{\hat{\bm{\pi}}}_t=\emptyset$ and thus $\bbX^{x_0,\hat{\bm{\pi}}}_t\backslash \tilde{\mathbb{S}}^{\hat{\bm{\pi}}}_t = (-\hat a_0(t),+\infty)$, and for $t\in [\tau_*,T_{N-1})$, $\tilde{\mathbb{S}}^{\hat{\bm{\pi}}}_t=\{\xi_1\} = \{-\hat a_0(t)\}$ and thus $\bbX^{x_0,\hat{\bm{\pi}}}_t\backslash \tilde{\mathbb{S}}^{\hat{\bm{\pi}}}_t = (-\hat a_0(t),+\infty)$. Also recall that $\hat a$ is increasing on $(T_{N-2},T)$ and is equal to $\xi$ on $[T_{N-1},T)$, so $-\hat a_0(t)\ge \xi$ for all $t\in [T_{N-2},T_{N-1})$. As a result, Lemma \ref{le:NeccPDETwoMedian}-(ii) and Lemma \ref{le:TransportEq}-(iii) yield that \eqref{eq:EquilibriumWeightedQuantile} holds for any $x>-\hat a_0(s)$ and $s\in(T_{N-2},T_{N-1})$.

  Recall that $t_*\leq T_{N-1}$ and in the following, we prove that $t_* \leq T_{N-2}$. For the sake of contradiction, suppose that $t_*\in (T_{N-2},T_{N-1}]$. Fix any $t\in (T_{N-2},t_*)$.
 Because $\hat a_0$ is increasing in $(T_{N-2},T)$ and continuous on $[T_{N-2},T_{N-1})$ and on $[T_{N-1},T)$, following the same calculation of \eqref{eq:XTwitha0} in the proof of Proposition \ref{prop:NeccAlphaHalf}, we derive
\begin{align*}
  X^{\hat{\bm{\pi}}}_{t,-\hat a_0(t)}(T) 
  & = -\hat a_0(T) + \int_t^Te^{\int_s^T \left(b(\tau)\tran v^*(\tau) - \frac{1}{2}\|\sigma(\tau)\tran v^*(\tau)\|^2d\tau\right)d\tau + \int_s^T v^*(\tau)\tran \sigma(\tau)dW(\tau) }d\hat a_0(s).
\end{align*}
   Similar to the derivation of \eqref{eq:PropNeccAlphaHalfEq3}, we derive that
\begin{align}
  &X^{\hat{\bm{\pi}}}_{t,-\hat a_0(t)}(T) -\left(\beta_0(t,1/2) - \hat a_0(t)\beta_1(t,1/2)\right)  \notag\\
   & = \int_t^{T}e^{\frac{1}{2}\int_s^T\rho(\tau)d\tau + \int_s^{T} v^*(\tau)\tran \sigma(\tau)dW(\tau) }d\hat a_0(s) - \int_t^{T} e^{\frac{1}{2}\int_s^T\rho(\tau)d\tau} d\hat a_0(s).\label{eq:PropNeccAlphaHalfEqTwoMedian3}
\end{align}
Because $t<t_*$, $d\hat a_0(s)$ defines a positive measure on $(t,T)$. 
Similar to the proof in Proposition \ref{prop:NeccAlphaHalf}, we can prove that
there exists $\delta_t>0$ such that
\begin{align}\label{MedianLimitIn hata0 ForT}
G^{\hat{\bm{\pi}}}(t,x,1/2)-\big(\beta_0(t,1/2) +x\beta_1(t,1/2) \big) >0,\; \forall x\in ( -\hat a_0(t)-\delta_t, -\hat a_0(t) +\delta_t ).
\end{align}

We first consider the case in which $\tau_*\leq T_{N-2}$. Then, because $\tau^*=T_{N-1}$ as shown in Lemma \ref{TwoMedianNeceCondi}, we conclude $\hat a_0(t)=-\xi_1,\forall t\in [T_{N-2}, T_{N-1})$. Because $t_*>T_{N-2}$ and $\xi_1\ge \xi$, we must have $\xi_1>\xi$ and $t^*=T_{N-1}$. Then, straightforward calculation yields
\begin{align*}
  G^{\hat{\bm{\pi}}}(t,x,1/2;T_{N-1}) = \xi_1 + (x-\xi_1)e^{\frac{1}{2}\int_t^{T_{N-1}}\rho(s) ds}=\tilde\beta_1(t)x + \tilde\beta_0(t),\; x>\xi_1,\; t\in [T_{N-2},T_{N-1}).
\end{align*}
Combining the above with \eqref{MedianLimitIn hata0 ForT}, recalling that \eqref{eq:EquilibriumWeightedQuantile} holds for any $x>-\hat a_0(s)$ and $s\in(T_{N-2},T_{N-1})$, and noting that $w_{N-1, N}\in (0, 1]$, we derive a contradiction, so we must have $t_* \leq T_{N-2}$.

Next, we consider the case in which $\tau_*> T_{N-2}$. Fix any $t\in (T_{N-2},\tau_*)$.
    Then, $\hat a_0(s)$ is not constant in $s\in[t,\tau_*]$, which means that $d\hat a_0(s)$ defines a positive measure on $[t,\tau_*]$.
Similar to the derivation of \eqref{MedianLimitIn a0 ForT}, we can prove that there exists $\tilde \delta_t >0$, such that
\begin{align*}
 G^{\hat{\bm{\pi}}}(t,x,1/2;T_{N-1})-\big(\tilde\beta_0(t) +x\tilde\beta_1(t) \big) >0,\quad \forall x\in ( -\hat a_0(t)-\tilde \delta_t, -\hat a_0(t) +\tilde \delta_t ).
\end{align*}
Combining the above with \eqref{MedianLimitIn hata0 ForT}, recalling that \eqref{eq:EquilibriumWeightedQuantile} holds for any $x>-\hat a_0(s)$ and $s\in(T_{N-2},T_{N-1})$, and noting that $w_{N-1, N}\in (0, 1]$, we derive a contradiction, so we must have $t_* \leq T_{N-2}$.

Having proved that $t_* \leq T_{N-2}$ and thus $\hat a_0(s)=-\xi,\forall s\in [T_{N-2}, T)$, we conclude that $\hat{\bm{\pi}}(t,x)= v^*(t) (x-\xi), t\in [T_{N-2}, T), x\in \bbR$. Applying Lemmas \ref{TwoMedianNeceCondi}--\ref{le:EquilibriumQuantileTwoMedian} and the above proof to the periods $[T_{i-1},T_i)$, $i=N-2,N-1,\dots, 1$ sequentially, we can conclude that $ \hat{\bm{\pi}}(t,x) = (x-\xi)v^*(t), t\in[0,T),x\in \bbR$ for some $\xi<x_0$. The proof then completes.\halmos
\end{pfof}

\bibliography{LongTitles,BibFile}

\end{document}